\newcommand{\Natural}{\mathbb{N}}
\newcommand{\Real}{\mathbb{R}}
\newcommand{\diag}{\mbox{diag}}
\newcommand{\divrg}{\mbox{div}\,}
\newcommand{\dvol}{\mbox{dvol}}
\newcommand{\pz}{\mbox{\em \r{p}\hspace{0.3mm}}}
\newcommand{\piz}{\mbox{\em $\mathring{\pi}$\hspace{0.3mm}}} %% Definido por Ruben
\newcommand{\ve}[1]{\underline{#1}}
\newcommand{\proof}{\noindent {\bf Proof. }}
\newcommand{\qed}{\hfill \fbox{} \vspace{.3cm}}
\newcommand{\Ker}{\mathrm{Ker}\,}
\newtheorem{definition}{Definition}
\newtheorem{lemma}{Lemma}
\newtheorem{proposition}{Proposition}
\newtheorem{theorem}{Theorem}
\begin{document}
%%%%%%%%%%%%%%%%%%%%%%%%%%%%%%%%%%%%%%%%

\title{An introduction to the relativistic kinetic theory on curved spacetimes}

\author{Rub\'en O. Acu\~na-C\'ardenas, Carlos Gabarrete and Olivier Sarbach}
\affiliation{Instituto de F\'\i sica y Matem\'aticas,
Universidad Michoacana de San Nicol\'as de Hidalgo,\\
Edificio C-3, Ciudad Universitaria, 58040 Morelia, Michoac\'an, M\'exico.}

\begin{abstract}
This article provides a self-contained pedagogical introduction to the relativistic kinetic theory of a dilute gas propagating on a curved spacetime manifold $(M,g)$ of arbitrary dimension. Special emphasis is made on geometric aspects of the theory in order to achieve a formulation  which is manifestly covariant on the relativistic phase space. Whereas most previous work has focused on the tangent bundle formulation, here we work on the cotangent bundle associated with $(M,g)$ which is more naturally adapted to the Hamiltonian framework of the theory.

In the first part of this work we discuss the relevant geometric structures of the cotangent bundle $T^* M$, starting with the natural symplectic form on $T^* M$, the one-particle Hamiltonian and the Liouville vector field, defined as the corresponding Hamiltonian vector field. Next, we discuss the Sasaki metric on $T^* M$ and its most important properties, including the role it plays for the physical interpretation of the one-particle distribution function.

In the second part of this work we describe the general relativistic theory of a collisionless gas, starting with the derivation of the collisionless Boltzmann equation for a neutral simple gas. Subsequently, the description is generalized to a charged gas consisting of several species of particles and the general relativistic Vlasov-Maxwell equations are derived for this system.

The last part of this work is devoted to a transparent derivation of the collision term, leading to the general relativistic Boltzmann equation on $(M,g)$. To this end, we introduce the collision manifold, describing the set of all possible binary elastic collisions and discuss its most important geometric properties, including the metric and volume form it is equipped with and its symmetries. We show how imposing full Lorentz symmetry leads to microscopic reversibility and the relativistic $H$ theorem. The meaning of global and local equilibrium and the stringent restrictions for the existence of the former on a curved spacetime are discussed. We close this article with an application of our formalism to the expansion of a homogeneous and isotropic universe filled with a collisional simple gas and its behavior in the early and late epochs.
\end{abstract}

\date{\today}

%\pacs{04.20.-q,04.40.-g, 05.20.Dd}
% 04.20.-q: Classical GR
% 04.20.Ex: Initial value problem, existence and uniqueness of solutions
% 04.25.-g: Approximation methods; equations of motion
% 04.25.D-: Numerical relativity
% 04.25.Nx: Post-Newtonian approximation; perturbation theory; related approximations
% 04.40.-b: Self-gravitating systems, continuous media and classical fields in curved spacetime
% 04.70.-s: Physics of black holes
% 05.20.Dd: Kinetic theory
% 97.60.Lf: Astronomy: Late stage of star evolution: black holes

\maketitle
\tableofcontents

%%%%%%%%%%%%%%%%%%%%%%%%%%%%%%%%%%%%%%%%%%%%%
\section{Introduction}
\label{Sec:Intro}
%%%%%%%%%%%%%%%%%%%%%%%%%%%%%%%%%%%%%%%%%%%%%

The kinetic theory of dilute gases is an elegant and profound theory which allows one to understand a wide range of interesting macroscopic phenomena from the microscopic laws of particle mechanics. Three prominent examples are: (i) the deduction of the equations underlying fluid dynamics for a gas in a near-equilibrium state including the equation of state and the terms describing viscosity and heat transport~\cite{Huang-Book,DavidTong-Book}, (ii) Landau damping (see~\cite{cMcV11} and references therein), a relaxation process that takes place in a collisionless charged gas and plays a central role in plasma physics, and (iii) the modeling of a galaxy as a self-gravitating kinetic gas in which the stars play the role of the gas particles~\cite{BinneyTremaine-Book}. A key concept in kinetic theory is the characterization of the state of the gas through the {\em one-particle distribution function}, a time-dependent function defined on the one-particle phase space of the theory, whose time evolution is determined by Boltzmann's equation. The relevant macroscopic quantities (such as particle density, energy density, pressure, entropy, etc.) are obtained {\em a posteriori} from the one-particle distribution function from suitable averages over the momentum space.

In recent years, there has been an increasing interest in applications of {\em relativistic} kinetic theory in which a significant fraction of the gas particles have relativistic speeds or are subject to strong gravitational fields. Typical situations involving relativistic speeds are encountered in hot plasmas (i.e. a gas at such extreme temperatures that the electrons are separated from the nuclei) when the electron temperature reaches about $10^{10} K$. On the other hand, gas particles which are subject to a strong gravitational fields are found in the early stages of our Universe (and thus are relevant for cosmology, see e.g.~\cite{cMeB95}) or in accretion processes around compact objects such as black holes. In particular, the study of a kinetic gas in the vicinity of black holes is likely necessary to gain a more thorough understanding of the properties of accretion disks and the jet launching mechanism which can now be observed at event-horizon-scales, as has recently been revealed by the spectacular results from the Event Horizon Telescope Collaboration who was able to reconstruct the image of the supermassive black hole in the center of the galaxy M87~\cite{kAetal19}. In such situations, a (special or general) relativistic formulation of the theory must be put forward which generalizes the well-known notions from the non-relativistic theory to a manifestly Poincar\'e-covariant or diffeomorphism-invariant setting.

The special relativistic formulation has a long history and, in fact, started just a few years after Einstein introduced the theory of special relativity with J\"uttner's work in 1911~\cite{fJ11a,fJ11b}. In his work, J\"uttner generalized the Maxwell-Boltzmann distribution for an equilibrium gas to the relativistic case and also derived the equation of state describing such a gas. A next important step was performed by Synge in 1934~\cite{jS34} who based the description of the gas on the world lines of the individual gas particles (as opposed to their position and linear momentum in phase space at a given time) and in this way made it possible to provide a covariant description of the theory. For further work and textbooks on the special relativistic kinetic gas theory, see for example Refs.~\cite{Synge-Book1,Synge-Book2,Groot-Book,CercignaniKremer-Book}.

The general relativistic formulation of kinetic gases began shortly after Synge's work in 1934, and was elaborated by Tauber and Weinberg~\cite{gTjW61}, Israel~\cite{wI63}, Lindquist~\cite{rL66} among others, and put on a more geometric basis by Ehlers~\cite{jE71,jE73}. These works led to the development of transient relativistic thermodynamics~\cite{Stewart-Book,wI76,jS77,wIjS76,wIjS79a,wIjS79b,wHlL83,wHlL85}, which allows one to formulate irreversible thermodynamics for fluid or fluid mixtures which is applicable to nonstationary processes without violating causality. Further important achievements that have been at least partially motivated by kinetic theory include extended thermodynamics~\cite{MullerRuggeri-Book} and divergence-type fluids~\cite{iLiMtR86,sP87,rGlL90,gNoR95,oRgN97} which provide theories for dissipative relativistic fluids with hyperbolic and causal propagation. For a recent summary on  the progress in the field of relativistic nonequilibrium thermodynamics and open issues, see~\cite{jStZ20}. For a textbook on the general relativistic kinetic theory, see the one by Cercignani and Kremer~\cite{CercignaniKremer-Book} and also Refs.~\cite{oStZ13,oStZ14b,oStZ14a} for recent reviews on the formulation of relativistic kinetic theory with special emphasis on the geometric structures of the tangent bundle $TM$ associated with the spacetime manifold $(M,g)$ in a modern differential geometric language.

There has been considerable progress on mathematical aspects of relativistic kinetic theory, including the (local in time) well-posedness for the Cauchy problem of the Einstein-Maxwell-Boltzmann system~\cite{dByC73} and, more recently, results on the nonlinear stability of Minkowski spacetime~\cite{mT17,hLmT20,dFjJjS21,lBdFjJjSmT20} and the future stability of the Universe~\cite{Ringstrom-Book,hAhR16,dF16,lAdF20,hBdF20,jJmTjV20} for the Einstein-Vlasov or Einstein-Maxwell-Vlasov systems. Other mathematical work on the Einstein-Vlasov system has analyzed the complete gravitational collapse of a spherical cloud~\cite{aRjV10,hA12,hA14} or has established the existence of static, spherically symmetric solutions~\cite{gR93,hAgR06,hAdFmT15} or axisymmetric solutions with are either static or stationary~\cite{hAmKgR11,hAmKgR14}; see also Refs.~\cite{hAmEgR09,hAgR06b,aAmC14,eAhAaL16,eAhAaL19} for related numerical works and~\cite{hA11} for a review on these results. For recent work on the relativistic Boltzmann equation including the collision term, see for instance~\cite{hLaR13,hL13,hLeN17,hLeN18,hLeNpT19}. Recently, the properties of the solutions to the Vlasov equation on a fixed black hole background have also been studied in the context of the accretion problem~\cite{pRoS17,pRoS17b,aCpM20,pMaO21a,pMaO21b,aGcGpDdNoS21}, in the context of mixing of a gas of massive particles~\cite{pRoS18,pRoS20}, and in the context of decay in the case of massless particles~\cite{lApBjS18,lB20}.

The present work aims at providing a self-contained, pedagogical introduction to the general relativistic formulation of kinetic gases (which contains the special relativistic formulation as a particular case). Unlike most previous work, which defines the relativistic phase space as an appropriate submanifold of the tangent bundle $TM$, we base our formulation on the {\em cotangent} bundle $T^*M$ associated with the spacetime manifold $(M,g)$. Of course, the two formulations are equivalent, since the spacetime metric $g$ provides a natural isomorphism between $TM$ and $T^*M$; however, we here choose to focus on the cotangent bundle because it is more naturally adapted to the Hamiltonian formulation used in this review. For recent work based on the cotangent bundle formulation, see Refs.~\cite{pR19-thesis,pRoS17,pRoS18,jJmTjV20,aCpM20,pMaO21a,pMaO21b}.

This article is organized as follows. In section~\ref{Sec:Cotangent} we introduce the cotangent bundle $T^*M$ associated with $(M,g)$ and recall a few geometric concepts that are key to the formulation of the relativistic kinetic gas theory, including the natural symplectic form it admits, the mass shell condition and the Sasaki metric on $T^*M$ which is naturally induced from the spacetime metric $g$. Next, in section~\ref{Sec:SimpleGas} we provide the description of a relativistic, collisionless and uncharged simple gas, that is, a gas configuration consisting of classical, identical, massive, neutral and non-spinning particles which follow future-directed timelike geodesic trajectories in $(M,g)$. In particular, we introduce the one-particle distribution function describing such a gas configuration, derive the collisionless Boltzmann equation and discuss the expressions and main properties of the relevant observables (including the particle current density and energy-momentum-stress tensor) describing the macroscopic properties of the gas. In section~\ref{Sec:Charged} we generalize the results of section~\ref{Sec:SimpleGas} to the case of a gas configuration consisting of several species of (possibly charged) particles and derive the relativistic Vlasov-Maxwell system on a curved spacetime $(M,g)$ which is relevant to the description of a relativistic plasma propagating in a (possibly strong) gravitational field. Section~\ref{Sec:Collision} is devoted to the derivation of the collision term, where for simplicity and definiteness we limit ourselves to the case of purely elastic binary collisions between gas particles of a single species, and to the derivation of the general relativistic Boltzmann equation. Next, in section~\ref{Sec:Equilibrium} we discuss the relativistic H-theorem which gives rise to a relativistic formulation of the second law of thermodynamics. The {\em global equilibrium configurations} are defined as those solutions of the Boltzmann equation for which the entropy production is zero everywhere on $M$, and as we will see, for a gas configuration on a curved spacetime $(M,g)$ in an electromagnetic field $F$ this yields rather strong conditions on $g$ and $F$. The more general concept of {\em local equilibrium configurations} which does not require such strong conditions on $g$ and $F$ is also discussed and motivated in section~\ref{Sec:Equilibrium}. In section~\ref{Sec:MomentMethod} we provide a brief outline of the method of moments, which is used to convert the integro-differential Boltzmann equation into a system of conservation laws on the spacetime manifold, and serves as a starting point for several approximation techniques. Next, in section~\ref{Sec:Application} we apply our formalism to the evolution of a Friedmann-Lema\^itre-Robertson-Walker spacetime filled with a collisional, isotropic and homogeneous gas in the expanding direction and assuming a particular ansatz for the differential cross section, we show that the limits of early and late epochs can be treated analytically. Conclusions are drawn in section~\ref{Sec:Conclusions}. Technical details, including the analysis of the manifold structure of the cotangent bundle, the use of an orthonormal set of basis covectors, an alternative derivation of the volume element on the mass hyperboloid, the mass shell, and the collision manifold, as well as a compilation of several definitions and useful results regarding symmetries are discussed in appendices~\ref{App:Cotangent}--\ref{App:Symmetries}.

Throughout this work, $(M,g)$ denotes a $n$-dimensional, $C^\infty$-differentiable, connected and time-oriented Lorentzian manifold. Greek Letters refer to spacetime indices running from $0,1,\ldots,d=n-1$ while Latin indices run from $1,2,\ldots,d$. We use the signature convention $(-,+,+,\ldots,+)$ for the spacetime metric and geometrized units in which Newton's constant and the speed of light are one, i.e. $G_N = c = 1$. We shall make use of the modern differential geometry language, which has the conceptual advantage of exhibiting the general covariance of the theory. For example, ${\cal F}(N)$ and ${\cal X}(N)$ refer to the class of smooth (i.e. $C^\infty$-differentiable) functions and vector fields, respectively, on a $C^\infty$-differentiable manifold $N$. If $X$ denotes a vector field and $\omega$ a $p$-form on $N$, then $\pounds_X\omega$, $i_X\omega$ and $d\omega$ refer to the Lie, the interior and the exterior derivatives of $\omega$, respectively.\footnote{See, for example, chapter~14 in Ref.~\cite{Straumann-Book} for a definition of these derivatives and a discussion of Cartan's calculus of differential forms.} Nevertheless, we shall also give the most relevant equations in their coordinate form for the reader who is unfamiliar with the differential geometry language or is mostly interested in applications.

%%%%%%%%%%%%%%%%%%%%%%%%%%%%%%%%%%%%%%%%%%%%%
\section{Geometric properties of the cotangent bundle}
\label{Sec:Cotangent}
%%%%%%%%%%%%%%%%%%%%%%%%%%%%%%%%%%%%%%%%%%%%%

In this section we introduce the cotangent bundle $T^* M$ associated with the spacetime manifold $(M,g)$ and discuss its most important geometric properties that are relevant for the formulation of relativistic kinetic theory. In subsection~\ref{SubSec:T*M} we start by recalling the definition of $T^* M$ and its basic properties. Next, in subsection~\ref{SubSec:Ham} we introduce a natural symplectic form on $T^* M$ used for the Hamiltonian formulation of the theory. In subsection~\ref{SubSec:MassShell} we define the future mass shell (for positive masses) $\Gamma_m^+$ and show that it describes a submanifold of $T^* M$. Physically, it represents the relativistic phase space for a simple kinetic gas. Next, in subsection~\ref{SubSec:Sasaki} we show that the spacetime metric $g$ and the associated Levi-Civita connection induces a natural metric on $T^* M$, called the Sasaki metric. This metric, in turn, induces a natural metric and volume form on $\Gamma_m^+$, which will turn out to be important for the physical interpretation of the one-particle distribution function. Finally, in subsection~\ref{SubSec:Liouville}, we discuss Liouville's theorem.

%%%%%%%%%%%%%%%%%%%%%%%%%%%%%%%%%%%%%%%%%%%%%
\subsection{Definition and basic properties of $T^* M$}
\label{SubSec:T*M}

In the following, for any spacetime event $x\in M$, we denote by $T_x M$ the tangent space of $M$ at $x$; and $T_x^* M$ is the cotangent space at $x$, that is, the linear space consisting of all covectors $p$ at $x$. We shall say that an element $p\in T_x^* M$ is future-directed timelike, if and only if its associated vector $\tilde{p} := g^{-1}(p,\cdot)$ lies in the future light cone at $x$. Physically, such elements describe the canonical momentum of a gas particle. We define:

\begin{definition}
\label{Def:Cotangent}
The cotangent bundle $T^*M$ is the set
\begin{equation}
\boxed{ T^* M := \{ (x,p) : x\in M, p\in T^*_x M \}. }
\end{equation}
\end{definition}
Associated with it is the natural projection map
\begin{equation}
\begin{array}{cccc}
\pi: & T^* M & \to & M,\\
 & \left(x,p\right) & \mapsto & x,
\end{array}
\label{Eq:ProjectionMap}
\end{equation}
which projects an element $(x,p)\in T^* M$ onto its base point $x\in M$. The fibre at $x$ is the inverse set
\begin{equation}
\pi^{-1}(x) = \left(x,T_{x}^* M \right) \simeq T_x^* M,
\end{equation}
see Fig.~\ref{Fig:T*M}. The spacetime metric induces a natural metric (the fibre metric) on $\pi^{-1}(x)$:
\begin{equation}
h_x\left( (x,p),(x,q) \right) := g_x^{-1}(p,q),\qquad (x,p),(x,q)\in \pi^{-1}(x).
\end{equation}

The following lemma shows that the cotangent bundle inherits the manifold property of $M$, and that it is orientable regardless of whether or not $M$ is orientable.

\begin{lemma}
\label{Lem:CotangentSpace2n}
$T^* M$ is a $2n$-dimensional, orientable manifold.
\end{lemma}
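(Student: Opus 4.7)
The plan is to build an atlas on $T^*M$ from any atlas on $M$ and then observe that its transition functions have Jacobian determinant identically equal to $+1$. This single observation yields both the smooth $2n$-dimensional manifold structure and the orientability in one stroke, without any hypothesis on the orientability of $M$.

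First I would fix an atlas $\{(U_\alpha,\varphi_\alpha)\}$ on $M$ with local coordinates $x^\mu$, and associate to it a chart on the open set $\pi^{-1}(U_\alpha)\subset T^*M$ defined by sending $(x,p)\mapsto\bigl(x^\mu(x),p_\mu\bigr)\in\varphi_\alpha(U_\alpha)\times\mathbb{R}^n\subset\mathbb{R}^{2n}$, where $p_\mu$ are the components of $p$ in the coordinate coframe, $p=p_\mu\,dx^\mu|_x$. These charts clearly cover $T^*M$ and take values in open subsets of $\mathbb{R}^{2n}$. The Hausdorff and second-countable properties of $T^*M$ inherit from those of $M$ by standard arguments which I would defer to Appendix~\ref{App:Cotangent}, where the paper addresses them in detail.

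Next I would check smoothness of transitions. On an overlap $\pi^{-1}(U_\alpha\cap U_\beta)$, the covector transformation law gives
\begin{equation*}
\tilde x^\mu = \tilde x^\mu(x),\qquad \tilde p_\mu = \frac{\partial x^\nu}{\partial\tilde x^\mu}(x)\,p_\nu,
\end{equation*}
which is smooth in $(x^\mu,p_\mu)$: the first equation inherits smoothness from the atlas on $M$, while the second is linear in $p$ with smooth coefficients. This establishes that $T^*M$ is a smooth $2n$-dimensional manifold.

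The orientability step is where the actual computation lies, and is the only mildly delicate point. The essential observation is that $\tilde x^\mu$ does \emph{not} depend on $p$, so the Jacobian of the transition map has a block-triangular form
\begin{equation*}
\frac{\partial(\tilde x,\tilde p)}{\partial(x,p)}
= \begin{pmatrix} A & 0 \\ B & (A^{-1})^{T} \end{pmatrix},
\qquad A^\mu{}_\nu := \frac{\partial\tilde x^\mu}{\partial x^\nu},
\end{equation*}
where $B$ is some matrix involving second derivatives of the spacetime coordinate change and the components $p_\nu$, but whose explicit form I would not need. The determinant is
\begin{equation*}
\det(A)\cdot\det\bigl((A^{-1})^{T}\bigr) = +1,
\end{equation*}
independently of the sign of $\det A$. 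Hence the atlas on $T^*M$ constructed above is automatically oriented, and $T^*M$ is orientable irrespective of whether $M$ itself is. The main conceptual point, on which the whole proof turns, is that the $p$-independence of the new base coordinates forces the upper-right block to vanish; this is what makes the factors $\det A$ and $\det A^{-1}$ cancel, in contrast to the situation on $M$ itself where the sign of $\det A$ is free.
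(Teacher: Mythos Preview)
Your proposal is correct and follows essentially the same approach as the paper: the paper (via Appendix~\ref{App:Cotangent}) builds exactly the adapted charts $(x^\mu,p_\mu)$, writes the transition map in the block-triangular form you describe, and concludes that its Jacobian determinant equals $1$, so the induced atlas is oriented regardless of whether $M$ is. The only additional remark the paper makes is that this unit-determinant property can also be seen as a consequence of the adapted coordinates being symplectic (Darboux) coordinates for the canonical two-form $\Omega_s$.
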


\begin{proof}
Given local coordinates $x^\mu$ in a neighborhood $U$ of $M$, one can assign to each point $(x,p)\in \pi^{-1}(U)$ the local coordinates $(x^\mu,p_\mu)$, where $x^\mu$ are the local coordinates associated with $x$ in $U$ and $p_\mu$ are the components of $p$ with respect to the basis covectors $(dx^\mu_x)$ of $T_x^* M$, that is,
\begin{equation}
p = p_\mu dx^\mu_x.
\end{equation}
This defines a coordinate chart on $\pi^{-1}(U)$. By taking a differentiable atlas of $M$ one obtains a corresponding differential atlas of $T^* M$ which is oriented, see appendix~\ref{App:Cotangent} for the details.
\qed
\end{proof}

{\bf Remark}: We call the coordinates $(x^\mu,p_\mu)$ adapted local coordinates on $T^* M$. These coordinates provide at each point $(x,p)\in T^* M$ a basis of vector fields
\begin{equation}
\left\{ \left. \frac{\partial}{\partial x^\mu} \right|_{(x,p)}, \left. \frac{\partial}{\partial p_\mu} \right|_{(x,p)} \right\}
\end{equation}
and the associated dual basis of covector fields
\begin{equation}
\left\{ dx^\mu_{(x,p)}, (dp_\mu)_{(x,p)} \right\}.
\end{equation}
Therefore, a vector field $X$ on $T^* M$ can be expanded, locally, as
\begin{equation}
X = X^\mu\frac{\partial}{\partial x^\mu} + Y_\mu\frac{\partial}{\partial p_\mu},\qquad
X^\mu = dx^\mu(X),\quad
Y_\mu = dp_\mu(X).
\label{Eq:VectorFieldX}
\end{equation}

\begin{figure}[h]
\begin{centering}
\includegraphics[scale=0.50]{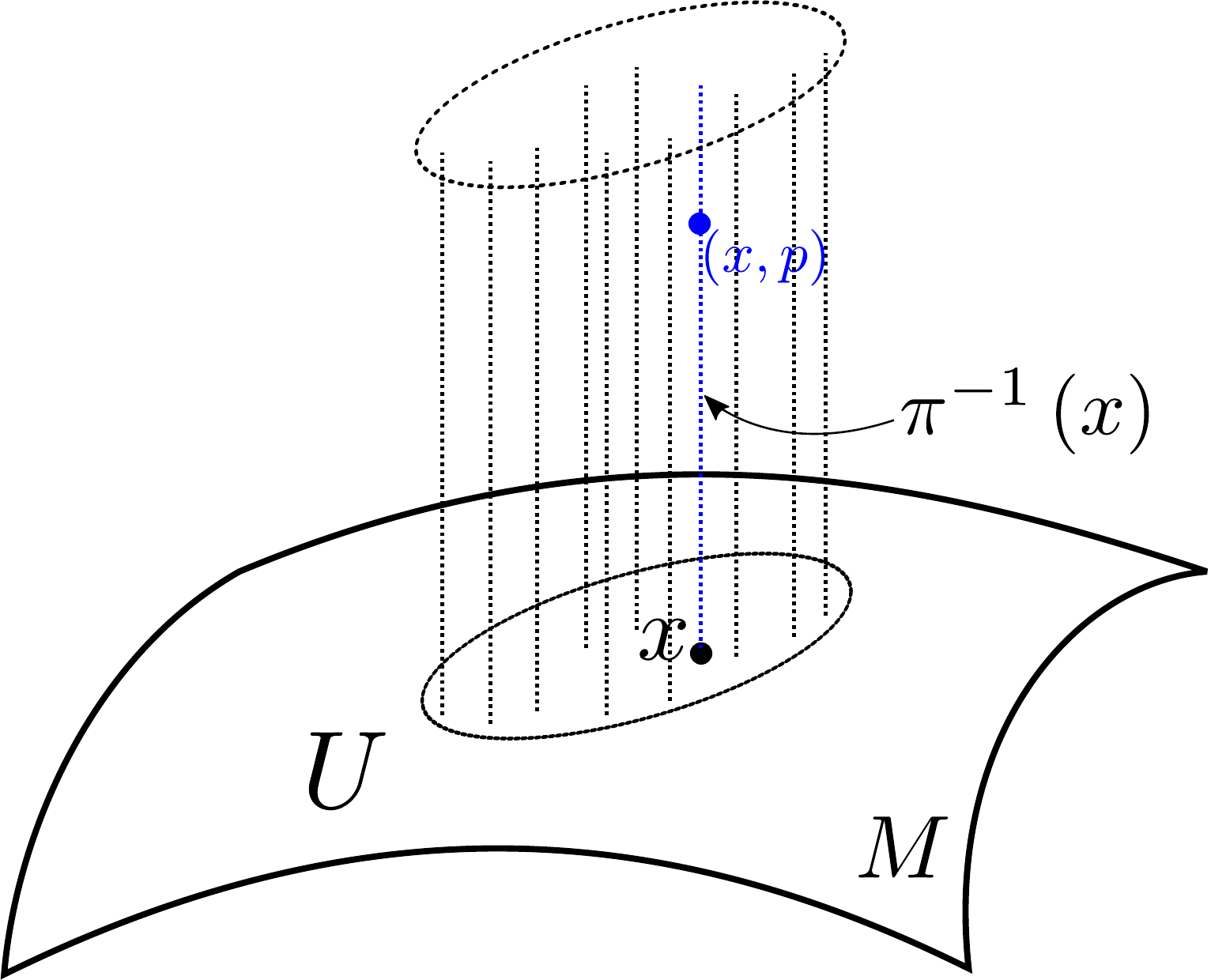}
\par\end{centering}
\caption{An illustration of the spacetime manifold $M$ and some fibres $\pi^{-1}(x)$ over the open neighborhood $U\subset M$.}
\label{Fig:T*M}
\end{figure}

%%%%%%%%%%%%%%%%%%%%%%%%%%%%%%%%%%%%%%%%%%%%%
\subsection{Symplectic form and Hamiltonian formulation}
\label{SubSec:Ham}

After having defined the cotangent bundle $T^* M$, we introduce on it a symplectic form, that is an antisymmetric bilinear form $\Omega_s: {\cal X}(T^* M)\times {\cal X}(T^* M)\rightarrow {\cal F}(T^* M)$ which is
\begin{enumerate}
\item[(i)] non-degenerate: $\displaystyle \Omega_{s}(\:\cdot\:,X) = 0 \Longrightarrow X=0$,
\item[(ii)] closed: $\displaystyle d\Omega_{s}=0$.
\end{enumerate}
The symplectic form arises naturally as the differential of the Poincar\'e (or canonical) one-form which is given in the following definition.

\begin{definition}
\label{Def:simplectic_form}
The Poincar\'e one-form over $T^{*}M$ is the one-form $\Theta: {\cal X}(T^{*}M)\rightarrow {\cal F}(T^{*}M)$ defined as
\begin{equation}
\Theta_{(x,p)}(X_{(x,p)}):=p(d\pi_{(x,p)}(X_{(x,p)})),\qquad
X_{(x,p)}\in T_{(x,p)}(T^{*}M),
\end{equation}
where $d\pi_{(x,p)}: T_{(x,p)}(T^{*}M)\to T_x M$ is the differential of the projection map $\pi$ defined in Eq.~(\ref{Eq:ProjectionMap}) at $(x,p)$.
\end{definition}

In terms of adapted local coordinates $(x^\mu,p_\mu)$ and the corresponding expansion in Eq.~(\ref{Eq:VectorFieldX}) it is not difficult to show that
$$
d\pi_{(x,p)}\left(X_{(x,p)}\right) = X^\mu(x,p)\left.\frac{\partial }{\partial x^\mu}\right|_x.
$$
Consequently,
$$
\Theta_{(x,p)}(X_{(x,p)}) = p\left( X^\mu(x,p)\left.\frac{\partial }{\partial x^\mu}\right|_x \right) = p_\mu X^\mu(x,p)
 = p_\mu dx^\mu_{(x,p)} (X_{(x,p)}),
$$
from which one obtains the following expression for the Poincar\'e one-form in terms of adapted local coordinates:
\begin{equation}
\boxed{ \Theta = p_{\mu}dx^{\mu}. }
\end{equation}

The exterior differential of the Poincar\'e one-form defines the symplectic form on the cotangent bundle $T^{*}M$:
\begin{equation}
\boxed{\Omega_s := d\Theta=dp_{\mu}\wedge dx^{\mu}.}
\label{Eq:Omega}
\end{equation}
By construction, $\Omega_s$ is a closed two-form. In order to show that $\Omega_s$ is non-degenerate, we take an arbitrary vector field $X$ on $T^* M$ which we decompose according to Eq.~(\ref{Eq:VectorFieldX}). Then, we have
\begin{equation}
\Omega_s(\:\cdot\:,X) = X^\mu dp_\mu - Y_\mu dx^\mu.
\label{Eq:OmegaDotX}
\end{equation}
Obviously, the right-hand side is zero if and only if both $X^\mu$ and $Y_\mu$ vanish, that is, if and only if $X = 0$. This proves that $\Omega_s$ is non-degenerate.

The existence of the symplectic form allows one to introduce the Hamiltonian vector field associated with a given function $\mathcal{H}$ on $T^*M$:

\begin{definition}
\label{Def:HamVF}
Given a smooth function $\mathcal{H}\in\mathcal{F}(T^{*}M)$ on the cotangent bundle, the associated Hamiltonian vector field $X_{\mathcal{H}}\in {\cal X}(T^{*}M)$ is defined by
\begin{equation}
\boxed{d\mathcal{H}=\Omega_{s}(\:\cdot\:,X_{\mathcal{H}})=-i_{X_{\mathcal{H}}}\Omega_{s}.}
\end{equation}
\end{definition}

Explicitly, in terms of adapted local coordinates, if $X = X_{\mathcal{H}}$ is parametrized as in Eq.~(\ref{Eq:VectorFieldX}), then it follows from Eq.~(\ref{Eq:OmegaDotX}) that
\begin{equation}
\boxed{ X_{\mathcal{H}} = \frac{\partial\mathcal{H}}{\partial p_\mu} \frac{\partial}{\partial x^\mu}
 -  \frac{\partial\mathcal{H}}{\partial x^\mu} \frac{\partial}{\partial p_\mu}. }
\label{Eq:XH}
\end{equation}
The integral curves of $X_{\mathcal{H}}$ are determined by Hamilton's equations of motion
\begin{eqnarray}
\frac{dx^\mu}{d\lambda} &=& \frac{\partial\mathcal{H}}{\partial p_\mu},\\
\frac{dp_\mu}{d\lambda} &=& -\frac{\partial\mathcal{H}}{\partial x^\mu},
\end{eqnarray}
with $\lambda$ the parameter along the curve.

A particular important example for the purpose of relativistic kinetic theory is the free one-particle Hamiltonian, given by
\begin{equation}
\boxed{ \mathcal{H}(x,p)=\frac{1}{2}g^{-1}_{x}(p,p)=\frac{1}{2}g^{\mu\nu}(x)p_{\mu}p_{\nu}.}
\label{Eq:FreeParticleH}
\end{equation}
The associated Hamiltonian vector field, also called the Liouville vector field, is explicitly given by
\begin{equation}
\boxed{ L := X_{\mathcal{H}} = g^{\mu\nu}p_{\nu}\frac{\partial}{\partial x^{\mu}}-\frac{1}{2}\frac{\partial g^{\alpha\beta}}{\partial x^{\mu}}p_{\alpha}p_{\beta}\frac{\partial}{\partial p_{\mu}}. }
\label{Eq:LiouvilleVF}
\end{equation}
In this case, the integral curves of $L$, when projected onto the spacetime manifold by means of $\pi$, describe (affinely parametrized) geodesics of $(M,g)$, see Fig.~\ref{Fig:Liouville}.

\begin{figure}
\center
\includegraphics[scale=0.32]{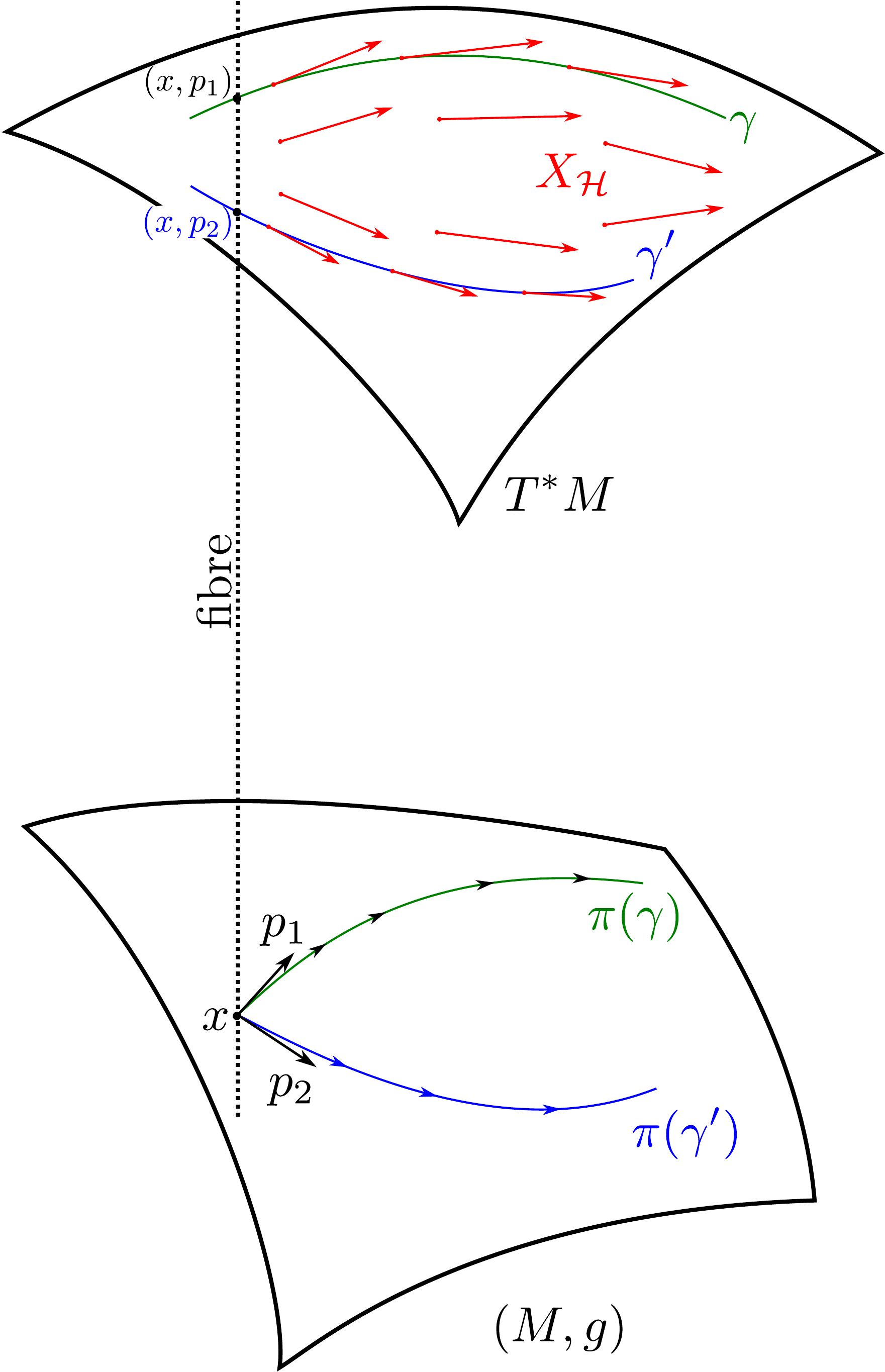}
\caption{An illustration of the Hamiltonian vector field $X_{\mathcal{H}}$ on the cotangent bundle $T^{*}M$ and two of its integral curves $\gamma$ and $\gamma'$ which are projected onto the spacetime manifold $M$. Note that although $\pi(\gamma)$ and $\pi(\gamma')$ cross each other at $x$, the corresponding curves in $T^* M$ do not cross since they have different momenta at $x$.}
\label{Fig:Liouville}
\end{figure}

The Hamiltonian vector field associated with any function $\mathcal{H}$ leaves the symplectic form invariant. This is a direct consequence of Cartan's formula, the fact that $d\Omega_s = 0$, and Definition~\ref{Def:HamVF}:
\begin{equation}
\pounds_{X_{\mathcal{H}}}\Omega_s 
 = (d i_ {X_{\mathcal{H}}} + i_{X_{\mathcal{H}}} d)\Omega_s = -d^2\mathcal{H} = 0.
\label{Eq:LOmega}
\end{equation}
Therefore, $X_{\mathcal{H}}$ generates canonical transformations. Next, we introduce:

\begin{definition}
\label{Def:PoissonBracket}
Given two smooth functions $\mathcal{H},\mathcal{G}\in {\cal F}(T^* M)$, their Poisson bracket is defined by
\begin{equation}
\boxed{ \left\{ \mathcal{H},\mathcal{G}\right\} := \Omega_{s}(X_{\mathcal{H}},X_{\mathcal{G}}). }
\end{equation}
\end{definition}

According to Definition~\ref{Def:HamVF}, it describes the change of $\mathcal{G}$ along the Hamiltonian flow associated with $\mathcal{H}$:
\begin{equation}
X_{\mathcal{H}}[\mathcal{G}] = d\mathcal{G}(X_{\mathcal{H}}) =  \left\{ \mathcal{H},\mathcal{G}\right\}.
\label{Eq:dG}
\end{equation}

\begin{lemma}
The Poisson bracket satisfies the following well-known properties for all $\mathcal{F},\mathcal{G},\mathcal{H}\in {\cal F}(T^* M)$ and all $\lambda\in \Real$:
\begin{enumerate}
\item[(i)] $\displaystyle \left\{ \mathcal{H},\mathcal{G}\right\}=-\left\{\mathcal{G}, \mathcal{H}\right\}$.
\item[(ii)] $\displaystyle \left\{ \mathcal{H},\mathcal{F} + \lambda\mathcal{G}\right\}=\left\{\mathcal{H}, \mathcal{F}\right\}+\lambda\left\{\mathcal{H}, \mathcal{G}\right\}$.
\item[(iii)] $\displaystyle \left\{ \mathcal{H},\left\{ \mathcal{G},\mathcal{F}\right\}\right\}+\left\{ \mathcal{G},\left\{ \mathcal{F},\mathcal{H}\right\}\right\}+\left\{ \mathcal{F},\left\{ \mathcal{H},\mathcal{G}\right\}\right\}=0$.
\item[(iv)] In adapted local coordinates one has
\begin{equation}
\boxed{\left\{ \mathcal{H},\mathcal{G}\right\}
= \frac{\partial \mathcal{H}}{\partial p_{\mu}}\frac{\partial\mathcal{G}}{\partial x^{\mu}}
 -\frac{\partial \mathcal{H}}{\partial x^{\mu}}\frac{\partial\mathcal{G}}{\partial p_{\mu}}. }
\label{Eq:PoissonBracketCoordinate}
\end{equation} 
\end{enumerate}
\end{lemma}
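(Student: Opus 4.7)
The plan is to establish the coordinate expression (iv) first, since (i) and (ii) are then immediate corollaries, and to tackle the Jacobi identity (iii) as the only substantive computation.

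For (iv), I would combine the definition $\{\mathcal{H},\mathcal{G}\}=\Omega_s(X_\mathcal{H},X_\mathcal{G})$ with the identity $\{\mathcal{H},\mathcal{G}\}=X_\mathcal{H}[\mathcal{G}]=d\mathcal{G}(X_\mathcal{H})$ recorded in Eq.~\eqref{Eq:dG}. Expanding $d\mathcal{G}=(\partial\mathcal{G}/\partial x^\mu)dx^\mu+(\partial\mathcal{G}/\partial p_\mu)dp_\mu$ and substituting the explicit coordinate form of $X_\mathcal{H}$ from Eq.~\eqref{Eq:XH}, and using $dx^\mu(\partial/\partial x^\nu)=\delta^\mu_\nu$, $dp_\mu(\partial/\partial p_\nu)=\delta_\mu^\nu$ together with the vanishing of the mixed pairings, the boxed formula in (iv) drops out at once. (Alternatively, one may feed $X=X_\mathcal{G}$ into Eq.~\eqref{Eq:OmegaDotX} and pair the resulting one-form with the components of $X_\mathcal{H}$; both routes give the same answer.)

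Properties (i) and (ii) are then essentially corollaries. Antisymmetry is manifest either from the antisymmetry of the symplectic form, $\Omega_s(X_\mathcal{H},X_\mathcal{G})=-\Omega_s(X_\mathcal{G},X_\mathcal{H})$, or by direct inspection of (iv). Bilinearity in the second slot (hence, by (i), in both) is immediate from (iv); at a more conceptual level it reflects the $\Real$-linearity of the assignment $\mathcal{G}\mapsto X_\mathcal{G}$, which follows from the linearity of $d$ combined with the non-degeneracy of $\Omega_s$.

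The main obstacle is the Jacobi identity. I would prove it by direct substitution in (iv). Writing $\{\mathcal{G},\mathcal{F}\}$ in coordinates and then applying (iv) once more to compute $\{\mathcal{H},\{\mathcal{G},\mathcal{F}\}\}$ produces eight terms, each of the schematic shape $(\partial\mathcal{H})(\partial^2\mathcal{G})(\partial\mathcal{F})$ or $(\partial\mathcal{H})(\partial\mathcal{G})(\partial^2\mathcal{F})$, since one of the derivatives in the outer bracket always lands on a first derivative sitting inside the inner bracket; no term with a second derivative of $\mathcal{H}$ itself can appear. Taking the cyclic sum over $(\mathcal{H},\mathcal{G},\mathcal{F})$, every second-derivative term produced in this way pairs with a matching term in another cyclic permutation in which the two ``outer'' functions exchange roles. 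The two partners carry opposite signs due to the sign pattern in (iv) and cancel in virtue of the equality of mixed partial derivatives. The bookkeeping is elementary but is where all the technical work lies; conceptually, this cancellation is nothing but the coordinate manifestation of the closedness $d\Omega_s=0$ established in Eq.~\eqref{Eq:Omega}.
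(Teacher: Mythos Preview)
Your proposal is correct and matches the paper's approach: the paper likewise derives (iv) from Eq.~\eqref{Eq:XH} together with $\{\mathcal{H},\mathcal{G}\}=d\mathcal{G}(X_{\mathcal{H}})$, and simply asserts that (i)--(iii) ``follow easily from the definitions above.'' Your write-up supplies more detail than the paper does, particularly the coordinate bookkeeping for the Jacobi identity, but the underlying route is the same.
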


\begin{proof}
(i)--(iii) follow easily from the definitions above. (iv) follows immediately from Eq.~(\ref{Eq:XH}) and the fact that $\left\{ \mathcal{H},\mathcal{G}\right\} = d\mathcal{G}(X_{\mathcal{H}})$.
\qed
\end{proof}

Property (i) of the Lemma together with the identity~(\ref{Eq:dG}) implies the Hamiltonian version of Noether's theorem:

\begin{theorem}
$\mathcal{G}$ is invariant under the Hamiltonian flow associated with $\mathcal{H}$ if and only if $\mathcal{H}$ is invariant under the flow of $\mathcal{G}$. (That is, $\mathcal{G}$ is an integral of motion if and only if $X_{\mathcal{G}}$ generates a continuous symmetry of $\mathcal{H}$.)
\end{theorem}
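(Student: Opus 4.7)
The plan is to chain together two elementary ingredients already established: the formula $X_{\mathcal{H}}[\mathcal{G}] = \{\mathcal{H},\mathcal{G}\}$ from equation~(\ref{Eq:dG}), and the antisymmetry of the Poisson bracket from property~(i) of the preceding lemma. No further machinery is needed; the statement is essentially a tautology once these two facts are combined.

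Concretely, I would first translate the notion of invariance into the language of Poisson brackets. A function $\mathcal{G}\in\mathcal{F}(T^*M)$ is constant along the integral curves of $X_{\mathcal{H}}$ if and only if $X_{\mathcal{H}}[\mathcal{G}]=0$, which by~(\ref{Eq:dG}) is equivalent to $\{\mathcal{H},\mathcal{G}\}=0$. Next, I would apply antisymmetry to rewrite this as $\{\mathcal{G},\mathcal{H}\}=0$. Finally, invoking~(\ref{Eq:dG}) a second time in the opposite direction, this reads $X_{\mathcal{G}}[\mathcal{H}]=0$, i.e.\ $\mathcal{H}$ is invariant under the flow generated by $X_{\mathcal{G}}$. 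Each implication is reversible, so the biconditional is immediate.

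There is essentially no obstacle to overcome here; the main care is linguistic rather than technical. I should be explicit that ``$\mathcal{G}$ is invariant under the Hamiltonian flow of $\mathcal{H}$'' means $\mathcal{G}$ is a first integral of the vector field $X_{\mathcal{H}}$, i.e.\ $X_{\mathcal{H}}[\mathcal{G}]=0$ pointwise on $T^*M$, so that the parenthetical remark in the theorem (``$\mathcal{G}$ is an integral of motion iff $X_{\mathcal{G}}$ generates a continuous symmetry of $\mathcal{H}$'') is justified by the same computation. The proof can thus be presented as a short displayed chain of equalities, for instance
\begin{equation}
X_{\mathcal{H}}[\mathcal{G}] = \{\mathcal{H},\mathcal{G}\} = -\{\mathcal{G},\mathcal{H}\} = -X_{\mathcal{G}}[\mathcal{H}],
\end{equation}
from which both implications of the theorem follow at once.
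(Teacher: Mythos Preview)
Your proposal is correct and matches the paper's approach exactly: the paper does not give a separate proof but simply states, immediately before the theorem, that it follows from property~(i) of the Lemma (antisymmetry of the Poisson bracket) together with the identity~(\ref{Eq:dG}). Your displayed chain $X_{\mathcal{H}}[\mathcal{G}] = \{\mathcal{H},\mathcal{G}\} = -\{\mathcal{G},\mathcal{H}\} = -X_{\mathcal{G}}[\mathcal{H}]$ is precisely the content of that remark.
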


%%%%%%%%%%%%%%%%%%%%%%%%%%%%%%%%%%%%%%%%%%%%%
\subsection{Mass shell}
\label{SubSec:MassShell}

In this section we introduce the mass shell, defined as the following subset of the cotangent bundle:

\begin{definition}
Let $m\geq 0$. The mass shell is defined by
\begin{equation}
\boxed{ \Gamma_m := \left\{ (x,p)\in T^{*}M : g^{-1}_{x}(p,p)=-m^{2} \right\}. }
\end{equation}   
\end{definition}

\begin{lemma}
Let $m > 0$ be positive. Then $\Gamma_m$ it is a differentiable submanifold of $T^{*}M$ of dimension $2n-1$.
\end{lemma}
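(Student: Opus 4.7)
The natural approach is the regular value theorem: exhibit $\Gamma_m$ as the preimage of a regular value of a smooth function on $T^*M$ and read off the codimension.

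Define $F:T^*M\to\Real$ by $F(x,p):=g_x^{-1}(p,p)$, or equivalently $F=2\mathcal{H}$ where $\mathcal{H}$ is the free one-particle Hamiltonian introduced in Eq.~(\ref{Eq:FreeParticleH}). Then $F$ is smooth (being the restriction of a smooth function on $T^*M$), and by construction $\Gamma_m = F^{-1}(-m^2)$. By the standard regular-value / submersion theorem for manifolds, it suffices to prove that $-m^2$ is a regular value of $F$, i.e.\ that $dF_{(x,p)}\neq 0$ at every $(x,p)\in\Gamma_m$. Granted this, $\Gamma_m$ is an embedded submanifold of $T^*M$ of codimension $1$ and hence of dimension $2n-1$, as claimed.

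The verification of regularity is a short local computation in the adapted coordinates $(x^\mu,p_\mu)$ from Lemma~\ref{Lem:CotangentSpace2n}. Since $F(x,p)=g^{\mu\nu}(x)p_\mu p_\nu$, one has
\begin{equation}
dF_{(x,p)} = \frac{\partial g^{\mu\nu}}{\partial x^\alpha}(x)\, p_\mu p_\nu\, dx^\alpha_{(x,p)} + 2 g^{\mu\nu}(x) p_\nu\, (dp_\mu)_{(x,p)}.
\end{equation}
If $dF_{(x,p)}$ were the zero covector, the coefficients of $(dp_\mu)_{(x,p)}$ would have to vanish, i.e.\ $g^{\mu\nu}(x)p_\nu=0$ for every $\mu$. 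Since $g^{-1}_x$ is non-degenerate, this forces $p=0$ and hence $F(x,p)=0$.

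The key role of the hypothesis $m>0$ appears exactly here: the value $0$ is incompatible with the constraint $F(x,p)=-m^2$, so no point of $\Gamma_m$ can satisfy $p=0$, and therefore $dF_{(x,p)}\neq 0$ throughout $\Gamma_m$. This is the only step that is delicate at all — it is what prevents the argument from extending verbatim to the massless case $m=0$, where $p=0$ does lie on $\Gamma_0$ and one has to either excise the zero section or argue differently. Once this regularity is established, the submersion theorem delivers the differentiable submanifold structure and the dimension count $2n-1$ without further work.
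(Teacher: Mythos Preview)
Your proof is correct and follows the same strategy as the paper: exhibit $\Gamma_m$ as a level set of $\mathcal{H}$ and check that the differential does not vanish away from $p=0$. The only difference is cosmetic: where you verify $d\mathcal{H}\neq 0$ by a direct coordinate computation of the $dp_\mu$-components, the paper invokes the symplectic identity $d\mathcal{H}=-i_{X_{\mathcal{H}}}\Omega_s$ together with the non-degeneracy of $\Omega_s$ and the observation that $X_{\mathcal{H}}|_{(x,p)}\neq 0$ for $p\neq 0$ (which amounts to the same non-degeneracy argument you give, just phrased via the Hamiltonian vector field).
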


\begin{proof}
Note that $\Gamma_m$ is the level set of the free-particle Hamiltonian function~(\ref{Eq:FreeParticleH}) with constant $-m^2/2$. Since
$$
d\mathcal{H} = -i_{X_\mathcal{H}}\Omega_s,
$$
and since $\left. X_\mathcal{H} \right|_{(x,p)}$ is different from zero as long as $p\neq 0$, it follows that this set describes a smooth hypersurface of codimension $1$.
\qed
\end{proof}

{\bf Remark}: When $m = 0$ the set $\Gamma_m$ fails to be differentiable at points where $p$ vanishes.\\

Since $(M,g)$ is assumed to be connected and time-oriented, the mass shell (with $m > 0$) consists of precisely two connected components (see, for instance, Proposition~3 in~\cite{oStZ13}). Hence, $\Gamma_m$ is the disjoint union
\begin{equation}
\Gamma_m = \Gamma_m^+\, \dot{\cup}\, \Gamma_m^-,
\end{equation}
with $\Gamma_m^\pm$ denoting the future (past) mass shell, respectively. Explicitly,
\begin{equation}
\Gamma^{+}_{m} = \left\{ (x,p)\in T^* M  : g_x^{-1}(p,p) = -m^2,\; \hbox{$p$ is future-directed}  \right\},
\end{equation}
and similarly for $\Gamma^{-}_m$. $\Gamma^+_m$ is again a fibre bundle over $M$ whose fibres consist of the future mass hyperboloids
\begin{equation}
\boxed{P_x^+(m) := \left\{ p\in T_x^* M : g_x^{-1}(p,p) = -m^2,\; \hbox{$p$ is future-directed}  \right\}}
\label{Eq:MassHypo}
\end{equation}
at each $x\in M$. The Liouville vector field $L$ defined in Eq.~(\ref{Eq:LiouvilleVF}) is tangent to the future mass shell (see Fig.~\ref{Fig:LGammam}):

\begin{figure}
\center
\includegraphics[scale=1]{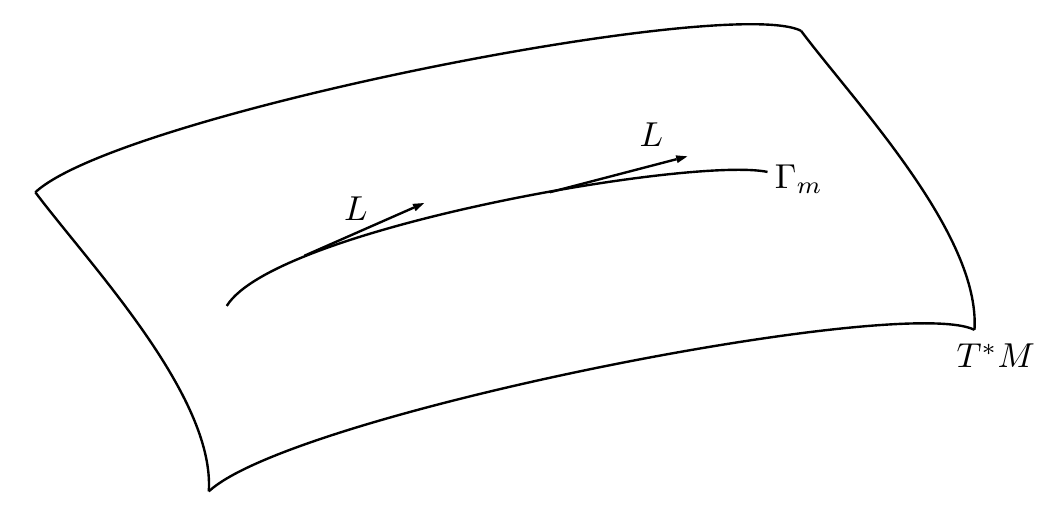}
\caption{Illustration of $T^* M$, the submanifold $\Gamma_m$ and the Liouville vector field tangent to it.}
\label{Fig:LGammam}
\end{figure}

\begin{lemma}
\label{Lem:LTangentMassShell}
At each point $(x,p)\in \Gamma_m$, $L$ is tangent to $\Gamma_m$.
\end{lemma}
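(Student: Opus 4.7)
The key observation is that $\Gamma_m$ is precisely the level set $\{\mathcal{H} = -m^2/2\}$ of the free-particle Hamiltonian defined in Eq.~(\ref{Eq:FreeParticleH}). A vector field on $T^*M$ is tangent to this level set at a given point if and only if it annihilates $d\mathcal{H}$ there. Therefore, the plan is simply to verify that $L[\mathcal{H}] = 0$ identically on $T^*M$, which a fortiori gives tangency at every point of $\Gamma_m$.

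To compute $L[\mathcal{H}]$, I would use the coordinate-free route through the symplectic form rather than grinding through the explicit expression~(\ref{Eq:LiouvilleVF}). By Definition~\ref{Def:HamVF}, $L = X_\mathcal{H}$ satisfies $d\mathcal{H} = -i_{X_\mathcal{H}}\Omega_s$. Applying both sides to $L$ itself yields
\begin{equation}
L[\mathcal{H}] = d\mathcal{H}(L) = -\Omega_s(L,L),
\end{equation}
and this vanishes by the antisymmetry of the symplectic two-form. Equivalently, one may phrase the same step in terms of the Poisson bracket: by Eq.~(\ref{Eq:dG}), $L[\mathcal{H}] = \{\mathcal{H},\mathcal{H}\}$, which is zero by property (i) of the Lemma on Poisson brackets.

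Once $L[\mathcal{H}] = 0$ is established, the tangency follows immediately. Concretely, for any $(x,p) \in \Gamma_m$, since $p \neq 0$ when $m > 0$ the differential $d\mathcal{H}_{(x,p)}$ is nonzero (as already noted in the preceding lemma showing that $\Gamma_m$ is a smooth hypersurface), and the tangent space $T_{(x,p)}\Gamma_m$ is exactly the kernel of $d\mathcal{H}_{(x,p)}$; the identity $d\mathcal{H}_{(x,p)}(L_{(x,p)}) = 0$ then places $L_{(x,p)}$ in this tangent space. I do not anticipate a real obstacle here — the only mildly subtle point is to resist the temptation to insert the coordinate expression~(\ref{Eq:LiouvilleVF}) and perform the direct algebraic cancellation $g^{\mu\nu}p_\nu \partial_\mu \mathcal{H} - \tfrac{1}{2}(\partial_\mu g^{\alpha\beta}) p_\alpha p_\beta \partial^\mu \mathcal{H}/\partial p_\mu = 0$, which works but obscures that the statement is a general feature of every Hamiltonian vector field vis-à-vis its own Hamiltonian.
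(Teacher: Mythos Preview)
Your proof is correct and is essentially identical to the paper's own argument: both exploit that $L = X_{\mathcal{H}}$ and conclude $L[\mathcal{H}] = d\mathcal{H}(L) = \Omega_s(X_{\mathcal{H}},X_{\mathcal{H}}) = 0$ from the antisymmetry of $\Omega_s$, hence $L$ is tangent to every level set of $\mathcal{H}$. The only cosmetic difference is that the paper writes $\Omega_s(X_{\mathcal{H}},X_{\mathcal{H}})$ directly while you first pass through $-i_{X_{\mathcal{H}}}\Omega_s$ and the Poisson-bracket formulation $\{\mathcal{H},\mathcal{H}\}=0$.
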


\begin{proof}
Using the fact that $L$ is the Hamiltonian vector field associated with the one-particle Hamiltonian $\mathcal{H}$, one obtains immediately
$$
L[\mathcal{H}] = d\mathcal{H}(L) = \Omega_{s}(L,X_{\mathcal{H}})
 = \Omega_{s}(X_{\mathcal{H}},X_{\mathcal{H}}) = 0,
$$
and hence, $L$ leaves the level sets of $\mathcal{H}$ invariant.
\qed
\end{proof}

Therefore, we can also consider $L$ as a vector field on the future mass shell $\Gamma_m^+$. As we will see later, the Boltzmann equation for a simple collisionless gas is simply $L[f] = 0$, with $f\in {\cal F}(\Gamma_m^+)$ the one-particle distribution function.

%%%%%%%%%%%%%%%%%%%%%%%%%%%%%%%%%%%%%%%%%%%%%
\subsection{Sasaki metric}
\label{SubSec:Sasaki}

In previous subsections we showed that the cotangent bundle $T^* M$ associated with the spacetime manifold $M$ is a $2n$-dimensional differentiable manifold which admits a natural symplectic form. This in turn, gives rise to the necessary structure for a Hamiltonian formalism on $T^* M$, as we have discussed. In this subsection, we show that the spacetime metric $g$ gives rise to a natural metric on $T^* M$, called the Sasaki metric.\footnote{This metric was introduced by Shigeo Sasaki  in 1958 in the context of the tangent bundle associated with a Riemannian manifold, see~\cite{sS58} and Refs.~\cite{sS62,pD62,sGeK02,rA19} for further related work on the geometry of the tangent bundle relevant to this work.} This metric also induces a metric on the future mass shells $\Gamma_m^+$, and as we will show towards the end of this subsection this allows one to interpret the mass shells with $m > 0$ as Lorentzian submanifold of $T^*M$. Most of the material presented in this subsection is based on~\cite{pR19-thesis}.

The definition of the Sasaki metric makes use of a vector space decomposition of the tangent space, $T_{(x,p)}\left(T^{*}M\right)$, into horizontal $H_{(x,p)}$ and vertical $V_{(x,p)}$ subspaces:
\begin{equation}
T_{(x,p)}\left(T^{*}M\right) = H_{(x,p)} \oplus V_{(x,p)},
\label{Eq:tanS_of_cotanM}
\end{equation}
where each of these subspaces is isomorphic to the tangent space $T_x M$ at $x$. This allows one to define the Sasaki metric $\hat{g}$ on $T^* M$ by requiring that the decomposition~(\ref{Eq:tanS_of_cotanM}) is orthogonal with respect to $\hat{g}$ and that its restrictions to $H_{(x,p)}$ and $V_{(x,p)}$ are compatible with the action of the spacetime metric on $T_x M$.

\begin{definition}
\label{Def:HorVer}
The horizontal and vertical subspaces are defined as
\begin{equation}
\label{Eq:horizontal_vertical}
H_{(x,p)} := \Ker K_{(x,p)},\qquad
V_{(x,p)} := \Ker d\pi_{(x,p)},
\end{equation}
where $K_{(x,p)}: T_{(x,p)}\left(T^{*}M\right) \to T_x^* M$ is the connection map (defined below) and $d\pi_{(x,p)}: T_{(x,p)}(T^{*}M)\to T_x M$ is the differential of the projection map $\pi$ at $(x,p)$ (cf. Definition~\ref{Def:simplectic_form}).
\end{definition}

To define the connection map, let $Z\in T_{(x,p)}\left(T^{*}M\right)$ be a vector at $(x,p)$ which is tangent to a given, smooth curve $\gamma(\lambda)$ in $T^* M$ through $(x,p)$, see Fig.~\ref{Fig:ConnectionMap}. The curve $\gamma(\lambda)$ consists of points $\left( x(\lambda),p(\lambda)\right)$, where $x(\lambda)\in M$ and $p(\lambda)\in T^*_{x(\lambda)}M$. Therefore, $\gamma(\lambda)$ gives rise to a curve $x(\lambda)$ in $M$ through the point $x(0) = x$ and a covector field $p(\lambda)$ along it, such that $p(0) = p$. By means of the parallel transport in $M$ (defined by the Levi-Civita connection $\nabla$ belonging to the spacetime metric $g$), one can transport each of these covectors $p(\lambda)$ along $x(\lambda)$ to the point $x$, giving rise to a family $\hat{p}(\lambda)$ of covectors at $x$. The connection map is defined as the first variation of $\hat{p}(\lambda)$, that is,
\begin{equation}
K_{(x,p)}(Z) := \left. \frac{d}{d\lambda} \hat{p}(\lambda) \right|_{\lambda = 0}
 = \left.\nabla_{\dot{x}(0)}p(\lambda)\right|_{\lambda=0}.
\end{equation}

\begin{figure}[h]
\center
\includegraphics[scale=0.48]{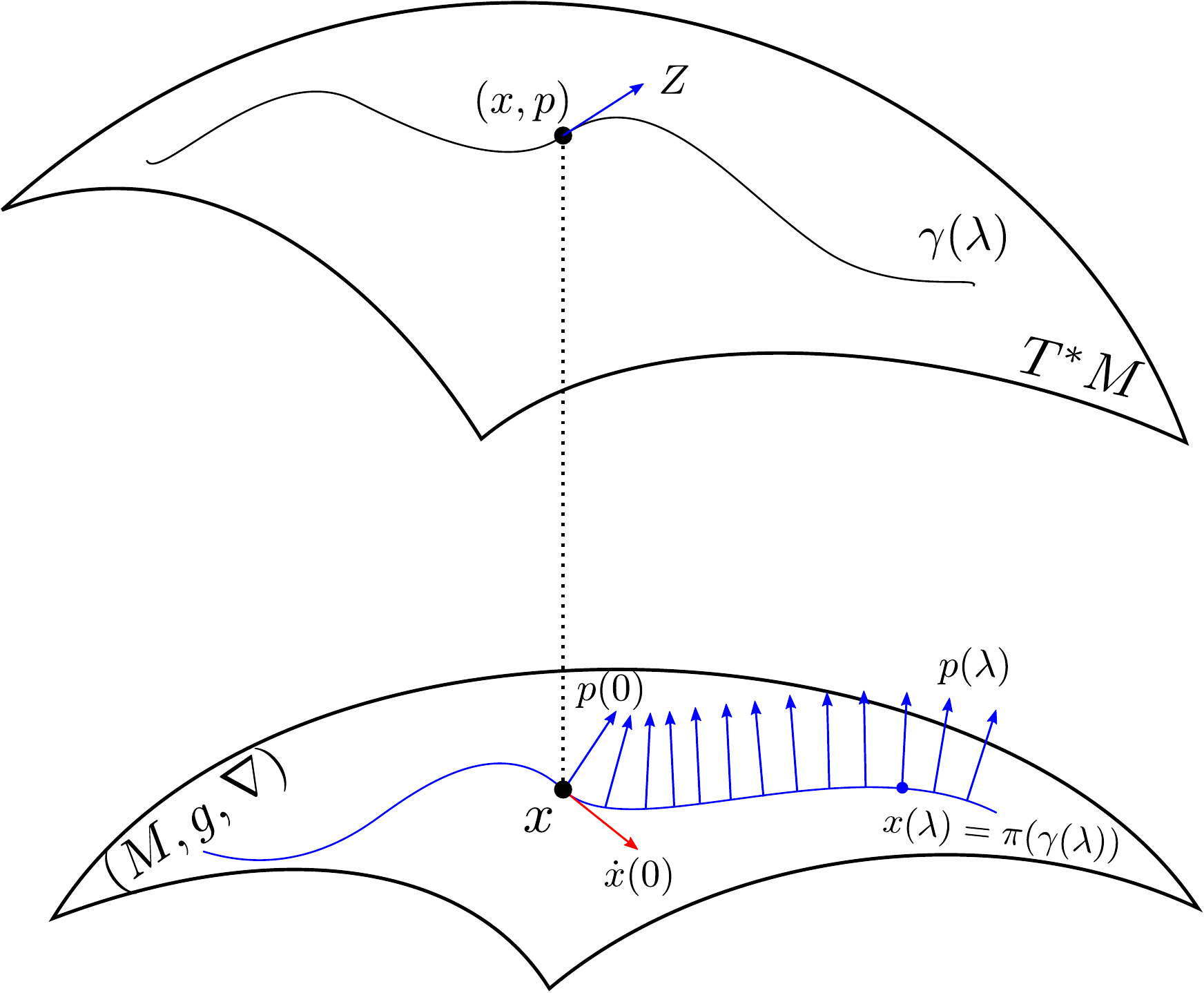}
\caption{Illustration for the definition of the action of the connection map on a tangent vector $Z\in T_{(x,p)}\left(T^{*}M\right)$.}
\label{Fig:ConnectionMap}
\end{figure}

In terms of adapted local coordinates $(x^{\mu},p_{\mu})$ we have 

\begin{equation}
Z=\left.X^{\mu}\frac{\partial}{\partial x^{\mu}}\right|_{(x,p)}+\left.Y_{\mu}\frac{\partial}{\partial p_{\mu}}\right|_{(x,p)},\qquad
p(\lambda) = p_\nu(\lambda)\left. dx^{\nu}\right|_{x(\lambda)},
\end{equation}
with
\begin{equation}
\dot{x}(0) := \frac{dx}{d\lambda}(0) = d\pi_{(x,p)}(Z)
 = \left.X^{\mu}\frac{\partial}{\partial x^{\mu}}\right|_{x},\qquad
\frac{dp_\nu}{d\lambda}(0) = Y_\nu.
\end{equation}
Therefore,
\begin{eqnarray*}
\nabla_{\dot{x}(0)}p(\lambda) &=& \left.\nabla_{\dot{x}(0)}\left(\left. p_{\nu}(\lambda)dx^{\nu}\right|_{x(\lambda)} \right)\right|_{\lambda=0}\\
 &=&\left.\left.\dot{x}(0)\left[p_{\nu}(\lambda) \right]\right|_{\lambda=0}dx^{\nu}\right|_{x}+\left. p_{\nu}(0)\nabla_{\dot{x}(0)}dx^{\nu}\right|_{x} \\
&=& \left.\left.\frac{d}{d\lambda} p_{\nu}(\lambda)\right|_{\lambda=0} dx^{\nu} \right|_{x}+\left. p_{\nu}(0)X^{\mu}\nabla_{\frac{\partial}{\partial x^{\mu}}} dx^{\nu}\right|_{x} \\
&=& \left.\left.\frac{d}{d\lambda} p_{\alpha}(\lambda)\right|_{\lambda=0} dx^{\alpha} \right|_{x} - \left. p_{\nu}(0)X^{\mu}\Gamma^{\nu}{}_{\mu\alpha} dx^{\alpha}\right|_{x} \nonumber \\
&=& \left. \left[ Y_\alpha - \Gamma^{\nu}{}_{\mu\alpha} p_{\nu} X^{\mu}\right] dx^{\alpha}\right|_{x},
\end{eqnarray*}
where $\Gamma^\nu{}_{\mu\alpha}$ denote the Christoffel symbols associated with $\nabla$. Hence, one obtains the following explicit representation for the connection map in terms of adapted local coordinates:
\begin{equation}
\boxed{ K_{(x,p)}(Z)
 = \left[Y_{\alpha} - \Gamma^{\nu}{}_{\mu\alpha}(x) p_{\nu}X^{\mu}\right] dx^{\alpha}_x,
}
\label{Eq:ConnectionMap}
\end{equation}
where $(X^\mu,Y_\alpha)$ are the components of $Z$. Based on the representation~(\ref{Eq:ConnectionMap}) we can make the following observations:
\begin{enumerate}
\item $K_{(x,p)}(Z)$ only depends on $Z$, it does not depend on the choice of the curve $\gamma$ tangent to it.
\item The map $K_{(x,p)}: T_{(x,p)}(T^{*}M)\rightarrow T^{*}_{x}M$ is linear.

\item $\displaystyle Z\in \Ker K_{(x,p)}\Leftrightarrow Y_{\alpha}= \Gamma^{\nu}{_{\mu\alpha}} p_{\nu} X^{\mu} \Leftrightarrow Z=X^{\mu}\left[\frac{\partial}{\partial x^{\mu}} +\Gamma^{\nu}{_{\mu\alpha}} p_{\nu} \frac{\partial}{\partial p_{\alpha}}\right]$. Therefore, a suitable basis for the horizontal subspace $H_{(x,p)} = \Ker K_{(x,p)}$ consists of the $n$ linearly independent tangent vectors
\begin{equation}
\label{Eq:Ddx}
\boxed{ \left. \frac{D}{dx^{\mu}} \right|_{(x,p)} := \left. \frac{\partial}{\partial x^{\mu}} \right|_{(x,p)} + \Gamma^{\nu}{}_{\mu\alpha}(x) p_{\nu} \left. \frac{\partial}{\partial p_{\alpha}} \right|_{(x,p)}.}
\end{equation}

\item From the considerations below Definition~\ref{Def:simplectic_form}, a suitable basis for the vertical subspace $V_{(x,p)} = \Ker d\pi_{(x,p)}$ consists of the $n$ linearly independent tangent vectors
\begin{equation}
\left. \frac{\partial}{\partial p_{\mu}} \right|_{(x,p)}.
\label{Eq:ddp}
\end{equation}
Clearly, the $2n$ vector $\displaystyle \left. \frac{D}{dx^{\mu}} \right|_{(x,p)}$ and $\displaystyle \left. \frac{\partial}{\partial p_{\mu}} \right|_{(x,p)}$ are linearly independent, and hence they provide a basis for the tangent space $T_{(x,p)} (T^* M)$ which is adapted to the decomposition~(\ref{Eq:tanS_of_cotanM}).

\item Accordingly, any tangent vector $Z\in T_{(x,p)}(T^{*}(M))$ can be decomposed uniquely in the form
\begin{equation}
Z = Z^{H} + Z^{V},\qquad Z^{H}\in H_{(x,p)},\quad Z^{V}\in V_{(x,p)}.
\end{equation}

\item For later use, we will also need the basis $\{ dx^\mu, Dp_\mu \}$ of $T_{(x,p)}^* (T^* M)$ dual to $\displaystyle \left\{ \frac{D}{dx^\mu} ,\frac{\partial}{\partial p_{\mu}} \right\}$, with $Dp_\mu$ given by
\begin{equation}
\boxed{ \left. Dp_{\mu} \right|_{(x,p)} := \left. dp_{\mu}\right|_{(x,p)}
 - \Gamma^{\alpha}{}_{\beta\mu} (x) p_{\alpha} \left. dx^{\beta} \right|_{(x,p)},}
\label{Eq:dualbasis}
\end{equation} 
such that
\begin{equation}
dx^{\mu}\left( \frac{D}{dx^\nu} \right)  = \delta^\mu{}_\nu, \qquad
dx^{\mu}\left( \frac{\partial}{\partial p_{\nu}}\right) = 0, \qquad 
Dp_\mu\left( \frac{D}{dx^\nu} \right)  = 0,\qquad
Dp_\mu\left( \frac{\partial}{\partial p_{\nu}}\right) = \delta_\mu{}^\nu.
\end{equation}
The following commutator identities will also turn out to be useful:
\begin{equation}
\left[ \frac{D}{dx^\mu}, \frac{D}{dx^\nu} \right] 
 = R^{\alpha}{}_{\beta\mu\nu} p_{\alpha} \frac{\partial}{\partial p_{\beta}}, \qquad \left[ \frac{D}{dx^\mu}, \frac{\partial}{\partial p_{\nu}} \right]
  = -\Gamma^{\nu}{}_{\mu\beta} \frac{\partial}{\partial p_{\beta}}, \qquad \left[ \frac{\partial}{\partial p_{\mu}}, \frac{\partial}{\partial p_{\nu}} \right] = 0,
\label{Eq:Commutators}
\end{equation}
where $\displaystyle R^{\alpha}{}_{\beta\mu\nu}$ denotes the Riemann tensor.\footnote{In particular, the first commutator implies that $H_{(x,p)}$ does not always define an integrable distribution if the curvature is non-zero.}

\item In addition to the decomposition~(\ref{Eq:tanS_of_cotanM}) into horizontal and vertical subspaces, the connection map $K_{(x,p)}$ and $d\pi_{(x,p)}$ also provide a means to identify horizontal and vertical vectors with tangent vectors on $M$. This is realized through the following maps (cf. Fig.~\ref{Fig:IVH})
\begin{eqnarray}
I^{H}_{(x,p)} & : & H_{(x,p)}\rightarrow T_{x}M\: :\: Z^{H}\mapsto d\pi_{(x,p)}(Z^{H}),
\label{Eq:IH_map}\\
I^{V}_{(x,p)} & : & V_{(x,p)}\rightarrow T_{x}M\: :\: Z^{V}\mapsto g_x^{-1}\left( K_{(x,p)}(Z^{V}),\,\cdot\, \right),
\label{Eq:IV_map}
\end{eqnarray}
where here the inverse metric $g_x^{-1}$ provides the isomorphism $T_x^*M\to T_x M$, $\omega\mapsto g_x^{-1}(\omega,\cdot)$ between covectors and vectors at $x$. Due to the aforementioned properties, these maps are linear and invertible, and hence they allow one to identify the horizontal subspace as well as the vertical subspace with $T_x M$. In terms of the basis vectors~(\ref{Eq:Ddx},\ref{Eq:ddp}), one has
\begin{eqnarray}
I^{H}_{(x,p)}(Z^{H}) &=& Z^{\mu}\left. \frac{\partial}{\partial x^\mu} \right|_{x},\qquad
Z^{H} = Z^{\mu} \left. \frac{D}{dx^\mu} \right|_{(x,p)},\\
I^{V}_{(x,p)}(Z^{V}) &=& 
 Y_{\mu} g^{\mu\nu}(x)\left. \frac{\partial}{\partial x^\nu} \right|_{x},
\qquad
Z^{V} = Y_{\mu}\left. \frac{\partial}{\partial p_{\mu}} \right|_{(x,p)}.
\end{eqnarray}

\item Based on the previous observation, one can introduce an ``almost complex structure" (analogous to a rotation in the complex plane by the angle $\pi/2$), that is a linear map $J_{(x,p)} : T_{(x,p)}(T^*M)\to T_{(x,p)}(T^*M)$ satisfying $J_{(x,p)}^{2}= -\mathds{1}$, in the following way:
\begin{equation}
J_{(x,p)}(Z^{H} + Z^{V}) := -Q_{(x,p)}^{-1}\left( Z^{V} \right) + Q_{(x,p)}\left( Z^{H} \right),
\qquad Z\in T_{(x,p)}(T^* M),
\label{Eq:AlmostComplexStruc}
\end{equation}
where the map $Q_{(x,p)}: H_{(x,p)}\to V_{(x,p)}$ is defined as $Q_{(x,p)}:= \left( I^{V}_{(x,p)} \right)^{-1}\circ I^{H}_{(x,p)}$.

\end{enumerate}

\begin{figure}[h]
\includegraphics[scale=0.75]{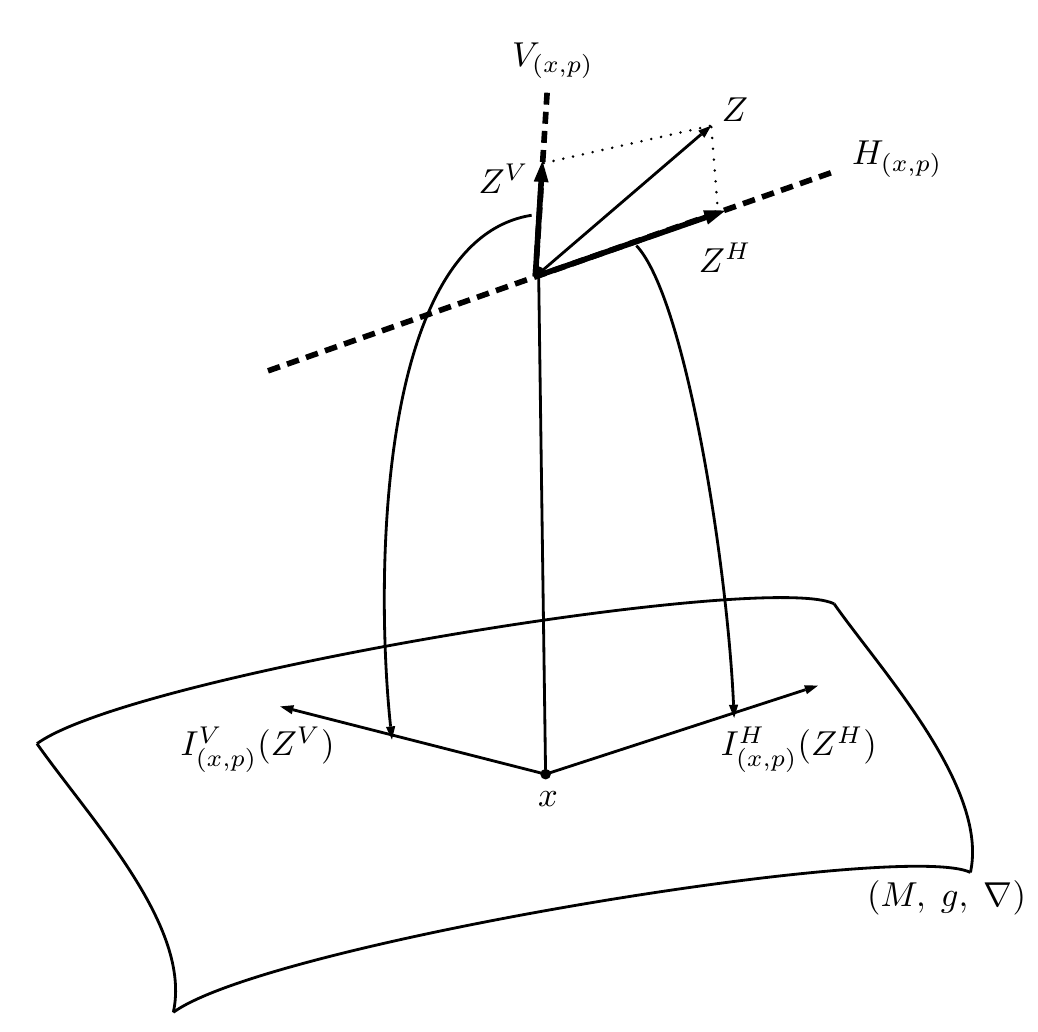}
\caption{Illustration of the maps $I^H_{(x,p)}: H_{(x,p)}\to T_x M$ and $I^V_{(x,p)}: V_{(x,p)}\to T_x M$ which allows one to identify the horizontal and the vertical subspaces with the tangent space $T_x M$.}
\label{Fig:IVH}
\end{figure}

After these observations, it is straightforward to equip $T^* M$ with a metric $\hat{g}$. Since any vector $Z\in T_{(x,p)}(T^*M)$ can be decomposed uniquely into horizontal and vertical components, $Z = Z^H + Z^V$, and since both components can be identified with a tangent vector on $M$, we define $\hat{g}$ as follows:

\begin{definition}
\label{Def:Sasaki}
Given $\displaystyle Z,W\in T_{(x,p)}(T^{*}M)$, we define
\begin{eqnarray}
\hat{g}_{(x,p)}(Z,W) &=& \hat{g}_{(x,p)}\left( Z^{H}+Z^{V},W^{H}+W^{V}\right)  
\nonumber\\
 &:=& \hat{g}_{(x,p)}\left( Z^{H},W^{H}\right)+\hat{g}_{(x,p)}\left( Z^{V},W^{V}\right) \nonumber \\
&:=& g_{x}\left( I^{H}_{(x,p)}(Z^{H}),I^{H}_{(x,p)}(W^{H})\right)
 + g_{x}\left( I^{V}_{(x,p)}(Z^{V}),I^{V}_{(x,p)}(Z^{V})\right) \nonumber\\
&=& g_{x}\left(d\pi_{(x,p)}(Z),d\pi_{(x,p)}(W) \right)+g_{x}^{-1}\left(K_{(x,p)}(Z),K_{(x,p)}(W) \right).
\label{Eq:Sasaki}
\end{eqnarray}
\end{definition}

Thus $\hat{g}$ is the unique metric on $T^* M$ which is orthogonal with respect to the decomposition~(\ref{Eq:tanS_of_cotanM}) and turns the maps $I^{H}_{(x,p)}$ and $I^{V}_{(x,p)}$ defined in Eqs.~(\ref{Eq:IH_map},\ref{Eq:IV_map}) into isometries. Note also that  $\hat{g}$ is symmetric and invariant with respect to the almost complex structure $J_{(x,p)}$. In order to verify that $\hat{g}$ is non-degenerate, we compute its components with respect to the basis vectors $\displaystyle  \frac{D}{dx^\mu} ,\frac{\partial}{\partial p_{\mu}} $:
\begin{equation}
\hat{g}\left( \frac{D}{dx^\mu},\frac{D}{dx^\nu} \right) = g_{\mu\nu}, \qquad 
\hat{g}\left(\frac{\partial}{\partial p_{\mu}},\frac{\partial}{\partial p_{\nu}} \right) = g^{\mu\nu}, \qquad 
\hat{g}\left( \frac{D}{dx^\mu},\frac{\partial}{\partial p_{\nu}} \right) = 0,
\label{Eq:ghatComponents}
\end{equation}
such that\footnote{Since we use geometrized units, the coordinates $x^\mu$ and $p_\mu$ have the same units of length. When working in natural units one should introduce a factor $G_N^2/c^6$ in front of the second term on the right-hand side of Eq.~(\ref{Eq:SasakiExplicit}).}
\begin{equation}
\boxed{\hat{g} = g_{\mu\nu} dx^{\mu}\otimes dx^{\nu}
 + g^{\mu\nu} Dp_{\mu}\otimes Dp_{\nu}, \qquad 
 Dp_{\mu} = dp_{\mu} - \Gamma^{\alpha}{}_{\beta\mu}p_{\alpha}dx^{\beta}.}
\label{Eq:SasakiExplicit}
\end{equation}
Therefore, the Sasaki metric~(\ref{Eq:Sasaki}) is a pseudo-Riemannian metric with signature $(-2,2n-2)$. Note also that in terms of the basis used above, the determinant of the metric components is one.

In the remainder of this subsection, we list some results which will be useful later. The first result concerns the following relation between the symplectic form, the Sasaki metric and the almost complex structure:

\begin{proposition}
\label{Prop:OmegaJgRelation}
Let $\Omega_s$, $J$ and $\hat{g}$ be defined as in Eqs.~(\ref{Eq:Omega},\ref{Eq:AlmostComplexStruc},\ref{Eq:Sasaki}), then for all $W,Z\in {\cal X}(T^* M)$,
\begin{equation}
\Omega_s\left(W,Z\right)=\hat{g}\left(W,J\left(Z\right)\right).
\label{Eq:OmegaJgRelation}
\end{equation}
\end{proposition}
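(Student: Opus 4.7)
The plan is to verify the identity by picking the adapted basis $\{D/dx^\mu,\partial/\partial p_\mu\}$ introduced in subsection~\ref{SubSec:Sasaki}, so that everything reduces to a direct component computation. Since the Sasaki metric~(\ref{Eq:SasakiExplicit}), the almost complex structure~(\ref{Eq:AlmostComplexStruc}) and (after the simplification below) the symplectic form all have their simplest expressions in this basis, equality will be manifest.

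First, I would rewrite $\Omega_s$ in the horizontal–vertical basis. Using~(\ref{Eq:dualbasis}) we have $dp_\mu = Dp_\mu + \Gamma^{\alpha}{}_{\beta\mu} p_\alpha\, dx^\beta$, so
\begin{equation}
\Omega_s = dp_\mu \wedge dx^\mu = Dp_\mu \wedge dx^\mu + \Gamma^{\alpha}{}_{\beta\mu} p_\alpha\, dx^\beta \wedge dx^\mu.
\end{equation}
The last term vanishes because $\Gamma^{\alpha}{}_{\beta\mu}$ is symmetric in $(\beta,\mu)$ while $dx^\beta\wedge dx^\mu$ is antisymmetric, so $\Omega_s = Dp_\mu \wedge dx^\mu$.

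Next, expand $Z = a^\mu\, D/dx^\mu + b_\mu\, \partial/\partial p_\mu$ and $W = c^\mu\, D/dx^\mu + d_\mu\, \partial/\partial p_\mu$. Using the explicit form of $I^H$ and $I^V$, the map $Q = (I^V)^{-1}\circ I^H$ sends $a^\mu D/dx^\mu \mapsto g_{\mu\nu}a^\nu \partial/\partial p_\mu$ and its inverse sends $b_\mu \partial/\partial p_\mu \mapsto g^{\mu\nu}b_\nu\, D/dx^\mu$. Therefore, by~(\ref{Eq:AlmostComplexStruc}),
\begin{equation}
J(Z) = -g^{\mu\nu}b_\nu\, \frac{D}{dx^\mu} + g_{\mu\nu}a^\nu\, \frac{\partial}{\partial p_\mu}.
\end{equation}
Using the component formulas~(\ref{Eq:ghatComponents}) of $\hat{g}$, one then finds $\hat{g}(W,J(Z)) = -c^\mu b_\mu + d_\mu a^\mu$, where indices are contracted directly (the lowered/raised index contractions on either side cancel). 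On the other hand, applying $\Omega_s = Dp_\mu\wedge dx^\mu$ to $(W,Z)$ and using the dual-basis relations displayed after~(\ref{Eq:dualbasis}) gives $\Omega_s(W,Z) = d_\mu a^\mu - b_\mu c^\mu$, which matches.

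The calculation is essentially bookkeeping, so the only subtle point is the first step: making sure that $\Omega_s$ really has the clean expression $Dp_\mu\wedge dx^\mu$ in the horizontal–vertical basis, which hinges on the symmetry of the Christoffel symbols. Once this is established, the rest of the proof is a one-line check of components in the adapted basis, and since both sides are tensorial, the pointwise identity at arbitrary $(x,p)$ immediately upgrades to the claimed identity for all vector fields $W,Z\in\mathcal{X}(T^*M)$.
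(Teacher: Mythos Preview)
Your proof is correct and follows essentially the same approach as the paper: both first rewrite $\Omega_s = Dp_\mu\wedge dx^\mu$ using the symmetry of the Christoffel symbols, and then verify the identity componentwise in the adapted basis $\{D/dx^\mu,\partial/\partial p_\mu\}$. The only cosmetic difference is that the paper first argues abstractly that $\tilde{\Omega}(W,Z):=\hat{g}(W,J(Z))$ is antisymmetric and vanishes on purely horizontal or purely vertical pairs, thereby reducing the check to the single mixed component $\tilde{\Omega}(D/dx^\mu,\partial/\partial p_\nu)$, whereas you expand general $W,Z$ directly; both routes amount to the same computation.
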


\begin{proof} First we note that due the fact that the Christoffel symbols are symmetric in the lower indices, the symplectic form $\Omega_s$ defined in Eq.~(\ref{Eq:Omega}) can also be written as
\begin{equation}
\boxed{\Omega_s = Dp_\mu\wedge dx^\mu,}
\label{Eq:OmegaBis}
\end{equation}
where $Dp_\mu$ is defined in Eq.~(\ref{Eq:dualbasis}).

Next, we set for each $W,Z\in {\cal X}(T^* M)$
$$
\tilde{\Omega}(W,Z) := \hat{g}\left(W,J\left(Z\right)\right).
$$
This defines a bilinear form which is antisymmetric under the exchange $W$ and $Z$:
$$
\hat{g}\left(W,J\left(Z\right)\right) = \hat{g}\left(J\left(W\right),J^2\left(Z\right)\right)
  = -\hat{g}\left(Z,J\left(W\right)\right),
$$
and which vanishes if both $W$ and $Z$ are horizontal or both are vertical. Furthermore, with respect to the basis vectors ${\displaystyle \frac{D}{dx^{\mu}},\frac{\partial}{\partial p_{\mu}}}$ defined in Eqs.~(\ref{Eq:Ddx},\ref{Eq:ddp}) we find
\begin{eqnarray*}
\tilde{\Omega}\left( \frac{D}{dx^{\mu}}, \frac{\partial}{\partial p_{\nu}} \right)
 &=& \hat{g}\left(\frac{D}{dx^{\mu}},J\left(\frac{\partial}{\partial p_{\nu}}\right)\right) \\
 &=&-\hat{g}\left(\frac{D}{dx^{\mu}},g^{\nu\alpha}\frac{D}{dx^{\alpha}}\right)\\
&=&-g^{\nu\alpha}g\left(\frac{\partial}{\partial x^\mu},\frac{\partial}{\partial x^\alpha}\right)\\
&=&-g^{\nu\alpha}g_{\alpha\mu}=-\delta^{\nu}{_{\mu}}.
\end{eqnarray*}
where in the second step we have used
\begin{equation}
J\left(\frac{\partial}{\partial p_{\nu}}\right)=-Q^{-1}\left(\frac{\partial}{\partial p_{\nu}}\right)
 = \left(-I^{H}\right)^{-1}\left[g^{\mu\nu}\frac{\partial}{\partial x^{\mu}}\right]=-g^{\nu\mu}\frac{D}{dx^{\mu}}.
\label{Eq:Jp}
\end{equation}
Therefore, since $\tilde{\Omega}$ is antisymmetric,
$$
\tilde{\Omega} = Dp_\mu \wedge dx^\mu,
$$
which agrees with~(\ref{Eq:OmegaBis}).
\qed
\end{proof}

The next result shows that the Liouville vector field $L$ generates geodesics in the pseudo-Riemmannian manifold $(T^* M,\hat{g},\hat{\nabla})$, where $\hat{\nabla}$ denotes the Levi-Civita connection associated with the Sasaki metric $\hat{g}$:

\begin{proposition}
\label{Prop:GeodesicL}
The Liouville vector field defined in Eq.~(\ref{Eq:LiouvilleVF}) is geodesic:
\begin{equation}
\hat{\nabla}_L L = 0.
\end{equation}
\end{proposition}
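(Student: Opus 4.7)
The approach is to exploit the horizontal/vertical frame $\{D/dx^{\mu},\partial/\partial p_{\mu}\}$ adapted to the Sasaki decomposition of Definition~\ref{Def:HorVer}. The key observation is that $L$ is in fact purely horizontal in this frame. Substituting $\partial/\partial x^{\mu} = D/dx^{\mu} - \Gamma^{\sigma}{}_{\mu\alpha}p_{\sigma}\,\partial/\partial p_{\alpha}$ from Eq.~(\ref{Eq:Ddx}) into the coordinate expression~(\ref{Eq:LiouvilleVF}) for $L$, and using metric compatibility of $\nabla$ in the form $\partial_{\mu} g^{\alpha\beta} = -\Gamma^{\alpha}{}_{\mu\sigma}g^{\sigma\beta} - \Gamma^{\beta}{}_{\mu\sigma}g^{\alpha\sigma}$, the two vertical contributions cancel and one is left with
\[
L = p^{\mu}\,\frac{D}{dx^{\mu}},\qquad p^{\mu} := g^{\mu\nu}p_{\nu}.
\]

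Next I would compute the horizontal-horizontal part of the Levi-Civita connection $\hat{\nabla}$ of $\hat{g}$ via the Koszul formula. Using the Sasaki metric components~(\ref{Eq:ghatComponents}), the $\hat{g}$-orthogonality of horizontal and vertical subspaces, and the commutator $[D/dx^{\mu}, D/dx^{\nu}] = R^{\alpha}{}_{\beta\mu\nu}p_{\alpha}\,\partial/\partial p_{\beta}$ from~(\ref{Eq:Commutators}), most Koszul terms drop and one obtains
\[
\hat{\nabla}_{D/dx^{\mu}}\frac{D}{dx^{\nu}} = \Gamma^{\rho}{}_{\mu\nu}\frac{D}{dx^{\rho}} + \frac{1}{2}R^{\alpha}{}_{\rho\mu\nu}p_{\alpha}\,\frac{\partial}{\partial p_{\rho}},
\]
the first term recovering the spacetime Christoffel symbols along horizontal directions and the second being a curvature correction of vertical type.

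The proof is then concluded by the expansion
\[
\hat{\nabla}_L L
= p^{\mu} p^{\nu}\,\hat{\nabla}_{D/dx^{\mu}}\frac{D}{dx^{\nu}}
+ p^{\mu}\left[\frac{D}{dx^{\mu}}(p^{\nu})\right]\frac{D}{dx^{\nu}}.
\]
The vertical part of the first term vanishes by antisymmetry of $R^{\alpha}{}_{\rho\mu\nu}$ in $(\mu,\nu)$ against the symmetric $p^{\mu}p^{\nu}$. For the horizontal parts, a short calculation based once again on metric compatibility gives $(D/dx^{\mu})(p^{\nu}) = -\Gamma^{\nu}{}_{\mu\sigma}p^{\sigma}$, and this cancels the remaining $\Gamma^{\rho}{}_{\mu\nu}p^{\mu}p^{\nu}\,D/dx^{\rho}$ contribution exactly.

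The main obstacle is notational rather than conceptual: one must track index placements carefully and invoke metric compatibility twice, first in Step~1 to annihilate the vertical component of $L$ and again in Step~3 to evaluate $(D/dx^{\mu})(p^{\nu})$. The Koszul computation in Step~2 is lengthy on paper but is greatly simplified by the orthogonality of the horizontal/vertical decomposition, which kills most of its six terms. Geometrically, the statement just says that the horizontal lifts (in the cotangent-bundle sense) of affine geodesics of $(M,g)$ are geodesics of $(T^{*}M,\hat{g})$, with the Liouville flow realizing these lifts in a single stroke.
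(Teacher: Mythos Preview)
Your proof is correct but takes a genuinely different route from the paper's. The paper deliberately avoids computing any Christoffel symbols of $\hat{g}$: after establishing (as a lemma) that $L$ is horizontal and that the Poincar\'e one-form is $\hat{g}$-dual to $L$, i.e.\ $\Theta = \hat{g}(L,\cdot)$, it computes $\pounds_L\Theta$ in two ways---once via Cartan's formula using $d\Theta = \Omega_s$ and $i_L\Omega_s = -d\mathcal{H}$, which gives $\pounds_L\Theta = d\mathcal{H}$, and once via the torsion-free metric connection $\hat\nabla$, which gives $\pounds_L\Theta = \hat\nabla_L\Theta + d\mathcal{H}$---and concludes $\hat\nabla_L\Theta = 0$, hence $\hat\nabla_L L = 0$. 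Your approach instead carries out the Koszul computation of the horizontal--horizontal Christoffel symbols and expands $\hat\nabla_L L$ directly. This is precisely the ``explicit'' calculation the paper sidesteps; it is more hands-on and makes the cancellation mechanism transparent (Riemann antisymmetry kills the vertical part, metric compatibility kills the horizontal part), whereas the paper's argument is slicker and leverages the interplay between the symplectic and Sasaki structures, at the cost of requiring the auxiliary duality $\Theta = \hat{g}(L,\cdot)$.
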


\begin{proof}
This proposition was originally demonstrated by Sasaki in 1958~\cite{sS58} in the context of the tangent bundle associated with a Riemannian manifold; see also~\cite{oStZ14b} for an alternative proof in the Lorentzian case which avoids explicitly computing the connection $\hat{\nabla}$. The following proof is a straightforward generalization of the one presented in~\cite{oStZ14b} to the cotangent bundle. For the proof, we require the following lemma.

\begin{lemma}
\label{Lem:LH}
\begin{enumerate}
\item[(a)] The Liouville vector field $L$ is horizontal and
\begin{equation}
\boxed{L = g^{\mu\nu} p_\nu\frac{D}{dx^\mu}.}
\label{Eq:LiouvilleHor}
\end{equation}
\item[(b)] The Poincar\'e one-form (see Definition~\ref{Def:simplectic_form}) is dual to $L$, that is:
\begin{equation}
\boxed{\Theta = \hat{g}(L,\;\cdot\;).}
\end{equation}
\end{enumerate}
\end{lemma}

\begin{proof} Using the explicit expressions~(\ref{Eq:LiouvilleVF}) and (\ref{Eq:ConnectionMap}) for the Liouville vector field and connection map, it is not difficult to verify that $K(L) = 0$, which shows that $L$ is horizontal. Since $\displaystyle I^{H}_{(x,p)}(L) = d\pi_{(x,p)}(L) = g^{\mu\nu} p_\mu\frac{\partial}{\partial x^\nu}$, the formula~(\ref{Eq:LiouvilleHor}) follows. Furthermore, taking into account the definition of the Sasaki metric $\hat{g}$ and the Poincar\'e one-form $\Theta$ it follows for any $Z\in {\cal X}(T^* M)$ that
$$
\hat{g}(L,Z) = \hat{g}(L^H,Z^H) = g( d\pi(L), d\pi(Z) ) = p_\mu\left( d\pi(Z) \right)^\mu 
 = \Theta(Z),
$$
which also proves statement (b).
\qed
\end{proof}

After these preliminary results, we return to the proof of Proposition~\ref{Prop:GeodesicL}. First, note that $\Theta(L) = \hat{g}(L,L) = g^{\mu\nu} p_\mu p_\nu = 2\mathcal{H}$. Next, using Cartan's formula,
\begin{equation}
\pounds_L \Theta = d i_L\Theta + i_L d\Theta = 2d\mathcal{H} + i_L\Omega_s
 = d\mathcal{H},
\label{Eq:LTheta1}
\end{equation}
since $i_L\Omega_s = -d\mathcal{H}$. An alternative way of computing the Lie derivative of $\Theta$ is based on the fact that $\hat{\nabla}$ is metric-compatible and torsion-free, such that $\pounds_L Z=\left[L,Z\right]=\hat{\nabla}_L Z-\hat{\nabla}_Z L$:
\begin{eqnarray}
(\pounds_L \Theta)(Z) &=& L\left[\Theta(Z)\right] - \Theta(\pounds_{L} Z) 
\nonumber\\
&=& L\left[\Theta(Z)\right] - \Theta(\hat{\nabla}_L Z-\hat{\nabla}_Z L)
\nonumber\\
&=& L\left[\Theta(Z)\right] - \Theta(\hat{\nabla}_L Z) + \hat{g}(L,\hat{\nabla}_Z L)
\nonumber\\
&=& L\left[\Theta(Z)\right] - \Theta(\hat{\nabla}_L Z)+d\mathcal{H}(Z)
\nonumber\\
&=& (\hat{\nabla}_L \Theta)(Z)+d\mathcal{H}(Z),
\label{Eq:LTheta2}
\end{eqnarray}
where in the fourth step we have used the Ricci identity to conclude that $2\hat{g}(L,\hat{\nabla}_Z L) = Z[\hat{g}(L,L)] = Z[2\mathcal{H}] = 2d\mathcal{H}(Z)$. Comparing Eq.~(\ref{Eq:LTheta1}) with Eq.~(\ref{Eq:LTheta2}) we conclude that $\hat{\nabla}_L\Theta = 0$, which in view of Lemma~\ref{Lem:LH}(b) implies the desired result.
\qed
\end{proof}

Finally, we collect some useful statements regarding the geometric properties of the future mass shells $\Gamma_m^+$ of positive masses $m > 0$ which are induced from the Sasaki metric. For this, we recall that these surfaces are level sets of the free-particle Hamiltonian $\mathcal{H}$. Consequently, any unit normal vector $N$ on $\Gamma_m^+$ satisfies
\begin{equation}
\hat{g}(N,\;\cdot\;) = \alpha d\mathcal{H},
\label{Eq:N-dH}
\end{equation}
for some appropriate normalization constant $\alpha$. Combining the fact that $d\mathcal{H} = \Omega_s(\;\cdot\;,L)$ with Proposition~\ref{Prop:OmegaJgRelation} we find
\begin{equation}
N = \alpha J(L).
\end{equation}
Furthermore, for points on $\Gamma_m^+$ one has
$$
\hat{g}(N,N) = \alpha^2\hat{g}(L,L) = \alpha^2 g(p,p) = -\alpha^2 m^2,
$$
which shows that $N$ is timelike and fixes the normalization constant $\alpha = \pm 1/m$ up to a sign. Summarizing, we obtain the following result:

\begin{proposition}
\label{Prop:MassShell}
Denote by $\hat{h}$ the induced metric on $\Gamma_m^+$. For any $m > 0$, $(\Gamma_m^+,\hat{h})$ is a $(2n-1)$-dimensional Lorentzian manifold with unit normal vector field\footnote{We choose the orientation of $N$ is such that $\mathcal{H}$ decreases along the flow lines of $N$.}
\begin{equation}
N = \frac{1}{m} J(L) = \frac{1}{m} p_\mu\frac{\partial}{\partial p_\mu}.
\label{Eq:NormalVector}
\end{equation}
\end{proposition}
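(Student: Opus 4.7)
The plan is to assemble the proof from three pieces that are essentially laid out in the discussion preceding the statement: the manifold structure of $\Gamma_m^+$, the identification of the unit normal via the Hamiltonian/symplectic machinery, and an explicit computation of $J(L)$ in adapted coordinates.

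First, I would recall that in Lemma~\ref{Lem:LTangentMassShell}'s surrounding discussion we already know $\Gamma_m$ is a smooth codimension-one submanifold of the $2n$-dimensional $T^*M$, and $\Gamma_m^+$ is an open component of $\Gamma_m$; hence $\Gamma_m^+$ is a $(2n-1)$-dimensional submanifold. What remains is to verify that the induced Sasaki metric $\hat{h}$ is Lorentzian with the claimed normal. Since $\hat{g}$ has signature $(-2,2n-2)$ by Eq.~(\ref{Eq:ghatComponents}), once I establish that the normal direction $N$ to $\Gamma_m^+$ is timelike, the induced metric $\hat{h}$ on the orthogonal complement must have signature $(-1,2n-2)$, which is the desired Lorentzian signature of dimension $2n-1$.

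For the normal, I would proceed exactly as in the paragraph preceding the statement: $\Gamma_m^+$ is a level set of $\mathcal{H}$, so any normal $N$ satisfies $\hat{g}(N,\,\cdot\,) = \alpha\, d\mathcal{H}$ for some constant $\alpha$. Using the defining relation $d\mathcal{H} = \Omega_s(\,\cdot\,,L)$ from Definition~\ref{Def:HamVF} together with Proposition~\ref{Prop:OmegaJgRelation} that rewrites $\Omega_s(W,Z) = \hat{g}(W,J(Z))$, I obtain $\hat{g}(N,\,\cdot\,) = \alpha\, \hat{g}(\,\cdot\,,J(L))$, and by non-degeneracy of $\hat{g}$ conclude $N = \alpha\, J(L)$. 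The normalization is fixed by computing $\hat{g}(N,N) = \alpha^2\, \hat{g}(L,L)$; since $L$ is horizontal by Lemma~\ref{Lem:LH}(a), its Sasaki norm equals $g^{-1}(p,p) = -m^2$ on $\Gamma_m^+$, yielding $\alpha = \pm 1/m$, and in particular $N$ is timelike as required.

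Finally, I would produce the coordinate expression $N = \tfrac{1}{m}\, p_\mu \partial/\partial p_\mu$. Since $L$ is purely horizontal with $L = g^{\mu\nu} p_\nu\, D/dx^\mu$, the definition~(\ref{Eq:AlmostComplexStruc}) gives $J(L) = Q(L) = (I^V)^{-1}\circ I^H(L)$. Using $I^H(L) = g^{\mu\nu} p_\nu\, \partial/\partial x^\mu$ and the explicit form of $I^V$ in Eq.~(\ref{Eq:IV_map}), solving $I^V(Y_\mu \partial/\partial p_\mu) = g^{\mu\nu} p_\nu\, \partial/\partial x^\mu$ for $Y_\mu$ yields $Y_\mu = p_\mu$, giving $J(L) = p_\mu\, \partial/\partial p_\mu$. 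The sign in $\alpha$ is chosen so that $\mathcal{H}$ decreases along $N$, which is consistent with the outward radial direction in momentum space. There is essentially no genuine obstacle in this proof; the only place requiring care is keeping track of the index positions in the $(I^V)^{-1}$ step and in checking that Proposition~\ref{Prop:OmegaJgRelation} is applied in the correct slot, so I would double-check that small computation explicitly rather than by analogy.
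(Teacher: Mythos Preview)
Your proof is correct and follows precisely the approach the paper develops in the paragraphs immediately preceding the proposition (using $d\mathcal{H}=\Omega_s(\cdot,L)$, Proposition~\ref{Prop:OmegaJgRelation}, and $J$-invariance of $\hat{g}$ to get $N=\alpha J(L)$ with $\alpha=\pm 1/m$). The paper's formal proof then offers a second, $J$-free route: it writes $d\mathcal{H}=g^{\mu\nu}p_\mu Dp_\nu$ directly and compares with $\hat{g}(N,\cdot)$ expressed in the $(dx^\mu,Dp_\mu)$ cobasis to read off $dx^\mu(N)=0$, $Dp_\mu(N)=\alpha p_\mu$, and then normalizes as you do; this bypasses the almost complex structure at the price of a small direct computation.
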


\begin{proof}
Although the statement already follows from the arguments preceding the proposition, an alternative proof which does not use the map $J$ explicitly is based on the observation that the differential of $\mathcal{H}$ may be written as
\begin{equation}
\boxed{ d\mathcal{H} = g^{\mu\nu}p_\mu Dp_\nu,}
\end{equation}
with $Dp_\nu$ defined in Eq.~(\ref{Eq:dualbasis}). Comparing this expression with
\begin{equation}
\hat{g}(N,\,\cdot\,) = g_{\mu\nu}dx^{\mu}(N)dx^{\nu}+g^{\mu\nu}Dp_{\mu}(N)Dp_{\nu}
\end{equation}
obtained from the representation~(\ref{Eq:SasakiExplicit}) of the Sasaki metric, one realizes that $dx^\mu(N) = 0$ and $dp_\mu(N) = \alpha p_\mu$, and then the normalization follows again from the observation that $\hat{g}(N,N) = \alpha^2 g^{\mu\nu} p_\mu p_\nu = -\alpha^2 m^2$ on $\Gamma_m^+$.
\qed
\end{proof}

Since $L$ is tangent to $\Gamma_m^+$, it can be regarded as a vector field on the mass shell (which, by a slight abuse of notation we denote again by $L$). Since $\hat{h}(L,L) = \hat{g}(L,L) = 2\mathcal{H} = -m^2 < 0$ it is timelike and thus defines a time-orientation in $(\Gamma_m^+,\hat{h})$. Further, Proposition~\ref{Prop:GeodesicL} implies:

\begin{proposition}
\label{Prop:GeodesicL_on_MS}
Denoting by $\hat{D}$ the Levi-Civita connection on $(\Gamma_m^+,\hat{h})$, one has:
\begin{equation}
\hat{D}_L L = 0.
\end{equation}
\end{proposition}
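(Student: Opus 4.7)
The plan is to derive this proposition as an immediate consequence of Proposition~\ref{Prop:GeodesicL} by invoking the Gauss formula for a nondegenerate hypersurface. Proposition~\ref{Prop:GeodesicL} gives $\hat{\nabla}_L L = 0$ for the Levi-Civita connection on the ambient Sasaki manifold $(T^*M,\hat{g})$, and Lemma~\ref{Lem:LTangentMassShell} ensures that $L$ is tangent to $\Gamma_m^+$ at every one of its points, so the whole task reduces to identifying $\hat{D}_L L$ with the tangential projection of $\hat{\nabla}_L L$.

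More concretely, Proposition~\ref{Prop:MassShell} establishes that $\Gamma_m^+$ is a Lorentzian hypersurface of $(T^*M,\hat{g})$ with timelike unit normal $N = J(L)/m$, so the induced metric $\hat{h}$ is nondegenerate and its Levi-Civita connection $\hat{D}$ is characterized by the standard Gauss decomposition
\begin{equation*}
\hat{\nabla}_X Y = \hat{D}_X Y - \hat{g}(N,N)^{-1}\hat{g}(\hat{\nabla}_X Y,N)\, N,
\end{equation*}
valid for any vector fields $X,Y$ on $T^*M$ whose restrictions to $\Gamma_m^+$ are tangent to $\Gamma_m^+$. Specializing to $X = Y = L$, which qualifies by Lemma~\ref{Lem:LTangentMassShell}, the ambient identity $\hat{\nabla}_L L = 0$ forces the tangential and normal pieces to vanish separately, and the tangential piece is by definition $\hat{D}_L L$.

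I do not anticipate any real obstacle: the argument is essentially a two-line invocation of the Gauss formula combined with the two cited results. The only point worth a brief remark is that $L$ is already globally defined on $T^*M$ and its restriction to $\Gamma_m^+$ is tangential, which is precisely the hypothesis one needs for the tangential projection of $\hat{\nabla}_L L$ to coincide unambiguously with $\hat{D}_L L$; otherwise one would have to fuss over choosing extensions of tangential vector fields off the submanifold. Alternatively, one could bypass the Gauss formula entirely and argue directly from the defining properties of $\hat{D}$ (namely, that it is the unique torsion-free connection on $\Gamma_m^+$ compatible with $\hat{h}$), observing that the tangential projection of $\hat{\nabla}$ inherits both properties from $\hat{\nabla}$.
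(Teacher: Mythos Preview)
Your proposal is correct and follows essentially the same approach as the paper: both arguments deduce the result from Proposition~\ref{Prop:GeodesicL} ($\hat{\nabla}_L L = 0$) together with the fact that the Levi-Civita connection $\hat{D}$ of the induced metric $\hat{h}$ is the tangential projection of the ambient connection $\hat{\nabla}$. The paper states this in one line via the identity $\hat{g}(\hat{\nabla}_X Y, Z) = \hat{h}(\hat{D}_X Y, Z)$ for tangent $X,Y,Z$, whereas you spell out the full Gauss decomposition with the normal piece; the content is the same.
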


\begin{proof}
This follows directly from $\hat{\nabla}_L L = 0$ and the fact that $\hat{D}$ coincides with the induced connection on $\Gamma_m^+$, such that
$$
\hat{g}(\hat{\nabla}_X Y ,Z) = \hat{h}( \hat{D}_X Y,Z)
$$
for all $X,Y,Z\in {\cal X}(T^*M)$ tangent to $\Gamma_m^+$.
\qed
\end{proof}

%%%%%%%%%%%%%%%%%%%%%%%%%%%%%%%%%%%%%%%%%%%%%

\subsection{Volume forms and Liouville's theorem}
\label{SubSec:Liouville}

In this subsection, we show that the Sasaki metric allows one to introduce volume forms on $T^* M$ and the future mass shells $\Gamma_m^+$ with positive mass $m > 0$, which will be relevant for the physical interpretation of the one-particle distribution function.

The volume form $\eta_{T^* M}$ on $T^* M$ induced by the Sasaki metric is
\begin{equation}
\eta_{T^* M} = \pm\sqrt{\det(\hat{g}_{AB})} dy^1\wedge dy^2\wedge \cdots \wedge dy^{2n},
\end{equation}
for any coordinate system $(y^A)$, $A=1,2,\ldots,2n$ on $T^* M$, where one still has the freedom to choose the sign corresponding to a particular orientation of $T^* M$. We choose the orientation such that with respect to adapted local coordinates $(x^\mu,p_\mu)$ (in which case $\det(\hat{g}_{AB}) = 1$ according to Eq.~(\ref{Eq:SasakiExplicit})) one has
\begin{equation}
\eta_{T^* M} = -dp_0 \wedge dp_1 \wedge \cdots \wedge dp_d
\wedge dx^0 \wedge dx^1 \wedge \cdots \wedge dx^d.
\label{Eq:EtaOnT*M}
\end{equation}

\begin{proposition}[Liouville's theorem on $T^* M$]
\label{Prop:LiouvilleT*M}
The volume form is invariant with respect to the flow generated by the Liouville vector field $L$, that is,
\begin{equation}
\pounds_L \eta_{T^* M} = 0.
\end{equation}
\end{proposition}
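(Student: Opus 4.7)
The approach I would take is to exploit the relationship between the Sasaki volume form $\eta_{T^*M}$ and the top exterior power of the symplectic form $\Omega_s$. The Liouville vector field $L$ is by construction the Hamiltonian vector field of $\mathcal{H}$, and we have already established in Eq.~(\ref{Eq:LOmega}) that $\pounds_L \Omega_s = 0$. Because the Lie derivative is a derivation of the wedge product, this immediately yields $\pounds_L(\Omega_s\wedge\cdots\wedge\Omega_s) = 0$ for the $n$-fold wedge product. The proof therefore reduces to verifying that $\eta_{T^*M}$ and $\Omega_s\wedge\cdots\wedge\Omega_s$ are proportional by a nonzero constant.

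To establish this proportionality I would work directly in adapted local coordinates $(x^\mu,p_\mu)$. Writing $\Omega_s=\sum_\mu dp_\mu\wedge dx^\mu$ from Eq.~(\ref{Eq:Omega}) and using that each summand is a two-form (hence commutes under the wedge product with any other summand), only multi-indices with pairwise distinct values survive in the multinomial expansion of the $n$-fold wedge product, so that
\begin{equation*}
\underbrace{\Omega_s\wedge\cdots\wedge\Omega_s}_{n\ \text{factors}} = n!\,(dp_0\wedge dx^0)\wedge(dp_1\wedge dx^1)\wedge\cdots\wedge(dp_d\wedge dx^d).
\end{equation*}
Moving each $dx^\mu$ past the remaining $dp_\nu$'s to collect all momentum differentials on the left produces the combinatorial sign $(-1)^{n(n-1)/2}$, and comparison with the coordinate expression~(\ref{Eq:EtaOnT*M}) of $\eta_{T^*M}$ then yields
\begin{equation*}
\underbrace{\Omega_s\wedge\cdots\wedge\Omega_s}_{n\ \text{factors}} = -n!\,(-1)^{n(n-1)/2}\,\eta_{T^*M}.
\end{equation*}
Since this constant is nonzero, the identity $\pounds_L(\Omega_s\wedge\cdots\wedge\Omega_s)=0$ immediately implies $\pounds_L\eta_{T^*M}=0$.

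I do not anticipate a genuine obstacle here; the only nuisance is the routine bookkeeping of the combinatorial sign, together with the observation that $\Omega_s\wedge\cdots\wedge\Omega_s$ is nontrivial (which is a standard consequence of the non-degeneracy of $\Omega_s$). An alternative, more computational route would use the divergence formula $\pounds_L\eta_{T^*M} = (\mathrm{div}\,L)\,\eta_{T^*M}$: since the Sasaki metric has unit determinant in the adapted coordinate basis (cf.\ the remark following Eq.~(\ref{Eq:SasakiExplicit})), the divergence reduces to the flat coordinate expression $\partial_{x^\mu} L^\mu + \partial_{p_\mu} L_\mu$ with components read off from Eq.~(\ref{Eq:LiouvilleVF}), and the two contributions $(\partial_\mu g^{\mu\nu})p_\nu$ and $-(\partial_\mu g^{\mu\nu})p_\nu$ cancel identically. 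I prefer the symplectic argument, however, because it is manifestly coordinate-free and makes transparent that the invariance of $\eta_{T^*M}$ under $L$ is a general feature of Hamiltonian flows on symplectic manifolds, not a property specific to the free-particle Hamiltonian.
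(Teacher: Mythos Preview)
Your proposal is correct and follows essentially the same argument as the paper: express $\eta_{T^*M}$ as a constant multiple of the $n$-fold wedge product $\Omega_s\wedge\cdots\wedge\Omega_s$ (the paper writes this as $\eta_{T^*M}=-\frac{(-1)^k}{n!}\Omega_s^{\wedge n}$ with $k=d(d+1)/2$, which coincides with your $n(n-1)/2$), and then invoke $\pounds_L\Omega_s=0$ from Eq.~(\ref{Eq:LOmega}). Your alternative divergence computation is a valid extra route not taken in the paper.
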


\begin{proof}
First, we note that
\begin{eqnarray}
\eta_{T^* M} &=& - dp_{0}\wedge\cdots\wedge dp_{d}\wedge
dx^{0}\wedge\cdots\wedge dx^{d}\\
 &=& -\frac{1}{n!}  dp_{\mu_{1}}\wedge\cdots\wedge dp_{\mu_{n}}\wedge
 dx^{\mu_{1}} \wedge\cdots\wedge dx^{\mu_{n}}  \\
&=& -\frac{(-1)^k}{n!} (dp_{\mu_1} \wedge dx^{\mu_1}) \wedge (dp_{\mu_2} \wedge dx^{\mu_2}) \wedge \cdots \wedge (dp_{\mu_n} \wedge dx^{\mu_n})\\
 &=& -\frac{(-1)^k}{n!}\Omega_s\wedge\Omega_s\wedge\cdots\wedge\Omega_s,
\label{Eq:volumeformonT*M}
\end{eqnarray}
with $k = d(d+1)/2$. Now the proposition is a direct consequence of the fact that $\pounds_L\Omega_s = 0$, see Eq.~(\ref{Eq:LOmega}).
\qed
\end{proof}

Thanks to the presence of the normal vector field $N$ to the mass shell (see Proposition~\ref{Prop:MassShell}), the volume form $\eta_{T^* M}$ induces a volume form on $\Gamma_m^+$:

\begin{definition}
\label{Def:InducedVolumeForm}
The volume form on the future mass shell can be defined as
\begin{equation}
\eta_{\Gamma_m^+}(X_2,X_3,\dots,X_{2n}):=\eta_{T^* M}(N,X_2,X_3,\dots,X_{2n})
 = (i_N\eta_{T^* M})(X_2,X_3,\dots,X_{2n})
\label{Eq:volumeformonGammaM+}
\end{equation}
for $X_2,X_3,\ldots,X_{2n}\in {\cal X}(\Gamma_m^+)$ and where the normal vector field $N$ was defined in Eq.~(\ref{Eq:NormalVector}).
\end{definition}

\begin{theorem}[Liouville's theorem on $\Gamma_m^+$]
\label{Thm:Liouville}
Let $m > 0$. The volume form $\eta_{\Gamma_m^+}$ is invariant with respect to the flow generated by the Liouville vector field $L$ on $\Gamma_m^+$, that is,
\begin{equation}
\pounds_L \eta_{\Gamma_m^+} = 0.
\end{equation}
\end{theorem}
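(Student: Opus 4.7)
The plan is to derive Liouville's theorem on $\Gamma_m^+$ from Liouville's theorem on $T^*M$ (Proposition~\ref{Prop:LiouvilleT*M}) by combining it with the standard commutator identity
\[
\pounds_L(i_N\omega) = i_{[L,N]}\omega + i_N\pounds_L\omega.
\]
Let $\iota:\Gamma_m^+\hookrightarrow T^*M$ denote the inclusion. Since $L$ is tangent to $\Gamma_m^+$ by Lemma~\ref{Lem:LTangentMassShell}, its flow preserves $\Gamma_m^+$, and therefore $\iota^*$ commutes with $\pounds_L$ when acting on forms on $T^*M$. Applying the identity above to $\omega = \eta_{T^*M}$ and recalling that $\eta_{\Gamma_m^+}$ is, by Definition~\ref{Def:InducedVolumeForm}, nothing but the pullback $\iota^*(i_N\eta_{T^*M})$, one obtains
\[
\pounds_L\eta_{\Gamma_m^+} \;=\; \iota^*\pounds_L(i_N\eta_{T^*M}) \;=\; \iota^*\bigl(i_{[L,N]}\eta_{T^*M}\bigr) + \iota^*\bigl(i_N\pounds_L\eta_{T^*M}\bigr).
\]
The second summand vanishes immediately by Proposition~\ref{Prop:LiouvilleT*M}, reducing the theorem to showing that $\iota^*(i_{[L,N]}\eta_{T^*M}) = 0$.

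The decisive observation is that $[L,N]$ is itself tangent to $\Gamma_m^+$. To see this, I would compute $N[\mathcal{H}]$ directly from the definitions of $N$ in Eq.~(\ref{Eq:NormalVector}) and $\mathcal{H}$ in Eq.~(\ref{Eq:FreeParticleH}): one finds $N[\mathcal{H}] = \frac{1}{m}p_\mu\,\partial\mathcal{H}/\partial p_\mu = \frac{1}{m}g^{\mu\nu}p_\mu p_\nu = 2\mathcal{H}/m$. Combined with $L[\mathcal{H}]=0$ from the proof of Lemma~\ref{Lem:LTangentMassShell}, this yields $[L,N][\mathcal{H}] = L[2\mathcal{H}/m] - N[0] = 0$ globally on $T^*M$, so $[L,N]$ annihilates $\mathcal{H}$ and is tangent to every level set of $\mathcal{H}$, in particular to $\Gamma_m^+$. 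With this in hand, for any $X_2,\ldots,X_{2n}\in T_{(x,p)}\Gamma_m^+$ the value $(i_{[L,N]}\eta_{T^*M})(X_2,\ldots,X_{2n}) = \eta_{T^*M}\bigl([L,N],X_2,\ldots,X_{2n}\bigr)$ is the evaluation of the top-degree form $\eta_{T^*M}$ on $2n$ vectors that all lie in the $(2n-1)$-dimensional subspace $T_{(x,p)}\Gamma_m^+$ of $T_{(x,p)}(T^*M)$. These vectors are necessarily linearly dependent, so the value is zero and $\iota^*(i_{[L,N]}\eta_{T^*M}) = 0$, completing the argument.

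The only real obstacle is conceptual rather than computational: one must carefully distinguish objects defined on $T^*M$ from their pullbacks to the codimension-one submanifold $\Gamma_m^+$, and justify sliding $\pounds_L$ through $\iota^*$, which is exactly what the tangency of $L$ to $\Gamma_m^+$ allows. Once that bookkeeping is in place, the two essential inputs are the already-established $\pounds_L\eta_{T^*M}=0$ and the one-line check that $[L,N][\mathcal{H}]=0$; everything else is a dimension count. An alternative, slightly less transparent route avoids computing $[L,N]$ and instead uses Cartan's formula $\pounds_L\eta_{\Gamma_m^+} = d(i_L\eta_{\Gamma_m^+}) + i_L d\eta_{\Gamma_m^+}$, the second term being trivially zero since $d\eta_{\Gamma_m^+}$ is a $2n$-form on a $(2n-1)$-dimensional manifold, but then closing the argument still requires exhibiting a suitable primitive.
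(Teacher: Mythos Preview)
Your proof is correct and follows essentially the same route as the paper: both reduce the claim to Liouville's theorem on $T^*M$ via the identity $\pounds_L(i_N\eta_{T^*M}) = i_{[L,N]}\eta_{T^*M} + i_N\pounds_L\eta_{T^*M}$, and then eliminate the surviving $i_{[L,N]}\eta_{T^*M}$ term by the dimension count that $[L,N], X_2,\ldots,X_{2n}$ are $2n$ vectors in a $(2n-1)$-dimensional space. The one difference is in how tangency of $[L,N]$ to $\Gamma_m^+$ is verified: the paper computes the commutator explicitly using the horizontal/vertical basis and the relations~(\ref{Eq:Commutators}) to obtain $[L,J(L)] = -L$, hence $\pounds_L N = -L/m$, whereas you establish tangency more economically by checking $[L,N][\mathcal{H}] = L[2\mathcal{H}/m] = 0$ without ever identifying $[L,N]$.
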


\begin{proof} Let $X_2,X_3,\ldots,X_{2n}\in {\cal X}(\Gamma_m^+)$. Using the definitions of the Lie derivative and $\eta_{\Gamma_m^+}$, we obtain
\begin{eqnarray}
\pounds_{L}\eta_{\Gamma_m^+}(X_2,X_3,\dots,X_{2n}) &=& L\left[\eta_{\Gamma_m^+}(X_2,X_3,\dots,X_{2n})\right]-\eta_{\Gamma_m^+}(\pounds_{L} X_2,X_3,\dots,X_{2n})-\cdots-\eta_{\Gamma_m^+}(X_2,X_3,\dots,\pounds_{L} X_{2n})
\nonumber\\
&=& L\left[\eta_{T^* M}(N,X_2,\dots,X_{2n}) \right] - \eta_{T^* M}(N,\pounds_{L}X_2,\dots,X_{2n}) - \cdots - \eta_{T^* M}(N,X_2,\dots,\pounds_{L}X_{2n})
\nonumber\\
&&-\eta_{T^* M}(\pounds_{L} N,X_2,\dots,X_{2n}) + \eta_{T^* M}(\pounds_{L} N,X_2,\dots,X_{2n})
 \nonumber\\
&=& (\pounds_{L}\eta_{T^* M})(N,X_2,\dots,X_{2n})
 + \eta_{T^* M}(\pounds_{L} N,X_2,\dots,X_{2n}),
\label{Eq:Leta}
\end{eqnarray}
where the first term on the right-hand side of Eq.~(\ref{Eq:Leta}) vanishes as a consequence of Proposition~\ref{Prop:LiouvilleT*M}. It remains to show that the second term vanishes as well. For this, we use Eq.~(\ref{Eq:NormalVector}) and obtain
$$
\pounds_{L}N = \frac{1}{m}\pounds_{L}J(L) = \frac{1}{m}\left[L,J(L) \right].
$$
Next, using Eqs.~(\ref{Eq:LiouvilleHor}) and~(\ref{Eq:Jp}) we find (setting $p^\mu := g^{\mu\nu} p_\nu$)
\begin{eqnarray}
\left[L,J(L) \right] 
 &=& \left[p^\mu\frac{D}{dx^\mu} , p_\alpha\frac{\partial}{\partial p_\alpha} \right]
\nonumber\\
 &=& p^\mu\underbrace{\frac{D p_\alpha}{dx^\mu}}_{ = \Gamma^{\beta}{}_{\mu\alpha}p_\beta}\frac{\partial}{\partial p_\alpha} - p_\alpha\underbrace{\left(\frac{\partial p^\mu}{\partial p_\alpha}\right)}_{ = g^{\mu\alpha}} \frac{D}{dx^\mu}
  + p^\mu p_\alpha\underbrace{\left[ \frac{D}{dx^\mu},\frac{\partial}{\partial p_\alpha} \right]}_{= -\Gamma^{\alpha}{}_{\mu\beta}\frac{\partial}{\partial p_\beta}}\nonumber\\
&=& -L,
\end{eqnarray}
where we have used the definition of $\displaystyle\frac{D}{dx^\mu}$ in Eq.~(\ref{Eq:Ddx}) and the commutation relation~(\ref{Eq:Commutators}) in the last step.

Since the $2n$ vectors $L, X_2, X_3, \dots, X_{2n}$ are tangent to $\Gamma_m^+$ they must be linearly dependent; hence it follows that $\eta_{T^* M}(\pounds_L N,X_2,\dots,X_{2n}) = 0$, which concludes the proof of the theorem.
\qed
\end{proof}

To conclude this section, we provide explicit expressions for the induced metric and volume form on the future mass shell $\Gamma_m^+$. These expressions acquire a rather elegant and simple form when working in terms of an orthonormal basis of covector fields $\left\lbrace\theta^{\hat{\alpha}}\right\rbrace$ on $M$ (as opposed to the coordinate basis $dx^\alpha$), where here and in the following, hatted Greek indices $\hat{\alpha},\hat{\beta},\hat{\gamma},\ldots$ running over $\hat{0},\hat{1},\dots,\hat{d}$ refer to such an orthonormal frame. If $\left\lbrace e_{\hat{\alpha}} \right\rbrace$ denotes the dual basis of vector fields, we assume that $e_{\hat{0}}$ is future-directed timelike. Before considering the mass shell, let us start with the future mass hyperboloid $P_x^+(m)$ at a given (fixed) point $x\in M$, see Eq.~(\ref{Eq:MassHypo}). This set is characterized by those $p = p_{\hat{\alpha}} \theta^{\hat{\alpha}}\in T_x^* M$, for which
\begin{equation}
p_{\hat{0}} = -\sqrt{m^2 + \delta^{\hat{a}\hat{b}} p_{\hat{a}} p_{\hat{b}}},
\label{Eq:p0}
\end{equation}
where $\hat{a},\hat{b}=1,2,\ldots,d$. The induced metric on $P_x^+(m)$ is
\begin{equation}
H_x = g^{\mu\nu}(x) dp_\mu\otimes dp_\nu 
 = -dp_{\hat{0}}\otimes dp_{\hat{0}} + \delta^{\hat{a}\hat{b}} dp_{\hat{a}} \otimes dp_{\hat{b}}
 = H^{\hat{a}\hat{b}} dp_{\hat{a}} \otimes dp_{\hat{b}},
\label{Eq:Induced_metric_gx-1}
\end{equation}
with components
\begin{equation}
H^{\hat{a}\hat{b}} = \delta^{\hat{a}\hat{b}} - \frac{1}{p_{\hat{0}}^2} p^{\hat{a}} p^{\hat{b}}.
\label{Eq:Hab}
\end{equation}
The associated volume form on $P_x^+(m)$ is
\begin{equation}
\eta_{P_x^+(m)} = \sqrt{ \det(H^{\hat{a}\hat{b}})} 
dp_{\hat{1}}\wedge dp_{\hat{2}}\wedge\cdots \wedge dp_{\hat{d}} 
 = m\dvol_x(p),
\label{Eq:etaPxm}
\end{equation}
where here and in the following,
\begin{equation}
\boxed{
\dvol_x(p) := \frac{1}{m}\eta_{P_x^+(m)} 
 = \frac{dp_{\hat{1}} \wedge dp_{\hat{2}} \wedge \cdots \wedge dp_{\hat{d}}}{
\sqrt{m^2 + p_{\hat{1}}^2 + p_{\hat{2}}^2 + \cdots + p_{\hat{d}}^2}}
}
\label{Eq:dvolx}
\end{equation}
denotes the the well-known Lorentz-invariant volume element on $P_x^+(m)$ (see, for instance, Ref.~\cite{Weinberg-QMBook}).\footnote{Note also that our definition of $\eta_{P_x^+(m)}$ differs from the corresponding volume form $\pi_x$ defined in~\cite{oStZ14b} by a factor of $m$.}

After these remarks, we return to the future mass shell $\Gamma_m^+$ which, we recall, can be thought of as the fibre bundle over $M$ with fibre $P_x^+(m)$ at $x\in M$. As we have just seen, it is convenient to expand the momentum $p\in T_x^* M$ in terms of an orthonormal basis. For this reason we shall use (instead of the adapted local coordinates $(x^\mu,p_\mu)$ used so far) the new coordinates $(x^\mu,p_{\hat{\alpha}})$ with $p_{\hat{\alpha}} = p\left(e_{\hat{\alpha}}\right)$ the orthonormal components of the momentum covector, such that $p = p_{\hat{\alpha}} \theta^{\hat{\alpha}}$. In terms of these new coordinates, the Sasaki metric~\eqref{Eq:SasakiExplicit} reads
\begin{equation}
\hat{g} = g_{\mu\nu} dx^{\mu}\otimes dx^{\nu} 
+ \eta^{\hat{\alpha}\hat{\beta}} Dp_{\hat{\alpha}}\otimes Dp_{\hat{\beta}},
\label{Eq:Sasaki_ONbasis}
\end{equation}
where $(\eta^{\hat{\alpha}\hat{\beta}}) = \diag(-1,+1,\ldots,+1)$ and
\begin{equation}
Dp_{\hat{\alpha}} := dp_{\hat{\alpha}} - \hat{\Gamma}^{\hat{\beta}}{_{\mu\hat{\alpha}}}p_{\hat{\beta}}dx^{\mu},\qquad
\hat{\Gamma}^{\hat{\beta}}{_{\mu\hat{\alpha}}} 
 := \theta^{\hat{\beta}}\left[\nabla_{\frac{\partial}{\partial x^{\mu}}}e_{\hat{\alpha}}\right],
\label{Eq:Dphat}
\end{equation}
see Appendix~\ref{App:ON_basis} for details. Using the fact that $\hat{\Gamma}_{\hat{\beta}\mu\hat{\alpha}} := \eta_{\hat{\beta}\hat{\gamma}}\hat{\Gamma}^{\hat{\gamma}}{_{\mu\hat{\alpha}}}$ is antisymmetric in $\hat{\alpha}\hat{\beta}$, it is not difficult to verify that Eq.~(\ref{Eq:p0}) implies that
\begin{equation}
Dp_{\hat{0}} = \frac{p^{\hat{b}}}{p_{\hat{0}}} dp_{\hat{b}} - p^{\hat{b}}\hat{\Gamma}_{\hat{b}\mu\hat{0}} dx^\mu = \frac{p^{\hat{b}}}{p_{\hat{0}}} Dp_{\hat{b}},
\end{equation}
such that the induced metric $\hat{h}$ on the future mass shell can be written as
\begin{equation}
\boxed{ \hat{h} = g_{\mu\nu} dx^{\mu}\otimes dx^{\nu} 
+ H^{\hat{a}\hat{b}} Dp_{\hat{a}}\otimes Dp_{\hat{b}},}
\end{equation}
with $H^{\hat{a}\hat{b}}$ given by Eq.~(\ref{Eq:Hab}). The corresponding volume form is
\begin{equation}
\eta_{\Gamma_m^+} = \sqrt{-\det(g_{\mu\nu})} dx^0\wedge dx^1\wedge\cdots\wedge dx^d
\wedge \frac{m}{|p_{\hat{0}}|} dp_{\hat{1}}\wedge dp_{\hat{2}}\wedge\cdots \wedge dp_{\hat{d}},
\label{Eq:etaGammam+}
\end{equation}
and agrees with the one obtained from the definition in Eq.~(\ref{Eq:volumeformonGammaM+}), see Appendix~\ref{App:VolumeFormOnTheHyperboloidOfMass}. With a slight abuse of notation, we can write Eq.~(\ref{Eq:etaGammam+}) as
\begin{equation}
\eta_{\Gamma_m^+} = \eta_M\wedge \eta_{P_x^+(m)},
\label{Eq:etaGammam+Short}
\end{equation}
with $\eta_M$ the volume form on $M$ and $\eta_{P_x^+(m)}$ the volume form on $P_x^+(m)$, see Eq.~(\ref{Eq:etaPxm}). This provides the Fubini-type integration formula
\begin{equation}
\boxed{ \int\limits_{\Gamma_m^+} f(x,p) \eta_{\Gamma_m^+} 
= \int\limits_M \left( \int\limits_{P_x^+(m)} f(x,p) \eta_{P_x^+(m)} \right) \eta_M,}
\label{Eq:Fubini}
\end{equation}
for any Lebesgue-integrable function $f: \Gamma_m^+\to \Real$ on the future mass shell. This (and other, similar) formula will play an important role when discussing the physical interpretation of the one-particle distribution function in the next section.

%%%%%%%%%%%%%%%%%%%%%%%%%%%%%%%%%%%%%%%%%%%%%
\section{One-particle distribution function and description of a collisionless simple gas}
\label{Sec:SimpleGas}

After having discussed the relevant geometric properties of the cotangent bundle, we are ready to provide the description of a relativistic, collisionless and uncharged simple gas propagating on a curved spacetime $(M,g)$. For this, we consider a Gibbs ensemble of identical, classical (i.e. not quantum), massive, free-falling test particles in $(M,g)$ which are further assumed to be electrically neutral and without spin. (Generalizations to gas configurations involving different species of electrically charged particles will be discussed in the next section.) At the macroscopic level, we describe such a gas configuration by a one-particle distribution function, that is a non-negative function $f$ on the future mass shell $\Gamma_m^+$, with $m > 0$ the mass of the gas particles.

In the next subsection, we provide a fully covariant description of the one-particle distribution function $f$ and explain its physical meaning through a flux integral interpretation. The connection with the usual Newtonian-type interpretation will also be discussed. Next, in subsection~\ref{SubSec:Observables} we discuss the physical observables associated with the one-particle distribution function. The main observables that will be used in this article consist of the particle current density, the energy-momentum-stress tensor and the entropy flux density of the gas, which are constructed from appropriate fibre integrals over $f$. The important example of the J\"uttner distribution function (representing a gas configuration in thermodynamic equilibrium) is discussed in subsection~\ref{SubSec:Juttner}. Finally, in subsection~\ref{SubSec:CollisionlessBoltzmann} we derive the collisionless Boltzmann equation as a simple application of our interpretation of the distribution function $f$.

%%%%%%%%%%%%%%%%%%%%%%%%%%%%%%%%%%%%%%%%%%%%%
\subsection{Physical interpretation of the one-particle distribution function}
\label{SubSec:OneParticleDistributionFunction}

In this subsection we describe the physical meaning of the general relativistic one-particle distribution function $f: \Gamma_m^+\to \Real$. To this end, we first introduce the following vector field on $\Gamma_m^+$:

\begin{definition}
Let $L$ be the Liouville vector field defined in Eq.~(\ref{Eq:LiouvilleVF}), 
which is tangent to $\Gamma_m^+$ (see Lemma~\ref{Lem:LTangentMassShell}). Then, we define the following vector field on $\Gamma_m^+$:
\begin{equation}
\boxed{ \mathcal{U} := \frac{1}{m} L.}
\label{Eq:UDef}
\end{equation}
\end{definition}
Recall that $\hat{h}(L,L) = -m^2$, which implies that $\mathcal{U}$ is a unit timelike vector field on $(\Gamma_m^+,\hat{h})$ which defines a time-orientation. Furthermore, as follows from Proposition~\ref{Prop:GeodesicL_on_MS} and Theorem~\ref{Thm:Liouville}, $\mathcal{U}$ is geodesic and expansion-free. Therefore, $\mathcal{U}$ can be interpreted as a $(2d+1)$-velocity field in relativistic phase space $\Gamma_m^+$ which generates a future-directed timelike geodesic and incompressible flow.

Before discussing the role of the $(2d+1)$-velocity vector field $\mathcal{U}$, let us first recall the analogous role played by the four-velocity vector field $u$ of a fluid flow on the four-dimensional spacetime manifold $(M,g)$. Let $n: M\to \Real$ be the particle density describing this flow, that is, the number of particles per unit volume measured by an observer which is co-moving with the flow. If $S$ is a three-dimensional spacelike compact hypersurface in $M$ (representing a volume at some given time) with future-directed unit normal $s$, then the total number $N[S]$ of particles contained in $S$ is given by the flux integral
\begin{equation}
N[S] = -\int\limits_S g(J,s) \eta_S,\qquad J := n u,
\label{Eq:FluxN}
\end{equation}
with $\eta_S = i_s\eta$ the induced volume element on $S$, and where the minus sign arises from the requirement that $N[S]$ should be positive if $J$ is future-directed timelike. In the particular case that $u$ is hypersurface orthogonal and $S$ is chosen orthogonal to $u$, Eq.~(\ref{Eq:FluxN}) reduces to the familiar expression $N[S] = \int_S n \eta_S$ in which $N[S]$ is just the volume integral over the particle density; however, in the general case $N[S]$ should be interpreted as the flux of the particle current density $J = n u$ through the surface $S$. If $(M,g)$ is an asymptotically flat, globally hyperbolic spacetime in which the particle density $n$ falls off sufficiently fast in spacelike directions, the  continuity equation $\nabla_\mu J^\mu = 0$ implies (through Gauss' theorem) that $N[S_1] = N[S_2]$ for two Cauchy surfaces $S_1$ and $S_2$. This expresses the conservation of the total particle number.

In analogy to the particle density $n$ and its associated current density $J = n u$ in the fluid case, we now introduce the one-particle distribution function $f:\Gamma_m^+\to \Real$ and the associated $(2d+1)$-current density
\begin{equation}
\boxed{ \mathcal{J} :=f \mathcal{U} = \frac{1}{m} f L }
\label{Eq:JDef}
\end{equation}
in phase space. Given a compact (either spacelike or timelike) $2d$-dimensional surface $\Sigma$ with unit normal vector field $\nu$, then\footnote{In order for $\mathcal{N}[\Sigma]$ to be dimensionless, the distribution function $f$ should have units of (length)$^{-2d}$. As is known from statistical mechanics, a dimensionless distribution function can be constructed by dividing the volume form by the factor $h^{2d}$ with Planck's constant $h$.}
\begin{equation}
\boxed{ \mathcal{N}\left[\Sigma\right] 
 := -\int\limits_{\Sigma} \hat{h}(\mathcal{J},\nu)\eta_{\Sigma},
\qquad
\eta_{\Sigma} := i_\nu\eta_{\Gamma_m^+},}
\label{Eq:NFlux}
\end{equation}
is interpreted as the averaged (over the Gibbs ensemble) number of occupied trajectories crossing $\Sigma$, see Fig.~\ref{Fig:PS_trajectories}. Here, we recall that $\hat{h}$ refers to the induced metric on $\Gamma_m^+$, and as before, the minus sign in the definition~(\ref{Eq:NFlux}) arises from the requirement that $\mathcal{N}\left[\Sigma\right]$ should be positive if $f > 0$ and $\nu$ is future-directed timelike. Further, $i_\nu\eta_{\Gamma_m^+}$ is the induced volume form on $\Sigma$. The flux integral~(\ref{Eq:NFlux}) is the analogue of the flux integral~(\ref{Eq:FluxN}) in the fluid case, and provides the physical interpretation of the distribution function.\footnote{Note that if $\mathcal{U}$ were orthogonal to $\Sigma$, such that $\nu = \mathcal{U}$, Eq.~(\ref{Eq:NFlux}) would reduce to $\mathcal{N}\left[\Sigma\right] = \int_\Sigma f \eta_\Sigma$. However, this interpretation cannot be given since the Liouville vector field $L$ is not hypersurface-orthogonal. Indeed, using Lemma~\ref{Lem:LH}(b), it follows that the associated one-form $\Theta$ satisfies
$$
\Theta\wedge d\Theta = \Theta\wedge \Omega_s = -p_\mu dp_\nu\wedge dx^\mu\wedge dx^\nu,
$$
and hence it does not satisfy the Frobenius condition for $\mathcal{U}$ to be hypersurface-orthogonal.
}

In order to make contact with the usual (Newtonian) definition of the distribution function we consider in $(M,g)$ a $d$-dimensional spacelike hypersurface $S$ (representing a certain volume at some given time) with future-directed unit normal vector $s$. Let
\begin{equation}
\Sigma := \{ (x,p) : x\in S, p\in P_x^+(m) \}
\label{Eq:SigmaFromS}
\end{equation}
be the corresponding surface in the future mass shell $\Gamma_m^+$. By definition, $\mathcal{N}\left[\Sigma\right]$ is the averaged number of occupied trajectories in $\Gamma_m^+$ whose projection on $M$ intersect the spacelike hypersurface $S$. Therefore, $\mathcal{N}\left[\Sigma\right]$ represents the averaged number of particles contained in the volume $S$. To compute $\mathcal{N}[\Sigma]$ for the particular surface $\Sigma$ given in Eq.~(\ref{Eq:SigmaFromS}), we first claim that $\Sigma$ is a spacelike hypersurface in $(\Gamma_m^+,\hat{h})$ with future-directed unit normal given by
\begin{equation}
\nu = s^\mu \frac{D}{d x^\mu}.
\label{Eq:nu}
\end{equation}
To prove this, consider an arbitrary curve $\gamma(\lambda)$ in $\Sigma$ which has the tangent vector $Z$ at the point $(x,p)$. We decompose $\displaystyle Z = X^\mu \frac{D}{d x^\mu} + Y_\mu\frac{\partial}{\partial p_\mu}$ and note that the tangent vector of the projected curve $\pi\circ\gamma$ is
\begin{equation}
\left. \frac{d}{dt} \pi\circ\gamma(t) \right|_{t=0} = d\pi_{(x,p)}(Z) = X^\mu \left. \frac{\partial}{\partial x^\mu} \right|_x .
\end{equation}
By definition, this vector is tangent to $S$, which implies that the horizontal components $X^\mu$ of $Z$ satisfy $g_{\mu\nu} X^\mu s^\nu = 0$. Since the normal vector to $\Sigma$ at $(x,p)$ must be orthogonal to all tangent vectors $Z$ of this form, it follows that it must be proportional to $\nu$ defined in Eq.~(\ref{Eq:nu}). Finally, we note that $\hat{h}(\nu,\nu) = g(s,s) = -1$ and $\hat{h}(\nu,L) = g(s,p) < 0$ which shows that $\nu$ is future-directed timelike and has unit norm. This proves the claim regarding the normal vector field. Next, we use Eqs.~(\ref{Eq:LiouvilleHor},\ref{Eq:nu}) and compute
\begin{equation}
\hat{h}(\mathcal{J},\nu) = \frac{1}{m} f\hat{g}(L,\nu) = \frac{1}{m} f p(s).
\end{equation}
Finally, using the representation~(\ref{Eq:etaGammam+}) for the volume form on $\Gamma_m^+$, combined with the observation that
\begin{equation}
i_\nu dx^\mu = s^\mu,\qquad i_\nu dp_{\hat{\alpha}} = 0,
\end{equation}
we find the following expression for the volume form on $\Sigma$:
\begin{equation}
\eta_\Sigma = i_\nu\eta_{\Gamma_m^+} = i_s\eta_M \wedge \eta_{P_x^+(m)}.
\end{equation}

Therefore, we conclude that the number $\mathcal{N}[\Sigma]$ for the particular surface $\Sigma$ given in Eq.~(\ref{Eq:SigmaFromS}), representing the averaged number of particles contained inside the volume $S$, is equal to
\begin{equation}
\mathcal{N}[\Sigma]  = -\frac{1}{m}\int\limits_S\left( \int\limits_{P_x^+(m)} f(x,p) p(s) \eta_{P_x^+(m)} \right) \eta_S
 = -\int\limits_S\left( \int\limits_{P_x^+(m)} f(x,p) p_{\hat{\alpha}} s^{\hat{\alpha}} \dvol_x(p)
\right) \eta_S,
\label{Eq:NSigmaFromS}
\end{equation}
with $\eta_S = i_s\eta_M$ the induced volume form on $S$ and $\dvol_x(p)$ the Lorentz-invariant volume element on $P_x^+(m)$ defined in Eq.~(\ref{Eq:dvolx}). Orienting the orthonormal basis $\left\lbrace e_{\hat{\alpha}} \right\rbrace$ such that at each point of $S$ its timelike leg $e_{\hat{0}} = s$ coincides with the normal vector $s$ to $S$, we obtain $p_{\hat{\alpha}} s^{\hat{\alpha}} = p_{\hat{0}}$ and Eq.~(\ref{Eq:NSigmaFromS}) further simplifies to
\begin{equation}
\mathcal{N}[\Sigma] = \int\limits_S \left( \int\limits_{\Real^d} f(x,p_{\hat{\alpha}}\theta^{\hat{\alpha}}) dp_{\hat{1}} \wedge dp_{\hat{2}} \wedge \cdots \wedge dp_{\hat{d}} \right) \eta_S.
\end{equation}
Assuming the existence of a local chart $(U,\phi)$ of $M$ in which $S$ is characterized by $x^0 = const$ such that $dp_{\hat{1}}\wedge \cdots \wedge dp_{\hat{d}} \wedge \eta_S = dp_1\wedge \cdots \wedge dp_d\wedge dx^1\wedge\cdots\wedge dx^d$ one obtains, assuming that $f$ is zero outside $U$,
\begin{equation}
\mathcal{N}[\Sigma] = \int\limits_{\Real^d}\int\limits_{\Real^d} f(\phi^{-1}(x^\mu),p_\mu dx^\mu) d^d p d^d x,
\end{equation}
where here $p_0$ is determined by $p_1,\ldots,p_d$ in such a way to ensure that $p_\mu dx^\mu\in P_x^+(m)$. This closely resembles the usual (non-relativistic) interpretation of the distribution function as a density over the phase space $(x^1,\ldots,x^d,p_1,\ldots,p_d)$ with volume element $d^d x d^d p$ and also provides the connection with the approach in Ref.~\cite{fDwL09a}.
\begin{figure}
\begin{centering}
\includegraphics[scale=0.30]{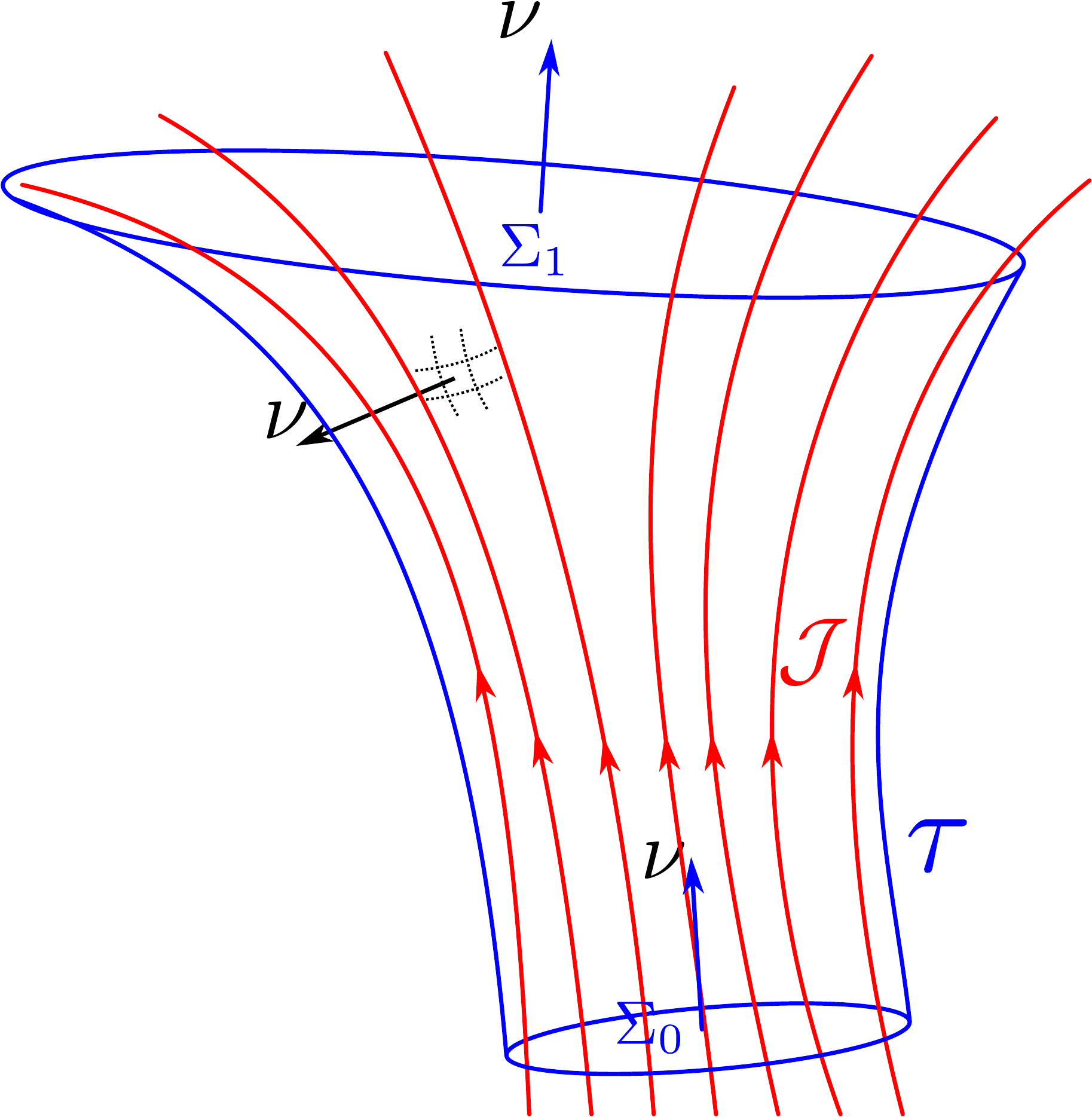}
\par\end{centering}
\caption{An illustration of the flow of the $(2d+1)$-current $\mathcal{J}$ crossing the $2d$-dimensional surface $\Sigma_0$ with normal vector $\nu$. The surface $\Sigma_1$ is obtained by transporting $\Sigma_0$ along this flow.}
\label{Fig:PS_trajectories}
\end{figure}

%%%%%%%%%%%%%%%%%%%%%%%%%%%%%%%%%%%%%%%%%%%%%
\subsection{Physical observables constructed from the one-particle distribution function}
\label{SubSec:Observables}

In this subsection, we use the one-particle distribution function for a simple gas $f$ and the results from the previous subsection to construct important tensor fields on the spacetime manifold $M$ by computing suitable fibre integrals of $f$. These quantities are the particle current density vector field $J$ and the energy-momentum-stress tensor field $T$, which, as we will show, are divergence-free:
\begin{equation}
\nabla_\mu J^\mu = 0, \qquad\hbox{and}\qquad \nabla_\mu T^{\mu\nu} = 0.
\label{Eq:LawOfConservation}
\end{equation}

We start with the derivation of $J$ and the associated conservation law $\nabla_\mu J^\mu = 0$. For this, we consider as in the previous subsection a $d$-dimensional spacelike hypersurface $S$ with future-directed unit normal vector $s$ in $(M,g)$ and the associated surface $\Sigma$ in $\Gamma_m^+$ given by Eq.~(\ref{Eq:SigmaFromS}). As we have seen, in this case $\mathcal{N}[\Sigma]$ represents the averaged number of particles contained inside the volume $S$. On the other hand, from Eq.~(\ref{Eq:NSigmaFromS}) we can also write this number as a flux integral over $S$, namely
\begin{equation}
\mathcal{N}[\Sigma] = -\int\limits_S J(s) \eta_S,
\label{Eq:NSigmaFromSBis}
\end{equation}
with the particle current density covector field $J\in {\cal X}^*(M)$ given by
\begin{equation}
\boxed{ J_x(X) := \int\limits_{P_x^+(m)} f(x,p) p(X) \dvol_x(p),\qquad X\in T_x M.}
\label{Eq:ParticleDensity}
\end{equation}
Comparing Eq.~(\ref{Eq:NSigmaFromSBis}) with Eq.~(\ref{Eq:FluxN}) we are led to the conclusion that this covector field (or its associated vector field) $J$ represents the averaged particle current density of the gas. Decomposing $J = n u$ with $n := \sqrt{-g^{-1}(J,J)}$, one obtains from this the mean particle velocity $u$ and the mean particle density $n$ of the gas configuration.

Introducing $\displaystyle X = \frac{\partial}{\partial x^\mu}$ (or $X = e_{\hat{\alpha}}$) in Eq.~(\ref{Eq:ParticleDensity}) we obtain the corresponding expressions for the components
\begin{equation}
J_\mu(x) = \int\limits_{P_x^+(m)}f(x,p)p_\mu \dvol_x(p).
\label{Eq:ParticleDensityCoord}
\end{equation}

The definition~(\ref{Eq:ParticleDensity}) of the particle current density as a fibre integral corresponding to the first moment of the distribution function can be generalized to higher moments in a straightforward way, yielding for each $s\in \Natural$ the totally symmetric tensor field
\begin{equation}
\boxed{T_x^{(s)}(X_1,X_2,\ldots,X_s)  
 := \int\limits_{P_x^+(m)} f(x,p) p(X_1) p(X_2)\ldots p(X_s) \dvol_x(p),
\qquad X_1,X_2,\ldots,X_s\in T_x M,}
\label{Eq:Ts}
\end{equation}
with associated components
\begin{equation}
T^{(s)}_{\mu_1\mu_2\ldots\mu_s}(x)
 = \int\limits_{P_x^+(m)} f(x,p) p_{\mu_1} p_{\mu_2}\ldots p_{\mu_s} \dvol_x(p).
\label{Eq:TsCoord}
\end{equation}
Of particular relevance are the first moment (i.e. the particle current density $J$) and the second one, $T_{\mu\nu} := T^{(2)}_{\mu\nu}$ which represents the energy-momentum-stress tensor of the gas. 

We conclude this subsection with the following identity which will result fundamental for the further development of the theory:

\begin{theorem}
\label{Thm:sMoments}
The $s$-moments $T^{(s)}$ defined in Eq.~(\ref{Eq:Ts}) satisfy the following identities:
\begin{equation}
\boxed{ \nabla^{\mu_1} T^{(s)}_{\mu_1\mu_2\ldots\mu_s}(x)
 = \int\limits_{P_x^+(m)} L[f](x,p) p_{\mu_2}\ldots p_{\mu_s} \dvol_x(p), \qquad s = 1,2,\ldots,}
\label{Eq:DivergenceIdentity}
\end{equation}
with $L$ the Liouville vector field defined in Eq.~(\ref{Eq:LiouvilleVF}).
\end{theorem}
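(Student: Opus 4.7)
The plan is to reduce the covariant divergence on the left-hand side to a fiber integral of the horizontal derivative $D/dx^\mu$ (Eq.~\ref{Eq:Ddx}) acting on the full integrand, and then to repackage the result into $L[f]$ via the representation $L = p^\mu D/dx^\mu$ from Lemma~\ref{Lem:LH}(a).

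The key preliminary identity I would establish is the commutation rule
\begin{equation}
\partial_\mu \int_{P_x^+(m)} F(x,p)\,\dvol_x(p)
 = \int_{P_x^+(m)} \frac{DF}{dx^\mu}(x,p)\,\dvol_x(p)
\label{Eq:CommutationFiber}
\end{equation}
for any sufficiently regular, fiber-integrable $F\in \mathcal{F}(\Gamma_m^+)$. The geometric content is that $D/dx^\mu$ is, by construction, the horizontal lift of $\partial/\partial x^\mu$, i.e.~the derivative along the curve in $T^* M$ obtained by parallel-transporting the covector $p$ along an infinitesimal shift of the base point. Because the Levi--Civita parallel transport is an isometry of the fiber metric $h_x = g_x^{-1}$, it maps $P_x^+(m)$ bijectively to $P_{x'}^+(m)$ and preserves the Lorentz-invariant volume form $\dvol_x(p)$ of Eq.~(\ref{Eq:dvolx}), which is intrinsic to the fiber metric. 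Consequently, comparing the fiber integrals at nearby base points through horizontal transport produces no measure correction, leaving only the horizontal variation of $F$.

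Granted~(\ref{Eq:CommutationFiber}), I would apply it with $F(x,p) = f(x,p)\,p_{\mu_1}\cdots p_{\mu_s}$. Leibniz's rule, together with the elementary identity $\frac{D p_{\mu_i}}{dx^\mu} = \Gamma^\nu{}_{\mu\mu_i}(x)\,p_\nu$ (read off directly from Eq.~\ref{Eq:Ddx}), yields
\begin{equation}
\partial_\mu T^{(s)}_{\mu_1\ldots\mu_s}(x)
 = \int_{P_x^+(m)} \frac{Df}{dx^\mu}\,p_{\mu_1}\cdots p_{\mu_s}\,\dvol_x(p)
 + \sum_{i=1}^s \Gamma^\nu{}_{\mu\mu_i}(x)\,T^{(s)}_{\mu_1\ldots\nu\ldots\mu_s}(x),
\end{equation}
where $p_\nu$ sits in the $i$-th slot of each summand. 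The second term is precisely what is subtracted off when passing from the partial to the covariant derivative of a rank-$s$ covariant tensor, so the equation reorganises into
\begin{equation}
\nabla_\mu T^{(s)}_{\mu_1\ldots\mu_s}(x)
 = \int_{P_x^+(m)} \frac{Df}{dx^\mu}(x,p)\,p_{\mu_1}\cdots p_{\mu_s}\,\dvol_x(p).
\end{equation}
Tracing with $g^{\mu\mu_1}$, absorbing $g^{\mu\mu_1} p_{\mu_1} = p^\mu$ under the integral, and invoking $L = p^\mu D/dx^\mu$ yields the claimed identity~(\ref{Eq:DivergenceIdentity}).

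The main obstacle is the rigorous justification of~(\ref{Eq:CommutationFiber}): one must argue that the $x$-derivative of a fiber integral whose domain $P_x^+(m)$ itself depends on $x$ reduces to integration of the horizontal derivative against the fixed Lorentz-invariant measure. The cleanest route is to pass to the orthonormal-frame coordinates $(x^\mu,p_{\hat\alpha})$ of Eq.~(\ref{Eq:Sasaki_ONbasis}), in which the mass hyperboloid is cut out by the $x$-independent constraint $p_{\hat 0} = -\sqrt{m^2 + \delta^{\hat a\hat b}p_{\hat a}p_{\hat b}}$ and $\dvol_x(p)$ becomes the $x$-independent expression $dp_{\hat 1}\wedge\cdots\wedge dp_{\hat d}/|p_{\hat 0}|$; a short computation, using the antisymmetry of $\hat{\Gamma}_{\hat\beta\mu\hat\alpha}$ in its frame indices (see Eq.~\ref{Eq:Dphat}), then converts the naive coordinate partial derivative into $D/dx^\mu$ and establishes~(\ref{Eq:CommutationFiber}). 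Once this geometric fact is in place, the remaining manipulation is purely algebraic.
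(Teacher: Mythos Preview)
Your argument is correct. It differs in organisation from both proofs in the paper, though it is closest to the second one. The paper's second proof kills all connection terms at the outset by choosing Gaussian (Riemann normal) coordinates at $x$, so that $\nabla^{\mu_1}$ becomes $\partial^{\mu_1}$ and the Liouville operator reduces to $p^{\mu}\partial_{\mu}$ at that point; the identity is then immediate. You instead work in arbitrary coordinates, track the Christoffel terms explicitly through the horizontal derivative $D/dx^\mu$, and show that they reassemble into the covariant derivative of $T^{(s)}$ on the base. Your route is slightly longer but manifestly coordinate-independent at every step, and it makes transparent \emph{why} the Christoffel terms cancel (parallel transport preserves the fibre volume), whereas the Gaussian-coordinate trick hides this behind a clever choice of chart. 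The paper also offers a first, more global proof via Gauss' theorem on $(\Gamma_m^+,\hat h)$, relating the divergence of $J$ on $M$ to the divergence of $\mathcal J = fL/m$ on the mass shell and then bootstrapping to higher $s$ by replacing $f$ with $f\,p(X_2)\cdots p(X_s)$; that argument is conceptually different from yours and yields additional geometric insight (it uses Liouville's theorem directly), at the cost of invoking the Fubini-type formula and the boundary identity~(\ref{Eq:KeyIdentity}).
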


\begin{proof}
We provide two different proofs for this important result. The first one, which provides more geometric insight, is based on Gauss' theorem and the identity (see the last subsection)
\begin{equation}
\int\limits_{\Sigma} \hat{h}(\mathcal{J},\nu) \eta_\Sigma
 = \int\limits_S\left( \int\limits_{P_x^+(m)} f(x,p) p(s) \dvol_x(p) \right) \eta_S
 = \int\limits_S J(s) \eta_S,
\label{Eq:KeyIdentity}
\end{equation}
which holds for any $d$-dimensional hypersurface $S$ in $M$ with normal vector field $s$, and the corresponding $2d$-dimensional hypersurface $\Sigma$ in $\Gamma_m^+$ defined in Eq.~(\ref{Eq:SigmaFromS}) with associated normal field $\nu$ given by Eq.~(\ref{Eq:nu}). Note that this identity holds regardless of whether $S$ is timelike or spacelike. We apply this identity to the case in which $S = \partial K$ arises as the boundary of a compact subset $K\subset M$ of spacetime, in which case $\Sigma = \partial V$ is the boundary of the corresponding volume
\begin{equation}
V = \{ (x,p) : x\in K, p\in P_x^+(m) \}
\end{equation}
in the future mass shell $\Gamma_m^+$. Applying Gauss' theorem on both sides of Eq.~(\ref{Eq:KeyIdentity}), we obtain
\begin{equation}
\int\limits_V \divrg \mathcal{J} \eta_{\Gamma_m^+} = \int\limits_K \divrg J \eta_M,
\end{equation}
with $\divrg$ denoting the divergence operator (in $\Gamma_m^+$ or $M$). Due to Liouville's theorem, see Theorem~\ref{Thm:Liouville}, we have
\begin{equation}
\divrg\mathcal{J} = \divrg\left( \frac{f}{m} L \right) = \frac{1}{m} L[f].
\end{equation}
This shows that
\begin{equation}
\int\limits_K \divrg J \eta_M = \frac{1}{m}\int\limits_V L[f] \eta_{\Gamma_m^+}
 = \int\limits_K \left( \int\limits_{P_x^+(m)} L[f] \dvol_x(p) \right) \eta_M,
\label{Eq:IdJ}
\end{equation}
where we have used the Fubini-type formula~(\ref{Eq:Fubini}) in the last step. Since this holds for any $K$, the statement of the theorem for $s=1$ follows. To generalize the proof to arbitrary $s$, we take arbitrary vector fields $X_2,\ldots,X_s\in {\cal X}(M)$ and consider instead of $J$ the covector field
\begin{equation}
\hat{J} := T^{(s)}(\;\cdot\;,X_2,\ldots,X_s),
\end{equation}
which is equivalent to replacing the distribution function $f(x,p)$ with $\hat{f}(x,p) := f(x,p) p(X_2)\cdots p(X_s)$ in the identity~(\ref{Eq:KeyIdentity}). Using
\begin{eqnarray*}
\divrg\hat{J} &=& \nabla^{\mu_1}\left( T^{(s)}_{\mu_1\mu_2\ldots\mu_s}
X_2^{\mu_2}\ldots X_s^{\mu_s} \right)
\nonumber\\
 &=& \left(\nabla^{\mu_1} T^{(s)}_{\mu_1\mu_2\ldots\mu_s}\right)
 X_2^{\mu_2}\ldots X_s^{\mu_s}
 + T^{(s)}_{\mu_1\mu_2\ldots\mu_s}(\nabla^{\mu_1} X_2^{\mu_2})
 X_3^{\mu_3}\ldots X_s^{\mu_s} + \cdots +
 T^{(s)}_{\mu_1\mu_2\ldots\mu_s} X_2^{\mu_2}
 \ldots X_{s-1}^{\mu_{s-1}}\nabla^{\mu_1} X_s^{\mu_s},
\end{eqnarray*}
and
\begin{eqnarray*}
L[\hat{f}] &=& L\left[ f p(X_2)\ldots p(X_s) \right]\\
 &=& L[f] p(X_2)\ldots p(X_s) + f L[p(X_2)] p(X_3) \ldots p(X_s)
  + \cdots + f p(X_2)\ldots p(X_{s-1}) L[p(X_s)],
\end{eqnarray*}
and the fact that $\displaystyle \frac{D}{dx^\nu}(p_\mu X_j^\mu) = p_\mu\nabla_\nu X_j^\mu$ which implies $\displaystyle L\left[ p(X_j) \right] =  p_\mu p_\nu \nabla^\mu X_j^\nu$ for all $j=2,3,\ldots,s$, the statement of the theorem follows from Eq.~(\ref{Eq:IdJ}) and the definition of $T^{(s)}$.

The alternative proof, which is technically simpler but offers less geometric insight, makes use of a Gaussian coordinate system at $x$ (i.e. a local inertial frame at $x$), such that $\displaystyle \left. e_{\hat{\alpha}}^\mu \right|_x = \delta^\mu{}_{\hat{\alpha}}$ and $\displaystyle \left. \frac{\partial}{\partial x^\nu} e_{\hat{\alpha}}^\mu \right|_x = 0 $. Then, using Eqs.~(\ref{Eq:TsCoord}) and~(\ref{Eq:dvolx}),
\begin{eqnarray*}
 \nabla^{\mu_1} T^{(s)}_{\mu_1\mu_2\ldots\mu_s}(x)
 &=& \frac{\partial}{\partial x^{\mu_1}}  
 \int\limits_{\Real^d} f(x,p) p^{\mu_1} p_{\mu_2}\ldots p_{\mu_s} \frac{d^d p}{\sqrt{m^2 + p_1^2 + p_2^2 + \cdots + p_d^2}}\\
 &=& \int\limits_{\Real^d} p^{\mu_1} \frac{\partial f}{\partial x^{\mu_1}}(x,p)  p_{\mu_2}\ldots p_{\mu_s} \frac{d^d p}{\sqrt{m^2 + p_1^2 + p_2^2 + \cdots + p_d^2}}\\
 &=& \int\limits_{P_x^+(m)} L[f] p_{\mu_2}\ldots p_{\mu_s} \dvol_x(p),
\end{eqnarray*}
where in the last step we have used the explicit expression~(\ref{Eq:LiouvilleVF}) for the Liouville vector field and the vanishing of the first derivatives of the metric components at $x$ due to the choice of the Gaussian system. This concludes the second proof of the identity.
\qed
\end{proof}

%%%%%%%%%%%%%%%%%%%%%%%%%%%%%%%%%%%%%%%%%%%%%
\subsection{An example: The J\"uttner distribution function}
\label{SubSec:Juttner}

As a typical an important example we consider the distribution function~\cite{fJ11a,fJ11b}
\begin{equation}
\boxed{ f(x,p) = \alpha e^{p(\beta)} = \alpha e^{\beta^\mu p_\mu}, }
\label{Eq:Juttner}
\end{equation}
with $\alpha\in {\cal F}(M)$ a positive function and $\beta\in {\cal X}(M)$ a future-directed timelike vector field on $M$. Introducing the generating function~\cite{wI63}
\begin{equation}
Z(\alpha,\beta) := \alpha\int\limits_{P_x^+(m)} e^{p(\beta)} \dvol_x(p),
\label{Eq:GeneratingFunction}
\end{equation}
the $s$-moment tensor can be computed according to
\begin{equation}
T^{(s)}_{\mu_1\mu_2\ldots\mu_s} = \frac{\partial^s Z(\alpha,\beta)}{\partial\beta^{\mu_1}\partial\beta^{\mu_2}\ldots\partial\beta^{\mu_s}}.
\end{equation}
Exploiting the Lorentz-invariance of the volume element $\dvol_x(p)$, it is sufficient to compute $Z(\alpha,\beta)$ in an orthonormal frame $\left\lbrace e_{\hat{\alpha}} \right\rbrace$ whose timelike leg $e_{\hat{0}}$ is aligned with $\beta$. Parametrizing the momentum in the form
$$
p = -m\cosh\chi e_{\hat{0}} + m\sinh\chi n^{\hat{a}} e_{\hat{a}} ,
$$
with $\chi\geq 0$ a hyperbolic angle and $\ve{n} := (n^{\hat{1}},n^{\hat{2}},\ldots,n^{\hat{d}})\in S^{d-1}$ a unit vector, we find
$$
\dvol_x(p) = m^{d-1}\sinh^{d-1}\chi d\chi\wedge d\Omega(\ve{n}),
$$
with $d\Omega(\ve{n})$ the standard volume element on $S^{d-1}$. Introducing this into Eq.~(\ref{Eq:GeneratingFunction}) one obtains
\begin{equation}
Z(\alpha,\beta) = 2\alpha\left( \frac{2\pi m^2}{z} \right)^{\frac{d-1}{2}} K_{\frac{d-1}{2}}(z),
\qquad z = m\sqrt{-g(\beta,\beta)} = m\sqrt{-\beta^\mu\beta_\mu},
\end{equation}
where here
\begin{equation}
K_\nu(z) := \int\limits_0^\infty e^{-z\cosh(\chi)} \cosh(\nu\chi) d\chi
 = \frac{\sqrt{\pi}\left( \frac{z}{2} \right)^\nu}{\Gamma\left( \nu + \frac{1}{2} \right)}
 \int\limits_0^\infty e^{-z\cosh(\chi)} \sinh^{2\nu}(\chi) d\chi,\qquad
\nu > -\frac{1}{2},\quad z > 0,
\end{equation}
denote the modified Bessel functions of the second kind, see for instance~\cite{DLMF}. Based on these expressions and the relation
\begin{equation}
z^\nu\frac{d}{dz}\left[ z^{-\nu} K_\nu(z) \right] = -K_{\nu+1}(z),
\end{equation}
one easily finds the following expressions
\begin{equation}
J^\mu = n(z) u^\mu,\qquad
T_{\mu\nu} = n(z) h(z) u_\mu u_\nu + P(z) g_{\mu\nu},
\label{Eq:PerfectFluid}
\end{equation}
for the particle current density and the energy-momentum-stress tensor, where here
\begin{eqnarray}
&& n(z) = 2\alpha m\left( \frac{2\pi m^2}{z} \right)^{\frac{d-1}{2}} K_{\frac{d+1}{2}}(z),
\qquad
u^\mu = \frac{\beta^\mu}{\sqrt{-\beta^\mu\beta_\mu}},
\label{Eq:PerfectFluidnu}\\
&& h(z) = m\frac{K_{\frac{d+3}{2}}(z)}{K_{\frac{d+1}{2}}(z)},
\qquad
P(z) = \frac{\alpha}{\pi}\left( \frac{2\pi m^2}{z} \right)^{\frac{d+1}{2}} K_{\frac{d+1}{2}}(z).
\label{Eq:PerfectFluidhp}
\end{eqnarray}
These formulae, which generalize the corresponding ones in~\cite{fJ11a,fJ11b,wI63} to arbitrary space dimensions,\footnote{See also Refs.~\cite{gChetal2010,lAgCh2018,gL2016} for similar distribution functions in $d$ dimensions with additional parameters.} describe a perfect fluid configuration with velocity $u$, particle density $n$, enthalpy per particle $h$ and pressure $P$ satisfying (formally) the ideal gas equation
\begin{equation}
\frac{P}{n} = \frac{m}{z} = k_B T,\qquad T := \frac{1}{k_B}(-\beta^\mu\beta_\mu)^{-1/2}.
\label{Eq:IdealGas}
\end{equation}
As will be shown in section~\ref{Sec:Equilibrium}, when collisions are present, the distribution function~(\ref{Eq:Juttner}) describes a local equilibrium configuration of temperature $T$.

%%%%%%%%%%%%%%%%%%%%%%%%%%%%%%%%%%%%%%%%%%%%%
\subsection{The collisionless Boltzmann equation}
\label{SubSec:CollisionlessBoltzmann}

An evolution equation for the distribution function describing a collisionless and uncharged simple gas follows immediately from the definition~(\ref{Eq:NFlux}) of the averaged number of occupied trajectories crossing a given compact spacelike $2d$-dimensional surface $\Sigma_0$ in $\Gamma_m^+$ and Gauss' theorem. For this, consider for each $\Delta t > 0$ the tubular region
\begin{equation}
V := \bigcup_{0\leq t \leq \Delta t} \psi^t(\Sigma_0),
\label{Eq:VDef}
\end{equation}
which is obtained by transporting the surface $\Sigma_0$ along the flow $\psi^t$ of $\mathcal{J}$. The boundary of $V$ consists of the ``initial" and ``final" hypersurfaces $\Sigma_0$ and $\Sigma_1 := \psi^{\Delta t}(\Sigma_0)$ and the cylindrical piece, $\displaystyle {\cal T} := \bigcup_{0\leq t\leq \Delta t}\partial(\psi^t(\Sigma_0))$, see Fig.~\ref{Fig:PS_trajectories}. Integrating the divergence of $\mathcal{J}$ over $V$, using Gauss' theorem and noting that the flux integral over ${\cal T}$ vanishes since $\mathcal{J}$ is tangent to it, one obtains the balance law
\begin{equation}
\mathcal{N}[\Sigma_1] - \mathcal{N}[\Sigma_0] 
 = \int\limits_V (\divrg\mathcal{J}) \eta_{\Gamma_m^+}
 = \frac{1}{m}\int\limits_V L[f] \eta_{\Gamma_m^+}. 
\label{Eq:BalanceLaw}
\end{equation}

In the absence of collisions between the gas particles, one has $\mathcal{N}[\Sigma_1] = \mathcal{N}[\Sigma_0]$, and hence the one-particle distribution function $f$ must satisfy the Liouville (or Vlasov or collisionless Boltzmann) equation
\begin{equation}
\boxed{L[f] = g^{\mu\nu}p_{\nu}\frac{\partial f}{\partial x^{\mu}}
 - \frac{1}{2}\frac{\partial g^{\alpha\beta}}{\partial x^{\mu}}p_{\alpha}p_{\beta}
 \frac{\partial f}{\partial p_{\mu}} = 0.}
\label{Eq:LiouvilleEq}
\end{equation}
In terms of the local coordinates $(x^\mu,p_{\hat{a}})$ parametrizing the future mass shell $\Gamma_m^+$ (see Section~\ref{SubSec:Liouville}), one can also write the Liouville equation more explicitly as
\begin{equation}
\boxed{ L[f] = p^{\hat{\beta}} e_{\hat{\beta}}^\mu\left( \frac{\partial f}{\partial x^{\mu}} 
 + p_{\hat{\alpha}} \hat{\Gamma}^{\hat{\alpha}}{}_{\mu\hat{b}} 
 \frac{\partial f}{\partial p_{\hat{b}}} \right) = 0,}
\label{Eq:LiouvilleEqMassShell}
\end{equation}
where it is understood that $p_{\hat{0}}$ is determined in terms of $p_{\hat{a}}$ according to Eq.~(\ref{Eq:p0}) and $\hat{\Gamma}^{\hat{\alpha}}{}_{\mu\hat{b}}$ was defined in Eq.~(\ref{Eq:Dphat}).\footnote{In order to prove that Eq.~(\ref{Eq:LiouvilleEqMassShell}) is equivalent to Eq.~(\ref{Eq:LiouvilleEq}) one can start from the alternative representation~(\ref{Eq:LiouvilleHor}) for the Liouville vector field, along with the observation that in terms of the coordinates $(x^\mu,p_{\hat{\alpha}})$ on $T^* M$ one has
$$
\frac{D}{dx^\mu} = \frac{\partial}{\partial x^\mu} + \hat{\Gamma}^{\hat{\alpha}}{}_{\mu\hat{\beta}} p_{\hat{\alpha}}\frac{\partial}{\partial p_{\hat{\beta}}}.
$$
Using the fact that $\hat{\Gamma}_{\hat{\alpha}\mu\hat{\beta}}$ is antisymmetric in $\hat{\alpha}\hat{\beta}$ and that when the coordinates $p_\mu$ are considered to be functions of $p_{\hat{a}}$, $\hat{a} = 1,2,\ldots,d$, on the mass shell, one has
$$
\frac{\partial}{\partial p_{\hat{a}}} = \left( \frac{p^{\hat{a}}}{p_{\hat{0}}} \theta^{\hat{0}}{}_\mu
 + \theta^{\hat{a}}{}_\mu \right) \frac{\partial}{\partial p_\mu},
$$
the desired equivalence follows immediately.}
Recently, the properties of the solutions of the Liouville equation for a fixed (Schwarzschild or Kerr) black hole background have been analyzed in detail. In this case, it is possible through the use of action-angle-type variables to provide a formal solution for the distribution function~\cite{oStZ14b}. For applications to the accretion problem onto a Schwarzschild black hole, see Refs.~\cite{pRoS17,pRoS17b,aCpM20,pMaO21a,pMaO21b,aGcGpDdNoS21}. Other recent applications~\cite{pRoS18,pRoS20} show that a collisionless gas which is trapped in the gravitational potential of a black hole relaxes in time to a stationary state (even though the motion of the individual particles is quasi-periodic!), an effect which is due to phase space mixing. For the decay properties of the solutions to the Liouville equation for a massless gas propagating on a Schwarzschild or Kerr background, see Refs.~\cite{lApBjS18,lB20}.

When the self-gravity of the gas becomes important, one cannot assume that the spacetime metric $g$ is fixed, and in this case one should consider instead the coupled Einstein-Liouville (or Einstein-Vlasov) system
\begin{eqnarray}
&& L[f] = 0,
\label{Eq:LiouvilleEqBis}\\
&& G_{\mu\nu} =\kappa T_{\mu\nu},
\label{Eq:EinsteinEq}
\end{eqnarray}
with $G_{\mu\nu}$ the components of the Einstein tensor, $\kappa$ the gravitational coupling constant (which in natural units reads $\kappa = 8\pi G_N/c^4$ with Newton's constant $G_N$), and $T_{\mu\nu}$ the energy-momentum-stress tensor defined by Eq.~(\ref{Eq:TsCoord}) with $s = 2$. Note that the integrability condition $\nabla^\mu G_{\mu\nu} = 0$ for the Einstein field equations~(\ref{Eq:EinsteinEq}) is a consequence of the Liouville equation~(\ref{Eq:LiouvilleEqBis}) and the identity~(\ref{Eq:DivergenceIdentity}). In the presence of a Killing vector field $k\in {\cal X}(M)$, the latter provides an infinite family of conservation laws
\begin{equation}
\nabla_\mu J_{(s)}^\mu = 0,\qquad
J_{(s)}^\mu := g^{\mu\nu} T^{(s)}_{\nu\mu_2\ldots\mu_s} k^{\mu_2} \ldots k^{\mu_s},
\qquad s = 1,2,\ldots
\end{equation}
For work on the properties of solutions to the Einstein-Vlasov system we refer the reader to the review article~\cite{hA11}, see also Refs.~\cite{aAmC14,eAhAaL16,eAhAaL19} for recent numerical work. Recently, the nonlinear stability of Minkowski spacetime as a solution of the massless~\cite{mT17,lBdFjJjSmT20,jJmTjV20} and massive~\cite{hLmT20,dFjJjS21} Einstein-Vlasov systems has also been established.

For a collisional gas, the left-hand side of Eq.~(\ref{Eq:BalanceLaw}) represents the ensemble average of the net change in number of occupied trajectories between $\Sigma_0$ and $\Sigma_1$, due to collisions. This change is described by the transition probability, as will be discussed in detail in section~\ref{Sec:Collision}.

%%%%%%%%%%%%%%%%%%%%%%%%%%%%%%%%%%%%%%%%%%%%%
\subsection{A collisionless simple gas with a continuous mass distribution}
\label{SubSec:MassDistribution}

We would like to emphasize that one might encounter physical situations in which the particles constituting the gas do not have a fixed mass, but are subject to a continuous mass distribution. A prominent example arises in galactic dynamics (see, for instance~\cite{BinneyTremaine-Book}) where the stars of the galaxy represent the individual particles whose mass distribution might be supported on an open interval. In the case, the one-particle distribution function $f: T^*M \to \Real$ should in principle be regarded as a function on the cotangent bundle (instead of a particular mass shell) whose support lies inside the set
\begin{equation}
\Gamma^+ := \bigcup_{m > 0} \Gamma_m^+
\end{equation}
in order to guarantee that each particle follows a future-directed timelike trajectory. In this case the flux integral~(\ref{Eq:NFlux}) should be replaced with\footnote{Here the factor $(-1)^d$ is introduced such that $(-1)^d  i_\nu\eta_{T^* M} = i_s\eta\wedge \eta_{T_x^*M}$ for the particular case in which $\Sigma$ stems from a $d$-dimensional spatial hypersurface $S$ with  unit normal $s$.}
\begin{equation}
\mathcal{N}\left[\Sigma\right] 
 = -(-1)^d\int\limits_{\Sigma} f(x,p)\hat{g}(L,\nu) i_\nu\eta_{T^* M},\qquad
\label{Eq:NFluxMass}
\end{equation}
with $\Sigma$ a $(2d+1)$-dimensional spacelike hypersurface in $(\Gamma^+,\hat{g})$ with unit normal $\nu$. Gauss' theorem, Liouville's theorem on $\Gamma^+$ (Proposition~\ref{Prop:LiouvilleT*M}) and the hypothesis of absence of collisions lead to the Liouville equation $L[f] = 0$ on $\Gamma^+$, which in adapted local coordinates can be written in the form~(\ref{Eq:LiouvilleEq}). The particle current density $J$ and the higher moments $T^{(s)}$ tensor fields are given by the same expressions as in Eqs.~(\ref{Eq:ParticleDensity},\ref{Eq:Ts}) where one replaces $\dvol_x(p)$ with the following expression:
\begin{equation}
\frac{d^{d+1} p}{\sqrt{|\det(g_{\mu\nu})|}},
\end{equation}
and integrates over the whole fibre $T_x^* M$. With these replacements, the divergence identity~(\ref{Eq:DivergenceIdentity}) still holds, and the Einstein-Liouville system of equations~(\ref{Eq:LiouvilleEqBis},\ref{Eq:EinsteinEq}) leads again to a consistent set of equations.

However, due to the weak equivalence principle, the presence of a mass dispersion does not affect the dynamics of the Einstein-Liouville system. In order to see this explicitly it is sufficient to notice that the moments $T^{(s)}$ described above can be rewritten as fibre integrals over the unit mass hyperboloid $P_x^+(1)$ as follows:
\begin{equation}
T^{(s)}_{\mu_1\mu_2\ldots\mu_s}(x)
 = \int\limits_{P_x^+(1)} f_s(x,\tilde{p}) 
 \tilde{p}_{\mu_1}\tilde{p}_{\mu_2}\ldots \tilde{p}_{\mu_s}\dvol_x(\tilde{p}),
\qquad
f_s(x,\tilde{p}) := \int\limits_0^\infty f(x,m\tilde{p}) m^{s+d} dm,
\end{equation}
with $f_s$ describing the $s$'th mass moment of the distribution function. Since the Liouville vector field $L$ leaves the future mass shells $\Gamma_m^+$ invariant, these mass moments $f_s$ satisfy the Liouville equation as well and hence it is sufficient to consider the Einstein-Liouville system~(\ref{Eq:LiouvilleEqBis},\ref{Eq:EinsteinEq}) with $f$ replaced by $f_2$.

%%%%%%%%%%%%%%%%%%%%%%%%%%%%%%%%%%%%%%%%%%%%%
\section{Generalization to a charged gas with several species of particles}
\label{Sec:Charged}

In the previous section we discussed the relativistic kinetic theory for a simple, collisionless uncharged gas. In this section we generalize the theory to the case of a kinetic gas consisting of several species of charged particles (still neglecting collisions between the gas particles for the moment). This theory is particularly relevant for the description of hot plasmas.

In contrast to the electrically neutral case, in general, the individual particles of a charged gas configuration do not follow geodesic trajectories in spacetime, due to the presence of the electromagnetic field $F$ generated by the electric current in the configuration. Therefore, in this case, one ends up with a coupled system of transport equations describing the evolution of the distribution functions $f^{(a)}$ associated with each particle species (which depend on the electromagnetic field $F$) and Maxwell's equations for $F$ with the particle current density vector depending on the first moments of $f^{(a)}$. This gives rise to the Vlasov-Maxwell system of equations, and in this section we shall provide the details for its derivation, generalizing the description of the previous sections to the charged case. For the corresponding description on the tangent bundle $T M$, see Refs.~\cite{jE71,oStZ14a} and references therein.

%%%%%%%%%%%%%%%%%%%%%%%%%%%%%%%%%%%%%%%%%%%%%
\subsection{Description for a charged simple gas}

For simplicity, we start with the case of a single species of charged particles propagating on a curved spacetime $(M,g)$ with an external electromagnetic field which, we assume, is described by a closed two-form $F$ on $M$ ($dF = 0$). As explained above, the gas particles do not follow geodesic trajectories anymore, since they are accelerated due to the presence of the Lorentz force. As is well-known from classical electrodynamics (see for instance Ref.~\cite{Jackson-Book} Section~12.1) the transition from the free particle Hamiltonian $\mathcal{H}$ defined in Eq.~(\ref{Eq:FreeParticleH}) to the Hamiltonian describing the motion of a charged particle can be obtained by expressing the physical momentum $p$ of the particle as
\begin{equation}
p = \Pi - q A,
\label{Eq:Pi}
\end{equation}
where $A$ is the electromagnetic potential one-form (such that $F = dA$), $q$ the charge of the particle, and $\Pi$ the canonical conjugate momentum. The equations of motion are thus obtained from the new Hamiltonian function $\mathcal{H} = \mathcal{H}(x,\Pi - q A)$ on $T^* M$ and since $\Pi$ is the canonical conjugate momentum, the symplectic form is
\begin{equation}
d\Pi_\mu\wedge dx^\mu = dp_\mu\wedge dx^\mu 
 + \frac{q}{2} F_{\mu\nu} dx^\mu\wedge dx^\nu.
\end{equation}

These considerations lead to the following definition, which generalize the definition of the symplectic form in Eq.~(\ref{Eq:Omega}) to the charged case:
\begin{equation}
\boxed{ \Omega_{s,F} := \Omega_s + q\pi^* F = dp_\mu\wedge dx^\mu + \frac{q}{2} F_{\mu\nu} dx^\mu\wedge dx^\nu,}
\label{Eq:OmegaF}
\end{equation}
where $\pi^*$ denotes the pull-back of the projection map $\pi: T^*M \to M$. Since $dF = 0$ it follows that $\Omega_{s,F}$ is closed. Furthermore, it is non-degenerate as can easily be deduced from the identity
\begin{equation}
\Omega_{s,F}(\:\cdot\:,X) = X^\mu dp_\mu - (Y_\mu - q F_{\mu\nu} X^\nu) dx^\mu,
\qquad X = X^\mu\frac{\partial}{\partial x^\mu} + Y_\mu\frac{\partial}{\partial p_\mu}\in T_{(x,p)}(T^* M),
\label{Eq:OmegaFDotX}
\end{equation}
which implies that $\Omega_{s,F}(\:\cdot\:,X) = 0$ if and only if $X = 0$. Note that here and in the following, we choose to formulate the theory in terms of the physical momentum $p$ instead of the canonical conjugate momentum $\Pi$, which is the reason for the explicit appearance of the electromagnetic field $F$ in the expression for the symplectic form $\Omega_{s,F}$. This choice has the advantage that it avoids the need of introducing the potential one-form $A$, which might exist only locally and leads to a gauge-dependent definition of the momentum $\Pi$. Thus, in the approach adopted here (see also Refs.~\cite{eK00,oI03}), the Hamiltonian describing the motion of the particle is unchanged with respect to the uncharged case:
\begin{equation}
\boxed{ \mathcal{H}(x,p)=\frac{1}{2}g^{-1}_{x}(p,p)=\frac{1}{2}g^{\mu\nu}(x)p_{\mu}p_{\nu},}
\label{Eq:FreeParticleHBis}
\end{equation}
and the electromagnetic field manifests itself in the definition of the symplectic form~(\ref{Eq:OmegaF}).\footnote{Note that locally, one might assume $F = dA$ and the new local coordinates $(x^\mu,\Pi_\mu) := (x^\mu,p_\mu + q A_\mu)$ are Darboux coordinates, such that $\Omega_{s,F} = d\Pi_\mu\wedge dx^\mu$ and $\mathcal{H} = \frac{1}{2} g^{\mu\nu}(x)(\Pi_\mu - q A_\mu)(\Pi_\nu - q A_\nu)$ as in the usual approach.} Comparing $d\mathcal{H}$ with the right-hand side of Eq.~(\ref{Eq:OmegaFDotX}) one obtains the Liouville vector field for the charged case, defined as the Hamiltonian vector field associated with $\mathcal{H}$ on the symplectic manifold $(T^* M,\Omega_{s,F})$:
\begin{equation}
\boxed{ L_F =  g^{\mu\nu}p_{\nu}\frac{\partial}{\partial x^{\mu}}-\frac{1}{2}\frac{\partial g^{\alpha\beta}}{\partial x^{\mu}}p_{\alpha}p_{\beta}\frac{\partial}{\partial p_{\mu}} 
 + q F_\mu{}^\nu p_\nu \frac{\partial}{\partial p_\mu}. }
\label{Eq:LiouvilleVFCharged}
\end{equation}
As can be easily verified, the corresponding integral curves $(x(\lambda),p(\lambda))$, when projected onto the spacetime manifold by means of the projection map $\pi$, satisfy
\begin{equation}
\ddot{x}^\mu + \Gamma^\mu{}_{\alpha\beta}\dot{x}^\alpha\dot{x}^\beta 
 = q F^\mu{}_\nu\dot{x}^\nu,
\end{equation}
which are the equations of motion for a charged particle in an exterior electromagnetic field. Comparing with the Liouville vector field $L$ in the uncharged case (see Eqs.~(\ref{Eq:LiouvilleVF},\ref{Eq:LiouvilleHor})), one can write 
\begin{equation}
L_F = L + V,\qquad
L = g^{\mu\nu} p_\nu\frac{D}{dx^\mu},\qquad
V = q F_\alpha{}^\beta p_\beta \frac{\partial}{\partial p_\alpha},
\label{Eq:LiouvilleVFChargedBis}
\end{equation}
and thus, in the charged case the Liouville vector field acquires a vertical component $V$.\footnote{Recall that the decomposition~(\ref{Eq:tanS_of_cotanM}) of tangent vectors into horizontal and vertical components depends on the choice of the connection map $K_{(x,p)}$ introduced below Definition~\ref{Def:HorVer}, which was defined naturally in terms of the Levi-Civita connection belonging to the spacetime metric. In the charged case, it would be tempting to introduce a modified version of the map $K_{(x,p)}$, defined as
$$
K_{(x,p)}^{(F)}(Z) := K_{(x,p)}(Z) + q F_x(d\pi(Z),\;\cdot\;),\qquad
Z\in T_{(x,p)}(T^* M), 
$$
such that $L_F$ is horizontal with respect to the induced ``tilted" horizontal space $\ker K_{(x,p)}^{(F)}$. The definitions of the almost complex structure $J$ and bundle metric $\hat{g}$ would be modified accordingly and in fact, many of the nice properties listed in Section~\ref{SubSec:Sasaki} could be generalized. On the other hand, it is not clear (to us) what the geometric interpretation of the map $K_{(x,p)}^{(F)}$ would be. Moreover, since the bundle metric $\hat{g}$ would be modified, the corresponding geometry of $T^*M$ and the future mass shell $\Gamma_m^+$ would change. Most notably, one does not obtain the direct generalization of the relation~(\ref{Eq:OmegaJgRelation}) between the symplectic form and the bundle metric and consequently, not all tilted horizontal vector fields would be tangent to $\Gamma_m^+$, leading to an inequivalent definition of the normal vector field $N$. For these reasons, we prefer to adopt the same definitions of the horizontal space and bundle metric as in the  uncharged case.} The volume forms $\eta_{T^* M}$ and $\eta_{\Gamma_m^+}$ on the cotangent bundle and future mass shell are defined exactly in the same way as in the uncharged case, see Section~\ref{SubSec:Liouville}, and Liouville's theorem on $T^* M$ and $\Gamma_m^+$ (Proposition~\ref{Prop:LiouvilleT*M} and Theorem~\ref{Thm:Liouville}) are still valid if $L$ is replaced with $L_F$. To see this, it is sufficient to notice that one can replace $\Omega_s$ with $\Omega_{s,F}$ in the identity~(\ref{Eq:volumeformonT*M}), such that $\pounds_{L_F}\eta_{T^* M} = 0$ is again a direct consequence of $\pounds_{L_F}\Omega_{s,F} = (d i_{L_F} + i_{L_F} d)\Omega_{s,F} = -d^2\mathcal{H} = 0$. Moreover, when generalizing the proof of Theorem~\ref{Thm:Liouville} one uses the identity $N = J(L)/m$ and notes that
\begin{equation}
m\pounds_{L_F} N = [ L_F,J(L)]  = [L, J(L)] + [V,J(L)] = -L,
\end{equation}
where in the last step we have used the fact that
\begin{equation}
[V,J(L)] = q F_\alpha{}^\beta
\left[ p_\beta\frac{\partial}{\partial p_\alpha}, p_\mu\frac{\partial}{\partial p_\mu} \right]
 = 0.
\end{equation}

The description of a collisionless, simple charged gas proceeds exactly as in Section~\ref{Sec:SimpleGas}, replacing $L$ with $L_F$ in the definition of the vector field $\mathcal{U}$ in Eq.~(\ref{Eq:UDef}), which leads to the current density
\begin{equation}
\mathcal{J}_F = f\mathcal{U}_F,\qquad
\mathcal{U}_F := \frac{1}{m} L_F = \frac{1}{m}(L + V),
\end{equation}
with $f: \Gamma_m^+ \to \Real$ the one-particle distribution function. Note that by the generalization of Liouville's theorem to the charged case mentioned above the vector field $\mathcal{U}_F$ is still divergence-free. However, due to its vertical component, its squared norm is given by
\begin{equation}
\hat{g}(\mathcal{U}_F,\mathcal{U}_F) = \frac{1}{m^2}\left[ \hat{g}(L,L) + \hat{g}(V,V) \right]
  = -\left( 1 - \frac{q^2}{m^2} F^{\mu\alpha} F_\mu{}^\beta p_\alpha p_\beta \right),
\end{equation}
and thus $\mathcal{U}_F$ is only timelike as long as $qF$ is small enough. The physical interpretation of the one-particle distribution is again given by the flux integral~(\ref{Eq:NFlux}) where one replaces $\mathcal{J}$ with $\mathcal{J}_F$:
\begin{equation}
\boxed{ \mathcal{N}_F\left[\Sigma\right] 
 := -\int\limits_{\Sigma} \hat{h}(\mathcal{J}_F,\nu)\eta_{\Sigma},
\qquad
\eta_{\Sigma} := i_\nu\eta_{\Gamma_m^+},}
\label{Eq:NFluxF}
\end{equation}
for any $2d$-dimensional surface $\Sigma$ in $\Gamma_m^+$ with unit normal vector field $\nu$. Since $\mathcal{J}_F$ distinguishes itself from its uncharged counterpart $\mathcal{J}$ only by a vertical vector field, it follows that $\mathcal{N}_F[\Sigma] = \mathcal{N}[\Sigma]$ for any surface $\Sigma$ whose normal vector field $\nu$ is purely horizontal. This applies, in particular, to all surfaces $\Sigma$ of the form described in Eq.~(\ref{Eq:SigmaFromS}), which are induced by a $d$-dimensional oriented hypersurface $S$ in $M$, such that the integral identity~(\ref{Eq:KeyIdentity}) holds also with $\mathcal{J}$ replaced with $\mathcal{J}_F$. This leads to the following generalization of Theorem~\ref{Thm:sMoments}:

\begin{theorem}
\label{Thm:sMomentsF}
The $s$-moments $T^{(s)}$ defined in Eq.~(\ref{Eq:Ts}) satisfy the following identities
\begin{equation}
\boxed{ \nabla^{\mu_1} T^{(s)}_{\mu_1\mu_2\ldots\mu_s}(x)
 = \int\limits_{P_x^+(m)} L_F[f] p_{\mu_2}\ldots p_{\mu_s} \dvol_x(p)
  + q(s-1) F_{(\mu_2}{}^\beta T^{(s-1)}_{\mu_3\ldots \mu_s)\beta}, \qquad s=1,2,\ldots,}
\label{Eq:DivergenceIdentityF}
\end{equation}
with $L_F$ the Liouville vector field defined in Eq.~(\ref{Eq:LiouvilleVFCharged}).
\end{theorem}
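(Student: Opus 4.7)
The plan is to mimic the two-step structure of the proof of Theorem~\ref{Thm:sMoments}, adapting each step to incorporate the vertical piece $V = q F_\alpha{}^\beta p_\beta\frac{\partial}{\partial p_\alpha}$ of the charged Liouville field $L_F = L + V$. Two structural facts already noted in the text make this adaptation essentially mechanical: (a)~on any surface $\Sigma$ of the form~(\ref{Eq:SigmaFromS}) the unit normal $\nu$ in Eq.~(\ref{Eq:nu}) is horizontal, so $\mathcal{J}_F - \mathcal{J} = fV/m$ contributes nothing to the flux, whence (\ref{Eq:KeyIdentity}) still holds with $\mathcal{J}$ replaced by $\mathcal{J}_F$; and (b)~Liouville's theorem generalizes, $\pounds_{L_F}\eta_{\Gamma_m^+}=0$, so $\mathcal{U}_F$ is divergence-free and $\divrg\mathcal{J}_F = \mathcal{U}_F[f] = L_F[f]/m$.

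First, I would dispatch the case $s=1$. Take $K\subset M$ compact with $S=\partial K$ and the corresponding tubular region $V\subset\Gamma_m^+$, and apply Gauss' theorem to both sides of the charged version of (\ref{Eq:KeyIdentity}). Using (a) on the boundary flux and (b) on the bulk, followed by the Fubini-type formula~(\ref{Eq:Fubini}), yields
\[
\int_K \nabla^\mu J_\mu\, \eta_M = \int_K\left(\int_{P_x^+(m)} L_F[f]\,\dvol_x(p)\right)\eta_M;
\]
since $K$ is arbitrary and the extra $q(s-1)$-term vanishes for $s=1$, this establishes~(\ref{Eq:DivergenceIdentityF}) in this case.

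Next, for arbitrary $s\geq 2$, I would carry out the $\hat f$-trick exactly as in the second half of the proof of Theorem~\ref{Thm:sMoments}: fix arbitrary $X_2,\ldots,X_s\in{\cal X}(M)$, set $\hat f := f\,p(X_2)\cdots p(X_s)$ and $\hat J := T^{(s)}(\,\cdot\,,X_2,\ldots,X_s)$, and apply the $s=1$ identity just proven. Expanding $\nabla^\mu \hat J_\mu$ by Leibniz produces $(\nabla^{\mu_1} T^{(s)}_{\mu_1\mu_2\ldots\mu_s})X_2^{\mu_2}\cdots X_s^{\mu_s}$ plus derivative-of-$X_j$ terms. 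On the right, $L_F[\hat f]=L[\hat f]+V[\hat f]$ splits into the horizontal contribution, which produces $L[p(X_j)] = p_\mu p_\nu\nabla^\mu X_j^\nu$ and cancels the derivative-of-$X_j$ terms on the LHS by total symmetry of $T^{(s)}$ (this is precisely the neutral calculation), and the new vertical contribution
\[
V[p(X_j)] = qF_\alpha{}^\beta p_\beta\,\frac{\partial(p_\sigma X_j^\sigma)}{\partial p_\alpha} = qF_\alpha{}^\beta p_\beta X_j^\alpha.
\]
Integrating the sum over $j$ against the remaining $s-2$ momentum factors yields
\[
\sum_{j=2}^s q F_{\mu_j}{}^\beta\, T^{(s-1)}_{\beta\mu_2\ldots\widehat{\mu_j}\ldots\mu_s}\, X_2^{\mu_2}\cdots X_s^{\mu_s}.
\]

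Since $X_2,\ldots,X_s$ are arbitrary and both sides of the resulting identity are automatically symmetric in $\mu_2,\ldots,\mu_s$, I would strip the $X$'s and invoke the total symmetry of $T^{(s-1)}$ together with the elementary combinatorial identity $\sum_{j=2}^s F_{\mu_j}{}^\beta T^{(s-1)}_{\mu_2\ldots\widehat{\mu_j}\ldots\mu_s\beta} = (s-1)F_{(\mu_2}{}^\beta T^{(s-1)}_{\mu_3\ldots\mu_s)\beta}$ to recognize the symmetrized correction term in~(\ref{Eq:DivergenceIdentityF}). The only mildly subtle point in the whole argument is this last bookkeeping of the symmetrization; everything else is forced by the decomposition $L_F=L+V$ and the fact that the horizontal piece contributes as in the uncharged case. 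An alternative, quicker route — less geometric but requiring no Gauss-theorem setup — is to use a Gaussian frame at $x$, invoke the second proof of Theorem~\ref{Thm:sMoments} for the $L$-part and integrate the $V$-part by parts on the fibre $P_x^+(m)$, the boundary term at infinity being killed by the assumed decay of $f$ in $p$.
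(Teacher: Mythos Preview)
Your proposal is correct and follows essentially the same approach as the paper: both the Gauss-theorem argument for $s=1$ (resting on the horizontality of $\nu$ and the charged Liouville theorem) and the $\hat f$-trick for $s\geq 2$ via the identity $L_F[p(X_j)] = p_\mu p_\nu\nabla^\mu X_j^\nu + qF_\mu{}^\nu p_\nu X_j^\mu$ match the paper's first proof, and your final remark about the Gaussian-frame/integration-by-parts alternative coincides with the paper's second proof.
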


\begin{proof}
With the above observations, the generalization of the proof of Theorem~\ref{Thm:sMoments} based on Gauss' theorem is straightforward. The case $s = 1$ follows immediately, while for $s > 1$ it is sufficient to notice that
\begin{equation}
L_F[ p(\beta) ] = p_\mu p_\nu\nabla^\mu \beta^\nu + q F_\mu{}^\nu p_\nu \beta^\mu,
\label{Eq:LFpbeta}
\end{equation}
for all $\beta\in {\cal X}(M)$.

Alternatively, one may insert $L = L_F - V$ in the identity~(\ref{Eq:DivergenceIdentity}) and use the local coordinates $(x^\mu,p_{\hat{a}})$ on $\Gamma_m^+$ to write the vertical part as
\begin{equation}
V = q F_{\hat{b}}{}^{\hat{\alpha}} p_{\hat{\alpha}} \frac{\partial}{\partial p_{\hat{b}}},
\end{equation}
where we recall that $\hat{b} = 1,2,\ldots,d$ and $p_{\hat{0}}$ is obtained from Eq.~(\ref{Eq:p0}). Using this representation and integration by parts, one obtains
\begin{eqnarray*}
&& \int\limits_{P_x^+(m)} V[f] p_{\hat{\alpha}_2} \ldots p_{\hat{\alpha}_s} \dvol_x(p)
 = q F_{\hat{b}}{}^{\hat{\alpha}}\int\limits_{\Real^d} \frac{\partial f}{\partial p_{\hat{b}}}  
p_{\hat{\alpha}} p_{\hat{\alpha}_2} \ldots p_{\hat{\alpha}_s}
\frac{d^d p}{(-p_{\hat{0}})}\\
&&\qquad\qquad\qquad = qF_{\hat{b}}{}^{\hat{\alpha}}\int\limits_{\Real^d} f
\left[ \frac{\partial}{\partial p_{\hat{b}}} \left( \frac{p_{\hat{\alpha}}}{p_{\hat{0}}} \right)
p_{\hat{\alpha}_2} \ldots p_{\hat{\alpha}_s}
 + \frac{p_{\hat{\alpha}}}{p_{\hat{0}}}  \frac{\partial p_{\hat{\alpha}_2}}{\partial p_{\hat{b}}}
 p_{\hat{\alpha}_3}\ldots p_{\hat{\alpha}_s} + \cdots +
 \frac{p_{\hat{\alpha}}}{p_{\hat{0}}} 
p_{\hat{\alpha}_2}\ldots p_{\hat{\alpha}_{s-1}} 
 \frac{\partial p_{\hat{\alpha}_s}}{\partial p_{\hat{b}}}  \right] d^d p.
\end{eqnarray*}
Owing to the fact that $F_{\hat{\alpha}\hat{\beta}}$ is antisymmetric and using Eq.~(\ref{Eq:p0}) it is easy to verify that the first term on the right-hand side vanishes. Finally, the statement follows after applying the identity
\begin{equation}
F_{\hat{b}}{}^{\hat{\alpha}} p_{\hat{\alpha}}
\frac{\partial p_{\hat{\alpha}_j}}{\partial p_{\hat{b}}}
 = F_{\hat{\alpha}_j\hat{\beta}} p^{\hat{\beta}},
\qquad j = 2,3,\ldots,s,
\end{equation}
to the other terms.
\qed
\end{proof}

A collisionless charged gas propagating in a curved spacetime $(M,g)$ with an exterior electromagnetic field $F$ is described by a one-particle distribution function $f: \Gamma_m^+\to \Real$ which satisfies the Liouville (or Vlasov) equation
\begin{equation}
\boxed{ L_F[f] =  p^\mu\frac{\partial f}{\partial x^\mu}
 + p_{\hat{\alpha}}\left( p^\mu\hat{\Gamma}^{\hat{\alpha}}{}_{\mu\hat{b}}
 - q F^{\hat{\alpha}}{}_{\hat{b}} \right) \frac{\partial f}{\partial p_{\hat{b}}} = 0,\qquad
p^\mu = p^{\hat{\beta}} e_{\hat{\beta}}^\mu.}
\label{Eq:VlasovEq}
\end{equation}
As a consequence of the previous theorem, any solution $f$ of Eq.~(\ref{Eq:VlasovEq}) has a particle current density $J_\mu$ and energy-momentum-stress tensor $T_{\mu\nu}$ satisfying
\begin{equation}
\boxed{ \nabla^\mu J_\mu = 0,\qquad
\nabla^\mu T_{\mu\nu} = -q  J_\mu F^\mu{}_\nu. }
\label{Eq:JTConservation}
\end{equation}
Thus, while the particle current density is divergence-free, in the charged case, the kinetic energy-momentum-stress tensor is not necessarily divergence-free. As discussed in the next subsection, this is due to the fact that the gas may interchange energy-momentum with the electromagnetic field.

%%%%%%%%%%%%%%%%%%%%%%%%%%%%%%%%%%%%%%%%%%%%%
\subsection{The Vlasov-Maxwell system of equations}

After these remarks, it is a simple task to generalize the description to a system of charged gas consisting of several species $(a)$ of particles propagating on the curved spacetime $(M,g)$ and in the electromagnetic field $F$ they generate. Each species $(a)$ is described by its one-particle distribution function $f^{(a)}: \Gamma_{m_a}^+ \to \Real$ defined on its future mass shell $\Gamma_{m_a}^+$ of mass $m_a$ and satisfies its own Vlasov equation
\begin{equation}
\boxed{ L_F^{(a)}[f^{(a)}] = p^\mu\frac{\partial f^{(a)}}{\partial x^\mu}
 + p_{\hat{\alpha}}\left( p^\mu\hat{\Gamma}^{\hat{\alpha}}{}_{\mu\hat{b}}
 - q_a F^{\hat{\alpha}}{}_{\hat{b}} \right) \frac{\partial f^{(a)}}{\partial p_{\hat{b}}} = 0,}
\label{Eq:VlasovEqSpecies}
\end{equation}
where $q_a$ denotes the charge of the specie $(a)$. The associated particle current covector $J_\mu^{(a)}$ and energy-momentum-stress tensor $T_{\mu\nu}^{(a)}$ satisfy
\begin{equation}
\nabla^\mu J_\mu^{(a)} = 0,\qquad
\nabla^\mu T_{\mu\nu}^{(a)} = -q_a  J_\mu^{(a)} F^\mu{}_\nu.
\label{Eq:DivTCharged}
\end{equation}
Furthermore, the electromagnetic field tensor $F$ should satisfy Maxwell's field equations
\begin{equation}
dF = 0,\qquad
\divrg F = -j,
\label{Eq:Maxwell}
\end{equation}
or, in terms of local coordinates,
\begin{equation}
\boxed{ \nabla_{[\mu} F_{\alpha\beta]} = 0,\qquad
\nabla^\mu F_{\mu\nu} = -j_\nu,}
\label{Eq:MaxwellCoords}
\end{equation}
where $j$ refers to the electric current covector field generated by the charged gas, given by
\begin{equation}
\boxed{ j_\mu = \sum\limits_a q_a J_\mu^{(a)}.}
\label{Eq:ElectricCurrent}
\end{equation}

The energy-momentum-stress tensor associated with the electromagnetic field is given by
\begin{equation}
T_{\mu\nu}^{(em)} 
 = F_\mu{}^\alpha F_{\nu\alpha} - \frac{1}{4} g_{\mu\nu} F^{\alpha\beta} F_{\alpha\beta},
\end{equation}
and Maxwell's equations~(\ref{Eq:Maxwell}) imply that
\begin{equation}
\nabla^\mu T_{\mu\nu}^{(em)} = j_\mu F^\mu{}_\nu.
\label{Eq:DivTem}
\end{equation}
As a consequence of Eqs.~(\ref{Eq:DivTCharged},\ref{Eq:ElectricCurrent},\ref{Eq:DivTem}) it follows that the total energy-momentum-stress tensor
\begin{equation}
T_{\mu\nu}^{(tot)} := \sum\limits_a T_{\mu\nu}^{(a)} + T_{\mu\nu}^{(em)}
\end{equation}
is divergence-free (leading to a conservation law for each Killing vector field). The relativistic Vlasov-Maxwell system of equations~(\ref{Eq:VlasovEqSpecies},\ref{Eq:MaxwellCoords},\ref{Eq:ElectricCurrent}) describes a whole range of interesting physical scenarios. For stimulating theoretical work regarding the justification that such a system approximately describes a relativistic system of a large number $N$ charged point particles, see~\cite{mKeY19}. For rigorous results on nonlinear Landau damping, see~\cite{cMcV11} which treats the nonrelativistic limit and~\cite{bY15,bY16} for a treatment of the ``relativistic Vlasov-Poisson" system which is a special case of the relativistic Vlasov-Maxwell system in which the magnetic field is identical zero in a given inertial frame. Another interesting limit of the Vlasov-Maxwell system with applications to the physics of pulsars and active black holes is the force-free approximation, in which the electromagnetic field energy-momentum is assumed to dominate that of the plasma (see~\cite{sGtJ14} and references therein). This leads to the force-free condition $F_{\mu\nu} j^\nu = 0$ which yields a nonlinear system for the electromagnetic field (see, e.g.~\cite{fCoR16} and references therein for recent results on hyperbolic formulations of these systems). If the self-gravity of the gas is important one should couple the relativistic Vlasov-Maxwell system of equations~(\ref{Eq:VlasovEqSpecies},\ref{Eq:Maxwell},\ref{Eq:ElectricCurrent}) to Einstein's field equations~(\ref{Eq:EinsteinEq}) with $T_{\mu\nu}$ replaced by the total energy-momentum-stress tensor $T_{\mu\nu}^{(tot)}$. For recent mathematical and numerical work on the Einstein-Maxwell-Vlasov system, see for instance Refs.~\cite{pNnNaR04,pN05,nNmT09,hAmEgR09,mT20,hBdF20}.

%%%%%%%%%%%%%%%%%%%%%%%%%%%%%%%%%%%%%%%%%%%%%
\section{The collision term and the relativistic Boltzmann equation}
\label{Sec:Collision}
%%%%%%%%%%%%%%%%%%%%%%%%%%%%%%%%%%%%%%%%%%%%%

So far, we have only considered the situation in which collisions between gas particles can be neglected. In this section, we consider a gas configuration in which the individual gas particles are subject to collisions between themselves. For simplicity, we only discuss the case of a simple gas consisting of identical, spinless, massive (charged or neutral) particles of mass $m > 0$. Moreover, we assume that the collisions are described by interactions which are short-ranged, such that each collision is regarded as pointlike, taking place at a fixed event $x\in M$ in spacetime. Furthermore, we assume that the gas is sufficiently dilute such that only binary collisions, of the form
\begin{equation}
p_1 + p_2 \mapsto  p_1^* + p_2^*,
\label{Eq:BinaryCollision}
\end{equation}
are relevant. Here and in the following, $p_1,p_2\in P_x^+(m)$ refer to the physical momenta of the incoming particles and $p_1^*,p_2^*\in P_x^+(m)$ to those of the outgoing ones. Next, we assume that each such collision is elastic, that is, it preserves the total energy-momentum, such that
\begin{equation}
p_1 + p_2 = p_1^* + p_2^*.
\label{Eq:ElasticCollision}
\end{equation}
Finally, we assume that the gas is dilute enough such that the important \emph{molecular chaos hypothesis} is satisfied, which assumes that just before collisions, the particles are uncorrelated. This supposition allows one to specify a probabilistic description for the collisions and to derive a closed equation for the one-particle distribution function. As we will see, this leads to the relativistic Boltzmann equation, which is of the form
\begin{equation}
L_F[f] = C_W[f,f],
\label{Eq:BoltzmannEq}
\end{equation}
where $L_F$ is the Liouville vector field (see Eq.~(\ref{Eq:LiouvilleVFCharged}) or Eq.~(\ref{Eq:LiouvilleVF}) for the uncharged case) and where $C_W[f,f]$ is the collision term, which depends quadratically on the one-particle distribution function $f$ and the transition probability density $W_x(p_1 + p_2\mapsto p_1^* + p_2^*)$ for the binary collision.

The goal of this section is to provide a detailed derivation of the collision term $C_W[f,f]$. This will be done in various steps. In a first step, we review some preliminary results regarding binary elastic collisions. In a next step, we introduce the collision manifold $C_x$ and the associated bundle $T^*C$,  and we discuss different ways of parametrizing it. Next, we compute the natural metric and volume form on $C_x$ which are induced from the metric $g_x^{-1}$ on the cotangent space $T^*_x M$. Finally, we introduce the transition probability map $W: T^* C\to \Real$, $(x,p_1,p_2,p_1^*,p_2^*) \mapsto W_x(p_1 + p_2\mapsto p_1^* + p_2^*)$ and derive the integral form of the relativistic Boltzmann equation, from which Eq.~(\ref{Eq:BoltzmannEq}) is shown to follow. For alternative derivations of the collision term in the relativistic case, see for instance Refs.~\cite{wI63,CercignaniKremer-Book,Groot-Book,rS11}.

Before we initiate our derivation, we point out that we work under the assumption that the gas particles can be treated as classical point particles of mass $m$, such that quantum effects can be neglected. This assumption is justified if the temperature $T$ of the gas is high enough such that the particle wave packets are strongly localized compared to the mean inter-particle distance $l_{\text{mpd}} := n^{-1/3}$ with $n$ the particle density. In other words, we assume that the thermal wavelength,
\begin{equation}
\lambda_T := \frac{h}{\sqrt{2\pi m k_B T}}
\end{equation}
is much smaller than $l_{\text{mpd}}$, i.e.  $\lambda_T \ll l_{\text{mpd}}$. For more details regarding the significance of this assumption see Refs.~\cite{Huang-Book} and~\cite{Schwabl-Book}.

%%%%%%%%%%%%%%%%%%%%%%%%%%%%%%%%%%%%%%%%%%%%%
\subsection{Kinematics of collisions}

This subsection discusses some basic facts regarding the kinematics of binary collisions; it is mostly based on Ref.~\cite{wI63}, section 3, and on Ref.~\cite{CercignaniKremer-Book}. Throughout this section, we fix an event $x\in M$ and consider a binary collision~(\ref{Eq:BinaryCollision}) taking place at $x$. It will often be useful to introduce an orthonormal basis vectors $\left\lbrace e_{\hat{\alpha}} \right\rbrace$ and the associated dual basis of covectors $\left\lbrace \theta^{\hat{\alpha}} \right\rbrace$ at $x$, such that $g_x = \eta_{\hat{\alpha}\hat{\beta}}\theta^{\hat{\alpha}}\theta^{\hat{\beta}}$. As before, we assume that $e_{\hat{0}}$ is future-directed timelike. Any covector $p\in T_x^* M$ can then be expanded as $p = p_{\hat{\alpha}} \theta^{\hat{\alpha}}$ with $p_{\hat{\alpha}} = p(e_{\hat{\alpha}})$, and we recall that $p\in P_x^+(m)$ if and only if $p_{\hat{0}} = -\sqrt{m^2 + \delta^{\hat{a}\hat{b}} p_{\hat{a}} p_{\hat{b}} }$ with $\hat{a},\hat{b}=1,2,\ldots, d$ (see Eq.~(\ref{Eq:p0})), such that $p\in P_x^+(m)$ can be parametrized by $\ve{p} = (p_{\hat{1}},p_{\hat{2}},\ldots,p_{\hat{d}})$. For the following, it will also be useful to introduce the shorthand notation
\begin{equation}
p_1\cdot p_2 := g_x^{-1}(p_1,p_2) = \eta^{\hat{\alpha}\hat{\beta}} p_{\hat{\alpha}} p_{\hat{\beta}},
\end{equation}
for the inner product between two covectors $p_1,p_2\in T_x^* M$. Further, for a spacelike or null covector $q\in T_x^* M$ we define $|q|:=\sqrt{q\cdot q}$.

After these notational remarks, we go back to the binary collision~(\ref{Eq:BinaryCollision}) and introduce the relative velocities before and after the collision, defined as
\begin{equation}
\boxed{ q := \frac{1}{m}(p_2 - p_1),\qquad
q^* := \frac{1}{m}(p_2^* - p_1^*).}
\label{Eq:qq*Def}
\end{equation}

\begin{lemma}
The covectors $q,q^*\in T_x^* M$ defined in Eq.~(\ref{Eq:qq*Def}) satisfy
\begin{enumerate}
\item[(a)] $q$ and $q^*$ are spacelike provided that $p_1\neq p_2$.
\item[(b)] $|q| = |q^*|$.
\end{enumerate}
\end{lemma}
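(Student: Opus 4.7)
The plan is to reduce both statements to the single algebraic observation that $q\cdot q = -\frac{2}{m^2}(m^2 + p_1\cdot p_2)$ (and analogously for $q^*$), after which everything follows from the geometry of future-directed timelike covectors and from energy-momentum conservation.

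For part (a), I would first expand
\[
q\cdot q = \frac{1}{m^2}(p_2 - p_1)\cdot(p_2 - p_1) = \frac{1}{m^2}\bigl(p_1\cdot p_1 + p_2\cdot p_2 - 2\,p_1\cdot p_2\bigr) = -\frac{2}{m^2}\bigl(m^2 + p_1\cdot p_2\bigr),
\]
using $p_1\cdot p_1 = p_2\cdot p_2 = -m^2$ since $p_1,p_2\in P_x^+(m)$. The key step is then the reverse Cauchy--Schwarz inequality for future-directed timelike covectors, which I would prove in the orthonormal frame $\{\theta^{\hat\alpha}\}$: writing $p_i = p_{i\hat 0}\theta^{\hat 0} + \ve{p}_i\cdot\vec\theta$ with $p_{i\hat 0} = -\sqrt{m^2 + |\ve p_i|^2}$, a direct estimate gives $-p_1\cdot p_2 = \sqrt{m^2+|\ve p_1|^2}\sqrt{m^2+|\ve p_2|^2} - \ve p_1\cdot\ve p_2 \geq m^2$, with equality iff $\ve p_1 = \ve p_2$, i.e.\ $p_1 = p_2$. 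Substituting back yields $q\cdot q \geq 0$, with equality iff $p_1 = p_2$, so $q$ is spacelike when $p_1\neq p_2$. The same argument applied to $p_1^*,p_2^*$ handles $q^*$.

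For part (b), I would use conservation of energy-momentum, Eq.~(\ref{Eq:ElasticCollision}). Squaring $p_1 + p_2 = p_1^* + p_2^*$ with respect to $g_x^{-1}$ and using that all four covectors lie on $P_x^+(m)$ gives
\[
-2m^2 + 2\,p_1\cdot p_2 = (p_1+p_2)\cdot(p_1+p_2) = (p_1^*+p_2^*)\cdot(p_1^*+p_2^*) = -2m^2 + 2\,p_1^*\cdot p_2^*,
\]
so $p_1\cdot p_2 = p_1^*\cdot p_2^*$. Inserting this into the formula $q\cdot q = -\frac{2}{m^2}(m^2 + p_1\cdot p_2)$ and the analogous one for $q^*$ gives $q\cdot q = q^*\cdot q^*$, hence $|q| = |q^*|$.

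The only slightly non-routine point is the strict reverse Cauchy--Schwarz inequality with the correct sign and equality case; everything else is a one-line computation. I do not foresee any real obstacle, since the whole argument takes place in the single cotangent space $T_x^* M$ and no bundle structure enters.
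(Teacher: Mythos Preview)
Your proof is correct and follows essentially the same route as the paper: both reduce (a) to the inequality $-p_1\cdot p_2 > m^2$ for distinct future-directed timelike covectors on $P_x^+(m)$, and both obtain (b) by squaring the conservation law $p_1+p_2=p_1^*+p_2^*$. The only cosmetic difference is that the paper aligns the orthonormal frame with $u_1=p_1/m$ to read off $q\cdot q = 2(\gamma-1)>0$ directly, whereas you phrase the same estimate as a reverse Cauchy--Schwarz inequality in a general frame.
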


\begin{proof}
\begin{enumerate}
\item[(a)] Introducing the velocities $u_1 := p_1/m$, $u_2 := p_2/m$ we find
$$
q\cdot q = (u_2 - u_1)\cdot (u_2 - u_1) = -2 - 2u_1\cdot u_2.
$$
Next, we orient the orthonormal frame $\theta^{\hat{\alpha}}$ such that $u_1 = -\theta^{\hat{0}}$ and $u_2 = -\gamma(\theta^{\hat{0}} - v_{\hat{a}}\theta^{\hat{a}})$, $\gamma = (1 - |\ve{v}|^2)^{-1/2} > 1$. Then,
$$
q\cdot q = 2(\gamma - 1) > 0,
$$
which proves that $q$ is spacelike. The proof for $q^*$ is similar.
\item[(b)] From (a) we have $q\cdot q = -2 - 2u_1\cdot u_2$ and $q^*\cdot q^* = -2 - 2u_1^*\cdot u_2^*$. Comparing this with the square norm of Eq.~(\ref{Eq:ElasticCollision}), which implies $-2 + 2u_1\cdot u_2 = -2 + 2u_1^*\cdot u_2^*$, we conclude that $|q| = |q^*|$.
\end{enumerate}
\qed
\end{proof}

Next, we introduce the ``center of mass'' momentum of the collision (see Fig.~\ref{Fig:CoordCM})
\begin{equation}
\boxed{ p^{cm} := \frac{p_1 + p_2}{\sqrt{4 + |q|^2}}
 = (p^{cm})^* := \frac{p_1^* + p_2^*}{\sqrt{4 + |q^*|^2}}.}
\label{Eq:pCM}
\end{equation}

\begin{figure}[ht]
\centerline{\includegraphics[width=3cm]{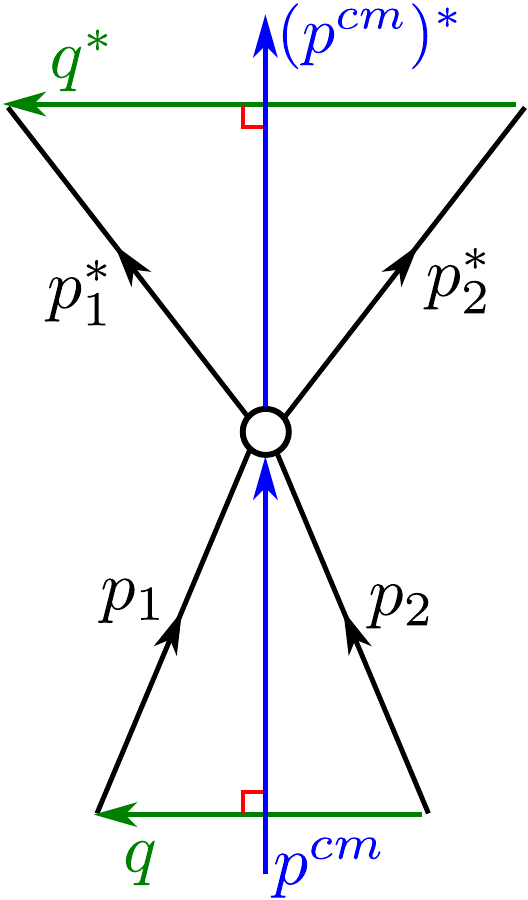}}
\caption{Illustration of the binary elastic collision, its center of mass momentum $p^{cm}$ and the relative velocities $q$ and $q^*$ before and after the collision.}
\label{Fig:CoordCM}
\end{figure}

\begin{lemma}
We have the following properties:
\item[(c)] $p^{cm}\in P_x^+(m)$, that is, $p^{cm}\cdot p^{cm} = -m^2$ and $p^{cm}$ is future-directed.
\item[(d)] $q$ and $q^*$ are orthogonal to $p^{cm}$, that is, $q\cdot p^{cm} = q^*\cdot p^{cm} = 0$.
\end{lemma}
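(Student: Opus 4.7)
The plan is to reduce both (c) and (d) to purely algebraic identities among Lorentzian inner products, using only the mass-shell conditions $p_1\cdot p_1 = p_2\cdot p_2 = -m^2$, the definition~(\ref{Eq:qq*Def}) of $q$, and the conservation law~(\ref{Eq:ElasticCollision}). No structural new ideas are required; the two claims are essentially a repackaging of the setup.

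For the norm statement in (c), I would first expand
$
(p_1+p_2)\cdot(p_1+p_2) = p_1\cdot p_1 + 2\,p_1\cdot p_2 + p_2\cdot p_2 = -2m^2 + 2\,p_1\cdot p_2,
$
and then use the definition of $q$ to compute
$
m^2\,q\cdot q = (p_2-p_1)\cdot(p_2-p_1) = -2m^2 - 2\,p_1\cdot p_2.
$
Adding these two identities eliminates $p_1\cdot p_2$ and yields $(p_1+p_2)\cdot(p_1+p_2) = -m^2(4+|q|^2)$. Dividing by the manifestly positive factor $4+|q|^2$ and recalling the definition~(\ref{Eq:pCM}) gives $p^{cm}\cdot p^{cm} = -m^2$. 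The fact that the two expressions for $p^{cm}$ in~(\ref{Eq:pCM}) actually agree follows from conservation $p_1+p_2=p_1^*+p_2^*$ together with part~(b), which equates the normalization factors.

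For future-directedness I would invoke the convexity of the future light cone in $T_x^*M$: since $p_1$ and $p_2$ are both future-directed timelike by assumption, their sum lies in the (open) future cone, and multiplication by the positive scalar $(4+|q|^2)^{-1/2}$ preserves this property. Hence $p^{cm}\in P_x^+(m)$, establishing (c).

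For (d), the computation is even more direct: since $p_1$ and $p_2$ have the same mass-shell norm,
$
(p_2-p_1)\cdot(p_1+p_2) = p_2\cdot p_2 - p_1\cdot p_1 = 0,
$
and dividing by the positive constants appearing in~(\ref{Eq:qq*Def}) and~(\ref{Eq:pCM}) yields $q\cdot p^{cm} = 0$. Replacing $p_1,p_2$ by $p_1^*,p_2^*$ gives $q^*\cdot p^{cm}=0$ by an identical argument. I do not anticipate any serious obstacle; the only minor bookkeeping point is to make sure the normalization $\sqrt{4+|q|^2}=\sqrt{4+|q^*|^2}$ is used consistently so that $p^{cm}$ is unambiguously defined.
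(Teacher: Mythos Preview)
Your proposal is correct and follows essentially the same approach as the paper: both arguments expand $(p_1\pm p_2)\cdot(p_1\pm p_2)$ using the mass-shell conditions and then divide by the normalization factor, and both invoke the fact that a positive multiple of a sum of future-directed timelike covectors is again future-directed. The only cosmetic difference is that the paper works with the normalized velocities $u_i = p_i/m$, while you keep $p_1,p_2$ and explicitly eliminate $p_1\cdot p_2$ by adding the two expansions.
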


\begin{proof}
\begin{enumerate}
\item[(c)] A straightforward computation yields
$$
p^{cm}\cdot p^{cm} = \frac{m^2}{4 + |q|^2} (u_1 + u_2)\cdot (u_1 + u_2)
 = \frac{m^2}{4 + |q|^2} \left( -2 + 2u_1\cdot u_2 \right)
 = \frac{m^2}{4 + |q|^2} \left( -4 - |q|^2 \right) = -m^2.
$$
Furthermore, $p^{cm}$ is future-directed because it is the sum of two future-directed timelike vector fields.
\item[(d)] Next,
$$
q\cdot p^{cm} = \frac{m}{\sqrt{4 + |q|^2}} (u_2 - u_1)\cdot (u_2 + u_1)
 = \frac{m}{\sqrt{4 + |q|^2}}\left( u_2\cdot u_2 - u_1\cdot u_1 \right) = 0,
$$
and similarly we conclude that $q^*\cdot p^{cm} = q^*\cdot (p^{cm})^* = 0$.
\end{enumerate}
\qed
\end{proof}

Therefore, any binary elastic collision between identical massive particles as in Eq.~(\ref{Eq:ElasticCollision}) gives rise to a momentum $p^{cm}\in P_x^+(m)$ and to two spacelike covectors $q,q^*$ orthogonal to $p^{cm}$ of the same magnitude: $|q| = |q^*|$. Conversely, given $p^{cm}\in P_x^+(m)$ and $q,q^*\in T_x^* M$ such that $q,q^*\perp p^{cm}$ and $|q| = |q^*|$ it follows that
\begin{eqnarray}
p_1 &:=& \sqrt{1 + \frac{1}{4}|q|^2}p^{cm} - \frac{m}{2} q,
\label{Eq:Cp1}\\
p_2 &:=& \sqrt{1 + \frac{1}{4}|q|^2}p^{cm} + \frac{m}{2} q,
\label{Eq:Cp2}\\
p_1^* &:=& \sqrt{1 + \frac{1}{4}|q^*|^2}p^{cm} - \frac{m}{2} q^*,
\label{Eq:Cp1*}\\
p_2^* &:=& \sqrt{1 + \frac{1}{4}|q^*|^2}p^{cm} + \frac{m}{2} q^*,
\label{Eq:Cp2*}
\end{eqnarray}
define covectors $p_1,p_2,p_1^*,p_2^*\in P_x^+(m)$ satisfying Eq.~(\ref{Eq:ElasticCollision}), as one can easily verify. Consequently, any binary elastic collision can be characterized by $p^{cm}\in P_x^+(m)$ and $q,q^*\in T_x^* M$ such that $q,q^*\perp p^{cm}$ and $|q| = |q^*|$.

For later use it is also convenient to introduce the scattering angle $\Theta$, defined by
\begin{equation}
\boxed{ \cos\Theta := \frac{q\cdot q^*}{|q|^2},\qquad
0\leq \Theta\leq \pi,}
\label{Eq:ScatteringAngle}
\end{equation}
and the Mandelstam variables $s$, $t$ and $u$ which are the following collision-invariant quantities:\footnote{The variables $s$ and $t$ were introduced by Stanley Mandelstam in 1958, see Ref.~\cite{sM1958}.} 
\begin{eqnarray}
s &:=& -|p_1 + p_2|^2 = -|p_1^* + p_2^*|^2,
\label{Eq:Mandelstams}\\
t &:=& -|p_1 - p_1^*|^2 = -|p_2 - p_2^*|^2,
\label{Eq:Mandelstamt}\\
u &:=& -|p_1 - p_2^*|^2 = -|p_2 - p_1^*|^2.
\label{Eq:Mandelstamu}
\end{eqnarray}
Using Eqs.~(\ref{Eq:Cp1},\ref{Eq:Cp2},\ref{Eq:Cp1*},\ref{Eq:Cp2*}) we find the following expression in terms of the variables $|q|$ and $\Theta$:
\begin{eqnarray}
s &=& (4 + |q|^2)m^2,
\label{Eq:s}\\
t &=& -\frac{m^2|q|^2}{2}(1 - \cos\Theta),
\label{Eq:t}\\
u &=& -\frac{m^2|q|^2}{2}(1 + \cos\Theta),
\label{Eq:u}
\end{eqnarray}
which implies the relation $s + t + u = 4m^2$. Therefore, only two of the Mandelstam variables are independent.

\subsection{The collision manifold $C_x$}

After discussing the kinematics of binary elastic collisions, in this subsection we introduce the collision manifold $C_x$ which represents the space of all such collisions at $x\in M$. The set $C_x$ is defined as follows:
\begin{equation}
C_x := \{ (p_1,p_2,p_1^*,p_2^*) \in P_x^+(m)\times P_x^+(m)\times P_x^+(m)\times P_x^+(m)
: p_1 + p_2 = p_1^* + p_2^*, p_1\neq p_2 \}.
\label{Eq:Cx}
\end{equation}
This definition essentially coincides with the definition given in Ref.~\cite{jE71} with the exception that here we restrict the incoming momenta $p_1$ and $p_2$ to be different from each other. Physically, this condition makes sense of course, since when $p_1 = p_2$ the two incoming geodesics coincide and thus cannot describe a collision between two different particles. Mathematically, the condition $p_1\neq p_2$ is necessary for $C_x$ to describe a smooth manifold; removing it would result in a set with conical-type singularities at $p_1 = p_2$ (see Appendix~\ref{App:Collisions}).

The goal of this subsection is to find a suitable way to parametrize the set $C_x$ and to prove that it is a manifold. According to the results from the previous subsection, any binary elastic collision is characterized by specifying a covector $p^{cm}\in P_x^+(m)$ and two spatial covectors $q,q^*\in T_x^* M$ of equal magnitude which are both orthogonal to $p^{cm}$, see Eqs.~(\ref{Eq:Cp1},\ref{Eq:Cp2},\ref{Eq:Cp1*},\ref{Eq:Cp2*}). To provide an explicit characterization of the covectors $q$ and $q^*$ one can proceed as follows. Let us assume first that the orthonormal basis $\left\lbrace \theta^{\hat{\alpha}} \right\rbrace$ is oriented along the center of mass frame, such that $p^{cm} = -\theta^{\hat{0}}$. In this case, we obtain from Eqs.~(\ref{Eq:Cp1},\ref{Eq:Cp2})
\begin{eqnarray}
p_1 &=& -\sqrt{1 + \frac{1}{4} g^2} m \theta^{\hat{0}}  
 - \frac{g}{2} m\hat{q}_{\hat{a}} \theta^{\hat{a}},
\label{Eq:Cp1Bis}\\
p_2 &=& -\sqrt{1 + \frac{1}{4} g^2} m \theta^{\hat{0}} 
 + \frac{g}{2} m\hat{q}_{\hat{a}}\theta^{\hat{a}},
\label{Eq:Cp2Bis}
\end{eqnarray}
where we have set $g := |q|$ and where we have expanded $q = g\hat{q}_{\hat{a}}\theta^{\hat{a}}$ with $\ve{\hat q} = (\hat{q}_1,\hat{q}_2,\ldots,\hat{q}_d)\in S^{d-1}$. Similarly,
\begin{eqnarray}
p_1^* &=& -\sqrt{1 + \frac{1}{4} g^2} m \theta^{\hat{0}} 
 - \frac{g}{2} m\hat{q}^*_{\hat{a}} \theta^{\hat{a}},
\label{Eq:Cp1*Bis}\\
p_2^* &=& -\sqrt{1 + \frac{1}{4} g^2} m \theta^{\hat{0}} 
 +  \frac{g}{2} m\hat{q}^*_{\hat{a}} \theta^{\hat{a}},
\label{Eq:Cp2*Bis}
\end{eqnarray}
where $q^* = g(\hat{q}^*)_{\hat{a}} \theta^{\hat{a}}$ with $\ve{\hat q}^* = (\hat{q}^*_1,\hat{q}^*_2,\ldots,\hat{q}^*_d)\in S^{d-1}$. Note that the cosine of the scattering angle is determined by
\begin{equation}
\cos\Theta = \ve{\hat q}\cdot \ve{\hat q}^*,\qquad
0\leq \Theta\leq \pi.
\label{Eq:ScatteringAngleBis}
\end{equation}

The center of mass frame can be used if only one collision needs to be described (or more generally, if several collisions with the same center of mass momentum are described). However, when dealing with arbitrary families of binary collisions one cannot assume a fixed center of mass frame in which $p^{cm}$ is oriented along $-\theta^{\hat{0}}$. In this general situation one needs to consider a Lorentz transformation $\Lambda: T_x^* M\to T_x^* M$ which maps $-\theta^{\hat{0}}$ to $p^{cm}/m$. In other words, $\Lambda$ is a Lorentz transformation which maps the given, fixed orthonormal frame $\left\lbrace \theta^{\hat{\alpha}} \right\rbrace$ to a new orthonormal frame $\left\lbrace \omega^{\hat{\alpha}} \right\rbrace$ such that $-\omega^{\hat{0}} = p^{cm}/m$, i.e. a center of mass frame. A particular transformation of this type is given by the Lorentz boost $\Lambda$ defined by
\begin{eqnarray}
\omega^{\hat{0}} := \Lambda(\theta^{\hat{0}}) 
 &=& \gamma(\theta^{\hat{0}} - \beta_{\hat{a}} \theta^{\hat{a}}),
\label{Eq:LorentzBoost1}\\
\omega^{\hat{b}} := \Lambda(\theta^{\hat{b}}) 
 &=& -\gamma\beta^{\hat{b}} \theta^{\hat{0}} + \left[ \delta_{\hat{a}}{}^{\hat{b}} 
 + (\gamma-1)\hat{\beta}^{\hat{b}}\hat{\beta}_{\hat{a}} \right] \theta^{\hat{a}},\qquad
\hat{\beta}^{\hat{a}} = \beta^{\hat{a}}/\sqrt{\delta_{\hat{c}\hat{d}}\beta^{\hat{c}}\beta^{\hat{d}}},
\label{Eq:LorentzBoost2}
\end{eqnarray}
with $m\gamma = -p^{cm}_{\hat{0}}$ and $m\gamma\beta_{\hat{a}} = p^{cm}_{\hat{a}}$. Under this boost, an arbitrary covector $X = X_{\hat{0}} \theta^{\hat{0}} + X_{\hat{a}} \theta^{\hat{a}} \in T_x^* M$ is mapped to
\begin{equation}
\Lambda(X) = \frac{1}{m}
 \left[ \sqrt{m^2 + |\ve{p}^{cm}|^2} X_{\hat{0}} - \ve{p}^{cm}\cdot\ve{X} \right] \theta^{\hat{0}}
 + \frac{1}{m}
 \left[ -p^{cm}_{\hat{a}} X_{\hat{0}} 
 + \left( \sqrt{m^2 + |\ve{p}^{cm}|^2} - m \right)
  (\ve{\hat{p}}^{cm}\cdot\ve{X})\hat{p}^{cm}_{\hat{a}}  + m X_{\hat{a}} \right] \theta^{\hat{a}},
\label{Eq:LorentzBoost}
\end{equation}
where we have introduced the notation $\ve{p}^{cm} := (p^{cm}_{\hat{1}}, p^{cm}_{\hat{2}}, \ldots, p^{cm}_{\hat{d}})$, $\ve{\hat{p}}^{cm} := \ve{p}^{cm}/|\ve{p}^{cm}|$, and $\ve{X} := (X_{\hat{1}}, X_{\hat{2}}, \ldots, X_{\hat{d}})$, and used the fact that $p^{cm} \in P_x^+(m)$ to conclude that $p^{cm}_{\hat{0}} = -\sqrt{m^2 + |\ve{p}^{cm}|^2}$. The expression for $p_{\hat{1}}$ in terms of the orthonormal frame $\theta^{\hat{0}}, \theta^{\hat{1}}, \ldots, \theta^{\hat{d}}$ can now be obtained from Eq.~(\ref{Eq:Cp1Bis}) by replacing $\theta^{\hat{0}}$ with $\omega^{\hat{0}} = \Lambda(\theta^{\hat{0}})$ and $\theta^{\hat{a}}$ with $\omega^{\hat{a}} = \Lambda(\theta^{\hat{a}})$, $\hat{a}=\hat{1}, \hat{2}, \ldots, \hat{d}$, and using the explicit expression~(\ref{Eq:LorentzBoost}) for $\Lambda$. The result is
\begin{eqnarray}
p_1 &=& -\left[ \sqrt{m^2 + |\ve{p}^{cm}|^2}\sqrt{1 + \frac{1}{4} g^2}
 - \frac{g}{2}\ve{p}^{cm}\cdot\ve{\hat q} \right] \theta^{\hat{0}}
\nonumber\\
 && +\,  \left[ \sqrt{1 + \frac{1}{4} g^2}p^{cm}_{\hat{a}} - \frac{g}{2} m\hat{q}_{\hat{a}} 
 - \frac{g}{2}\left( \sqrt{m^2 + |\ve{p}^{cm}|^2} - m \right)
 (\ve{\hat{p}}^{cm}\cdot\ve{\hat q}) \hat{p}^{cm}_{\hat{a}} \right] \theta^{\hat{a}}.
\label{Eq:Cp1BisBis}
\end{eqnarray}
The expression for $p_2$ can be obtained from this after the change $g\mapsto -g$, and the expressions for $p_1^*$ and $p_2^*$ by changing $\ve{\hat q}\mapsto \ve{\hat q}^*$. This leads to the following transformation that maps $(\ve{p}^{cm},g,\ve{\hat q},\ve{\hat q}^*)$ to $(\ve{p}_1,\ve{p}_2,\ve{p}_1^*,\ve{p}_2^*)$:\footnote{Here, we have also used the identity
$$
\sqrt{m^2 + |\ve{p}^{cm}|^2} - m = 
\frac{|\ve{p}^{cm}|^2}{\sqrt{m^2 + |\ve{p}^{cm}|^2} + m},
$$
in order to transform the last term on the right-hand side of Eq.~(\ref{Eq:Cp1BisBis}) to a form that is manifestly regular at $\ve{p}^{cm} = \ve{0}$.}
\begin{eqnarray}
\ve{p}_1 &=& \sqrt{1 + \frac{1}{4} g^2}\ve{p}^{cm} 
 - \frac{g}{2}\left( m\ve{\hat q} 
 + \frac{\ve{p}^{cm}\cdot\ve{\hat q}}{\sqrt{m^2 + |\ve{p}^{cm}|^2} + m}\ve{p}^{cm}
 \right),
\label{Eq:Cp1Vec}\\
\ve{p}_2 &=& \sqrt{1 + \frac{1}{4} g^2}\ve{p}^{cm}
 + \frac{g}{2}\left( m\ve{\hat q} 
 + \frac{\ve{p}^{cm}\cdot\ve{\hat q}}{\sqrt{m^2 + |\ve{p}^{cm}|^2} + m}\ve{p}^{cm}
 \right),
\label{Eq:Cp2Vec}\\
\ve{p}_1^* &=& \sqrt{1 + \frac{1}{4} g^2}\ve{p}^{cm}
 - \frac{g}{2}\left( m\ve{\hat q}^* 
 + \frac{\ve{p}^{cm}\cdot\ve{\hat q}^*}{\sqrt{m^2 + |\ve{p}^{cm}|^2} + m}\ve{p}^{cm}
 \right),
\label{Eq:Cp1*Vec}\\
\ve{p}_2^* &=& \sqrt{1 + \frac{1}{4} g^2}\ve{p}^{cm}
 + \frac{g}{2}\left( m\ve{\hat q}^* 
 + \frac{\ve{p}^{cm}\cdot\ve{\hat q}^*}{\sqrt{m^2 + |\ve{p}^{cm}|^2} + m}\ve{p}^{cm}
 \right).
\label{Eq:Cp2*Vec}
\end{eqnarray}
In terms of the quantities (note that $\ell_+ = \ell_+^*$)
\begin{equation}
\ell_\pm := \sqrt{m^2 + |\ve{p}_2|^2} \pm \sqrt{m^2 + |\ve{p}_1|^2},\qquad
\ell_\pm^* := \sqrt{m^2 + |\ve{p}_2^*|^2} \pm \sqrt{m^2 + |\ve{p}_1^*|^2},
\end{equation}
the inverse map is given by
\begin{eqnarray}
\ve{p}^{cm} &=& m\frac{\ve{p}_1 + \ve{p}_2}{\sqrt{\ell_+^2 - |\ve{p}_1 + \ve{p}_2|^2}},
\label{Eq:pcm}\\
g &=& \frac{1}{m}\sqrt{|\ve{p}_2 - \ve{p}_1|^2 - \ell_-^2},
\label{Eq:g}\\
\ve{\hat q} &=& \frac{1}{m g}\left[ \ve{p}_2 - \ve{p}_1
  - \frac{\ell_-(\ve{p}_1 + \ve{p}_2)}{\ell_+ + \sqrt{\ell_+^2 - |\ve{p}_1 + \ve{p}_2|^2}} \right],
\\
\ve{\hat q}^* &=& \frac{1}{m g}\left[ \ve{p}_2^* - \ve{p}_1^*
  - \frac{\ell_-^*(\ve{p}_1 + \ve{p}_2)}{\ell_+ + \sqrt{\ell_+^2 - |\ve{p}_1 + \ve{p}_2|^2}} \right].
\end{eqnarray}
By Lorentz-invariance, the scattering angle is still given by the formula~(\ref{Eq:ScatteringAngleBis}).

As a consequence of the above remarks, we see that we can parametrize each point of $C_x$ uniquely by the specification of $\ve{p}^{cm}\in \Real^d$, $g > 0$ and $\ve{\hat q},\ve{\hat q}^*\in S^{d-1}$. Furthermore, the map $(\ve{p}^{cm},g,\ve{\hat q},\ve{\hat q}^*)\mapsto (\ve{p}_1,\ve{p}_2,\ve{p}_1^*,\ve{p}_2^*)$ defined by Eqs.~(\ref{Eq:Cp1Vec}--\ref{Eq:Cp2*Vec}) is smooth. It follows from these considerations:

\begin{lemma}
\label{Lem:Cx}
The collision manifold $C_x$ defined in Eq.~(\ref{Eq:Cx}) is a $(3d-1)$-dimensional submanifold of $[P_x^+(m)]^4\equiv P_x^+(m)\times P_x^+(m)\times P_x^+(m)\times P_x^+(m)$ with topology $\Real^{d+1}\times S^{d-1}\times S^{d-1}$.
\end{lemma}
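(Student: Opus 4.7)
The plan is to exhibit a global smooth parametrization of $C_x$ by the manifold $N := \Real^d \times (0,\infty) \times S^{d-1} \times S^{d-1}$, using the map $\Phi: N \to [P_x^+(m)]^4$ defined by Eqs.~(\ref{Eq:Cp1Vec})--(\ref{Eq:Cp2*Vec}). I would show that $\Phi$ maps $N$ smoothly into $C_x$, admits the smooth inverse displayed just before the lemma statement, and therefore identifies $C_x$ with $N$. The dimension count $d + 1 + (d-1) + (d-1) = 3d-1$ is then immediate, and the topology $\Real^{d+1}\times S^{d-1}\times S^{d-1}$ follows via any diffeomorphism $(0,\infty)\simeq\Real$.

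First I would verify that $\Phi$ is well-defined and smooth on $N$: the formulas are rational in smooth non-vanishing quantities such as $\sqrt{1+g^2/4}$ and $\sqrt{m^2+|\ve{p}^{cm}|^2}+m$, and the rewriting of $\sqrt{m^2+|\ve{p}^{cm}|^2}-m$ noted in the preceding text ensures smoothness even at $\ve{p}^{cm}=0$. The two preceding lemmas on $p^{cm}$ and on the relative velocities $q,q^*$ guarantee that the image lies in $[P_x^+(m)]^4$ and satisfies $p_1+p_2=p_1^*+p_2^*$, while $g>0$ on the domain enforces $p_1\neq p_2$. Next I would check smoothness of the displayed inverse: the denominators $\sqrt{\ell_+^2-|\ve{p}_1+\ve{p}_2|^2}$ and $\sqrt{|\ve{p}_2-\ve{p}_1|^2-\ell_-^2}$ are strictly positive on $C_x$ because $\ell_+^2-|\ve{p}_1+\ve{p}_2|^2 = -(p_1+p_2)\cdot(p_1+p_2) > 0$ (the sum of two future-directed timelike covectors is timelike) and $|\ve{p}_2-\ve{p}_1|^2-\ell_-^2 = -(p_2-p_1)\cdot(p_2-p_1) = m^2 g^2 > 0$ by the spacelike property of $q$ combined with $p_1\neq p_2$.

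To upgrade this to the \emph{embedded} submanifold statement inside $[P_x^+(m)]^4$, I would complement the parametrization argument with a regular-value computation. Set $\Psi(p_1,p_2,p_1^*,p_2^*) := p_1+p_2-p_1^*-p_2^*$ as a smooth map $[P_x^+(m)]^4\to T_x^* M$, so that $C_x$ is an open subset of $\Psi^{-1}(0)$. Because $T_{p_i} P_x^+(m)$ is the $g_x^{-1}$-orthogonal complement of $p_i$, and because $p_1\neq p_2$ forces $p_1$ and $p_2$ to be linearly independent (they lie on the same mass hyperboloid and are both future-directed, so any proportionality $p_1=\alpha p_2$ forces $\alpha=1$), the two tangent spaces $T_{p_1} P_x^+(m)$ and $T_{p_2} P_x^+(m)$ together span all of $T_x^* M$; hence $d\Psi$ is surjective at every point of $C_x$, pinning down the embedded structure and reconfirming the dimension $4d-(d+1)=3d-1$. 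The main obstacle I anticipate is not conceptual but rather bookkeeping: the careful surjectivity check just sketched, and the explicit verification that the inverse formulas extend smoothly across $\ve{p}^{cm}=0$, where several expressions look singular at first glance. Both concerns collapse to the positivity computations and the linear independence of $p_1,p_2$ identified above.
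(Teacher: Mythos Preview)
Your proposal is correct and follows essentially the same approach as the paper, which simply invokes the parametrization by $(\ve{p}^{cm},g,\ve{\hat q},\ve{\hat q}^*)$ constructed in the text preceding the lemma; you are in fact more thorough, adding explicit smoothness checks for the inverse and a regular-value argument for the embedded-submanifold structure that the paper does not spell out. One minor sign slip: $|\ve{p}_2-\ve{p}_1|^2-\ell_-^2 = +(p_2-p_1)\cdot(p_2-p_1)$ (not minus), though your conclusion $= m^2 g^2 > 0$ is correct.
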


By choosing local coordinates $({\hat q}^A,({\hat q}^*)^B)$ on the spheres $S^{d-1}$ one obtains corresponding local coordinates $(\ve{p}^{cm},g,{\hat q}^A,({\hat q}^*)^B )$ of $C_x$.

By considering for each $x\in M$ the spaces $C_x$ as a fibre over $x$, we obtain the associated bundle
\begin{equation}
T^* C := \{ (x,p_1,p_2,p_1^*,p_2^*) : x\in M, (p_1,p_2,p_1^*,p_2^*)\in C_x \}.
\label{Eq:TCDef}
\end{equation}

We may construct local coordinates on $T^* C$ as follows: let $(x^\mu)$ be local coordinates defined in a neighborhood $U$ of $M$, and let $\left\lbrace e_{\hat{\alpha}} \right\rbrace$ be an orthonormal frame of vectors on $U$ with the associated dual basis of covectors $\left\lbrace \theta^{\hat{\alpha}} \right\rbrace$, such that $e_{\hat{0}}$ is future-directed timelike. Then we associate to each point $(x,p_1,p_2,p_1^*,p_2^*)\in T^* C$ with $x\in U$ the coordinates
\begin{equation}
(x^\mu,\ve{p}^{cm},g,{\hat q}^A,({\hat q}^*)^B ),
\end{equation}
where $(\ve{p}^{cm},g,{\hat q}^A,({\hat q}^*)^B )$ are the local coordinates of $C_x$ introduced above. This yields local coordinates on $T^*C$. By taking a differentiable atlas of $M$, one obtains a corresponding differential atlas of $T^* C$, and it follows that $T^*C$ is a $4d$-dimensional manifold.

To conclude this subsection, we give the following theorem which will play a key role later, when discussing equilibrium distribution functions.

\begin{theorem}
\label{Thm:CollisionInvariants}
Let $F: P_x^+(m)\to\Real$ be a continuously differentiable function on the future mass hyperboloid. Then, $F$ is a collision invariant, that is, it satisfies
\begin{equation}
F(p_1) + F(p_2) = F(p_1^*) + F(p_2^*)
\label{Eq:FCI}
\end{equation}
for all $p_1,p_2,p_1^*,p_2^*\in P_x^+(m)$ fulfilling Eq.~(\ref{Eq:ElasticCollision}), if and only if
$F$ is of the form
\begin{equation}
\boxed{ F(p) = \alpha + p(\beta),\qquad p\in P_x^+(m) }
\label{Eq:FForm}
\end{equation}
for some $\alpha\in\Real$ and $\beta\in T_x M$.
\end{theorem}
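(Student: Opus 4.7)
The easy direction ($\Leftarrow$) is immediate: if $F(p) = \alpha + p(\beta)$ with $\beta \in T_x M$, then by linearity and the conservation law~(\ref{Eq:ElasticCollision}),
\[
[F(p_1)+F(p_2)]-[F(p_1^*)+F(p_2^*)] = (p_1+p_2-p_1^*-p_2^*)(\beta) = 0.
\]
For the converse, the plan is to extract a pointwise linear first-order condition on $F$ from infinitesimal collisions and then integrate it explicitly.

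I would fix an orthonormal basis $\{\theta^{\hat\alpha}\}$ of $T_x^* M$ and, following Eq.~(\ref{Eq:p0}), parametrize $P_x^+(m)$ by the spatial components $\ve p=(p_{\hat 1},\ldots,p_{\hat d})\in\Real^d$, with $p_{\hat 0}=-E(\ve p)$ and $E(\ve p):=\sqrt{m^2+|\ve p|^2}$. Setting $\Psi(\ve p):=F(p)$ and $\ve v(\ve p):=\ve p/E(\ve p)=\nabla E(\ve p)$, I would consider, for any $p_1,p_2\in P_x^+(m)$ with $p_1\neq p_2$ (equivalently $\ve v_1\neq\ve v_2$), the one-parameter family of collisions $\ve p_1(\epsilon)=\ve p_1+\epsilon\ve\delta$, $\ve p_2(\epsilon)=\ve p_2-\epsilon\ve\delta$, with $p_{i,\hat 0}(\epsilon)$ determined by the mass-shell condition. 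Spatial-momentum conservation is then automatic; expanding $E(\ve p_1(\epsilon))+E(\ve p_2(\epsilon))$ to first order in $\epsilon$ shows that conservation of the time component of $p_1+p_2$ reduces to the single linear condition
\[
(\ve v_1-\ve v_2)\cdot\ve\delta=0,
\]
so that $\ve\delta$ sweeps out the $(d-1)$-dimensional hyperplane $H:=(\ve v_1-\ve v_2)^{\perp}\subset\Real^d$. Differentiating the invariance hypothesis at $\epsilon=0$ then yields
\[
\bigl[\nabla\Psi(\ve p_1)-\nabla\Psi(\ve p_2)\bigr]\cdot\ve\delta=0\quad\text{for all }\ve\delta\in H,
\]
which forces $\nabla\Psi(\ve p_1)-\nabla\Psi(\ve p_2)$ to be parallel to $\ve v_1-\ve v_2$ for every distinct pair.

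Specializing to $\ve p_2=\ve 0$ gives $\nabla\Psi(\ve p)-\ve\beta\parallel\ve p$ with $\ve\beta:=\nabla\Psi(\ve 0)$, hence $\nabla\Psi(\ve p)=\ve\beta+f(\ve p)\,\ve p$ for some continuous scalar $f$. Substituting this ansatz into the two-point relation for linearly independent $\ve p_1,\ve p_2$ gives
\[
f(\ve p_1)\ve p_1-f(\ve p_2)\ve p_2=\lambda\bigl(\ve p_1/E_1-\ve p_2/E_2\bigr);
\]
matching components along $\ve p_1$ and $\ve p_2$ separately forces $f(\ve p_i)E(\ve p_i)=\lambda$ for both $i$, and varying each $\ve p_i$ while holding the other fixed shows that $\lambda$ is independent of both arguments. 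Hence $f(\ve p)E(\ve p)=c$ for some constant $c\in\Real$. Since $\ve p/E(\ve p)=\nabla E(\ve p)$, integration of $\nabla\Psi(\ve p)=\ve\beta+c\,\nabla E(\ve p)$ yields $\Psi(\ve p)=\alpha+\ve\beta\cdot\ve p+c\,E(\ve p)$ for some $\alpha\in\Real$. Finally, defining $\beta\in T_x M$ to be the vector whose time component in the dual basis $\{e_{\hat\alpha}\}$ is $-c$ and whose spatial components are those of $\ve\beta$, one checks that $p(\beta)=-c\,p_{\hat 0}+\ve\beta\cdot\ve p=c\,E(\ve p)+\ve\beta\cdot\ve p$, whence $F(p)=\alpha+p(\beta)$.

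The main obstacle is conceptual rather than computational: one must verify that the linearized mass-shell-and-momentum constraints really allow $\ve\delta$ to exhaust the full $(d-1)$-dimensional hyperplane $H$ (so that the directional-derivative condition actually implies parallelism in $\Real^d$, which requires $d\geq 2$), and that the two-argument nature of the relation $\nabla\Psi(\ve p_1)-\nabla\Psi(\ve p_2)\parallel\ve v_1-\ve v_2$, combined with varying each momentum independently, is strong enough to reduce the scalar $f(\ve p)E(\ve p)$ to a \emph{global} constant rather than merely a smooth function of $\ve p$.
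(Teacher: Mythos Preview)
Your proof is correct and the core idea coincides with the paper's (Ehlers'): differentiate the collision-invariance condition to obtain a first-order constraint on $F$, then show the resulting coefficients are constant and integrate. The execution, however, is organized differently. The paper works on the full product $N_x=[P_x^+(m)]^4$ and observes that $H(r):=F(p_1)+F(p_2)-F(p_1^*)-F(p_2^*)$ vanishes on the smooth submanifold $C_x$, so $dH_r$ lies in the conormal bundle; this yields $dF_p=\beta^{\hat\alpha}(r)\,dp_{\hat\alpha}$ simultaneously for all four momenta in the same collision $r$, and constancy of $\beta^{\hat\alpha}$ is then argued via pair-swapping and connectivity of $C_x$. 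You instead linearize about the trivial collision $(p_1^*,p_2^*)=(p_1,p_2)$ to obtain directly the two-point parallelism $\nabla\Psi(\ve p_1)-\nabla\Psi(\ve p_2)\parallel\ve v_1-\ve v_2$, and then reduce constancy to the elementary linear-independence argument giving $f(\ve p)E(\ve p)=c$.

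What each approach buys: the paper's Lagrange-multiplier framing is coordinate-free and treats the four momenta symmetrically, making the ``connectivity'' argument for constancy transparent; your hands-on approach is more elementary, avoids introducing the auxiliary function $H$ on $N_x$, and leads quickly to the explicit integration $\Psi=\alpha+\ve\beta\cdot\ve p+cE$. The point you flag as an obstacle (that $\ve\delta$ really sweeps out all of $H=(\ve v_1-\ve v_2)^\perp$) is exactly what Lemma~\ref{Lem:Cx} guarantees in the paper's language: the constraint $p_1+p_2-p_1^*-p_2^*=0$ cuts out a smooth submanifold of the expected codimension when $p_1\neq p_2$, so the implicit function theorem provides the needed curves of collisions tangent to every $\ve\delta\in H$. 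Both arguments require $d\ge 2$.
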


\proof The ``if" part of the theorem is obvious. To prove the "only if" part we use an argument by Ehlers, see section 4.14 in~\cite{jE71}. For a short article containing alternative proofs, original references and a generalization to distributions which allows one to relax the differentiability assumption on $F$, see Ref.~\cite{cCgK99}.

Therefore, suppose $F\in C^1(P_x^+(m))$ is a collision invariant. Let $N_x := [P_x^+(m)]^4$ and introduce the function $H: N_x\to \Real$ defined by
\begin{equation}
H(r) := F(p_1) + F(p_2) - F(p_1^*) - F(p_2^*),\qquad
r := (p_1,p_2,p_1^*,p_2^*)\in N_x.
\end{equation}
According to the hypothesis, $H(r) = 0$ for all $r = (p_1,p_2,p_1^*,p_2^*)\in C_x$ on the collision manifold. Since $C_x\subset N_x$ is a smooth submanifold characterized by $p_1 + p_2 - p_1^* - p_2^* = 0$ and $p_1\neq p_2$, its normal covectors are given by
\begin{equation}
d(p_1)_{\hat{\alpha}} + d(p_2)_{\hat{\alpha}} - d(p_1^*)_{\hat{\alpha}} - d(p_2^*)_{\hat{\alpha}}, \qquad
\hat{\alpha} = \hat{0},\hat{1},\ldots,\hat{d},
\end{equation}
where as before, the index $\hat{\alpha}$ refers to the components with respect to a fixed orthonormal frame. Since $H$ is constant on $C_x$, it follows that
\begin{equation}
dH_r = \beta^{\hat{\alpha}}(r) \left[ d(p_1)_{\hat{\alpha}} + d(p_2)_{\hat{\alpha}}
  - d(p_1^*)_{\hat{\alpha}} - d(p_2^*)_{\hat{\alpha}} \right]
\end{equation}
for some continuous functions $\beta^{\hat{\alpha}}: C_x\to \Real$, $\hat{\alpha} = \hat{0},\hat{1},\ldots,\hat{d}$, on $C_x$. Given the definition of $H$ and the independence of $p_1,p_2,p_1^*,p_2^*$ on $N_x$, this implies that
\begin{equation}
dF_p = \beta^{\hat{\alpha}}(r) dp_{\hat{\alpha}},
\label{Eq:dFCondition}
\end{equation}
where $r = (p_1,p_2,p_1^*,p_2^*)$ and $p$ can be taken to be any of the four momenta in $r$.

We now claim that the functions $\beta^{\hat{\alpha}}$ are constant. This can be shown as follows: since we can parametrize any $p\in P_x^+(m)$ by its spatial orthonormal components $p_{\hat{a}}$ and $p_{\hat{0}} = -\sqrt{m^2 + \delta^{\hat{a}\hat{b}} p_{\hat{a}} p_{\hat{b}}}$, we can also write Eq.~(\ref{Eq:dFCondition}) in the explicit form
\begin{equation}
\frac{\partial F}{\partial p_{\hat{a}}}(p) 
 = \beta^{\hat{a}}(r) - \beta^{\hat{0}}(r)\frac{p^{\hat{a}}}{p^{\hat{0}}},
\label{Eq:dFConditionBis}
\end{equation}
where again $p = p_1,p_2,p_1^*$ or $p_2^*$. Taking the difference between two pairs, for example
\begin{equation}
\frac{\partial F}{\partial p_{\hat{a}}}(p_1) - \frac{\partial F}{\partial p_{\hat{a}}}(p_2)
 = -\beta^{\hat{0}}(r)
\left( \frac{p_1^{\hat{a}}}{p_1^{\hat{0}}} - \frac{p_2^{\hat{a}}}{p_2^{\hat{0}}} \right),
\end{equation}
shows that $\beta^{\hat{0}}(r)$ can only depend on $(p_1,p_2)$, i.e.
\begin{equation}
\beta^{\hat{0}}(p_1,p_2,p_1^*,p_2^*) 
 = \beta^{\hat{0}}(p_1,p_2,\overline{p}_1^*,\overline{p}_2^*)
\end{equation}
for all $(p_1,p_2,p_1^*,p_2^*), (p_1,p_2,\overline{p}_1^*,\overline{p}_2^*)\in C_x$. Since the same is true for any other pairs (for example $(p_1^*,p_2^*)$), it follows that $\beta^{\hat{0}}$ is constant on $C_x$, since it is possible to arrive from any $r\in C_x$ to any other point $s\in C_x$ by a successive change of pairs. It then follows from Eq.~(\ref{Eq:dFConditionBis}) that $\beta^{\hat{a}}(r) = \beta^{\hat{a}}(s)$ if $r$ and $s$ share one common momentum, and thus also the functions $\beta^{\hat{a}}$, $\hat{a} = 1,2,\ldots,d$, are constant.

Eq.~(\ref{Eq:dFConditionBis}) can now be integrated and yields
\begin{equation}
F(p) = \alpha + \beta^{\hat{\alpha}} p_{\hat{\alpha}}
\end{equation}
for some constant $\alpha\in \Real$, which proves the theorem.
\qed

\subsection{Volume form on $C_x$}

After introducing the collision manifold $C_x$, in this subsection we briefly discuss the volume form on it, leaving the details of its derivation (which are rather technical) to Appendix~\ref{App:CollisionManifold}. A natural volume form is obtained as follows: first, we recall that the inverse metric $g_x^{-1}$ at the point $x\in M$ defines a natural metric on $T_x^* M$, which in turn defines a metric $H_x$ on the future mass hyperboloid $P_x^+(m)$, see Eq.~(\ref{Eq:Induced_metric_gx-1}). Taking the fourfold sum $H_x\oplus H_x\oplus H_x\oplus H_x$ of this metric, one obtains a metric on $N_x = [P_x^+(m)]^4$. Since $C_x$ is a submanifold of $N_x$, this product induces a metric on $C_x$, from which the desired volume form $\eta_{C_x}$ is obtained.

As mentioned previously, the calculations are long, so we perform the explicit derivation in  Appendix~\ref{App:CollisionManifold} and directly give the result:
\begin{equation}
\eta_{C_x} = \sqrt{ \left( 2 + \frac{g^2}{4} \right)^2 - \left( \frac{g^2}{4}\right)^2\cos^2\Theta }
\left( 1 + \frac{g^2}{4} \right)^{\frac{d-3}{2}}
\frac{m^{2d}}{\sqrt{m^2 + |\ve{p}^{cm}|^2}}
dp^{cm}_{\hat{1}}\wedge dp^{cm}_{\hat{2}}\wedge\cdots\wedge dp^{cm}_{\hat{d}}\wedge 
g^{2(d-1)} dg\wedge d\Omega(\hat{q})\wedge d\Omega(\hat{q}^*),
\label{Eq:mux}
\end{equation}
where $d\Omega(\hat{q})$ denotes the solid angle associated with $\hat{q}\in S^d$, and similarly for $d\Omega(\hat{q}^*)$. The following comments and remarks will turn out to be useful:

\begin{enumerate}
\item In the non-relativistic limit $g\ll 1$ and $|\ve{p}^{cm}| \ll m$, the expression for the volume form simplifies considerably:
\begin{equation}
\eta_{C_x} \approx 2 m^{2d-1}
dp^{cm}_{\hat{1}}\wedge dp^{cm}_{\hat{2}}\wedge\cdots\wedge dp^{cm}_{\hat{d}}\wedge 
g^{2(d-1)} dg\wedge d\Omega(\hat{q})\wedge d\Omega(\hat{q}^*).
\end{equation}

\item Instead of parametrizing the collision through the ``center of mass'' variables $(\ve{p}^{cm},g,\ve{\hat q},\ve{\hat q}^*)$ we may also parametrize it in terms of the variables $(\ve{p}_1,\ve{p}_2,\ve{\hat q}^*)$, where the center of mass momentum $\ve{p}^{cm}$ and the magnitude $g$ of the relative speed is determined by the momenta of the incoming particles according to Eqs.~(\ref{Eq:pcm},\ref{Eq:g}). Taking into account the equivalent representations in Eqs.~(\ref{Eq:etaPx+2},\ref{Eq:pix2Bis}) for $\eta_{I_x}$, the volume form $\eta_{C_x}$ can also be written as:
\begin{equation}
\eta_{C_x} = \sqrt{\frac{(s+t)(s+u)}{s}} 
\frac{m^d}{\sqrt{m^2 + |\ve{p}_1|^2}\sqrt{m^2 + |\ve{p}_2|^2}}
d(p_1)_{\hat{1}}\wedge\cdots\wedge d(p_1)_{\hat{d}}\wedge
d(p_2)_{\hat{1}}\wedge\cdots\wedge d(p_2)_{\hat{d}}\wedge g^{d-1} d\Omega(\hat{q}^*),
\label{Eq:muxBis}
\end{equation}
where we have reexpressed the factor involving $g$ and $\Theta$ in terms of the Mandelstam variables $t,s,u$. Taking into account the definition~(\ref{Eq:dvolx}) for the volume element $\dvol_x(p)$ on the mass hyperboloid we can also write this in the more compact form
\begin{equation}
\boxed{
\eta_{C_x} = m^d\sqrt{\frac{(s+t)(s+u)}{s}} \dvol_x(p_1)\wedge \dvol_x(p_2)
\wedge g^{d-1} d\Omega(\hat{q}^*).
}
\label{Eq:muxBisVol}
\end{equation}

\item Of course, there are many different possibilities for parametrizing the collision manifold $C_x$. For a discussion and a comparison between different coordinates on $C_x$, see Ref.~\cite{rS11}. Our representation~(\ref{Eq:muxBisVol}) is related to the ``center of momentum reduction" in~\cite{rS11}.

\item The collision manifold $C_x$ possesses the following symmetries which will play an important role later:
\begin{eqnarray}
j &:& C_x\to C_x, (p_1,p_2,p_1^*,p_2^*)\mapsto (p_2,p_1,p_1^*,p_2^*),\qquad
\hbox{(interchange of incoming particles)},
\label{Eq:DefIsoj}\\
j^* &:& C_x\to C_x, (p_1,p_2,p_1^*,p_2^*)\mapsto (p_1,p_2,p_2^*,p_1^*),\qquad
\hbox{(interchange of outgoing particles)},
\label{Eq:DefIsoj*}\\
k &:& C_x\to C_x, (p_1,p_2,p_1^*,p_2^*)\mapsto (p_1^*,p_2^*,p_1,p_2),\qquad
\hbox{(interchange of incoming and outgoing particles)}.
\quad\label{Eq:DefIsok}
\end{eqnarray}
Clearly, these are isometries of the larger space $([P_x^+(m)]^4, H_x\oplus H_x\oplus H_x\oplus H_x)$, and since the restriction $p_1 + p_2 = p_1^* + p_2^*$ is invariant with respect to $j$, $j^*$ and $k$, it follows that $j$, $j^*$ and $k$ are also isometries of the submanifold $C_x$.
\end{enumerate}

\subsection{Transition probability density}
\label{SubSec:TransitionProbability}

Now that we have computed the volume element on the collision manifold $C_x$ we are ready to give the physical interpretation of the transition probability density $W$.\footnote{Sometimes, this function is also called transition rate in the literature, see for instance Refs.~\cite{Groot-Book,Vereshchagin-Book}.} At the mathematical level, this function is a real-valued, nonnegative smooth function
\begin{equation}
W : T^*C\to \Real,\quad (x,p_1,p_2,p_1^*,p_2^*)\mapsto W_x(p_1+p_2\mapsto p_1^*+p_2^*)
\end{equation}
on the collision bundle $T^*C$ (see Eq.~(\ref{Eq:TCDef})) satisfying the following symmetries for all $(x,p_1,p_2,p_1^*,p_2^*)\in T^*C$:
\begin{equation}
W_x(p_1+p_2\mapsto p_1^*+p_2^*) = W_x(p_2+p_1\mapsto p_1^*+p_2^*)
 = W_x(p_1+p_2\mapsto p_2^*+p_1^*).
\label{Eq:WxIdSym}
\end{equation}
These symmetries mean that for each $x\in M$ the function $W_x : C_x\to \Real$ is invariant with respect to the particle-exchange symmetries $j$ and $j^*$, and they incorporate our assumption of the particles involved in the collision being \emph{identical}. The physical interpretation of the transition probability density $W$ is the following. Let $f: \Gamma_m^+\to \Real$ be the one-particle distribution function and let $U\subset T^*C$ be an open neighborhood of $T^*C$ which is invariant under the particle-exchange symmetries $j$ and $j^*$, that is,
$$
(x,p_1,p_2,p_1^*,p_2^*)\in U \hbox{ implies }
(x,p_2,p_1,p_1^*,p_2^*),(x,p_1,p_2,p_2^*,p_1^*)\in U.
$$
Then, the integral
\begin{equation}
\mathcal{N}_c[U]
  := \frac{1}{4}\int\limits_M\left( \int\limits_{C_x} W_x(p_1+p_2\mapsto p_1^*+p_2^*)
f(x,p_1) f(x,p_2) \chi_U(x,p_1,p_2,p_1^*,p_2^*)\eta_{C_x} \right) \eta_M
\label{Eq:NcDef}
\end{equation}
provides the averaged number of collisions satisfying
\begin{equation}
(x,p_1,p_2,p_1^*,p_2^*)\in U.
\end{equation}
Here, $\chi_U: T^*C\to \Real$ denotes the characteristic function of the subset $U\subset T^*C$, defined as
\begin{equation}
\chi_U(\xi) := \left\{ \begin{array}{rl} 
1, & \xi\in U, \\
0, & \xi\in T^*C\setminus U,
\end{array} \right.
\end{equation}
$\eta_{C_x}$ is the volume element defined in Eq.~(\ref{Eq:muxBisVol}), and $\eta_M$ is the natural volume element on spacetime. The factor $1/4$ on the right-hand side of Eq.~(\ref{Eq:NcDef}) arises because the four collisions $p_1+p_2\mapsto p_1^*+p_2^*$, $p_2+p_1\mapsto p_1^*+p_2^*$, $p_1+p_2\mapsto p_2^*+p_1^*$, and $p_2+p_1\mapsto p_2^*+p_1^*$ are identical.

It is important to mention that the interpretation of Eq.~(\ref{Eq:NcDef}) is based on Boltzmann's \emph{molecular chaos hypothesis}, according to which the gas is dilute enough such that the gas particles are \emph{uncorrelated just before the collision}. Without this hypothesis, one should replace the product $f(x,p_1) f(x,p_2)$ in the right-hand side of Eq.~(\ref{Eq:NcDef}) with the two-particle distribution function $f^{(2)}(x_1,p_1,x_2,p_2)$ at the point $x_1 = x_2 = x$. In the following, we assume the validity of the molecular chaos hypothesis, such that $f^{(2)}(x,p_1,x,p_2) = f(x,p_1) f(x,p_2)$ is valid just prior to the collision.

In addition to the symmetries~(\ref{Eq:WxIdSym}) originating from the requirement that the particles are identical, we shall also require the condition of \emph{microscopic reversibility} which means that, in addition to the symmetries~(\ref{Eq:WxIdSym}), one requires the transition probability density to be invariant with respect to the isometry $k$ of $C_x$:
\begin{equation}
W_x(p_1^*+p_2^*\mapsto p_1+p_2) = W_x(p_1+p_2\mapsto p_1^*+p_2^*)
\label{Eq:WxIdSym2}
\end{equation}
for all $(x,p_1,p_2,p_1^*,p_2^*)\in T^*C$.

We conclude this subsection by showing that the symmetry~(\ref{Eq:WxIdSym2}) is a consequence of a stronger requirement on the transition probability density.\footnote{We are thankful to Thomas Zannias for a discussion of this point.} Namely, let us assume that $W_x(p_1 + p_2\mapsto p_1^* + p_2^*)$ is a Lorentz scalar (in the cotangent space $T_x^* M$ with metric $g_x^{-1}$) depending only on the momenta $(p_1,p_2,p_1^*,p_2^*)$. This is certainly a reasonable requirement as long as the interaction does not involve other fields, like a strong magnetic field, for instance (in which case $W$ could also depend on the Faraday tensor $F$). Under this assumption, it is clear that $W_x(p_1^*+p_2^*\mapsto p_1+p_2)$ can only depend on the six inner products $p_1\cdot p_2$, $p_1\cdot p_1^*$, $p_1\cdot p_2^*$, $p_2\cdot p_1^*$, $p_2\cdot p_2^*$, $p_1^*\cdot p_2^*$ between the momenta involved in the collisions (the remaining four inner products $p_i\cdot p_i = p_i^*\cdot p_i^* = -m^2$ being fixed).\footnote{In four spacetime dimensions one could in principle construct a further scalar by taking the dual of $p_1\wedge p_2\wedge p_1^*\wedge p_2^*$; however it is zero due to energy-momentum conservation.} These six Lorentz-invariants are not all independent due to  momentum conservation~(\ref{Eq:ElasticCollision}), and in fact, they can be represented by the Mandelstam variables $s,t$ and $u$, see Eqs.~(\ref{Eq:Mandelstams},\ref{Eq:Mandelstamt},\ref{Eq:Mandelstamu}). These in turn, can be represented solely in  terms of the magnitude of the relative momentum $g := |q|$ and the scattering angle $\Theta$, see Eqs.~(\ref{Eq:s},\ref{Eq:t},\ref{Eq:u}).

Therefore, we arrive at the conclusion that a transition probability density $W_x$ which is a Lorentz scalar depending only on the momenta $(p_1,p_2,p_1^*,p_2^*)$ is a function of only two variables, namely $g$ and $\Theta$. Since these variables are invariant with respect to  $k$, microscopic reversibility is automatically satisfied.\footnote{An independent way of proving that microscopic reversibility is a consequence of Lorentz invariance is to consider the reflection (cf.~\cite{Huang-Book}, pages 59 and 60)
$$
\Lambda(p) := p - 2(n\cdot p) n,\qquad n := \frac{q - q^*}{|q - q^*|},
$$
with respect to the plane orthogonal to the unit spacelike vector $n$. This reflection leaves $p^{cm}$ invariant and interchanges $q$ and $q^*$; hence it interchanges $(p_1,p_2)$ and $(p_1^*,p_2^*)$, see Eqs.~(\ref{Eq:Cp1},\ref{Eq:Cp2},\ref{Eq:Cp1*},\ref{Eq:Cp2*}).}

On the other hand, note that the symmetries $j$ and $j^*$ which exchange $p_1$ and $p_2$ or $p_1^*$ and $p_2^*$ induce the map $(g,\Theta)\mapsto (g,\pi - \Theta)$. Therefore, in order for the transition probability density $W_x$ to be invariant with respect to $j$ and $j^*$, it should also be invariant with respect to $\Theta\mapsto \pi - \Theta$. In subsection~\ref{SubSec:CrossSection} a relation between $W_x$ and the differential cross section will be provided.

\subsection{General relativistic Boltzmann equation}

Now that we have introduced the transition probability function and given its physical interpretation, we are ready to derive the relativistic Boltzmann equation in integral form, and from this its standard integro-differential form.

In order to derive the Boltzmann equation in integral form, we consider a subset $V\subset \Gamma_m^+$ of the form given in Eq.~(\ref{Eq:VDef}) and recall the interpretation of the quantity $\mathcal{N}[\Sigma_1] - \mathcal{N}[\Sigma_0]$ in Eq.~(\ref{Eq:BalanceLaw}) as the ensemble average of the net change in number of occupied trajectories between $\Sigma_0$ and $\Sigma_1$ due to collisions. In case the particles are charged, one needs to replace $\mathcal{N}[\Sigma]$ with $\mathcal{N}_F[\Sigma]$ and $L$ with $L_F$ in this identity. The net change in the number of occupied trajectories is a consequence of the following types of collisions (see Fig.~\ref{Fig:Collision} for an illustration):
\begin{itemize}
\item Binary collisions for which both $(x,p_1)$ and $(x,p_2)$ belong to the set $V$, however $(x,p_1^*)$ and $(x,p_2^*)$ lie outside $V$, see Fig.~\ref{subFig:BC_U20}. These collisions correspond to points in the following subset of $T^* C$:
\begin{equation}
U_{2,0} 
 := \{ (x,p_1,p_2,p_1^*,p_2^*)\in T^*C : (x,p_1), (x,p_2)\in V, (x,p_1^*),(x,p_2^*)\notin V \},
\end{equation}
and they lead to a \emph{depletion} of the averaged number of occupied trajectories in $V$ by the value $2\mathcal{N}_c[U_{2,0}]$.

\item Binary collisions for which both $(x,p_1)$ and $(x,p_2)$ belong to the set $V$ (as in the previous case); however now $(x,p_1^*)\in V$ and $(x,p_2^*)\notin V$ or vice versa: $(x,p_1^*)\notin V$ and $(x,p_2^*)\in V$, see Fig.~\ref{subFig:BC_U21}. Denoting the corresponding set by $U_{2,1}$, these collisions lead to a depletion of the averaged number of occupied trajectories in $V$ by the value $\mathcal{N}_c[U_{2,1}]$.

\item Binary collisions for which $(x,p_1)\in V$ and $(x,p_2)\notin V$ or vice versa: $(x,p_1)\notin V$ and $(x,p_2)\in V$ and $(x,p_1^*),(x,p_2^*)\notin V$, see Fig.~\ref{subFig:BC_U10}. Denoting the corresponding set by $U_{1,0}$, these collisions lead to a depletion of the averaged number of occupied trajectories in $V$ by the value $\mathcal{N}_c[U_{1,0}]$.

\item Likewise, the collisions $U_{0,2}$, $U_{1,2}$ and $U_{0,1}$ (the first index indicating the number of incoming particles belonging to $V$ and the second one the number of outgoing particles belonging to $V$) lead to an \emph{increase} of the averaged number of occupied trajectories in $V$ by the values $2\mathcal{N}_c[U_{0,2}]$, $\mathcal{N}_c[U_{1,2}]$ and $\mathcal{N}_c[U_{0,1}]$, respectively.
\end{itemize}

\begin{figure}[ht]
\vspace{1cm}
  \subfloat[Two internal (i.e. lying inside $V$) incoming particles resulting in two external (i.e. lying outside $V$) outgoing particles.]{
	\begin{minipage}[c][1\width]{
	   0.3\textwidth}
	   \centering
	   \includegraphics[width=1\textwidth]{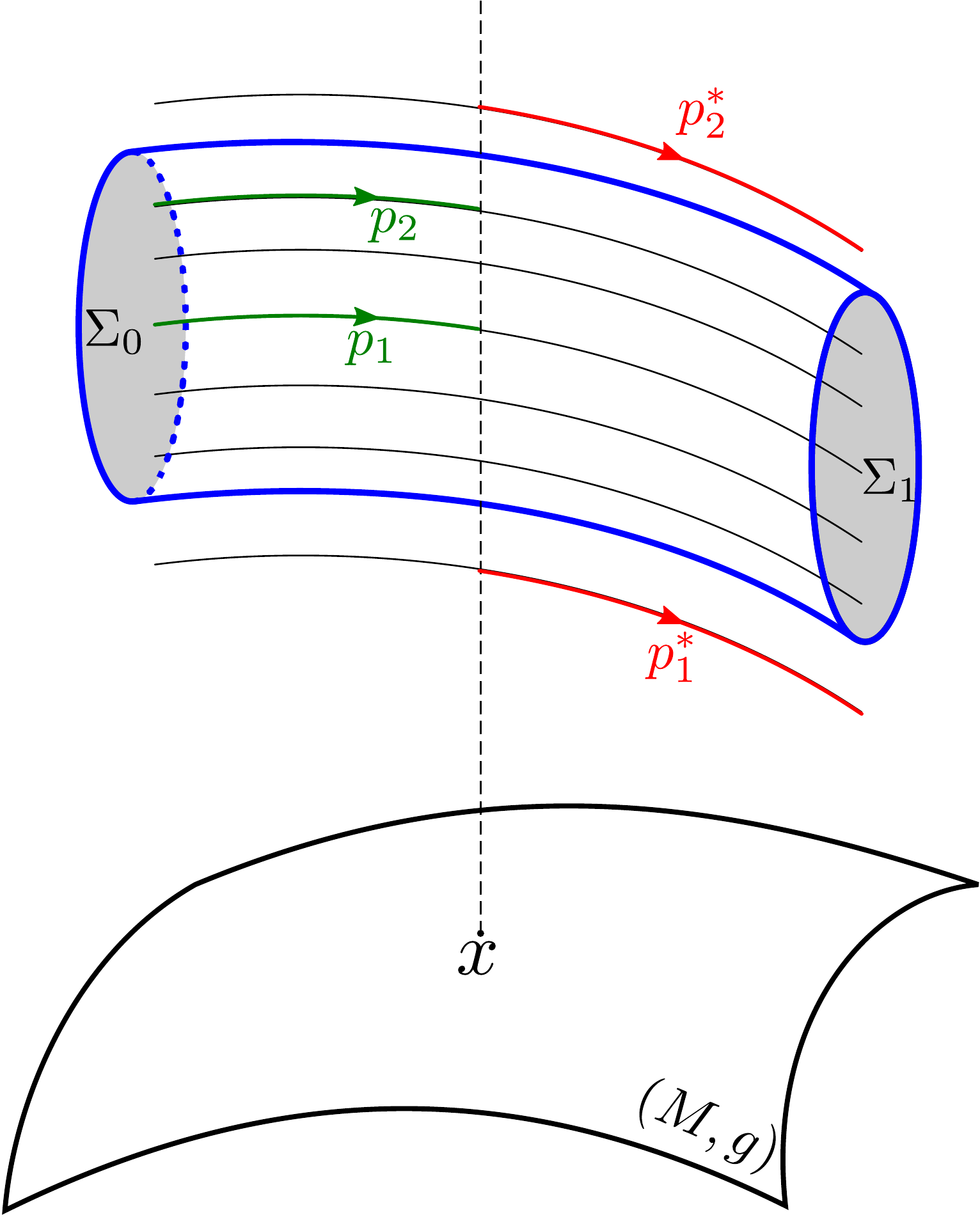}
	   \vspace{0.7cm}
	   \label{subFig:BC_U20}
	\end{minipage}}
  \hspace{2.5cm}
  \subfloat[Two internal incoming particles resulting in one internal and one external outgoing particle.]{
	\begin{minipage}[c][1\width]{
	   0.3\textwidth}
	   \centering
	   \includegraphics[width=1\textwidth]{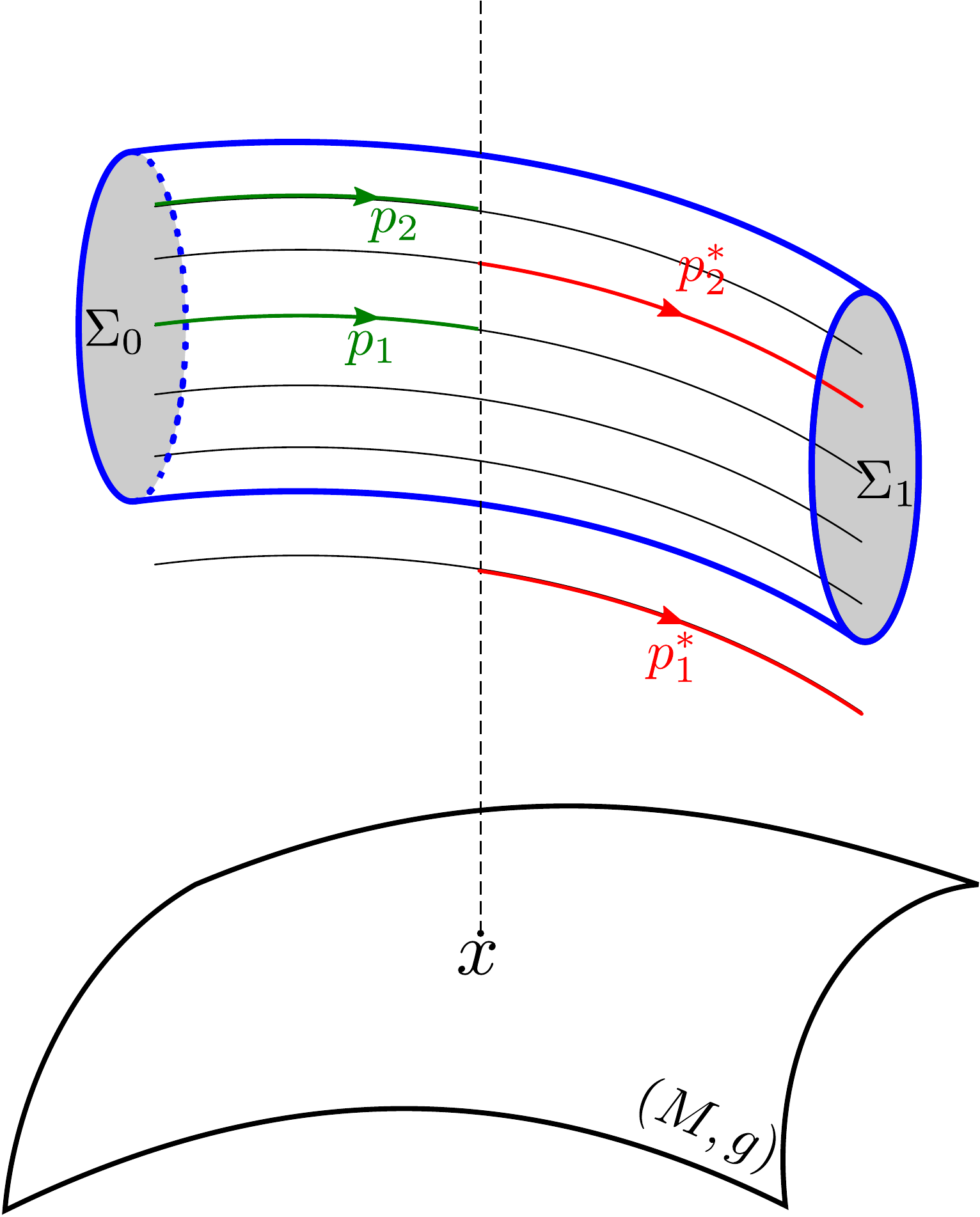}
	   \vspace{1.5cm}
	   \label{subFig:BC_U21}
	\end{minipage}}
  \vspace{2cm}
	\subfloat[One external and one internal incoming particle resulting in two external outgoing particles.]{
	\begin{minipage}[c][1\width]{
	   0.3\textwidth}
	   \centering
	   \includegraphics[width=1\textwidth]{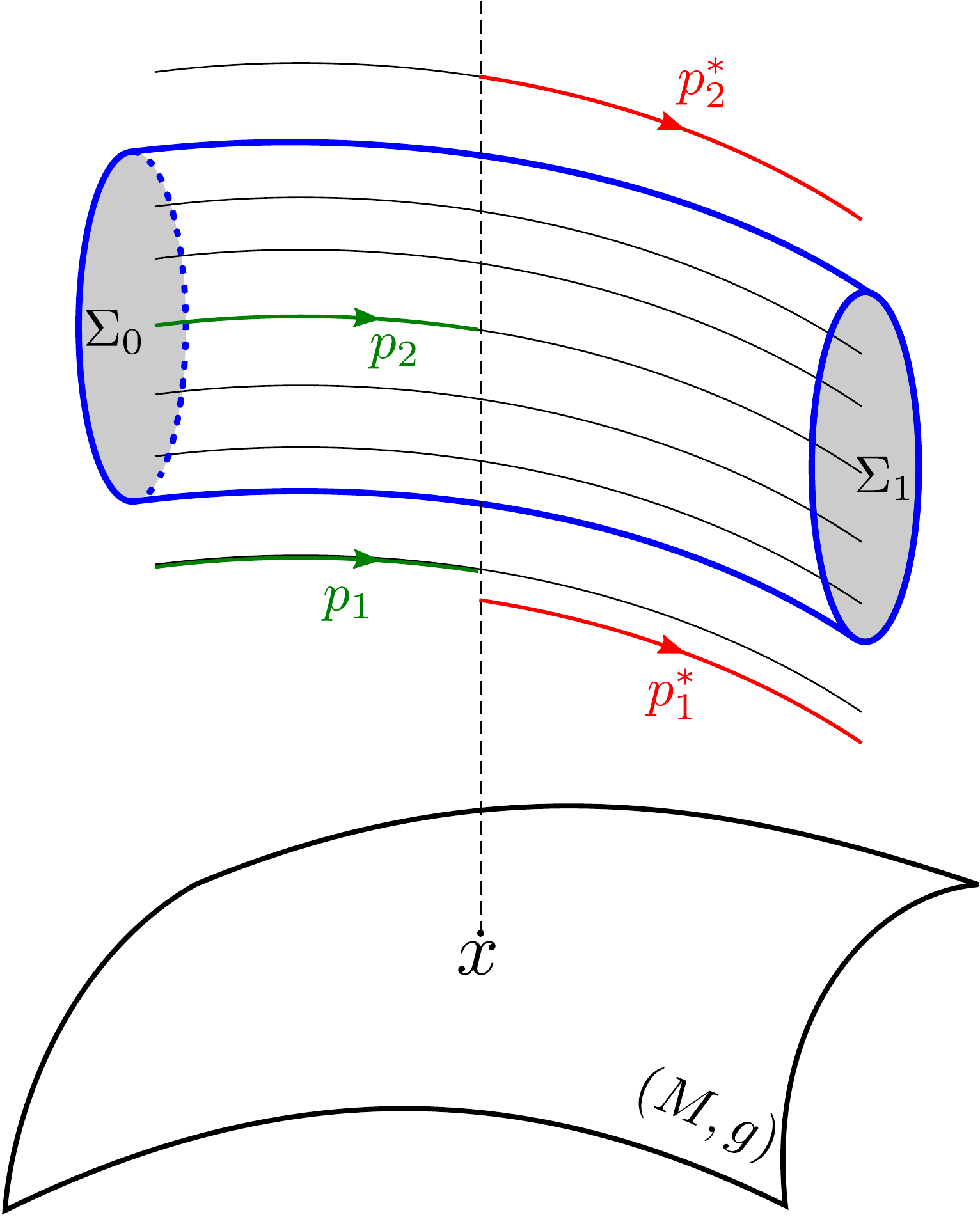}
	   \vspace{1.5cm}
	   \label{subFig:BC_U10}
	\end{minipage}}
  \hspace{2.5cm}
  \subfloat[Two external incoming particles resulting in one internal and one external outgoing particle.]{
	\begin{minipage}[c][1\width]{
	   0.3\textwidth}
	   \centering
	   \includegraphics[width=1\textwidth]{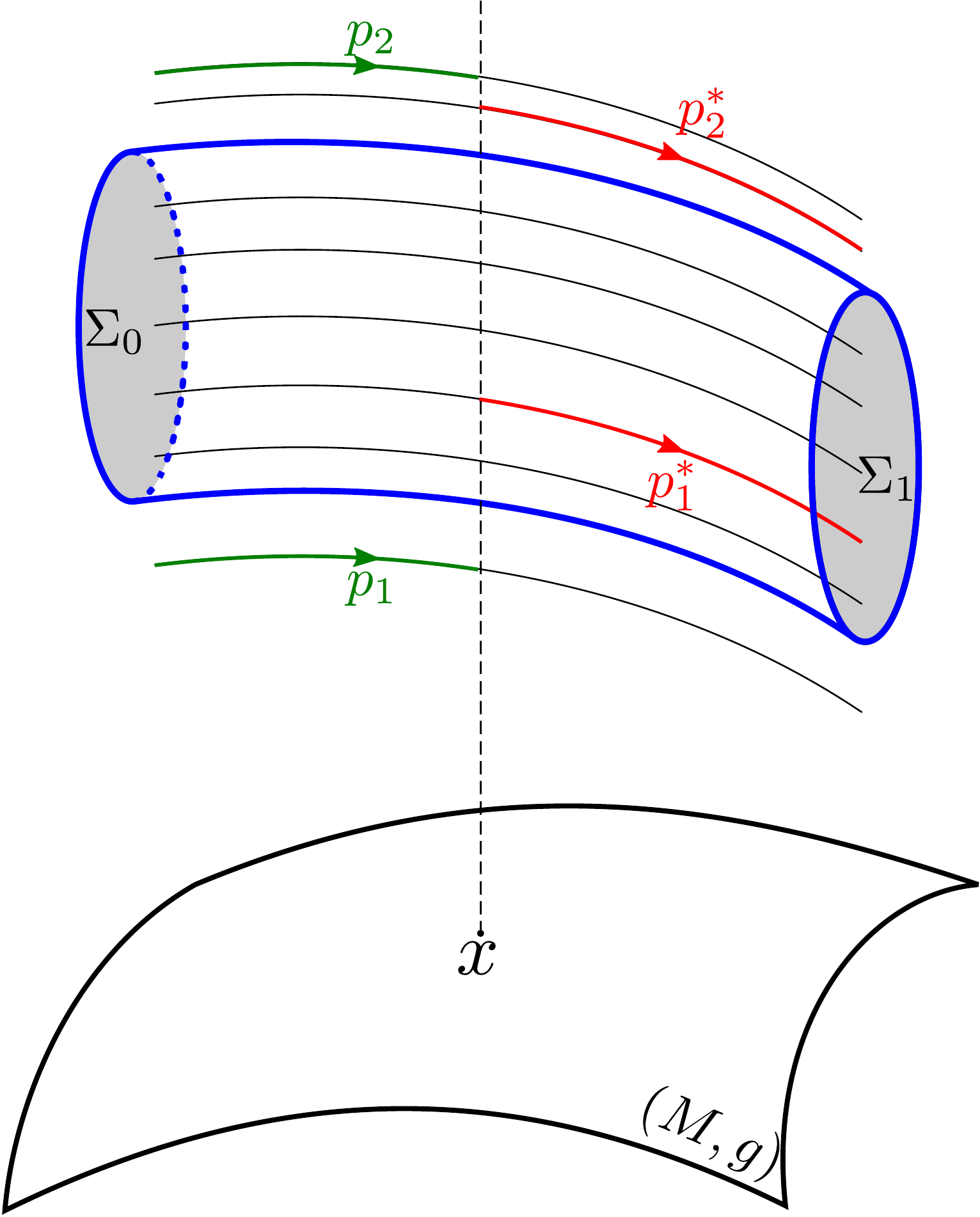}
	   \vspace{1.5cm}
	   \label{subFig:BC_U01}
	\end{minipage}}
	
\caption{A set of binary collisions that takes place at the event $x\in M$ with incoming and outgoing momenta in or outside the subset $V$ (in blue) of relativistic phase space $\Gamma_m^+$.}
\label{Fig:Collision}
\end{figure}

Taking into account these observations, and based on our assumption that only binary collisions take place, we obtain Boltzmann's equation in integral form:
\begin{equation}
\mathcal{N}_F[\Sigma_1] - \mathcal{N}_F[\Sigma_0] 
 = 2\mathcal{N}_c[U_{0,2}] + \mathcal{N}_c[U_{1,2}] + \mathcal{N}_c[U_{0,1}]
 - 2\mathcal{N}_c[U_{2,0}] - \mathcal{N}_c[U_{2,1}] - \mathcal{N}_c[U_{1,0}],
\label{Eq:BoltzmannIntegralForm}
\end{equation}
where $\mathcal{N}_F[\Sigma]$ and $\mathcal{N}_c[U]$ are defined in Eqs.~(\ref{Eq:NFluxF}) and (\ref{Eq:NcDef}), respectively. Using Eq.~(\ref{Eq:NcDef}) and the observation that a collision in $U_{r,l}$ changes the number of occupied trajectories by $l-r$, the right-hand side can be written as follows:
\begin{equation}
\frac{1}{4}\int\limits_M\left( \int\limits_{C_x} W_x(p_1+p_2\mapsto p_1^*+p_2^*)
f(x,p_1) f(x,p_2)\left[ \chi_V(x,p_1^*) +  \chi_V(x,p_2^*) - \chi_V(x,p_1) - \chi_V(x,p_2) \right]
\eta_{C_x} \right) \eta_M.
\end{equation}
Using the isometries $j$ and $j^*$ defined in Eqs.~(\ref{Eq:DefIsoj},\ref{Eq:DefIsoj*}) and the hypothesis that $W_x(p_1+p_2\mapsto p_1^*+p_2^*)$ is invariant with respect to them, we can rewrite this term as
\begin{equation}
\frac{1}{2}\int\limits_M\left( \int\limits_{C_x} W_x(p_1+p_2\mapsto p_1^*+p_2^*)
f(x,p_1) f(x,p_2)\left[ \chi_V(x,p_1^*) - \chi_V(x,p_1)  \right]\eta_{C_x} \right) \eta_M.
\end{equation}
Next, using the isometry $k$ defined in Eq.~(\ref{Eq:DefIsok}) we obtain, interchanging the integration variables $(p_1,p_2)$ with $(p_1^*,p_2^*)$ in the first term:
\begin{eqnarray}
\mathcal{N}_F[\Sigma_1] - \mathcal{N}_F[\Sigma_0]
  &=& \frac{1}{2}\int\limits_M\Biggl\{ \int\limits_{C_x} \left[ 
 W_x(p_1^*+p_2^*\mapsto p_1+p_2) f(x,p_1^*) f(x,p_2^*) 
 - W_x(p_1+p_2\mapsto p_1^*+p_2^*) f(x,p_1) f(x,p_2) \right]
\nonumber\\
&&  \qquad\qquad\qquad  \times  \chi_V(x,p_1) \eta_{C_x} \Biggr\} \eta_M.
\end{eqnarray}
In a next step, we use the representation of the volume element $\eta_{C_x}$ given in Eq.~(\ref{Eq:muxBisVol}) to rewrite the right-hand side as follows:
\begin{eqnarray}
&& \mathcal{N}_F[\Sigma_1] - \mathcal{N}_F[\Sigma_0]
  = \frac{m^d}{2}\int\limits_M\int\limits_{P_x^+(m)}  \chi_V(x,p_1) \left\{
  \int\limits_{P_x^+(m)}\int\limits_{S^{d-1}} 
  \left[  W_x(p_1^*+p_2^*\mapsto p_1+p_2) f(x,p_1^*) f(x,p_2^*)  \right. \right.
\nonumber\\
 && \qquad \left. \left. - W_x(p_1+p_2\mapsto p_1^*+p_2^*) f(x,p_1) f(x,p_2) \right] 
\sqrt{\frac{(s+t)(s+u)}{s}} g^{d-1} d\Omega(\hat{q}^*) \dvol_x(p_2) \right\} \dvol_x(p_1) \eta_M.
\label{Eq:BStep1}
\end{eqnarray}
Finally, we compare this result with the identity
\begin{equation}
\mathcal{N}_F[\Sigma_1] - \mathcal{N}_F[\Sigma_0] 
 = \int\limits_M \int\limits_{P_x^+(m)}\chi_V(x,p_1) L_F[f](x,p_1)\dvol_x(p_1) \eta_M,
\label{Eq:BStep2}
\end{equation}
which follows from the charged generalization of Eq.~(\ref{Eq:BalanceLaw}) and the Fubini-type formula~(\ref{Eq:Fubini}). Since Eqs.~(\ref{Eq:BStep1},\ref{Eq:BStep2}) hold for an arbitrary subset $V\subset \Gamma_m^+$ of the required form, we obtain the relativistic Boltzmann equation
\begin{equation}
\boxed{ L_F[f](x,p) = C_W[f,f](x,p), }
\label{Eq:Boltzmann}
\end{equation}
where the collision term $C_W[f,f]$ is defined by the following quadratic form of the one-particle distribution function:
\begin{eqnarray}
C_W[f,h](x,p_1) &:=& \frac{m^d}{2}\int\limits_{P_x^+(m)}\int\limits_{S^{d-1}} 
\left[  W_x(p_1^*+p_2^*\mapsto p_1+p_2) f(x,p_1^*) h(x,p_2^*) 
 - W_x(p_1+p_2\mapsto p_1^*+p_2^*) f(x,p_1) h(x,p_2) \right]
\nonumber\\
&&\qquad\qquad\qquad \times\sqrt{\frac{(s+t)(s+u)}{s}} g^{d-1}d\Omega(\hat{q}^*)\dvol_x(p_2),
\label{Eq:CollisionTerm1}
\end{eqnarray}
for $f,h: \Gamma_m^+\to \Real$ smooth functions with compact support and $(x,p_1)\in \Gamma_m^+$. Here, the Mandelstam variables $s,t,u$ and the variable $g=|q|$ are computed in terms of $p_1$ and $p_2$ according to the definitions in Eqs.~(\ref{Eq:qq*Def},\ref{Eq:Mandelstams},\ref{Eq:Mandelstamt},\ref{Eq:Mandelstamu}), and the outgoing momenta $p_1^*$ and $p_2^*$ are computed according to
\begin{equation}
p_1^* = \frac{p_1+p_2}{2} - \frac{m}{2} q^*,\qquad
p_2^* = \frac{p_1+p_2}{2} + \frac{m}{2} q^*,
\end{equation}
with
\begin{equation}
m q^* = g\left\{ -\ve{p}^{cm}\cdot\ve{\hat q}^* \theta^{\hat{0}} 
 + \left[ m(\hat{q}^*)_{\hat{a}} + \left( \sqrt{m^2 + |\ve{p}^{cm}|^2} - m \right)
 (\ve{\hat{p}}^{cm}\cdot\ve{\hat q}^*) \hat{p}^{cm}_{\hat{a}} \right] \theta^{\hat{a}} \right\},
\end{equation}
and $\ve{p}^{cm}$ and $g$ can be obtained from the inverse formulae~(\ref{Eq:pcm},\ref{Eq:g}).

We conclude this subsection with the following remarks:
\begin{enumerate}
\item Under the additional assumption of \emph{microscopic reversibility}, see Eq.~(\ref{Eq:WxIdSym2}), the collision term can be simplified to
\begin{eqnarray}
C_W[f,h](x,p_1) &:=& \frac{m^d}{2}\int\limits_{P_x^+(m)}\int\limits_{S^{d-1}} 
W_x(p_1^*+p_2^*\mapsto p_1+p_2)\nonumber\\
&& \qquad\times
\left[  f(x,p_1^*) h(x,p_2^*)- f(x,p_1) h(x,p_2) \right]\sqrt{\frac{(s+t)(s+u)}{s}} g^{d-1}d\Omega(\hat{q}^*)\dvol_x(p_2).
\label{Eq:CollisionTerm2}
\end{eqnarray}

\item Multiplying Boltzmann's equation~(\ref{Eq:Boltzmann}) by an arbitrary smooth function $\Psi:\Gamma_m^+\to\Real$ and integrating the result over $P_x^+(m)$ one obtains the equation
\begin{eqnarray}
&& \int\limits_{P_x^+(m)} \Psi(x,p_1) L_F[f](x,p_1) \dvol_x(p_1)
\nonumber\\
 &=& \frac{1}{2}\int\limits_{C_x}
\left[  W_x(p_1^*+p_2^*\mapsto p_1+p_2) f(x,p_1^*) f(x,p_2^*) 
 - W_x(p_1+p_2\mapsto p_1^*+p_2^*) f(x,p_1) f(x,p_2) \right]\Psi(x,p_1) \eta_{C_x}
\nonumber\\
 &=& -\frac{1}{4}\int\limits_{C_x}
W_x(p_1+p_2\mapsto p_1^*+p_2^*) f(x,p_1) f(x,p_2)
\left[ \Psi(x,p_1) + \Psi(x,p_2) - \Psi(x,p_1^*) - \Psi(x,p_2^*) \right]\eta_{C_x},
\label{Eq:IntBoltzmannIdentity}
\end{eqnarray}
where we have again used the definition~(\ref{Eq:muxBisVol}) of the volume element on the collision manifold $C_x$ and its isometries $j$, $j^*$ and $k$. Eq.~(\ref{Eq:IntBoltzmannIdentity}), which is a consequence of the Boltzmann equation~(\ref{Eq:Boltzmann}) and holds for arbitrary $x\in M$, has a number of important consequences which we discuss briefly.

First, we note that the right-hand side of Eq.~(\ref{Eq:IntBoltzmannIdentity}) vanishes if $\Psi(x,\cdot)$ is a collision invariant on $P_x^+(m)$. (In fact, provided that $f(x,\cdot) > 0$ and $W_x > 0$ are positive, the right-hand side vanishes if and only if $\Psi(x,\cdot)$ is a collision invariant.) According to Theorem~\ref{Thm:CollisionInvariants}, a collision invariant is of the form
\begin{equation}
\Psi(x,p) = \alpha(x) + \beta_x(p)
\label{Eq:Psi}
\end{equation}
for some function $\alpha$ and a one-form $\beta$ on $M$. Choosing $\beta = 0$ leads to
\begin{equation}
\int\limits_{P_x^+(m)} L_F[f](x,p) \dvol_x(p) = 0,
\end{equation}
which, in view of Theorem~\ref{Thm:sMomentsF} with $s=1$, leads to the continuity equation for the current particle density:
\begin{equation}
\boxed{ \nabla^\mu J_\mu = \divrg J = 0.}
\end{equation}
Therefore, as in the collisionless case, the continuity equation expressing particle number conservation is a consequence of the relativistic Boltzmann equation~(\ref{Eq:Boltzmann}).

Choosing instead $\alpha = 0$ in Eq.~(\ref{Eq:Psi}) and expanding the one-forms $\beta_x(p) = \beta^\mu(x) p_\mu$ leads to the identity
\begin{equation}
\int\limits_{P_x^+(m)} p_\mu L_F[f](x,p) \dvol_x(p) = 0.
\end{equation}
In view of Theorem~\ref{Thm:sMomentsF} with $s=2$, this implies that the energy-momentum-stress tensor $T$ satisfies the identity
\begin{equation}
\boxed{ \nabla^\mu T_{\mu\nu} = (\divrg T)_\nu = q F_\nu{}^\beta J_\beta.}
\label{Eq:DivTBoltzmann}
\end{equation}
Therefore, the relativistic Boltzmann equation~(\ref{Eq:Boltzmann}) implies the same equations~(\ref{Eq:JTConservation}) for the divergence of $J$ and $T$ as in the collisionless case.

Other important consequences of the identity~(\ref{Eq:IntBoltzmannIdentity}) will be discussed in the next section.
\end{enumerate}

\subsection{Relation between transition probability density and differential cross-section}
\label{SubSec:CrossSection}

We conclude this section by providing a relation between the transition probability density $W_x$ and the differential cross-section associated with the binary collision~(\ref{Eq:BinaryCollision}). Physically, the differential cross-section is defined as
\begin{equation}
\frac{d\sigma}{d\Omega} := \frac{\hbox{number of particles scattered in the solid angle $d\Omega$ per unit time}}{\hbox{incident flux}},
\label{Eq:DiffCrossSection}
\end{equation}
and has units of area, that is (length)$^{d-1}$. To provide the precise relation between $d\sigma/d\Omega$ and $W_x$, one needs to determine in which reference frame the solid angle and the incident flux should be measured, or --even better-- to provide frame-independent definitions for these quantities.

To gain some insight regarding the definition of the scattered number of particles and incident flux, it is useful to start with a dimensional analysis. Denoting by $[X]$ the units of a quantity $X$, and recalling that we are using geometrized units, we find:
\begin{align*}
\left[ p \right] &= \ell,                           && \mbox{(the particle momentum)},\\
\left[ \dvol_x(p) \right] &= \ell^{d-1},   && \mbox{(see Eq.~(\ref{Eq:dvolx}))},\\
\left[ \eta_M \right] &= \ell^{d+1},        && \mbox{(the volume element on $M$)},\\
\left[ \eta_{C_x} \right] &= \ell^{3d-1}, && \mbox{(see Eq.~(\ref{Eq:muxBisVol}))},\\ 
\left[ f \right] &= \ell^{-2d},                   && \mbox{(the one-particle distribution function)},\\
\left[ L_F \right] &= \ell^0,                   && \mbox{(the Liouville vector field, see Eq.~(\ref{Eq:LiouvilleVFCharged}))},
\end{align*}
where $\ell$ stands for a length unit. From the interpretation of the transition probability density in Eq.~(\ref{Eq:NcDef}) it then follows that
\begin{equation}
\left[ W_x(p_1+p_2\mapsto p_1^*+p_2^*) \right] = \ell^0,
\end{equation}
that is, $W_x$ is dimensionless. To make further progress, we fix an event $x\in M$ and choose local inertial coordinates $x^0,x^1,\ldots,x^d$ in the vicinity of $x$ with corresponding orthonormal frame $\displaystyle \left. e_{\hat{\alpha}}^\mu \right|_x = \delta^\mu{}_{\hat{\alpha}}$ at $x$. In the following, we assume that for each collision this frame is aligned with its center of mass frame. It follows from Eqs.~(\ref{Eq:NcDef}) and~(\ref{Eq:muxBisVol}) that the dimensionless quantity
\begin{equation}
\frac{1}{4} W_x(p_1+p_2\mapsto p_1^*+p_2^*) f(x,p_1) f(x,p_2)
\sqrt{\frac{(s+t)(s+u)}{s}} m^d \dvol_x(p_1)\dvol_x(p_2) g^{d-1} d\Omega(\hat{q}^*)\eta_M
\label{Eq:NumScatteredParticles}
\end{equation}
represents the number of binary collisions taking place in the (spacetime) volume element $\eta_M$ centered at $x$, for which the incoming particles have momenta lying in $\dvol_x(p_1)$ and $\dvol_x(p_2)$ and one particle is deflected through the angle $\Theta$ and scattered into the solid angle $d\Omega(\hat{q}^*)$ (as measured in the center of mass frame of the collision). Dividing the quantity (\ref{Eq:NumScatteredParticles}) by $d\Omega(\hat{q}^*) dx^0$ provides the numerator in Eq.~(\ref{Eq:DiffCrossSection})

To define the incident flux, we use Eq.~(\ref{Eq:ParticleDensityCoord}) from which it follows that the quantity
\begin{equation}
N^{\hat a} p_{\hat a} f(x,p) \dvol_x(p)
\label{Eq:NormalFlux}
\end{equation}
(whose dimensions are $\ell^{-d}$) represents the particle flux in direction $N^{\hat a}$ for particles with momenta $p$ in $\dvol_x(p)$. Further, it follows from Eq.~(\ref{Eq:NSigmaFromS}) that the dimensionless quantity
\begin{equation}
f(x,p)|p_{\hat 0}| \dvol_x(p) \eta_S,\qquad
\eta_S := dx^1\wedge dx^2\wedge\cdots\wedge dx^d
\label{Eq:Number}
\end{equation}
is the number of particles contained in the space volume $\eta_S$ with momenta in $\dvol_x(p)$. Choosing $N^{\hat a}$ in the same direction as $p$, it follows from Eqs.~(\ref{Eq:NormalFlux},\ref{Eq:Number}) that the incident flux of particles with momentum $p_1$ which are involved in a possible collision with a second particle with momentum $p_2$ at $x$, such that $p_1$ and $p_2$ lie in $\dvol_x(p_1)$ and $\dvol_x(p_2)$, respectively, is equal to
\begin{equation}
|\ve{p}_1|\sqrt{m^2 + |\ve{p}_2|^2} f(x,p_1) f(x,p_2)\dvol_x(p_1)\dvol_x(p_2)\eta_S,
\label{Eq:IncidentFlux}
\end{equation}
where $\ve{p}_1$ and $\ve{p}_2$ refer to the spatial orthonormal components of $p_1$ and $p_2$. In the center of mass frame one obtains from Eqs.~(\ref{Eq:Cp1Bis},\ref{Eq:Cp2Bis}),
\begin{equation}
|\ve{p}_1| = \frac{m g}{2},\qquad
\sqrt{m^2 + |\ve{p}_2|^2} = m\sqrt{1 + \frac{g^2}{4}} = \frac{\sqrt{s}}{2}.
\end{equation}
Gathering the results and noting that $\eta_M = dx^0\wedge \eta_S$, the following important relation between the transition probability density and the differential cross section is obtained:
\begin{equation}
\boxed{ \frac{1}{2} W_x(p_1+p_2\mapsto p_1^*+p_2^*) \sqrt{\frac{(s+t)(s+u)}{s}} m^d g^{d-1} 
 = m^2 g\sqrt{1 + \frac{g^2}{4}}\frac{d\sigma}{d\Omega},}
\label{Eq:RelationWDiffCross}
\end{equation}
where we recall that Lorentz invariance and the symmetries $j$ and $j^*$ require $W_x$ (and hence also $d\sigma/d\Omega$) to be a function of $g$ and $\Theta$ only which is invariant with respect to $\Theta\mapsto \pi - \Theta$. With these requirements, which also imply the validity of microscopic reversibility, the collision term~(\ref{Eq:CollisionTerm2}) can be rewritten as\footnote{Sometimes, the collision term~(\ref{Eq:CollisionTerm3}) is written with a factor $1/2$ in front of it (see for instance Eq.~(39) on page 22 in Ref.~\cite{Groot-Book}), which is due to a different definition of the  differential cross section. For further details on this difference, see the discussion on page~37 in~\cite{CercignaniKremer-Book}.}
\begin{equation}
\boxed{ C_W[f,f](x,p_1) = m^2\int\limits_{P_x^+(m)}\int\limits_{S^{d-1}}
 g\sqrt{1 + \frac{g^2}{4}}\frac{d\sigma}{d\Omega}(g,\Theta)
  \left[  f(x,p_1^*) f(x,p_2^*) -  f(x,p_1) f(x,p_2) \right] d\Omega(\hat{q}^*)\dvol_x(p_2). }
\label{Eq:CollisionTerm3}
\end{equation}
This is the form of the collision term for a single non-degenerate gas\footnote{For a generalized version of this collision term which takes into account the quantum statistics of the particles, see for example Chapter~2.2 in Ref.~\cite{CercignaniKremer-Book} and references therein.} that can be found in most textbooks, see for example Eq.~(12.23) in Ref.~\cite{CercignaniKremer-Book}. The quantity
\begin{equation}
\mathcal{F} := m^2 g\sqrt{1 + \frac{g^2}{4}} = \sqrt{(p_1\cdot p_2)^2 - m^4}
\end{equation}
is called the invariant flux. It is sometimes written in the form $\mathcal{F} = p_1^{\hat 0} p_2^{\hat 0} g_{\o}$ with M{\o}ller's relative speed
\begin{equation}
g_{\o} := \sqrt{|\ve{v}_2 - \ve{v}_1|^2 - |\ve{v}_1\wedge \ve{v}_2|^2},\qquad
\ve{v}_1 := \frac{1}{p_1^{\hat 0}}\ve{p}_1,\quad
\ve{v}_2 := \frac{1}{p_2^{\hat 0}}\ve{p}_2.
\end{equation}
However, note that while $\mathcal{F}$ and $g$ are Lorentz scalars, $g_{\o}$ is not. For further details on the definition of the invariant flux, a Lorentz-invariant definition of the relative velocity and their role for the definition of the relativistic cross-section, we refer the reader to~\cite{mC17}.

We conclude this section with a few remarks regarding the form of the differential cross section (or scattering kernel) that is used in the literature in the $(d=3)$-dimensional case. As follows from the discussion in Section~\ref{SubSec:TransitionProbability}, the differential cross section should be a non-negative function depending only on the relative velocity $g$ and the scattering angle $\Theta$. Furthermore, it is convenient to require the following hypothesis (see~\cite{jJyS19}):
\begin{equation}
0\leq \frac{d\sigma}{d\Omega}(g,\Theta)  \leq K g^a \Gamma(\Theta), \qquad g > 0,\quad
0\leq \Theta\leq \pi,
\label{Eq:HypotesisDiffCrossSection}
\end{equation}
with $K > 0$ a  positive constant and $\Gamma$ a non-negative, bounded function depending only on $\Theta$. The constant $a > -3$ depends on the nature of the interaction between the particles and divides the class of collisional kernels in hard- ($a\geq 0$) and soft-potential ($a < 0$) kernels. Representative  examples in each class are:
\begin{enumerate}
\item The ``hard-sphere" model, which is characterized by a constant differential cross section, that is,
\begin{equation}
\frac{d\sigma}{d\Omega}(g,\Theta) = \textrm{const.} = \frac{\sigma_T}{4\pi},
\end{equation}
with $\sigma_T$ the total cross section. Apart from its mathematical simplicity, it might also serve as an approximate model for describing strongly interacting particles at high energy, see the discussion in the introductory paragraphs of chapter~XI in~\cite{Groot-Book}. In the non-relativistic limit, this model reduces to the cross section corresponding to hard spheres of diameter $D$, in which case $\sigma_T = \pi D^2$~\cite{CercignaniKremer-Book}.

\item Israel particles~\cite{wI63,CercignaniKremer-Book} are characterized by a differential cross section of the form
\begin{equation}
\frac{d\sigma}{d\Omega}(g,\Theta) = \frac{m^2}{g\left( 1 + \frac{g^2}{4} \right)}\Gamma(\Theta),
\end{equation}
with an arbitrary, non-negative and bounded function $\Gamma(\Theta)$ of the scattering angle $\Theta$. Note that this model satisfies the bound~(\ref{Eq:HypotesisDiffCrossSection}) with $a = -1$.
\end{enumerate}

There are also interesting proposals that have been considered in the literature which interpolate between the two cases, for example (see section~6.6 in Ref.~\cite{CercignaniKremer-Book} and references therein)
\begin{equation}
\frac{d\sigma}{d\Omega}(g,\Theta) = \frac{m^2}{g}\sqrt{1 + \frac{g^2}{4}} \Gamma(\Theta),
\end{equation}
where in the classical limit $g\to 0$ one obtains the differential cross section associated with Maxwell particles, whereas in the ultra-relativistic limit $g\to \infty$ one obtains a constant total cross section, like in the ``hard-sphere" model.

For recent studies regarding the relativistic Boltzmann or Einstein-Boltzmann systems in a cosmological setting, see~\cite{kT03,nNeT06,hLaR13,hL13,dBgDuHmMjN16,dBgDuHmMjN16b,hLeN17,hLeN18,hLeNpT19}.

%%%%%%%%%%%%%%%%%%%%%%%%%%%%%%%%%%%%%%%%%%%%%
\section{H-theorem and equilibrium configurations}
\label{Sec:Equilibrium}
%%%%%%%%%%%%%%%%%%%%%%%%%%%%%%%%%%%%%%%%%%%%%

After having derived the relativistic Boltzmann equation, in this section we discuss one of its  most important consequence, namely Boltzmann's celebrated H-theorem which implies that any solution of the Boltzmann equation has a non-decreasing entropy.

\subsection{H-theorem}

In order to formulate the H-theorem, we introduce the \emph{entropy flux covector field} $S\in {\cal X}^*(M)$, defined as
\begin{equation}
\boxed{S_x(X) := -k_B\int\limits_{P_x^+(m)} f(x,p)\log(A f(x,p)) p(X) \dvol_x(p),
\qquad X\in T_x M,}
\end{equation}
where here and in the following, $k_B$ denotes Boltzmann's constant and $A$ an arbitrary constant with units $\ell^{2d}$, such that the argument of the logarithm is dimensionless.\footnote{Note that a rescaling $A\mapsto \lambda A$ of $A$ by a positive constant $\lambda$ induces the transformation $S_\mu\mapsto S_\mu - \log(\lambda) k_B J_\mu$ with $J_\mu$ the particle current density, see Eq.~(\ref{Eq:ParticleDensityCoord}). However, note that $\nabla^\mu S_\mu$ is invariant with respect to this rescaling since $J_\mu$ is divergence-free.} In terms of adapted local coordinates, this can also be written as
\begin{equation}
\boxed{S_\mu(x) := -k_B\int\limits_{P_x^+(m)} f(x,p)\log(A f(x,p)) p_\mu \dvol_x(p). }
\label{Eq:EntropyFlux}
\end{equation}
Next, we compute the divergence of this entropy flux. Using Theorem~\ref{Thm:sMomentsF} with $s=1$ and $-k_B f\log(A f)$ instead of $f$, we obtain
\begin{equation}
\nabla^\mu S_\mu(x) = -k_B\int\limits_{P_x^+(m)} L_F[ f\log(A f) ] \dvol_x(p)
 = -k_B\int\limits_{P_x^+(m)} \left[ 1 + \log(A f) \right] L_F[f] \dvol_x(p).
\end{equation}
The right-hand side can be rewritten by applying the identity~(\ref{Eq:IntBoltzmannIdentity}) to the function $\Psi(x,p) = 1 + \log(A f(x,p) )$, giving
\begin{equation}
\nabla^\mu S_\mu(x) =  \frac{k_B}{4}\int\limits_{C_x}
W_x(p_1+p_2\mapsto p_1^*+p_2^*) f(x,p_1) f(x,p_2)
\left[ \log(A^2 f(x,p_1)f(x,p_2)) - \log(A^2 f(x,p_1^*)f(x,p_2^*)) \right]\eta_{C_x}.
\end{equation}
Assuming the validity of microscopic reversibility (see Eq.~(\ref{Eq:WxIdSym2})), this equation can be rewritten in the following form:
\begin{equation}
\nabla^\mu S_\mu(x) = \frac{k_B}{8}\int\limits_{C_x}
W_x(p_1+p_2\mapsto p_1^*+p_2^*) \left[ f_1 f_2 - f_1^* f_2^* \right]
\left[ \log(A^2 f_1 f_2) - \log(A^2 f_1^* f_2^*) \right]\eta_{C_x},
\label{Eq:DivS}
\end{equation}
where for convenience we have abbreviated $f_1:=f(x,p_1)$, $f_2:=f(x,p_2)$, $f_1^*:=f(x,p_1^*)$, $f_2^*:=f(x,p_2^*)$. The product of the last two factors in the integrand on the right-hand side has the form
\begin{equation}
G(x,y) := (x - y)(\log x - \log y),\qquad x,y > 0,
\end{equation}
with $x = A^2 f_1 f_2$ and $y = A^2 f_1^* f_2^*$. Since $\log$ is a strictly increasing function, it follows that $G(x,y) > 0$ for all $x\neq y$ and that $G(x,y) = 0$ if and only if $x = y$. From this observation, we obtain the H-theorem:
\begin{equation}
\boxed{\nabla^\mu S_\mu(x) \geq 0,}
\label{Eq:HTheorem}
\end{equation}
with the equality if and only if $W_x = 0$ or $W_x > 0$ and $f_1 f_2 = f_1^* f_2^*$ at the point $x$. As an application of this important inequality, we consider an asymptotically flat, globally hyperbolic spacetime $(M,g)$ with two Cauchy surfaces $C_2$ and $C_1$ whose future-directed normal vector field is denoted by $\nu$. Assuming that $f$ decays sufficiently rapidly at infinity and that $C_2$ lies to the future of $C_1$, the inequality~(\ref{Eq:HTheorem}) and Gauss' theorem imply that
\begin{equation}
\boxed{ S[C_2] \geq S[C_1],}
\label{Eq:SecondLaw}
\end{equation}
with
\begin{equation}
S[C_i] := -\int\limits_{C_i} S_\mu \nu^\mu \eta_{C_i},\qquad i=1,2,
\label{Eq:EntropyCS}
\end{equation}
the entropy contained in the Cauchy surface $C_i$. Eq.~(\ref{Eq:SecondLaw}) describes the second law of thermodynamics, according to which the total entropy is a non-decreasing function of time. Therefore, a solution of the relativistic Boltzmann equation~(\ref{Eq:Boltzmann}) on a globally hyperbolic spacetime $(M,g)$ foliated by Cauchy surfaces $C_t$ has an associated entropy $S[C_t]$ which grows in time $t$ until an equilibrium configuration is reached (if such a state exists at all!). In the following, we discuss and characterize such equilibrium configurations.

\subsection{Global equilibrium configurations}
\label{SubSec:GlobalEquilibrium}

Next, we discuss the equilibrium configurations, that is, the one-particle distribution functions $f(x,p)$ for which the equality holds for all $x\in M$ in Eq.~(\ref{Eq:HTheorem}). For the following, we assume that $W_x > 0$ is strictly positive for all $x\in M$. According to Eq.~(\ref{Eq:DivS}), the divergence of the entropy flux is zero if and only if $f_1 f_2 = f_1^* f_2^*$ for all binary elastic collisions, which is equivalent to the condition that $\log(A f(x,p))$ is a collision invariant. In view of Theorem~\ref{Thm:CollisionInvariants}, $f$ must have the form
\begin{equation}
f(x,p) = \alpha(x)\exp\left[ p(\beta_x) \right] = \alpha(x)\exp\left[ \beta^\mu(x) p_\mu \right],
\label{Eq:LocalEquilibrium}
\end{equation}
with $\alpha\in {\cal F}(M)$ a positive function and $\beta\in {\cal X}(M)$ a vector field on $M$. This is precisely of the form of the example studied in subsection~\ref{SubSec:Juttner}, where it was noted that the corresponding spacetime observables describe a perfect fluid characterized by a $(d+1)$-velocity proportional to $\beta^\mu$. Note that in order for the observables to be well-defined, the vector field $\beta$ is required to be everywhere future-directed timelike.

Clearly, any distribution function of the form~(\ref{Eq:LocalEquilibrium}) cancels the collision term $C_W[f,f]$ on the right-hand side of the Boltzmann equation~(\ref{Eq:Boltzmann}). Hence, in order to be a solution of the Boltzmann equation, $f$ must also satisfy the condition $L_F[f] = 0$. Using Eq.~(\ref{Eq:LFpbeta}) this yields
\begin{equation}
0 = L_F[\log(A f)] = p_\mu p_\nu\nabla^\mu \beta^\nu(x)
 + p_\nu\left[ \nabla^\nu\log(A\alpha(x)) + q\beta^\mu(x) F_\mu{}^\nu(x) \right]
\label{Eq:TransportZero}
\end{equation}
for all $(x,p)\in \Gamma_m^+$. This implies the following two conditions for $\alpha$ and $\beta$:
\begin{equation}
\nabla^{(\mu} \beta^{\nu)} = 0,\qquad
\nabla^\nu\log(A\alpha(x)) + q\beta^\mu(x) F_\mu{}^\nu(x) = 0.
\label{Eq:TransportConditions}
\end{equation}
The first condition means that $\pounds_\beta g = 0$, that is the timelike vector field $\beta\in {\cal X}(M)$ must be a Killing vector field. Using the notation of differential forms the second condition can be rewritten as
\begin{equation}
d\log(A\alpha) + q i_\beta F = 0.
\label{Eq:EquilibriumFirstCondition}
\end{equation}
Applying the exterior derivative $d$ on both sides of this equation, using Cartan's formula and the fact that $dF = 0$ one obtains
\begin{equation}
q\pounds_\beta F = 0.
\label{Eq:FStationary}
\end{equation}
Therefore, we arrive at the important conclusion that a global equilibrium configuration (that is, a distribution function satisfying the Boltzmann equation for which the entropy is conserved) exists if and only the fields $g$ and $q F$ are stationary, that is, they admit the same, globally-defined, timelike Killing vector field $\beta\in {\cal X}(M)$. In particular, this implies that on a generic globally hyperbolic spacetime, the entropy function~(\ref{Eq:EntropyCS}) increases strictly in time along any foliation by Cauchy surfaces.

If the conditions for stationarity on $g$ and $F$ are satisfied, we may decompose the timelike Killing vector field $\beta$ in its norm and its normalized part, according to Eq.~(\ref{Eq:IdealGas}):
\begin{equation}
\boxed{ \beta^\mu = \frac{1}{k_B T} u^\mu,\qquad
k_B T := [-g(\beta,\beta)]^{-1/2} = (-\beta^\mu\beta_\mu)^{-1/2}, }
\label{Eq:Temperature}
\end{equation}
with the vector field $u$ satisfying $g(u,u) = u^\mu u_\mu = -1$ and representing the $(d+1)$-velocity of the associated perfect fluid flow (see Eqs.~(\ref{Eq:PerfectFluid}) and (\ref{Eq:PerfectFluidnu})). Denoting by $E := -i_u F$ the electric field associated with observers which are comoving with the fluid flow, the condition~(\ref{Eq:FStationary}) implies $d(i_\beta F) = 0$ which in turn implies the local existence of an ``electric'' potential $\Phi$, such that $E/(k_B T) = -i_\beta F = -d\Phi$ and $\pounds_\beta\Phi = 0$. Provided this potential exists globally, Eq.~(\ref{Eq:EquilibriumFirstCondition}) can readily be integrated and one obtains\footnote{An alternative expression for the equilibrium distribution function can be given in terms of the canonical momentum $\Pi := p + q A$ (see Eq.~(\ref{Eq:Pi})), provided the electromagnetic field admits a global potential $A$ satisfying $\pounds_\beta A = 0$:
$$
f(x,p) = \alpha_0 e^{\beta^\mu\Pi_\mu},
$$
see for instance Ref.~\cite{Groot-Book}.}
\begin{equation}
\boxed{ f(x,p) = \alpha_0\exp \left[\frac{u^\mu(x) p_\mu}{k_B T(x)} - q\Phi(x) \right], }
\label{Eq:DistributionFunctionOfChargedGas}
\end{equation}
with $\alpha_0\geq 0$ a constant with units $\ell^{-2d}$. As follows from the spacetime observables computed in subsection~\ref{SubSec:Juttner} (see Eqs.~(\ref{Eq:PerfectFluid},\ref{Eq:PerfectFluidnu},\ref{Eq:PerfectFluidhp})), this distribution function describes a perfect fluid whose $(d+1)$-velocity is given by $u^\mu$ and whose pressure $P$ and particle density $n$ satisfy the ideal gas equation $P = n k_B T$.

We conclude this subsection with two remarks. The first remark is related to the definition of the temperature $T(x)$ in Eq.~(\ref{Eq:Temperature}), which has the nice property of yielding the ideal gas equation, as we have just discussed. However, there is a much deeper reason for accepting the definition~(\ref{Eq:Temperature}) to describe the temperature of the gas which is independent of the assumption that the gas is an ideal one. To discuss this point, we follow~\cite{wI63} and compute the entropy flux (see Eq.~(\ref{Eq:EntropyFlux})) for an arbitrary distribution function which is of the form~(\ref{Eq:LocalEquilibrium}). This yields
\begin{eqnarray}
S_\mu(x) &=& -k_B \int\limits_{P_x^+(m)} f(x,p) \log\left[A\alpha(x) e^{\beta(x)^\nu p_\nu} \right] p_\mu \dvol_x(p) \nonumber\\
&=& -k_B \log[A\alpha(x)] \int\limits_{P_x^+(m)} f(x,p) p_\mu \dvol_x(p)
 - k_B \beta(x)^\nu \int\limits_{P_x^+(m)} f(x,p) p_\nu p_\mu \dvol_x(p)
\nonumber\\
&=& -k_B \log[A\alpha(x)] J_\mu(x) - k_B \beta(x)^\nu T_{\mu\nu}(x),
\end{eqnarray}
where in the last step we have used Eqs.~(\ref{Eq:ParticleDensityCoord}) and (\ref{Eq:TsCoord}) with $s=2$. On the other hand, we have already computed the particle current density and the energy-momentum-stress tensor for a distribution function of the form~(\ref{Eq:LocalEquilibrium}) in subsection~\ref{SubSec:Juttner}, see Eq.~(\ref{Eq:PerfectFluid}). Using these expressions and the decomposition~(\ref{Eq:Temperature}) one obtains $S^\mu(x) = \mathcal{S}(x) u^\mu(x)$, with
\begin{equation}
\mathcal{S}(x) = -k_B\log[A\alpha(x)] n(x) + \frac{n(x) h(x) - P(x)}{T(x)}
\end{equation}
the entropy density measured by an observer who is comoving with the fluid flow, where here $n(x)$, $h(x)$ and $P(x)$ refer to the particle density, enthalpy per particle and pressure of the fluid, see Eqs.~(\ref{Eq:PerfectFluidnu},\ref{Eq:PerfectFluidhp}). The corresponding entropy per particle is
\begin{equation}
s(x) := \frac{\mathcal{S}(x)}{n(x)} = -k_B\left[ 1 + \log(A\alpha(x)) \right] + \frac{h(x)}{T(x)},
\label{Eq:EntropyPerParticle}
\end{equation}
where we have used the ideal gas equation $P = n k_B T$. By varying the parameters $\alpha$ and $T$, one finds
\begin{equation}
T ds = -k_B T \frac{d\alpha}{\alpha} + dh - h\frac{dT}{T}.
\end{equation}
Here, we can eliminate $\alpha$ by varying the expression~(\ref{Eq:PerfectFluidhp}) for the pressure, which yields
\begin{equation}
\frac{d\alpha}{\alpha} = \frac{dP}{P} - \frac{h}{k_B T} \frac{dT}{T},
\label{Eq:DifferentialTwo}
\end{equation}
and hence,
\begin{equation}
T ds = dh - k_B T\frac{dP}{P} = dh - \frac{dP}{n},
\label{Eq:FirstLawOfThermodynamics}
\end{equation}
where we have used again the ideal gas equation in the last step. We recognize in Eq.~(\ref{Eq:FirstLawOfThermodynamics}) the first law of thermodynamics, which reinforces the interpretation for $T$ describing the correct temperature of the system. We stress that this result holds for any distribution function of the form~(\ref{Eq:LocalEquilibrium}). In particular, it holds for the configurations in global equilibrium of the form~(\ref{Eq:DistributionFunctionOfChargedGas}). In this case, the entropy per particle is
\begin{equation}
\boxed{ s(x) = -k_B\left[ 1 + \log(A\alpha_0) - q\Phi(x) \right] + \frac{h(x)}{T(x)}. }
\end{equation}
Finally, we note that the temperature $T(x)$ of a gas described by a global equilibrium configuration on a stationary curved spacetime $(M,g)$ is not necessarily constant. If $\beta\in {\cal X}(M)$ is the timelike Killing vector field which is proportional to the flow's $(d+1)$-velocity $u$, it follows from Eq.~(\ref{Eq:Temperature}) that
\begin{equation}
\boxed{ \sqrt{-g(\beta,\beta)} T(x) = \textrm{const.} }
\end{equation}
This is known as the Tolman-Ehrenfest theorem, see~\cite{rT1930,rTeP1930}, and is a purely relativistic effect.

\subsection{Local equilibrium configurations}

In the last subsection we have discussed the global equilibrium configurations of the relativistic Boltzmann equation, and have seen that the existence of such configurations is very restrictive: it requires spacetime (and the electromagnetic field, if present) to be globally stationary. Therefore, a natural question is what occurs in spacetimes which are not stationary. Is it still possible to talk about configurations which are locally in equilibrium in some sense? One possibility is to define a local equilibrium configuration as one for which the distribution function has the form of Eq.~(\ref{Eq:LocalEquilibrium}) for an arbitrary timelike vector field $\beta\in {\cal X}(M)$ and function $\alpha\in {\cal F}(M)$. This is equivalent to the form~\cite{Groot-Book,Vereshchagin-Book,CercignaniKremer-Book}
\begin{equation}
\boxed{ f(x,p) = \frac{1}{A}\exp \left[\frac{u^\alpha(x) p_\alpha + \mu(x)}{k_B T(x)} \right], }
\label{Eq:LocalEquilibriumBis}
\end{equation}
where $A$ is a positive constant with units $\ell^{2d}$, $u\in {\cal X}(M)$ is a timelike vector field normalized such that $g(u,u) = -1$ and $T,\mu\in {\cal F}(M)$ are functions representing, respectively, the temperature $T > 0$ and Gibbs potential $\mu = h - T s$ per particle. By construction, this distribution function has associated to it a vanishing collision term. However, recall from the previous section that the relativistic Boltzmann equation~(\ref{Eq:Boltzmann}) is only satisfied if the conditions~(\ref{Eq:TransportConditions}) hold, which is never the case if spacetime fails to be stationary. Hence, can one still envisage a situation in which the distribution function~(\ref{Eq:LocalEquilibriumBis}) makes sense, at least in an approximate way?

To analyze this question we must first realize that a solution of the Boltzmann equation has two length scales associated to it. First, there is a macroscopic characteristic length $l_{\text{ms}}$, defined as a typical length scale over which the spacetime observables (particle current density $J_\mu$, energy-momentum-stress tensor $T_{\mu\nu}$ etc.) vary.\footnote{For example, one could define $l_{\text{ms}}^{-2}$ as the maximum value of $g^{-1}(D n,D n)/n^2$ with $n := \sqrt{-J^\mu J_\mu}$ the invariant particle density and $D n := dn + (\pounds_u n)u$ the spatial gradient of $n$ orthogonal to the velocity vector $u := J/n$. Instead of $n$ one could also consider other scalars constructed from $J_\mu$ or $T_{\mu\nu}$ in the previous definition.} The second important length scale is of microscopic nature and consists of the \emph{mean free path} $\ell_{\text{mfp}}$, that is the average distance travelled by a particle between successive collisions. Locally, it is defined through the relation
\begin{equation}
\boxed{\sigma_T\ell_{\text{mfp}} n = 1, }
\label{Eq:ImportantRelation}
\end{equation}
with
\begin{equation}
\sigma_T := \int\limits_{S^{d-1}} \frac{d\sigma}{d\Omega} d\Omega
\label{Eq:TotalCrossSection}
\end{equation}
the total cross section and $n$ the particle density. (Recall that the differential cross section $d\sigma/d\Omega$ and hence also $\sigma_T$ have units of $\ell^{d-1}$ while $n$ has units of $\ell^{-d}$.)

Based on the two length scales $l_{\text{ms}}$ and $\ell_{\text{mfp}}$, we can cast the relativistic Boltzmann equation~(\ref{Eq:Boltzmann}) in terms of dimensionless variables by writing\footnote{We are thankful to Ana Laura Garc\'ia Perciante for explaining this point to us.}
\begin{equation}
x^\mu = l_{\text{ms}}\overline{x}^\mu,\qquad p_\mu = m\overline{p}_\mu,\qquad
q F_{\mu\nu} = \frac{m}{l_{\text{ms}}} \overline{q}\overline{F}_{\mu\nu},\qquad
f = \frac{1}{l_{\text{ms}}^d m^d} \overline{f},\qquad
\frac{d\sigma}{d\Omega} 
 = \frac{l_{\text{ms}}^d}{\ell_{\text{mfp}}}\frac{d\overline{\sigma}}{d\Omega},
\end{equation}
where all the quantities with a bar are dimensionless. In terms of these quantities, one obtains the dimensionless (or rescaled) relativistic Boltzmann equation
\begin{equation}
\boxed{ \overline{L_F}[\overline{f}] =
 \frac{1}{\text{Kn}}\overline{C}_W[\overline{f},\overline{f}], \qquad
 \text{Kn} := \frac{\ell_{\text{mfp}}}{l_{\text{ms}}}.}
\label{Eq:DimensionlessBoltzmann}
\end{equation}
The quantity $\text{Kn}$ representing the ratio between the mean free path and the macroscopic scale is called the Knudsen number, see Refs.~\cite{mK1909a} and~\cite{wS1986} for the original reference and a historical account.

After these remarks, we return to the question regarding the interpretation of the distribution function~(\ref{Eq:LocalEquilibriumBis}) as describing local equilibrium. Indeed, if $\text{Kn} \ll 1$ is small, we see from Eq.~(\ref{Eq:DimensionlessBoltzmann}) that the Boltzmann equation is dominated by the collision term which vanishes precisely for the distribution functions of the form~(\ref{Eq:LocalEquilibriumBis}). In this limit, which is also called the \emph{hydrodynamic limit} one can try to formally expand the distribution function according to
\begin{equation}
\overline{f} = \overline{f}^{(0)} + \text{Kn}\overline{f}^{(1)} + \text{Kn}^2\overline{f}^{(2)} 
+ \cdots,
\end{equation}
with $\overline{f}^{(0)}$ of the form~(\ref{Eq:LocalEquilibriumBis}) and $\overline{f}^{(1)}$, $\overline{f}^{(2)}$ correction terms, which are determined by the integral equations
\begin{eqnarray}
\overline{C}_W[\overline{f}^{(0)},\overline{f}^{(1)}] &=& \frac{1}{2}\overline{L_F}[\overline{f}^{(0)}],\\ 
\overline{C}_W[\overline{f}^{(0)},\overline{f}^{(2)}] &=&
 -\frac{1}{2} \overline{C}_W[\overline{f}^{(1)},\overline{f}^{(1)}] + \frac{1}{2} \overline{L_F}[\overline{f}^{(1)}].
\end{eqnarray}
This approach lies at the base of the Hilbert expansion and Chapman-Enskog methods, see Refs.~\cite{wI63,CercignaniKremer-Book,Groot-Book} for more details. In particular, see Refs.~\cite{aGaSlG08,aSaGlG09,aGmRo20} for recent applications of the Chapman-Enskog method and Ref.~\cite{yGqX21} for recent rigorous results on the validity of the Hilbert expansion for the relativistic Vlasov-Maxwell-Boltzmann system.

We end this section by remarking that another possible scenario, relevant in plasma physics, occurs when the Knudsen number is large, in which case the Boltzmann equation is dominated by the transport part. In this limit, called the Vlasov limit, one can formally expand
\begin{equation}
\overline{f} = \overline{f}^{(0)}+\frac{1}{\text{Kn}}\overline{f}^{(1)}+\frac{1}{\text{Kn}^2}\overline{f}^{(2)}+\cdots,
\end{equation}
with $\overline{f}^{(0)}$ satisfying the collisionless Boltzmann equation $\overline{L_F}[\overline{f}^{(0)}] = 0$ and where the correction terms $\overline{f}^{(1)}$, $\overline{f}^{(2)}$ are determined by the differential equations
\begin{eqnarray}
\overline{L}_F[\overline{f}^{(1)}] &=& \overline{C}_W[\overline{f}^{(0)},\overline{f}^{(0)}],\\ 
\overline{L}_F[\overline{f}^{(2)}] &=& 2\overline{C}_W[\overline{f}^{(0)},\overline{f}^{(1)}].
\end{eqnarray}
For a review on mathematical results regarding the nonrelativistic Boltzmann equation, see~\cite{cV2002}.

%%%%%%%%%%%%%%%%%%%%%%%%%%%%%%%%%%%%%%%%%%%%%
\section{The method of moments}
\label{Sec:MomentMethod}
%%%%%%%%%%%%%%%%%%%%%%%%%%%%%%%%%%%%%%%%%%%%%

In this section we provide a brief outline regarding the method of moments. The main idea is to convert the Boltzmann equation~(\ref{Eq:BoltzmannEq}), which is an integro-differential equation for the one-particle distribution function $f$, into a system of partial differential equations for certain moment fields defined on the spacetime manifold $(M,g)$. For more details and applications we refer the reader to Refs.~\cite{Stewart-Book,CercignaniKremer-Book}, see also~\cite{Cercignani-Book69} for the Newtonian case.

Suppose $\Psi^{(A)}$, $A=0,1,2,\ldots$, is a family of (sufficiently well-behaved) functions on the future mass shell $\Gamma_m^+$. It is customary to choose polynomials in $p$, such that $\Psi^{(0)}(x,p) = 1$, $\Psi^{(1)}(x,p) = p_0$, $\Psi^{(2)}(x,p) = p_1$ etc., although for what follows below we shall not necessarily assume this particular form. We define the associated currents by
\begin{equation}
J^{(A)}_\mu(x) := \int\limits_{P_x^+(m)} \Psi^{(A)}(x,p) f(x,p) p_\mu \dvol_x(p),\qquad A = 0,1,2,\ldots
\label{Eq:MomentCurrents}
\end{equation}
Note that for the polynomial choice, $J^{(0)}_\mu = J_\mu$ agrees with the particle particle current density, see Eq.~(\ref{Eq:ParticleDensityCoord}), the $J^{(A)}_\mu$'s with $A = 1,2,\ldots,d$ yield the components of the energy-momentum-stress tensor $T_{\mu\nu}$ and for $A > d$ the $J^{(A)}_\mu$'s yield  the components of the symmetric tensor fields $T^{(s)}$ defined in Eq.~(\ref{Eq:Ts}) with $s > 2$.

Applying the identity~(\ref{Eq:DivergenceIdentityF}) to the case $s=1$ with $f$ replaced by $\Psi^{(A)} f$, one obtains
\begin{equation}
\nabla^\mu J^{(A)}_\mu(x) = \int\limits_{P_x^+(m)} L_F[\Psi^{(A)}](x,p) f(x,p) \dvol_x(p)
 + \int\limits_{P_x^+(m)} \Psi^{(A)}(x,p) L_F[f](x,p) \dvol_x(p).
\label{Eq:DivergenceLaws}
\end{equation}
The second term on the right-hand side can be re-expressed in terms of the collision integral using the identity~(\ref{Eq:IntBoltzmannIdentity}). Using Eq.~(\ref{Eq:RelationWDiffCross}) in order to write the result in terms of the differential cross section one obtains the family of divergence laws
\begin{equation}
\boxed{
\nabla^\mu J^{(A)}_\mu(x) = C_F^{(A)}(x) + C_{\text{gain}}^{(A)}(x) - C_{\text{loss}}^{(A)}(x),
\qquad A = 0,1,2,\ldots,
}
\label{Eq:BoltzmannMomentEq}
\end{equation}
where the source terms on the right-hand side consist of
\begin{equation}
C_F^{(A)}(x) := \int\limits_{P_x^+(m)} L_F[\Psi^{(A)}](x,p) f(x,p) \dvol_x(p)
\end{equation}
and the gain and loss collision terms
\begin{eqnarray}
C_{\text{gain}}^{(A)}(x) &=& \int\limits_{P_x^+(m)} \int\limits_{P_x^+(m)} \int\limits_{S^{d-1}}
m^2 g\sqrt{1 + \frac{g^2}{4}}\frac{d\sigma}{d\Omega}(g,\Theta) f(x,p_1) f(x,p_2) 
\Psi^{(A)}(x,p_1^*) d\Omega(\hat{q}^*) \dvol_x(p_1)\dvol_x(p_2),
\label{Eq:Cgain}\\
C_{\text{loss}}^{(A)}(x) &=& \int\limits_{P_x^+(m)} \int\limits_{P_x^+(m)} \int\limits_{S^{d-1}}
m^2 g\sqrt{1 + \frac{g^2}{4}}\frac{d\sigma}{d\Omega}(g,\Theta) f(x,p_1) f(x,p_2)
 \Psi^{(A)}(x,p_1) d\Omega(\hat{q}^*) \dvol_x(p_1)\dvol_x(p_2).
\label{Eq:Closs}
\end{eqnarray}
For the aforementioned polynomial choice, the first $n(n+3)/2$ of these equations are equivalent to
\begin{eqnarray}
\nabla^\mu J_\mu &=& 0,
\label{Eq:DivergenceEq0}\\
\nabla^\mu T_{\mu\nu} &=& q F_\nu{}^\mu J_\mu,
\label{Eq:DivergenceEq1}\\
\nabla^\mu S_{\mu\alpha\beta} &=& 2q F_{(\alpha}{}^\mu T_{\beta)\mu}
 + C^{\text{gain}}_{\alpha\beta} - C^{\text{loss}}_{\alpha\beta},
\label{Eq:DivergenceEq2}
\end{eqnarray}
where here $S_{\mu\alpha\beta} = T^{(3)}_{\mu\alpha\beta}$ and $C^{\text{gain,loss}}_{\alpha\beta}$ are given by the same expressions as in Eqs.~(\ref{Eq:Cgain},\ref{Eq:Closs}) with $\Psi^{(A)}(x,p)$ replaced by $p_\alpha p_\beta$. Note that both sides of Eq.~(\ref{Eq:DivergenceEq2}) are symmetric and trace-free in $\alpha\beta$ (by virtue of Eq.~(\ref{Eq:DivergenceEq0})), which implies that the latter gives rise to $n(n+1)/2 - 1$ independent equations.

To convert Eq.~(\ref{Eq:BoltzmannMomentEq}) into a closed system of equations one may proceed as follows: let $u\in {\cal X}(M)$ be a future-directed unit timelike vector field on $M$ and let $f^{(0)}\in {\cal F}(\Gamma_m^+)$ be a given, strictly positive (reference) distribution function. Consider for each $x\in M$ the Hilbert space ${\cal H}_x := L^2(P_x^+(m), f^{(0)}[-p(u)]\dvol_x(p) )$ of square-integrable functions on the future mass shell $P_x^+(m)$ with measure $f^{(0)}[-p(u)]\dvol_x(p)$ and associated scalar product
\begin{equation}
(f_1,f_2)_x := \int\limits_{P_x^+(m)} f_1(p) f_2(p) f^{(0)}(x,p) [-p(u)] \dvol_x(p),\qquad
f_1,f_2\in {\cal H}_x.
\end{equation}
Assume that for each $x\in M$, the functions $\Psi^{(A)}(x,\cdot)$, $A=0,1,2,\ldots$, form a linearly independent and complete set in ${\cal H}_x$, and consider the associated orthonormal basis $\hat{\Psi}^{(A)}(x,\cdot)$, $A=0,1,2,\ldots$, which is constructed from this family using the Gram-Schmidt procedure, such that
\begin{equation}
\hat{\Psi}^{(A)}(x,p) = \sum\limits_{B=0}^A c_{AB}(x)\Psi^{(B)}(x,p),\qquad A = 0,1,2,\ldots,
\end{equation}
for some coefficients $c_{AB}(x)$, which are uniquely determined by the orthonormality condition $(\hat{\Psi}^{(A)}(x,\cdot), \hat{\Psi}^{(B)}(x,\cdot))_x = \delta^{AB}$. It follows that any function $h\in {\cal H}_x$ can be expanded according to
\begin{equation}
h(p) = \sum\limits_{A=0}^\infty h_A\hat{\Psi}^{(A)}(x,p),
\end{equation}
with coefficients
\begin{equation}
h_A = (\hat{\Psi}^{(A)}(x,\cdot), h)_x = \sum\limits_{B=0}^A c_{AB}(x)
\int\limits_{P_x^+(m)} \Psi^{(B)}(x,p) h(p) f^{(0)}(x,p) [-p(u)] \dvol_x(p). 
\end{equation}
In particular for $h(p) = f(x,p)/f^{(0)}(x,p)$ one obtains
\begin{equation}
\boxed{
f(x,p) = f^{(0)}(x,p)\sum\limits_{A=0}^\infty h_A(x)\hat{\Psi}^{(A)}(x,p),\qquad
h_A(x) = \sum\limits_{B=0}^A c_{AB}(x)\Pi^{(B)}(x),
}
\label{Eq:fMomentExpansion}
\end{equation}
with the moments
\begin{equation}
\boxed{
\Pi^{(A)}(x) := -u^\mu J_\mu^{(A)} = \int\limits_{P_x^+(m)} \Psi^{(A)}(x,p) f(x,p) [-p(u)] \dvol_x(p),
\qquad A = 0,1,2,\ldots}
\label{Eq:DefMoments}
\end{equation}
Eqs.~(\ref{Eq:fMomentExpansion},\ref{Eq:DefMoments}) allow one to formally express all the quantities appearing in Eq.~(\ref{Eq:BoltzmannMomentEq}) in terms of the moments $\Pi^{(A)}(x)$, which yields an infinite set of divergence laws for these moments. By suitably truncating the series in Eq.~(\ref{Eq:fMomentExpansion}) and restricting oneself to a finite number $A = 0,1,2,...,N-1$ of divergence laws~(\ref{Eq:BoltzmannMomentEq}), one obtains a system of $N$ partial differential equations for a finite number of moments which can be analyzed and (hopefully!) approximates the infinite system in an appropriate sense. Of course, apart from the choice for the functions $\Psi^{(A)}$, the question is how to choose the future-directed unit timelike vector field $u$ and the reference distribution function $f^{(0)}$. When studying near-equilibrium gas configurations it makes sense to choose $f^{(0)}$ and $u$ as in Eq.~(\ref{Eq:LocalEquilibriumBis}) to represent a local equilibrium distribution function. A simple example is discussed in the next subsection and a few comments on more sophisticated examples are made in the subsequent one. An application of the method of moments to a cosmology setting is discussed in the next section.

\subsection{Local equilibrium fluids}

The simplest truncation corresponds to the system formed by the first $n+1$ divergence laws~(\ref{Eq:DivergenceEq0}) and (\ref{Eq:DivergenceEq1}), where one sets $f = f^{(0)}$ and $u$ as in the local equilibrium distribution function in~(\ref{Eq:LocalEquilibriumBis}). This function depends on the $n+1$ unknowns $T$, $\mu$ and $u$, and the associated current density and energy-momentum-stress tensor are given by the perfect fluid expressions~(\ref{Eq:PerfectFluid}) in which $n$, $h$ and $P$ are the functions of $(T,\mu)$ obtained from Eqs.~(\ref{Eq:PerfectFluidnu},\ref{Eq:PerfectFluidhp}) by substituting $\alpha = A^{-1}\exp(\mu/k_B T)$. According to Eq.~(\ref{Eq:EntropyPerParticle}), the entropy per particle is
\begin{equation}
s = -k_B + \frac{h - \mu}{T},
\end{equation}
and satisfies the first law of thermodynamics~(\ref{Eq:FirstLawOfThermodynamics}). Eqs.~(\ref{Eq:DivergenceEq0},\ref{Eq:DivergenceEq1}) yield the following system
\begin{eqnarray}
&& \nabla_u s = 0,
\label{Eq:EntropyConservation}\\
&& \nabla_u n = -\theta n,
\label{Eq:Continuity}\\
&& h\nabla_u u_\nu = -D_\nu h + T D_\nu s + q F_\nu{}^\beta u_\beta,
\label{Eq:Euler}
\end{eqnarray}
where here $\theta := \nabla_\mu u^\mu$ refers to the expansion of the fluid flow and $D_\mu := \nabla_\mu + u_\mu u^\nu\nabla_\nu$ is the gradient operator projected onto the space orthogonal to $u$. Together with Maxwell's equations $\nabla_{[\mu} F_{\alpha\beta]} = 0$ and $\nabla_\mu F^{\mu\nu} = -q n u^\nu$ this yields an evolution system for the fields $(s,n,u_\nu,F_{\alpha\beta})$ in which the function $h$ and the temperature $T$ can be obtained by inverting the map $(T,\mu)\mapsto (n,s)$ in order to express $T$ in terms of the particle density $n$ and entropy per particle $s$. Note that Eq.~(\ref{Eq:EntropyConservation}) together with the continuity equation~(\ref{Eq:Continuity}) imply entropy conservation $\nabla_\mu S^\mu = \nabla_\mu(s n u^\mu) = 0$, while Eq.~(\ref{Eq:Euler}) describes the relativistic Euler equation, the last term on the right-hand side expressing the Lorentz acceleration exerted by the electromagnetic field $F$ on the fluid. This system describes the equations of motion for an ideal charged relativistic fluid (without conductivity) in local thermodynamic equilibrium. Recall that (unless a global timelike Killing vector field exists) the solutions of these equations do not correspond to an exact solution of the Maxwell-Vlasov equation since the underlying distribution function is only a local equilibrium function.

\subsection{A few comments on non-equilibrium fluids}

A method for describing near-equilibrium configurations is based on the following representation:
\begin{equation}
f(x,p) = f^{(0)}(x,p)\left[ 1 + a^\alpha(x) p_\alpha + a^{\alpha\beta}(x) p_\alpha p_\beta \right],
\label{Eq:GradDistributionFunction}
\end{equation}
which expands the distribution function $f$ as a second-order polynomial in the moment covector $p$ about the local equilibrium distribution function $f^{(0)}$ of Eq.~(\ref{Eq:LocalEquilibriumBis}). Here, the coefficients $a^\alpha(x)$ and $a^{\alpha\beta}(x) = a^{\beta\alpha}(x)$ can be determined from the moments $J_\mu$ and $T_{\mu\nu}$ using the method described above or similar methods. The functions $T$, $\mu$ and $u$ parametrizing $f^{(0)}$ need to be determined by appropriate matching conditions which relate them to the corresponding quantities computed from $f$. One obtains the equations of motion from Eqs.~(\ref{Eq:DivergenceEq0},\ref{Eq:DivergenceEq1},\ref{Eq:DivergenceEq2}), neglecting the higher-order moment equations. In particular, Grad's method is based on a choice of the form~(\ref{Eq:GradDistributionFunction}) for the distribution function. For more details, see chapter~VII in~\cite{Groot-Book}, chapter~6 in~\cite{CercignaniKremer-Book} and Ref.~\cite{Stewart-Book}.

The method of moment plays an important role for the motivation of extended thermodynamics~\cite{MullerRuggeri-Book}, divergence-type fluids~\cite{iLiMtR86,sP87,rGlL90,gNoR95,oRgN97} and other related fluid theories~\cite{rGlL91}. In particular, divergence-type fluids (in the absence of the electromagnetic field) consider the system of the form~(\ref{Eq:DivergenceEq0},\ref{Eq:DivergenceEq1},\ref{Eq:DivergenceEq2}) augmented by the entropy law $\nabla_\mu S^\mu = \sigma\geq 0$, and assume that all the involved quantities depend only algebraically on $J_\mu$ and $T_{\mu\nu}$. One can show that such theories are characterized by a single scalar generating function depending on $n(n+3)/2$ variables, from which the moments $J_\mu$, $T_{\mu\nu}$ and $S_{\mu\alpha\beta}$, the entropy current $S_\mu$ and its divergence $\sigma$ can be constructed. By analyzing the properties of this generating function one can obtain theories for which the propagation is hyperbolic and causal. For a recent application to conformal dissipative fluids, see~\cite{lLoRmR18}.

%%%%%%%%%%%%%%%%%%%%%%%%%%%%%%%%%%%%%%%%%%%%%
\section{Application to homogeneous and isotropic universes}
\label{Sec:Application}
%%%%%%%%%%%%%%%%%%%%%%%%%%%%%%%%%%%%%%%%%%%%%

In this section we apply the method of moments to the Einstein-Boltzmann system of equations
\begin{eqnarray}
L[f] &=& C_W[f,f],
\label{Eq:BoltzmannFLRW}\\
G_{\mu\nu} + \Lambda g_{\mu\nu} &=& \kappa T_{\mu\nu},
\label{Eq:EinsteinFLRW}
\end{eqnarray}
where $\Lambda$ denotes the cosmological constant and $\kappa$ is the gravitational coupling constant. To simplify the problem, we restrict ourselves to a simple, uncharged gas configuration which is homogeneous and isotropic, such that the metric describes a Friedmann-Lema\^itre-Robertson-Walker (FLRW) spacetime and the electromagnetic field vanishes. Since we only consider one species of particles, the system we consider does not provide a realistic model for the large-scale description of our universe and only constitutes a toy model. However, as we will see, this simplified model already exhibits interesting phenomena. For recent studies regarding the properties of the solutions of this system, see also~\cite{kT03,nNeT06,hLaR13,hL13,dBgDuHmMjN16,dBgDuHmMjN16b,hLeN18,hLeNpT19}.

A FLRW spacetime manifold is of the form $M = (0,\infty)\times S_k$ with metric
\begin{equation}
g = -dt^2 + a(t)^2\frac{\delta_{ij} dx^i dx^j}{\left( 1 + \frac{k}{4}|\ve{x}|^2 \right)^2},
\qquad
|\ve{x}| := \sqrt{\delta_{ij} x^i x^j},
\label{Eq:FLRWMetric}
\end{equation}
where $a(t)$ is the scale factor and $S_k$ is either hyperbolic space $H^3$, Euclidean space $E^3$ or spherical space $S^3$ depending on the (normalized) value of the constant spatial curvature $k=-1$, $0$ or $1$. The metric~(\ref{Eq:FLRWMetric}) is spatially homogeneous and isotropic, and a natural orthonormal basis of vector fields is given by
\begin{equation}
e_{\hat{0}} := \frac{\partial}{\partial t},\quad
e_{\hat{1}} := \frac{1 + \frac{k}{4}|\ve{x}|^2}{a(t)}\frac{\partial}{\partial x^1},\quad
e_{\hat{2}} := \frac{1 + \frac{k}{4}|\ve{x}|^2}{a(t)}\frac{\partial}{\partial x^2},\quad
e_{\hat{3}} := \frac{1 + \frac{k}{4}|\ve{x}|^2}{a(t)}\frac{\partial}{\partial x^3}.
\end{equation}
Using the tools described in appendix~\ref{App:Symmetries} a systematic analysis~\cite{fAoStZ14} reveals that the most general spatially homogeneous and isotropic distribution function $f$ on a FLRW spacetime must be of the form
\begin{equation}
f(x,p) = F(t,\mathcal{C}),
\label{Eq:FLRWDF}
\end{equation}
for some sufficiently smooth function $F$ of the two variables $t$ and $\mathcal{C}$, where $\mathcal{C}$ is the following integral of motion:
\begin{equation}
\mathcal{C}(x,p) := a(t)|\ve{p}|,\qquad
|\ve{p}| := \sqrt{p_{\hat{1}}^2 + p_{\hat{2}}^2 + p_{\hat{3}}^2}.
\label{Eq:CDef}
\end{equation}
By noting that the the free-particle Hamiltonian defined in Eq.~(\ref{Eq:FreeParticleH}) can be written as
\begin{equation}
\mathcal{H}(x,p) = \frac{1}{2} \left(-p_t^2 + \frac{\mathcal{C}^2}{a(t)^2}\right),
\end{equation}
a straightforward calculation reveals that $\{ \mathcal{H}, \mathcal{C} \} = 0$ which implies that $L[\mathcal{C}] = 0$. As a consequence, the Boltzmann equation~(\ref{Eq:BoltzmannFLRW}) simplifies to\footnote{It follows from the results in appendix~\ref{App:Symmetries} that the homogeneity and isotropy conditions can be consistently imposed on the distribution function, provided the transition probability density $W$ is a function of the Mandelstam variables only. This will in fact be assumed further below.}
\begin{equation}
p^t\frac{\partial F}{\partial t} = C_W[f,f].
\label{Eq:LfIso}
\end{equation}
Einstein's field equations~(\ref{Eq:EinsteinFLRW}) for the FLRW model reduce to the well-known Friedmann equations (see, for instance~\cite{Carroll-Book})
\begin{eqnarray}
\frac{3}{a^2}\left( \frac{da}{dt} \right)^2 &=& -\frac{3k}{a^2} + \Lambda + 8\pi\varepsilon,
\label{Eq:Friedmann1}\\
\frac{3}{a}\frac{d^2 a}{dt^2} &=& \Lambda - 4\pi(\varepsilon + 3P),
\label{Eq:Friedmann2}
\end{eqnarray}
where the energy density $\varepsilon$ and pressure $P$ are given by
\begin{eqnarray}
\varepsilon(t) &=& \int\limits_{P_x^+(m)} f(x,p) [-p(u)]^2\dvol_x(p)
 = 4\pi\int\limits_0^\infty F(t,a(t) p) \sqrt{m^2 + p^2} p^2 dp,
\label{Eq:FLRWepsilon}\\
P(t) &=& \frac{1}{3}\int\limits_{P_x^+(m)} f(x,p) [g^{-1} + u\otimes u](p,p) \dvol_x(p)
 = \frac{4\pi}{3}\int\limits_0^\infty F(t,a(t) p) \frac{p^4 dp}{\sqrt{m^2 + p^2}},
\label{Eq:FLRWpressure}
\end{eqnarray}
with $u = e_{\hat{0}}$ the four-velocity of the isotropic observers. It follows from Boltzmann's equation~(\ref{Eq:BoltzmannFLRW}) and the vanishing of the electromagnetic field tensor that $T_{\mu\nu}$ is divergence-free (see Eq.~(\ref{Eq:DivTBoltzmann})), which yields the following equation:
\begin{equation}
\frac{d\varepsilon}{dt} + 3H (\varepsilon + P) = 0,
\label{Eq:FLRWEuler}
\end{equation}
with the Hubble parameter $H(t) := a(t)^{-1} da(t)/dt$. Hence, the spatially homogeneous and isotropic Einstein-Boltzmann system consists of the evolution equation~(\ref{Eq:LfIso}) for the function $F(t,\mathcal{C})$ which is coupled to the evolution equation~(\ref{Eq:Friedmann2}) for the scale factor $a(t)$, where $\varepsilon$ and $P$ are given by~(\ref{Eq:FLRWepsilon},\ref{Eq:FLRWpressure}) and Eq.~(\ref{Eq:Friedmann1}) acts as a constraint. The collisionless case, which corresponds to $C_W = 0$, has been analyzed in Ref.~\cite{fAoStZ14} and references therein. In the absence of collisions the qualitative features of the evolution of the universe can be easily understood. As discussed in~\cite{fAoStZ14}, in the early epoch ($a(t)\to 0$) the kinetic gas behaves as a gas of massless particles with equation of state $\displaystyle P(t) \approx \frac{1}{3}\varepsilon(t)$, while at the late epoch ($a(t)\to \infty$) the energy density dominates the pressure and the gas behaves more and more like dust.

Below, we shall use the method of moments (without truncation) described in the previous section in order to reformulate Eq.~(\ref{Eq:LfIso}) in terms of an evolution system for suitable moments and analyze the behavior of the solutions in the early and late epochs. One interesting question we would like to address in this section is whether or not the gas settles down to a local equilibrium configuration, see Eq.~(\ref{Eq:LocalEquilibriumBis}). Since the gas is isotropic and spatially homogeneous, its four-velocity $u$ coincides with the four-velocity of the isotropic observers, and the local equilibrium distribution function must have the form
\begin{equation}
f^{(0)}(x,p) = \frac{1}{A}\exp\left[ \frac{p_t + \mu(t)}{k_B T(t)} \right]
 = \frac{1}{A}\exp\left[ -\frac{\sqrt{m^2 + |\ve{p}|^2} - \mu(t)}{k_B T(t)} \right],
\label{Eq:FLRWLocalEquilibrium}
\end{equation}
depending on the functions $T > 0$ and $\mu$ of $t$. As discussed in section~\ref{SubSec:GlobalEquilibrium}, there is no \emph{global} equilibrium, since the FLRW metric~(\ref{Eq:FLRWMetric}) does not possess any timelike Killing vector fields unless the scale factor $a(t)$ is constant. However, one might still ask whether or not the distribution approaches a \emph{local} equilibrium function of the form~(\ref{Eq:FLRWLocalEquilibrium}). This question will be analyzed towards the end of this section in the limits $a(t)\to 0$ and $a(t)\to \infty$.

\subsection{Recasting the problem in terms of the moments}

As mentioned above, we use the method of moments outlined in section~\ref{Sec:MomentMethod} in order to rewrite the Boltzmann equation~(\ref{Eq:LfIso}) as an infinite system of evolution equation for the moments
\begin{equation}
\Pi_s(t) = \int\limits_{P_x^+(m)} \Psi_s(x,p) f(x,p) [-p(u)] \dvol_x(p)
 = 4\pi\int\limits_0^\infty \psi_s(t,a(t) p) F(t,a(t) p) p^2 dp,
\label{Eq:FLRWDefMoments}
\end{equation}
see Eq.~(\ref{Eq:DefMoments}), where for convenience we shall use the index $s$ instead of the superscript $(A)$ in order to label the moments. We find it convenient to choose the following basis functions
\begin{equation}
\Psi_s(x,p)  = \psi_s(t,\mathcal{C}) 
 = \left. \left( \sqrt{m^2 + |\ve{p}|^2} - m \right)^s \right|_{|\ve{p}| = \frac{\mathcal{C}}{a(t)}},
 \qquad s = 0,1,2,\ldots
\end{equation}
Note that $\psi_s(t,\mathcal{C}) \to |\ve{p}|^s$ in the ultrarelativistic limit, while $\psi_s(t,\mathcal{C}) \to \left( \frac{|\ve{p}|^2}{2m} \right)^s$ in the nonrelativistic limit. Further, note that $\Pi_0 = n$ is the mean particle density and $\Pi_1 = \varepsilon - m n$ the internal energy density of the gas configuration. Using the fact that
\begin{equation}
L[\Psi_s] = p^t\frac{\partial}{\partial t}\psi_s 
 = -s H\left( \sqrt{m^2 + |\ve{p}|^2} + m \right)\psi_s,
\end{equation}
and that (by isotropy) the currents defined in Eq.~(\ref{Eq:MomentCurrents}) are equal to $u_\mu\Pi_s(t)$, the divergence laws~(\ref{Eq:DivergenceLaws}) yield
\begin{equation}
\frac{d}{dt} \Pi_s(t) +  (3 + s)H(t)\Pi_s(t) + 4\pi s H(t)\int\limits_0^\infty \psi_s(t,a(t)p) F(t,a(t)p) 
\frac{m p^2 dp}{\sqrt{m^2 + p^2}}
 = C_{\text{gain}}^{(s)}(t) - C_{\text{loss}}^{(s)}(t),\qquad s = 0,1,2,\ldots,
\label{Eq:MomentsEvolution}
\end{equation}
where the gain and loss terms can be written as
\begin{equation}
C_{\text{gain,loss}}^{(s)}(t) = 16\pi^2
\int\limits_0^\infty \int\limits_0^\infty K_{\text{gain,loss}}^{(s)}(p_1,p_2)
F(t,a(t)p_1) F(t,a(t) p_2) p_1^2 p_2^2 dp_1 dp_2,
\label{Eq:Cgainloss}
\end{equation}
with the kernels
\begin{eqnarray}
K_{\text{gain}}^{(s)}(p_1,p_2) &:=&
\frac{1}{2\sqrt{m^2 + p_1^2}\sqrt{m^2 + p_2^2}}
\int\limits_0^\pi d\vartheta\sin\vartheta m^2 g\sqrt{1 + \frac{g^2}{4}}
\int\limits_{S^2} d\Omega(\ve{\hat q}^*) \frac{d\sigma}{d\Omega}(g,\Theta)
\psi_s(t,a(t)|\ve{p}_1^*|),
\label{Eq:GainKernel}\\
K_{\text{loss}}^{(s)}(p_1,p_2) &:=&
\frac{1}{2\sqrt{m^2 + p_1^2}\sqrt{m^2 + p_2^2}}
\int\limits_0^\pi d\vartheta\sin\vartheta m^2 g\sqrt{1 + \frac{g^2}{4}}
\int\limits_{S^2} d\Omega(\ve{\hat q}^*) \frac{d\sigma}{d\Omega}(g,\Theta) \psi_s(t,a(t)p_1).
\label{Eq:LossKernel}
\end{eqnarray}
Herein, $\vartheta$ denotes the angle between $\ve{p}_1$ and $\ve{p}_2$ and $g$ is the function of $p_1$, $p_2$ and $\vartheta$ determined by Eq.~(\ref{Eq:g}) and similarly, $|\ve{p}_1^*|$ is the function of $p_1$, $p_2$, $\vartheta$ and $\ve{\hat q}^*$ which can be determined by Eq.~(\ref{Eq:Cp1*Vec}) or the zero component of Eq.~(\ref{Eq:Cp1BisBis}) and the definition of $p^{cm}$ in Eq.~(\ref{Eq:pCM}). It is worthwhile observing that for $s = 0$ and $s = 1$ the gain and loss terms cancel each other since in these cases $\psi_s$ are collision invariants. For $s = 0$, Eq.~(\ref{Eq:MomentsEvolution}) reduces to the continuity equation
\begin{equation}
\frac{dn}{dt} + 3H n = 0,
\label{Eq:FLRWContinuity}
\end{equation}
while for $s = 1$, Eqs.~(\ref{Eq:MomentsEvolution},\ref{Eq:FLRWContinuity}) yield Eq.~(\ref{Eq:FLRWEuler}). In order to obtain a closed system, one needs to express the function $F$ appearing in Eqs.~(\ref{Eq:MomentsEvolution},\ref{Eq:Cgainloss}) and the expression for the pressure in Eq.~(\ref{Eq:FLRWpressure}) in terms of the moments $\Pi_s$. This can be achieved, in principle, using the method described in the previous section, where in the present case it is natural to choose $f^{(0)}$ to be a local equilibrium distribution function of the form~(\ref{Eq:FLRWLocalEquilibrium}), that is $f^{(0)}(x,p) = F^{(0)}(t,\mathcal{C})$ with
\begin{equation}
F^{(0)}(t,\mathcal{C}) = \lambda(t) e^{-\frac{\psi_1(t,\mathcal{C})}{k_B T(t)}},\qquad
\lambda(t) = \frac{1}{A} e^{\frac{\mu(t) - m}{k_B T(t)}}.
\label{Eq:F0Def}
\end{equation}
The associated moments can be computed from (cf. section~\ref{SubSec:Juttner})
\begin{equation}
\Pi_s^{(0)}(t) = \lambda(t)
\left. \left( -\frac{d}{d\beta} \right)^s \zeta(\beta) \right|_{\beta = \frac{1}{k_B T(t)}},
\label{Eq:FLRWMoments0}
\end{equation}
with the generating function
\begin{equation}
\zeta(\beta) := \int\limits_{P_x^+(m)} e^{-\beta\psi_1(t,\mathcal{C})} [-p(u)] \dvol_x(p)
 = \frac{4\pi m^2}{\beta} e^{m\beta} K_2(m\beta).
\label{Eq:FLRWGenerator}
\end{equation}
The functions $\lambda(t)$ and $T(t)$ in Eq.~(\ref{Eq:F0Def}) can be fixed by imposing the matching conditions $\Pi_s^{(0)} = \Pi_s$ for the first two moments $s=0$ and $s=1$, which is equivalent to the requirement that $n^{(0)} = n$ and $\varepsilon^{(0)} = \varepsilon$.

In the following subsections, after deriving more explicit expressions for the collision kernels~(\ref{Eq:GainKernel},\ref{Eq:LossKernel}) in the case of a constant cross section, we shall analyze the moment equations in the early and late epochs, assuming a differential cross section with a ``hard-sphere" behavior at high energies while describing Maxwell particles at low energies, such that
\begin{equation}
\lim\limits_{g\to\infty} \frac{d\sigma}{d\Omega}(g,\Theta) = \frac{\sigma_T}{4\pi},\qquad
\lim\limits_{g\to 0} g\frac{d\sigma}{d\Omega}(g,\Theta) 
 = \frac{\sigma_0}{4\pi}\Gamma(\Theta),
\label{Eq:FLRWDiffCrossSec}
\end{equation}
with constants $\sigma_T,\sigma_0 > 0$ and $\Gamma(\Theta)$ a positive function of the scattering angle $\Theta$ satisfying $\Gamma(\pi - \Theta) = \Gamma(\Theta)$ and the normalization condition
\begin{equation}
\int\limits_0^\pi \Gamma(\Theta) \sin\Theta d\Theta = 2.
\label{Eq:GammaNormalization}
\end{equation}

\subsection{More explicit expressions for the collision kernels for a constant cross section}

For a constant cross section, such that $\displaystyle \frac{d\sigma}{d\Omega} = \frac{\sigma_T}{4\pi}$ with constant total cross section $\sigma_T > 0$, the collision kernels~(\ref{Eq:GainKernel},\ref{Eq:LossKernel}) can be simplified. We start with the computation of the loss term which is simpler since the integral over $\ve{\hat q}^*$ simply gives a factor $4\pi$. Using Eq.~(\ref{Eq:g}) one finds
\begin{equation}
g = \frac{\sqrt{2}}{m}\sqrt{\ell - m^2},\qquad
\sqrt{1 + \frac{g^2}{4}} = \frac{1}{\sqrt{2} m}\sqrt{ \ell + m^2},\qquad
 \ell :=\sqrt{m^2 + p_1^2}\sqrt{m^2 + p_2^2} - p_1 p_2\cos\vartheta,
\label{Eq:gsg}
\end{equation}
such that
\begin{eqnarray}
K_{\text{loss}}^{(s)}(p_1,p_2) 
 &=& \frac{\sigma_T}{2}\frac{(E_1 - m)^s}{E_1 E_2}
\int\limits_0^\pi d\vartheta\sin\vartheta \sqrt{\ell^2 - m^4}
\nonumber\\
 &=& \frac{\sigma_T (E_1 - m)^s}{4p_1 p_2 E_1 E_2}
 \left[ L_+\sqrt{L_+^2 - m^4} - L_-\sqrt{L_-^2 - m^4}
 - m^4\log\left( \frac{L_+ + \sqrt{L_+^2 - m^4}}{L_- + \sqrt{L_-^2 - m^4}} \right) \right],
\label{Eq:Kloss}
\end{eqnarray}
where we have set $L_\pm := E_1 E_2 \pm p_1 p_2$ and abbreviated $E_j := \sqrt{m^2 + p_j^2}$ for $j=1,2$. When $m > 0$ it is convenient to reparametrize $p_1 = m\sinh\chi_1$, $p_2 = m\sinh\chi_2$ in terms of the hyperbolic angles $\chi_1$ and $\chi_2$, such that $L_\pm = m^2\cosh(\chi_1\pm \chi_2)$, which yields
\begin{equation}
K_{\text{loss}}^{(s)}(p_1,p_2) = \frac{\sigma_T \left[ 2m\sinh\left( \frac{\chi_1}{2} \right) \right]^s}{2\sinh(2\chi_1)\sinh(2\chi_2)}
\left[ \sinh(2\chi_1+2\chi_2) - \sinh|2\chi_1-2\chi_2| - (2\chi_1+2\chi_2) + |2\chi_1-2\chi_2| \right].
\end{equation}

The gain term is more difficult to deal with, since the integral over $\ve{\hat q}^*$ involves the quantity $|\ve{p}_1^*|$ which depends on $\ve{\hat q}^*$. From the zero component of Eq.~(\ref{Eq:Cp1BisBis}) one finds
\begin{equation}
\sqrt{m^2 + |\ve{p}_1^*|^2} = \sqrt{m^2 + |\ve{p}^{cm}|^2}\sqrt{1 + \frac{1}{4} g^2}
 - \frac{g}{2}\ve{p}^{cm}\cdot\ve{\hat q}^* = A - B\cos\phi,
\label{Eq:sqrtmmp1*}
\end{equation}
where $\phi$ denotes the angle between $\ve{p}^{cm}$ and $\ve{\hat q}^*$ and where
\begin{equation}
A := \frac{E_1 + E_2}{2},\qquad
B := \frac{1}{2}\sqrt{\frac{\ell-m^2}{\ell+m^2}} \sqrt{p_1^2 + p_2^2 + 2p_1 p_2\cos\vartheta}.
\end{equation}
In deriving these expressions, Eqs.~(\ref{Eq:pCM}) and (\ref{Eq:gsg}) have been used. By means of the variable substitutions $y := \sqrt{p_1^2 + p_2^2 + 2p_1 p_2\cos\vartheta}/(2A)$ and $x := 1 - (A - m)^{-1} B\cos\phi$, one finds
\begin{equation}
K_{\text{gain}}^{(s)}(p_1,p_2) 
 = \frac{2\sigma_T A^3 (A - m)^{s+1}}{p_1 p_2 E_1 E_2}
\int\limits_{y_-}^{y_+} dy(1-y^2)\int\limits_{1 - z(y)}^{1 + z(y)} dx x^s,
\end{equation}
with $y_- := |p_2 - p_1|/(2A)$, $y_+ := (p_1 + p_2)/(2A)$ and
\begin{equation}
z(y) := \frac{A y}{A-m} \sqrt{1 - \frac{m^2}{A^2}\frac{1}{1-y^2}}.
\end{equation}
The inner integral can be computed explicitly, and one obtains
\begin{equation}
K_{\text{gain}}^{(s)}(p_1,p_2) 
 = \frac{2\sigma_T A^3 (A - m)^{s+1}}{p_1 p_2 E_1 E_2}
\int\limits_{y_-}^{y_+} (1-y^2) \frac{[1 + z(y)]^{s+1} - [1 - z(y)]^{s+1}}{s+1} dy,\qquad
s = 0,1,2,\ldots
\label{Eq:Kgain}
\end{equation}
The remaining integral can be computed explicitly in the limit $m = 0$ as we show next.

\subsection{The massless case}

When $m = 0$, Eq.~(\ref{Eq:Kloss}) reduces to
\begin{equation}
K_{\text{loss}}^{(s)}(p_1,p_2) = \sigma_T p_1^s.
\end{equation}
Furthermore, taking into account that $A = (p_1 + p_2)/2$, Eq.~(\ref{Eq:Kgain}) simplifies to
\begin{equation}
K_{\text{gain}}^{(s)}(p_1,p_2) 
 = \frac{2\sigma_T A^{s+4}}{p_1^2 p_2^2}
\int\limits_{\kappa}^1 dy (1-y^2)\frac{(1+y)^{s+1} - (1-y)^{s+1} }{s+1},
\end{equation}
with $\kappa := |p_2 - p_1|/(p_1 + p_2)$. The integral is elemental and yields
\begin{equation}
\int\limits_{\kappa}^1 dy (1-y^2)\frac{(1+y)^{s+1} - (1-y)^{s+1} }{(s+1)}
 = \frac{2^{s+4} - 2(s+4)\left[ (1+\kappa)^{s+3} + (1-\kappa)^{s+3} \right]
  + (s+3)\left[ (1+\kappa)^{s+4} + (1-\kappa)^{s+4}\right]}{(s+1)(s+3)(s+4)}.
\end{equation}
Combining this with the observation that $1 + \kappa = 2p_2/(p_1 + p_2)$ and $1 - \kappa = 2p_1/(p_1 + p_2)$ for $p_1\leq p_2$ and vice-versa for $p_1\geq p_2$, and using $A = (p_1+p_2)/2$ one finds
\begin{equation}
K_{\text{gain}}^{(s)}(p_1,p_2) 
 = \frac{2\sigma_T}{(s+1)(s+3)(s+4)}\frac{ (p_1 + p_2)^{s+4} - p_1^{s+4} - p_2^{s+4}
  - (s+4)\left[ p_1^{s+3} p_2 - p_2^{s+3} p_1 \right] }{p_1^2 p_2^2}.
\end{equation}
With the help of the binomial expansion this can be rewritten as
\begin{equation}
K_{\text{gain}}^{(s)}(p_1,p_2) 
 = 2\sigma_T\sum\limits_{r=0}^s \frac{s! (s+2)}{(s+2-r)! (r+2)!} p_1^{s-r} p_2^r.
\end{equation}

Therefore, in the massless case, Eq.~(\ref{Eq:MomentsEvolution}) yields
\begin{equation}
\frac{d}{dt} \Pi_s(t) + (s+3) H(t) \Pi_s(t)
 = 2\sigma_T\sum\limits_{r=0}^s \frac{s! (s+2)}{(s+2-r)! (r+2)!} \Pi_{s-r}(t) \Pi_{r}(t)
  - \sigma_T \Pi_{s}(t) \Pi_0(t),
\label{Eq:MomentsEvolutionMassless}
\end{equation}
for $s = 0,1,2,\ldots$, which agrees precisely with Eq.~(23) in Ref.~\cite{dBgDuHmMjN16b}. Note that the right-hand side is zero for $s=0$ and $s=1$, which reflects the fact that these moments correspond to collision invariants, as discussed previously. In Refs.~\cite{dBgDuHmMjN16,dBgDuHmMjN16b} these results were applied to the analysis for the propagation of a photon gas in an expanding FLRW universe. In the next subsection, we show that Eq.~(\ref{Eq:MomentsEvolutionMassless}) also holds approximately for a gas of massive particles in the early universe.

\subsection{The early epoch}
\label{Subsec:Early_epoch}

As we have already mentioned previously, in the limit $a(t)\to 0$ the mass term appearing on the right-hand sides Eqs.~(\ref{Eq:FLRWepsilon},\ref{Eq:FLRWpressure}) can be neglected and the gas behaves as a gas of massless particles. This means that most particles have a large momentum $|\ve{p}| = \mathcal{C}/a(t) \gg m$ (i.e. $\Pi_2(t) \gg m^2\Pi_0(t)$), and hence one can replace the collision kernels $K_{\text{gain,loss}}^{(j)}(p_1,p_2)$ with their corresponding expressions in the massless case computed in the previous subsection. Hence, in this limit one obtains again the evolution equation~(\ref{Eq:MomentsEvolutionMassless}) for the moments.

For the following, it is convenient to compare the moments' time evolution with those associated with the local equilibrium distribution function $F^{(0)}$ defined in Eq.~(\ref{Eq:F0Def}). From Eqs.~(\ref{Eq:FLRWMoments0},\ref{Eq:FLRWGenerator}) one obtains $\zeta(\beta) = 8\pi\beta^{-3}$ and hence
\begin{equation}
\Pi_s^{(0)}(t) = 4\pi(s+2)! \lambda(t) \left[ k_B T(t) \right]^{s+3},
\end{equation}
such that
\begin{equation}
n^{(0)}(t) = \Pi_0^{(0)}(t) = 8\pi\lambda(t)[ k_B T(t) ]^3,\qquad
\varepsilon^{(0)}(t) = \Pi_1^{(0)}(t) = 24\pi\lambda(t)[ k_B T(t) ]^4.
\label{Eq:ntepsilont0}
\end{equation}
On the other hand, it follows from the continuity equation~(\ref{Eq:FLRWContinuity}) and the Euler equation~(\ref{Eq:FLRWEuler}) with $P = \varepsilon/3$ that
\begin{equation}
n(t) = n_i\frac{a_i^3}{a(t)^3},\qquad
\varepsilon(t) = \varepsilon_i\frac{a_i^4}{a(t)^4},
\label{Eq:ntepsilont}
\end{equation}
where the index $i$ refers to the quantity evaluated at a given initial time $t = t_i > 0$. Imposing the matching conditions $n^{(0)}(t) = n(t)$ and $\varepsilon^{(0)}(t) = \varepsilon(t)$ for all times $t\geq t_i$ implies that $\lambda(t) = \lambda_0$ and that the temperature has the form $T(t) = T_0/a(t)$ with positive constants $\lambda_0,T_0 > 0$ satisfying
\begin{equation}
n_i = 8\pi\lambda_0 (k_B T_i)^3,\qquad
\varepsilon_i = 24\pi\lambda_0 (k_B T_i)^4 = 3n_i k_B T_i,
\label{Eq:niepsiloni}
\end{equation}
with the initial temperature $T_i := T_0/a_i$.

At this point it is worthwhile emphasizing the following important fact. The specific forms $\lambda(t) = \lambda_0$ and $T(t) = T_0/a(t)$ obtained from the matching conditions imply that the distribution function $F^{(0)}$ defined in Eq.~(\ref{Eq:F0Def}) describes a state in \emph{global} thermodynamic equilibrium. This can be seen by rewriting
\begin{equation}
F^{(0)}(t,\mathcal{C}) = \lambda_0 e^{-\frac{\mathcal{C}}{k_B T_0}}
 = \lambda_0 e^{\beta^\mu(t) p_\mu},\qquad
\beta^\mu(t)\frac{\partial}{\partial x^\mu} = \frac{1}{k_B T(t)}\frac{\partial}{\partial t}.
\label{Eq:F0DefMassless}
\end{equation}
If $\partial/\partial t$ was a Killing vector field and $T(t)$ was constant in time, it would follow from the considerations in subsection~\ref{SubSec:GlobalEquilibrium} that $F^{(0)}$ describes global equilibrium. When $da/dt \neq 0$ this is no longer true in the massive case; however, for massless particles one can show that $F^{(0)}$ does satisfy the full Boltzmann equation. This is a consequence of the fact that the vector field $\beta^\mu(t)$ defined in Eq.~(\ref{Eq:F0DefMassless}) is a \emph{conformal Killing vector field} of the FLRW spacetime, which means that $\nabla^{(\mu} \beta^{\nu)}$ is proportional to $g^{\mu\nu}$ and implies that Eq.~(\ref{Eq:TransportZero}) is still satisfied in the massless case where the momentum $p$ of the particle is a null covector.

For the following we introduce the normalized moments
\begin{equation}
M_s(t) := \frac{\Pi_s(t)}{\Pi_s^{(0)}(t)},\qquad s = 0,1,2,\ldots
\label{Eq:NormalizedMoments}
\end{equation}
The matching conditions~(\ref{Eq:niepsiloni}) imply that the first two normalized moments satisfy $M_0(t) = M_1(t) = 1$ for all $t > t_i$, while Eq.~(\ref{Eq:MomentsEvolutionMassless}) yields the following evolution equation for the remaining $M_s$:
\begin{equation}
l_{\text{mfp}}(t) \frac{d}{dt} M_s(t) + M_s(t) 
 = \frac{1}{s+1}\sum\limits_{r=0}^s M_{s-r}(t) M_r(t),\qquad
 s = 2,3,4,\ldots,
 \label{Eq:MomentsEvolutionMasslessBis}
\end{equation}
with $l_{\text{mfp}}(t) := [\sigma_T n(t)]^{-1}$ the mean free path at time $t$ (see Eq.~(\ref{Eq:ImportantRelation})). Introducing the new time coordinate
\begin{equation}
\tau(t) := \int\limits_{t_i}^t \frac{dt'}{l_{\text{mfp}}(t')},
\label{Eq:ReescalingTime}
\end{equation}
and the corresponding moments $\hat{M}_s(\tau) := M_s(t)$, Eq.~(\ref{Eq:MomentsEvolutionMasslessBis}) can be rewritten as~\cite{dBgDuHmMjN16,dBgDuHmMjN16b}
\begin{equation}
\frac{d}{d\tau} \hat{M}_s(\tau) + \omega_s\hat{M}_s(\tau)
 = \frac{1}{s+1}\sum\limits_{r=1}^{s-1} \hat{M}_{s-r}(\tau)\hat{M}_r(\tau),\qquad
\omega_s := \frac{s-1}{s+1},\qquad
s = 2,3,4,\ldots,
\label{Eq:MomentsEvolutionMasslessInTau}
\end{equation}
where the two terms corresponding to $r=0$ and $r=s$ in the sum on the right-hand side of Eq.~(\ref{Eq:MomentsEvolutionMasslessBis}) have been moved to the left-hand side of the equation. Remarkably, the scale factor $a(t)$ does not appear anymore in these equations, and hence the evolution equations for the moments completely decouple from the dynamics describing the expansion of the universe.\footnote{A massless collisionless gas propagating on a FLRW background clearly does not feel the scale factor $a(t)$, since in this case the gas particles follow null geodesics which remain invariant with respect to conformal transformations of the metric. Therefore, the evolution of such a gas is equivalent to its evolution on the time-independent conformal metric
$$
-d\hat{\tau}^2 + \frac{\delta_{ij} dx^i dx^j}{\left( 1 + \frac{k}{4}|\ve{x}|^2 \right)^2},
$$
with the conformal time $d\hat{\tau} = dt/a(t)$. In the collisional case, this would still be true if the mean free path scaled like $a(t)$ (or the total cross section like $a(t)^2$). However, since we have assumed $\sigma_T$ to be constant, $l_{\text{mfp}}(t)$ is proportional to $a(t)^3$ and the rescaled time $\tau$ defined in Eq.~(\ref{Eq:ReescalingTime}) differs from the conformal time $\hat{\tau}$.
}
For each $s\geq 2$, Eq.~(\ref{Eq:MomentsEvolutionMasslessInTau}) provides a linear ordinary differential equation for the normalized moment $\hat{M}_s$ where the source term on the right-hand side only depends on the moments $\hat{M}_1 = 1, \hat{M}_2,\ldots, \hat{M}_{s-1}$ with order smaller than $s$. As a consequence, the equations can be solved successively for $s=2,3,4,\ldots$, which yields~\cite{dBgDuHmMjN16b}
\begin{eqnarray}
\hat{M}_2(\tau) &=& \left[ \hat{M}_2(0) - 1 \right] e^{-\omega_2\tau} + 1,\\
&\vdots&\nonumber\\
\hat{M}_s(\tau) &=& \hat{M}_s(0) e^{-\omega_s\tau}
 + \frac{1}{s+1}\sum\limits_{r=1}^{s-1}
\int\limits_0^\tau \hat{M}_{s-r}(\tau')\hat{M}_r(\tau') e^{-\omega_s(\tau-\tau')} d\tau'.
\end{eqnarray}
It follows from these equations that for all $s\in \Natural_0$, $\hat{M}_s(\tau)\to 1$ as $\tau\to \infty$ which seems to indicate that the gas converges to the global equilibrium configuration $F^{(0)}$. However, this is not necessarily the case as we discuss now. For this, we need to recall that the rescaled time $\tau$ is related to the physical (cosmic) time $t$ through the relation~(\ref{Eq:ReescalingTime}) which depends on the inverse mean free path $l_{\text{mfp}}(t)^{-1} = \sigma_T n(t) = \sigma_T n_i a_i^3/a(t)^3$ at time $t$ which, in turn, depends on the scale factor $a(t)$. To determine $a(t)$, we solve the Friedmann equation~(\ref{Eq:Friedmann1}), where for simplicity we assume a spatially flat universe $(k=0)$ and neglect the cosmological constant ($\Lambda = 0$).\footnote{In fact, since $\varepsilon$ is proportional to $1/a(t)^4$, one can neglect the terms proportional to $k$ and $\Lambda$ in Eq.~(\ref{Eq:Friedmann1}) at sufficiently early times. For analytic solutions with $k\neq 0$ and $\Lambda = 0$ see, for instance, Section~2.4.3 in Ref.~\cite{Piatella-Book}.
}
Using Eq.~(\ref{Eq:ntepsilont}) one finds, assuming that $a(t) = 0$ at $t=0$,
\begin{equation}
a(t) = a_i\sqrt{\frac{t}{t_i}},\qquad
t_i = \sqrt{\frac{3}{32\pi\varepsilon_i}}.
\label{Eq:atRadiation}
\end{equation}
Using Eqs.~(\ref{Eq:ntepsilont},\ref{Eq:ReescalingTime}) one obtains from this
\begin{equation}
\tau(t) = \sigma_T\int\limits_{t_i}^t n(t')^3 dt' = \tau_\infty\left[ 1 - \sqrt{\frac{t_i}{t}} \right],\qquad
\tau_\infty := 2\sigma_T n_i t_i.
\label{Eq:tauRadiation}
\end{equation}
Therefore, $\tau$ cannot become arbitrarily large even as $t\to \infty$, and it is bounded from above by $\tau_\infty$. Consequently, the gas does not converge to an equilibrium configuration. However, if $\tau_\infty \gg 1$, the quantities $|\hat{M}_s(\tau) - 1|$ reduce to a tiny fraction of their initial values for large $t$, and in this sense the gas does reach a configuration close to equilibrium. Combining the condition $\tau_\infty \gg 1$ with Eqs.~(\ref{Eq:niepsiloni},\ref{Eq:atRadiation}) yields
\begin{equation}
8\pi k_B T_i \ll n_i\sigma_T^2.
\end{equation}
We note in passing that the Knudsen number defined in Eq.~(\ref{Eq:DimensionlessBoltzmann}), where we take $l_{\text{ms}} := 1/H = 2t$ to be the Hubble length, is equal to
\begin{equation}
\text{Kn}(t) = \frac{1}{2\sigma_T n(t) t} = \frac{1}{\tau_\infty}\frac{T_i}{T(t)}.
\end{equation}
Therefore, under the condition $\tau_\infty \gg 1$ the problem can in principle be treated using the Hilbert expansion, as long as $T_i/T(t)$ does not become too large.

Finally, we need to check the range of validity of the approximation made here, in which it was assumed that most particles have large momenta $|\ve{p}| \gg m$. In terms of the moments this condition can be characterized by $\Pi_2(t)\gg m^2\Pi_0(t)$ or $\Pi_1(t)\gg m\Pi_0(t)$. Using the fact that $\Pi_1(t) = \varepsilon(t)$ and $\Pi_0(t) = n(t)$ and Eqs.~(\ref{Eq:ntepsilont0},\ref{Eq:ntepsilont},\ref{Eq:atRadiation}) one obtains
\begin{equation}
\frac{m\Pi_0(t)}{\Pi_1(t)} = \frac{m}{3k_B T(t)} = \frac{m}{3k_B T_i}\sqrt{\frac{t}{t_i}} \ll 1.
\end{equation}
In particular for $t = t_i$ this requires $k_B T_i \gg m$. Using the Cauchy-Schwarz inequality it follows that $\Pi_1(t)^2\leq \Pi_0(t)\Pi_2(t)$ which shows that $\Pi_1(t)\gg m\Pi_0(t)$ automatically implies $\Pi_2(t)\gg m^2\Pi_0(t)$. Therefore, we conclude that for initial conditions satisfying
\begin{equation}
m c^2\ll k_B T_i \ll n_i\frac{c^4\sigma_T^2}{8\pi G_N},
\label{Eq:ConditionEarlyTime}
\end{equation}
where we have reintroduced the speed of light $c$ and Newton's constant $G_N$, the gas behaves as a photon gas for some time during which it reaches a near-equilibrium state. Since $T(t)$ scales like $1/a(t)$ and $n(t)$ like $1/a(t)^3$, the condition~(\ref{Eq:ConditionEarlyTime}) is clearly satisfied if $t_i$ is small enough.

It is also interesting to note that for a contracting universe, for which the scale factor $a(t)$ is given by Eq.~(\ref{Eq:atRadiation}) with the factor $t$ replaced by $-t$ (and $t$ running from $-t_i$ to $0$), one obtains $\tau(t) = \tau_\infty(\sqrt{-t_i/t} - 1)$ for $-t_i < t < 0$ instead of Eq.~(\ref{Eq:tauRadiation}), and $\tau(t)\to \infty$ as $t\to 0$, which means that the gas configuration does have ``enough time" to reach thermal equilibrium before the universe collapses.

\subsection{The late epoch}

In this subsection we analyze the asymptotic limit $a(t) \to \infty$ corresponding to the late epoch. In this limit, most of the particles are expected to have small momentum $\displaystyle |\ve{p}|=\mathcal{C}/a(t) \ll m$, which implies that it is sufficient to compute the collision kernels $K^{(s)}_{\text{gain},\text{loss}}(p_1,p_2)$ in Eqs.~(\ref{Eq:GainKernel}) and~(\ref{Eq:LossKernel}) using the nonrelativistic limit $p_1,p_2\ll m$.

Since
\begin{equation}
\ell = m^2 +\frac{1}{2} (p_1^2 + p_2^2 -2 p_1 p_2 \cos\vartheta) + \mathcal{O}(p_i^4/m^2),
\end{equation}
one finds the non-relativistic expressions
\begin{equation}
m^2 g\sqrt{1 + \frac{g^2}{4}} = \sqrt{\ell^2 - m^4} = m |\ve{p}_2 - \ve{p}_1|
\end{equation}
and
\begin{equation}
2\ve{p}^{cm} = \ve{p}_1 + \ve{p}_2 = \ve{p}_1^* + \ve{p}_2^*,\qquad
m\ve{q} = \ve{p}_2 - \ve{p}_1,\qquad
m\ve{q}^* = \ve{p}_2^* - \ve{p}_1^*,
\end{equation}
such that the gain term reduces to
\begin{equation}
K^{(s)}_{\text{gain}}(p_1,p_2) = \frac{1}{2m}\int\limits_0^\pi d\vartheta \sin\vartheta
|\ve{p}_2 - \ve{p}_1 | \int\limits_{S^2} d\Omega(\ve{\hat{q}}^*) \frac{d\sigma}{d\Omega} 
\left( \frac{|\ve{p}_1^*|^2}{2m} \right)^s.
\end{equation}

For the following, we compute this kernel for the case of Maxwell particles (cf. Eq.~(\ref{Eq:FLRWDiffCrossSec})), for which
\begin{equation}
| \ve{p}_2 - \ve{p}_1| \frac{d\sigma}{d\Omega} 
 = \frac{m\sigma_0}{4\pi}\Gamma(\Theta),
\end{equation}
with a positive constant $\sigma_0$ and an arbitrary function of the scattering angle $\Gamma(\Theta)$. To this purpose we follow~\cite{mKtW76,mKtW77} and introduce a Cartesian coordinate system such that
\begin{equation}
\ve{q} = g(0,0,1),\qquad
\ve{p}^{cm} = \left( p_{\hat{1}}^{cm},0,p_{\hat{3}}^{cm} \right),\qquad p_{\hat{1}}^{cm}\geq 0.
\end{equation}
In terms of the usual polar coordinates $(\Theta,\Phi)$ with respect to this system, one can write
\begin{equation}
\ve{q}^* = g\left( \sin\Theta\cos\Phi,\sin\Theta\sin\Phi,\cos\Theta \right),
\end{equation}
where $\Theta$ is the scattering angle, see Eq.~(\ref{Eq:ScatteringAngleBis}). One easily finds
\begin{equation}
g p_{\hat{1}}^{cm}
 = |\ve{p}^{cm}\wedge\ve{q}| = \frac{1}{m}| \ve{p}_1\wedge \ve{p}_2 |,\qquad
g p_{\hat{3}}^{cm}
 = \ve{p}^{cm}\cdot\ve{q} = \frac{1}{2m}\left( |\ve{p}_2|^2 - |\ve{p}_1|^2 \right), 
\end{equation}
which yields
\begin{equation}
|\ve{p}_1^*|^2 = \cos^2\left( \frac{\Theta}{2} \right) |\ve{p}_1|^2 
 + \sin^2\left( \frac{\Theta}{2} \right) |\ve{p}_2|^2 - | \ve{p}_1\wedge \ve{p}_2 |\sin\Theta\cos\Phi.
\label{Eq:p1*2}
\end{equation}
Using this, the gain term can be written as
\begin{equation}
K^{(s)}_{\text{gain}}(p_1,p_2) = \frac{\sigma_0}{8\pi(2m)^s}\int\limits_0^\pi d\Theta \sin\Theta \Gamma(\Theta) \left. \left( \frac{d}{d\lambda} \right)^s \right|_{\lambda=0} R(p_1,p_2,\Theta;\lambda),
\end{equation}
with the generating function
\begin{equation}
R(p_1,p_2,\Theta;\lambda) := \int\limits_0^\pi d\vartheta \sin\vartheta\int\limits_0^{2\pi} d\Phi e^{\lambda |\ve{p}_1^*|^2}.
\end{equation}
Using Eq.~(\ref{Eq:p1*2}) and the series representation of the exponential function, one finds after some calculations:
\begin{equation}
R(p_1,p_2,\Theta;\lambda) = \frac{2\pi}{p_1 p_2\sin\Theta}\sum\limits_{n=0}^\infty
\frac{\lambda^n}{(n+1)!}\left\{ \left[ p_1\cos\left( \frac{\Theta}{2} \right) + p_2\sin\left( \frac{\Theta}{2} \right) \right]^{2n+2} -  \left[ p_1\cos\left( \frac{\Theta}{2} \right) - p_2\sin\left( \frac{\Theta}{2} \right) \right]^{2n+2} \right\},
\end{equation}
such that
\begin{equation}
\left. \left( \frac{d}{d\lambda} \right)^s \right|_{\lambda=0} R(p_1,p_2,\Theta;\lambda)
 = \frac{4\pi}{(s+1)\sin\Theta}\sum\limits_{j=0}^{s} {2s+2\choose 2j +1} p_1^{2j} p_2^{2s-2j}
\cos^{2j+1}\left( \frac{\Theta}{2} \right)\sin^{2s+1-2j}\left( \frac{\Theta}{2} \right),
\end{equation}
and thus
\begin{equation}
K^{(s)}_{\text{gain}}(p_1,p_2) = \frac{\sigma_0}{2(s+1)}
\sum\limits_{j=0}^{s} {2s+2\choose 2j +1}\frac{j! (s-j)!}{(s+1)!}\Gamma_{sj}
 \left( \frac{p_1^2}{2m} \right)^j\left( \frac{p_2^2}{2m} \right)^{s-j},
\label{Eq:KsGainNonRel}
\end{equation}
with coefficients
\begin{equation}
\Gamma_{sj} := (s+1){s\choose j}\int\limits_0^\pi \Gamma(\Theta)\cos^{2j+1}\left( \frac{\Theta}{2} \right)\sin^{2s+1-2j}\left( \frac{\Theta}{2} \right) d\Theta,
\end{equation}
which are normalized such that $\Gamma_{sj} = 1$ when $\Gamma(\Theta) = 1$. In general, if $\Gamma(\pi-\Theta) = \Gamma(\Theta)$, these coefficients satisfy $\Gamma_{sj} = \Gamma_{s,s-j}$ and due to the normalization condition~(\ref{Eq:GammaNormalization}) it follows that $\Gamma_{10} = \Gamma_{11} = \Gamma_{00} = 1$.

In the non-relativistic limit the loss term reduces to
\begin{equation}
K^{(s)}_{\text{loss}}(p_1,p_2) = \sigma_0\left( \frac{p_1^2}{2m} \right)^s.
\label{Eq:KsLossNonRel}
\end{equation}
Eqs.~(\ref{Eq:KsGainNonRel},\ref{Eq:KsLossNonRel}) lead to the moment equations
\begin{equation}
\frac{d}{dt} \Pi_s(t) + (2s+3) H(t) \Pi_s(t)
 = \frac{\sigma_0}{2(s+1)}\sum\limits_{j=0}^{s} 
 {2s+2\choose 2j +1}\frac{j! (s-j)!}{(s+1)!}\Gamma_{sj}\Pi_{j}(t)\Pi_{s-j}(t)
  - \sigma_0\Pi_{s}(t) \Pi_0(t),\qquad s = 0,1,2,\ldots
\label{Eq:MomentsEvolutionMassive}
\end{equation}
The right-hand side vanishes for $s=0$ and $s=1$, as expected.

As in the early epoch calculation, these equations can be simplified by replacing the moments $\Pi_s$ with their normalized moments defined as in Eq.~\eqref{Eq:NormalizedMoments}, where the moments $\Pi_s^{(0)}$ can be computed according to Eq.~\eqref{Eq:FLRWMoments0} using the generating function~\eqref{Eq:FLRWGenerator} and taking into account the approximation for $|\ve{p}| = \mathcal{C}/a(t) \ll m$. This yields $\displaystyle\zeta(\beta) = (2\pi m/\beta)^{3/2}$, from which
\begin{equation}
\Pi^{(0)}_s(t) = (2\pi m)^{3/2} \lambda(t)\frac{(2s+1)!!}{2^s}[k_B T(t)]^{s+3/2}.
\end{equation}
As in section~\ref{Subsec:Early_epoch} we impose the matching conditions $\Pi_s= \Pi^{(0)}_s$ for $s=0,1$. From Eqs.~(\ref{Eq:FLRWEuler},\ref{Eq:FLRWDefMoments},\ref{Eq:FLRWContinuity}) one obtains
\begin{eqnarray}
\Pi_0(t) & = & n(t) = n_i\frac{a_i^3}{a(t)^3},
\label{Eq:ntepsilontMassive1}\\
\Pi_1(t) & = & \varepsilon(t) - m n(t) = (\varepsilon_i - m n_i)\frac{a_i^5}{a(t)^5}.
\label{Eq:ntepsilontMassive2}
\end{eqnarray}
From this, one concludes that $\lambda(t) = \lambda_0$ is constant and that $T(t) = T_i a_i^2/a(t)^2$ with
\begin{equation}
n_i = \lambda_0 (2\pi m k_B T_i)^{3/2},\qquad
\varepsilon_i = n_i\left( m + \frac{3}{2} k_B T_i \right).
\label{Eq:niepsiloniMass}
\end{equation}
In terms of the normalized moments defined in Eq.~(\ref{Eq:NormalizedMoments}) the  evolution equations for the moments~(\ref{Eq:MomentsEvolutionMassive}) read
\begin{equation}
l^*_{\text{mfp}}(t) \frac{d}{dt} M_s(t) + M_s(t) 
 = \frac{1}{s+1}\sum\limits_{j=0}^s \Gamma_{sj} M_{s-j}(t) M_j(t), \qquad
 s = 2,3,4,\ldots,
 \label{Eq:MomentsEvolutionMass}
\end{equation}
with $l^*_{\text{mfp}}(t) := [\sigma_0 n(t)]^{-1}$ (note that this quantity is not the mean free path, since $\sigma_0 = g\sigma_T$ differs from the total cross section $\sigma_T$ by a factor of $g$, see Eq.~(\ref{Eq:FLRWDiffCrossSec})), and we recover the evolution equations for the moments derived in~\cite{dBgDuHmMjN16,dBgDuHmMjN16b} when $\Gamma(\Theta)=1$. Introducing the new time coordinate
\begin{equation}
\tau^*(t) := \int\limits_{t_i}^t \frac{dt'}{l^*_{\text{mfp}}(t')},
\label{Eq:ReescalingTimeBis}
\end{equation}
and the corresponding moments $\hat{M}_s(\tau^*) := M_s(t)$, Eq.~(\ref{Eq:MomentsEvolutionMass}) can be rewritten as~\cite{dBgDuHmMjN16,dBgDuHmMjN16b}
\begin{equation}
\frac{d}{d\tau^*} \hat{M}_s(\tau^*) + \omega_s\hat{M}_s(\tau^*)
 = \frac{1}{s+1}\sum\limits_{j=1}^{s-1} \Gamma_{sj} \hat{M}_{s-j}(\tau^*) \hat{M}_j(\tau^*), \qquad s = 2,3,4,\ldots
\label{Eq:MomentsEvolutionMassInTau}
\end{equation}
Proceeding as in section~\ref{Subsec:Early_epoch}, we determine the scale factor $a(t)$ by solving the Friedmann equation~(\ref{Eq:Friedmann1}). Assuming a spatially flat universe $(k=0)$ and that the cosmological constant $\Lambda \gg 8\pi\varepsilon_i$ dominates\footnote{In the case when $\Lambda=k=0$ the dominant term in the Friedmann equation is the energy density $\varepsilon(t) \sim 1/a(t)^3$, which gives rise to the scale factor $\displaystyle a(t) = a_i\left[ 1 + \sqrt{6\pi m n_i}(t-t_i) \right]^{2/3}$ with a power-law behavior in $t$ instead of the exponential form. The condition for reaching a near-equilibrium state turns out to be
\begin{equation}
\frac{6\pi G_N m}{c^2} \gg n_i \sigma_0^2,
\nonumber
\end{equation}
and the initial temperature should be chosen such that $3k_B T_i \ll 2mc^2$ in order to guarantee the validity of the late epoch.
}
the energy density at initial time $t_i$, it follows that for $t\geq t_i$,
\begin{equation}
a(t) = a_i e^{\frac{t-t_i}{\ell}},\qquad
\ell := \sqrt{\frac{3}{\Lambda}}.
\label{Eq:atMass}
\end{equation}
Using Eqs.~(\ref{Eq:ntepsilontMassive1},\ref{Eq:ReescalingTimeBis},\ref{Eq:atMass}) one obtains from this
\begin{equation}
\tau^*(t) = \sigma_0 \int\limits_{t_i}^t n(t') dt' = \tau^*_\infty\left[ 1 - e^{-3\frac{t-t_i}{\ell}} \right], \qquad
\tau^*_\infty := \frac{\ell}{3} \sigma_0 n_i.
\end{equation}
Similarly to the early epoch calculation, $\tau^*(t)$ cannot become arbitrarily large as $t \to \infty$ and is bounded from below by $\tau^*_\infty$. Hence, the gas does not converge to an equilibrium configuration. However, near-equilibrium is reached provided that $\tau^*_\infty \gg 1$, which is equivalent to $\sigma_0^2 n_i^2 \gg 3\Lambda$. Combining this with $\Lambda\gg 8\pi\varepsilon_i \geq 8\pi m n_i$ and using Eqs.~(\ref{Eq:niepsiloniMass},\ref{Eq:atMass}), the  conditions for reaching a near-equilibrium state can be summarized as follows:
\begin{equation}
\frac{\sqrt{3\Lambda}}{\sigma_0}\ll n_i \ll \frac{\Lambda c^2}{8\pi m G_N},
\label{Eq:niBounds}
\end{equation}
where we have reintroduced the speed of light $c$ and Newton's constant $G_N$. Taking a gas consisting of protons and introducing the value of the cosmological constant $\Lambda \approx 1.09\times 10^{-56} \hbox{cm}^{-2}$ inferred from recent observations from the Planck Collaboration~\cite{Planck2018,Zyla:2020zbs}, one obtains
\begin{equation}
\frac{\Lambda c^2}{8\pi m G_N} \approx 3.49\times 10^6 \hbox{cm}^{-3},\qquad
\sigma_0 \gg \frac{8\pi m G_N}{c^2} \sqrt{\frac{3}{\Lambda}}\approx 0.517b,
\end{equation}
where $b$ refers to the unit of barns.\footnote{$1b=10^{-24}\hbox{cm}^2$.} The upper bound in Eq.~(\ref{Eq:niBounds}) is well satisfied in our current universe, for which the baryon density is $n_b\approx 2.5\times 10^{-7} \hbox{cm}^{-3}$~\cite{Zyla:2020zbs}.

The Knudsen number in this case is (taking again $l_{\text{ms}} := 1/H = \ell$ to be the Hubble length)
\begin{equation}
\text{Kn}(t) = \frac{1}{\ell\sigma_T(t) n(t)} 
 = \frac{\bar{g}(t)}{3\tau^*_\infty}\left( \frac{T_i}{T(t)} \right)^{3/2},
\end{equation}
with $\bar{g}$ the mean relative speed between the particles. The condition $\tau^*_\infty \gg 1$ implies that $\text{Kn}(t_i) \ll 1$ at the initial time; however assuming $\bar{g}(t)\sim\sqrt{T(t)/T_i}$, it follows that $\text{Kn}(t)$ diverges as $t\to \infty$, meaning that the hydrodynamic limit eventually breaks down.

Finally, we need to check the range of validity of the approximation made here, which assumed that most particles have small momenta $|\ve{p}| \ll m$. In terms of the moments this condition can be characterized by $m\Pi_0(t) \gg \Pi_1(t)$. Using the fact that $\Pi_1(t) = \varepsilon(t) - mn(t)$ and $\Pi_0(t) = n(t)$ and Eqs.~(\ref{Eq:ntepsilontMassive1},\ref{Eq:ntepsilontMassive2},\ref{Eq:niepsiloniMass}) one obtains
\begin{equation}
\frac{m\Pi_0(t)}{\Pi_1(t)} = \frac{2}{3}\frac{m}{k_B T(t)} = \frac{2}{3}\frac{m}{k_B T_i}
e^{2\frac{t-t_i}{\ell}} \gg 1,
\end{equation}
which requires the initial temperature to be small enough such that $k_B T_i \ll 2mc^2/3$.

%%%%%%%%%%%%%%%%%%%%%%%%%%%%%%%%%%%%%%%%%%%%%
\section{Conclusions}
\label{Sec:Conclusions}
%%%%%%%%%%%%%%%%%%%%%%%%%%%%%%%%%%%%%%%%%%%%%

We have provided a self-contained, pedagogical review of the relativistic kinetic theory of  dilute gases propagating on a curved spacetime manifold $(M,g)$ of arbitrary dimension $n$. In contrast to most previous work in the literature, which is based on the tangent bundle formulation, we formulated the theory on the cotangent bundle $T^* M$ associated with $M$, i.e. the set of pairs $(x,p)$ in which $x$ is a spacetime event and $p$ a momentum covector at $x$. Although the tangent and cotangent formulations are equivalent, the latter has the advantage of being more naturally adapted to the Hamiltonian framework on which statistical physics considerations are usually based on. As we have discussed, the cotangent bundle $T^* M$ possesses rich geometric structures which form the backbone of the theory and lead naturally to a manifestly covariant (coordinate-independent) formulation of relativistic kinetic theory. As a first example, we provided such a formulation for the description of a collisionless gas consisting of identical, massive and uncharged particles and then showed how to generalize the description to more general situations. In particular, we included a brief discussion of a kinetic gas consisting of different species of charged particles, a novel derivation of the collision term based on the structure of the collision manifold and the transition probability density, the formulation of the relativistic H-theorem and the associated equilibrium states, the difference between global and local equilibrium, a brief discussion of the method of moments and its application to the propagation of an homogeneous, isotropic kinetic gas in a FLRW universe.

Let us describe some of the highlights of our review in more detail. As mentioned above, we made emphasis on the geometric structures of the cotangent bundle $T^* M$ associated with the spacetime manifold $(M,g)$. One of these structures is the naturally-defined symplectic form $\Omega_s = d\Theta$ on $T^*M$, which arises as the exterior differential of the Poincar\'e one-form $\Theta$ on $T^* M$ and is a standard construction in classical mechanics. The symplectic form allows one to reformulated the geodesic motion on $(M,g)$ as the Hamiltonian flow on $T^* M$ associated with the free one-particle Hamiltonian $\mathcal{H}$ on $T^* M$. The generator of this flow (i.e. the Hamiltonian vector field corresponding to $\mathcal{H}$) is the Liouville vector field, which is the operator that appears in the transport part of the relativistic Boltzmann equation. A further structure on $T^*M$ arises from the spacetime metric $g$ and the associated Levi-Civita connection $\nabla$, which induce a natural metric $\hat{g}$ on $T^* M$, called the Sasaki metric. As reviewed in our article, this metric satisfies several important properties, including the fact that the
Liouville vector field is geodesic with respect to $(T^*M,\hat{g})$, and it yields a natural volume form on $T^* M$ which coincides (up to a constant numerical factor) with the $n$-fold wedge product of the symplectic form $\Omega_s$. This volume form allows one to integrate functions on the cotangent bundle, and ultimately it provides the means for the physical interpretation of the one-particle distribution function. For the particular case of a simple gas of massive particles, one considers instead of $T^*M$ the future mass shell $\Gamma_m^+$, a Lorentzian submanifold of $(T^* M,\hat{g})$ of codimension one on which the one-particle distribution function $f$ is defined. As we have shown, the Liouville vector field is tangent to $\Gamma_m^+$ and satisfies Liouville's theorem, that is, it generates an incompressible flow on $\Gamma_m^+$.

As we have discussed, the one-particle distribution function provides a density function on $\Gamma_m^+$ which (together with the Liouville vector field) allows one to count the averaged number of occupied trajectories crossing a given spatial hypersurface in $\Gamma_m^+$ through a flux integral, analogous to the way the particle number density on $M$, together with the $n$-velocity, determines the number of particles contained in a given spatial volume in $M$ through a flux integral over the particle current density vector field on $M$. Further, the one-particle distribution function defines the macroscopic observables on the spacetime manifold through suitable fibre integrals over the $p$-space. The most important observables for the case of general relativity are the particle current density vector $J$ and the energy-momentum-stress tensor $T$ which appear as source terms in the Maxwell and Einstein equations and which contain all the information regarding the particle and energy densities, the mean particle velocity, the heat flow and the pressure tensor of the gas. However, higher moments are also relevant, as we have encountered in the method of moments, for instance.

The free one-particle Hamiltonian and symplectic form $\Omega_s$ describe the motion of a kinetic gas which, in the absence of collisions, is ``freely falling" in the gravitational field $g$, that is, the individual gas particles follow timelike geodesics in $(M,g)$. As we have shown, our formulation can easily be generalized to the case of charged particles in an external electromagnetic field $F$ by keeping the same Hamiltonian $\mathcal{H}$ as in the free case and modifying the symplectic form $\Omega_s$ on $T^*M$ by adding to it a term involving the pull-back of $F$ with respect to the natural projection $\pi: T^*M\to M$. This allows one to provide a gauge-invariant formulation of a charged gas which avoids the need of introducing an (in general only locally defined) electromagnetic potential $A$ such that $F = dA$ and keeps the interpretation of $p$ as the physical (as opposed to the gauge-dependent canonical) momentum of the particle. Instead of being freely falling and in the absence of collisions, charged particles follow the Hamiltonian flow corresponding to $\mathcal{H}$ which arises from this new symplectic structure depending on $F$, and this flow, when projected onto the spacetime manifold, describes the usual motion of a charged test particle in external electromagnetic and gravitational fields. By promoting the metric and electromagnetic fields from being mere external fields to dynamical fields satisfying the Einstein and Maxwell equations with the appropriate source terms, one can take into account the self-gravity and the electromagnetic interactions between the gas particles in a self-consistent way. In the absence of other interactions between the gas particles, this leads to the Einstein-Vlasov-Maxwell system of equations.

When (in addition to the gravitational and electromagnetic interactions) the gas particles are subject to short-ranged interactions, a given particle follows the Hamiltonian flow only as long as it lies sufficiently far away from other gas particles. Assuming that the gas is sufficiently dilute, this leads to the model in which the gas particles' trajectories consist of broken segments of Hamiltonian orbits, where the particle's momentum $p$ is abruptly modified each time a binary collision takes place. Such binary collisions are described in a statistical way through the collision term in the Boltzmann equation. In this article, we have provided a formal derivation of the collision term for a simple gas, based on a systematic study of the kinematics of binary elastic collisions and the resulting collision manifold $C_x$, its volume form and the transition probability density which is related to the differential cross-section and which contains all the information on the short-ranged interaction relevant for the statistical description.

One of the most important consequences of Boltzmann's equations is the famous H-theorem which leads to a dynamical description for the second law of thermodynamics and, in some simple situations, can be used to show the approach of an initially off-equilibrium configuration to an equilibrium one. However, as we have emphasized in our review, the existence of such global equilibrium configurations is only possible under rather restrictive conditions on the spacetime manifold $(M,g)$ and electromagnetic field $F$. In particular, $(M,g)$ is required to possess a global timelike Killing vector field for such a configuration to exist. This means that many interesting spacetimes, including ones involving isolated systems containing black holes and cosmological models describing an expanding universe do not possess global equilibrium configurations. This makes the problem of understanding the approach to equilibrium in general relativity even more interesting than in flat Minkowski spacetime or the Newtonian case.

In the hydrodynamic limit of small values of the Knudsen parameter, in which the Boltzmann equation is dominated by the collision term, one can assume that the gas is (in first approximation) in a local equilibrium state. This is a far less restrictive condition than the existence of global equilibrium and can be achieved on generic curved spacetimes $(M,g)$ by modeling the distribution function (in first approximation) by a Maxwell-J\"uttner distribution function, i.e. the relativistic generalization of the local Maxwell-Boltzmann distribution function, which is characterized by a temperature, $n$-velocity and amplitude, all of which are allowed to be functions on $(M,g)$. This assumption offers the possibility to study near-equilibrium configurations through well-known methods, like the Chapman-Enskog method or the methods of moments we have made brief comments on. These methods are paramount for the description of non-equilibrium relativistic fluids which, despite much progress and many proposals, is still an open field of investigation.

Our review culminates in the discussion of the Einstein-Boltzmann system for a Friedmann-Lema\^itre-Robertson-Walker (FLRW) universe filled with an homogeneous and isotropic kinetic gas, consisting of identical massive particles subject to binary elastic collisions. Due to the presence of the time-dependent scale factor $a(t)$ in the metric, a global equilibrium configuration does not exist, and hence studying the dynamics of the gas is a very interesting and relevant problem, even under the restricted symmetry assumptions of homogeneity and isotropy! While we are not aware of a complete study of this interesting problem, in the present work we have reformulated it by replacing the distribution function with appropriate moments thereof. This leads to a nonlinear infinity system of ordinary differential equations for the scale factor and these moments which generalizes previous work which was either restricted to the case of massless particles or to a Newtonian gas. Unfortunately, the solution of this system does not seem to be possible using purely analytic methods. However, in the limit of either the early or the late epochs of the universe the system does simplify sufficiently such that it can be formally solved and such that qualitative statements can be made without necessarily assuming that the state lies close to local equilibrium.

In the early epoch, the energies of the particles are dominated by their kinetic part, and hence the gas can be considered to be ultra-relativistic which is equivalent to taking the limit of zero  particle mass. In this limit, it turns out there \emph{does} exist a global equilibrium distribution function which is due to the fact that the FLRW spacetime admits a timelike \emph{conformal} Killing vector field. The amplitud and temperature characterizing the equilibrium function can be determined by matching its first two moments to those of the full (non-equilibrium) distribution function. Assuming that the differential cross section is constant in the ultra-relativistic limit, one can show that the infinite family of moment equations can be decoupled from the Friedmann equations by absorbing the scale factor into a suitable redefinition of the time coordinate and normalizing the moments by the ones associated with the equilibrium distribution function. Additionally, it turns out the moment equations form a hierarchical structure which allows one to solve the system explicitly moment-by-moment. On the other hand, in the ultra-relativistic limit the Euler equations imply that the energy density scales as $1/a(t)^4$ which dominates the spatial curvature and cosmological constant terms in the Friedmann equation at early times and leads to an explicit approximate solution for the scale factor $a(t)$. The solution of the moment equations reveals that the approach to equilibrium is characterized by a timescale which, in general, is much larger than the timescale corresponding to the validity of the early epoch during which the gas is ultra-relativistic. In other words, within the early epoch and for generic initial data for the gas configuration, there is, in general, not enough time for the gas to relax to an equilibrium state. However, we have also specified conditions on the initial data which do guarantee that the gas reaches a state lying close to equilibrium within the early epoch, in the sense that the normalized moments (defined as the ratio between the moments associated with the distribution function and the equilibrium function) approach a value close to one.

In the late epoch, the energy of the particles is dominated by their rest mass energy, and thus the gas is non-relativistic. Assuming that in the non-relativistic limit the particles behave like Maxwell particles, for which the differential cross section is inversely proportional to the relative speed between the particles, one can show that once again, the family of moment equations can be decoupled from the Friedmann equation and has the same hierarchical structure as in the ultrarelativistic limit. (Interestingly, for the special case in which the cross section is independent of the scattering angle, the resulting equations for the normalized moments are \emph{exactly} the same as those obtained in the ultrarelativistic limit.) In the late epoch, the internal energy density of the gas scales like $1/a(t)^5$ and in this case it is the cosmological term that dominates in the Friedmann equations and yields the typical exponential growing form of the scale factor. This growth is so fast that the limit $t\to \infty$ corresponds to a finite time scale for the moment equations. Hence, similar to what happens within the early epoch, the gas does  not necessarily reach an equilibrium configuration within the late epoch and generally  ``freezes" to a non-equilibrium configuration for large cosmological times. In particular, there is a large class of initial data for which the gas does not even approach a local equilibrium state. However, we have also identified sufficient conditions on the initial data for the distribution function which guarantee that the end state is a near-equilibrium configuration, the normalized moments approaching a value close to one.

It is clear from these results that it should be rather interesting to understand the dynamics of the full problem (spanning from the big bang to the late universe) and that this problem should serve as a nice example for the understanding of non-equilibrium phenomena in curved spacetimes in which a global equilibrium configuration is nonexistent. More generally, we hope the material discussed in this review will serve as a solid introduction to the relativistic Boltzmann equation and, although we have just briefly mentioned many interesting applications, it will serve as a starting point for future research in the field.

%%%%%%%%%%%%%%%%%%%%%%%%%%%%%%%%%%%%%%%%%%%%
\acknowledgments
It is a pleasure to thank H\r{a}kan Andr\'easson, Ana Laura Garc\'ia Perciante, J\'er\'emie Joudioux, Paola Rioseco, Elmar Wagner, and Thomas Zannias for fruitful and stimulating discussions throughout the elaboration of this work. We also thank Thomas Zannias for comments on an earlier version of this manuscript. R.A. and C.G. were supported by a PhD CONACyT fellowship. O.S. was partially supported by a CIC Grant to Universidad Michoacana. We also acknowledge support from the CONACyT Network Project No. 376127 ``Sombras, lentes y ondas gravitatorias generadas por objetos compactos astrof\'isicos".

\appendix
%%%%%%%%%%%%%%%%%%%%%%%%%%%%%%%%%%%%%%%%%%%%%
\section{Manifold structure of the cotangent bundle}
\label{App:Cotangent}
%%%%%%%%%%%%%%%%%%%%%%%%%%%%%%%%%%%%%%%%%%%%%

In this appendix we provide the details for the proof of Lemma~\ref{Lem:CotangentSpace2n} and show that a differentiable atlas $(U_\alpha,\phi_\alpha)$ of $M$ induces a differentiable atlas $(V_\alpha,\psi_\alpha)$ of $T^*M$, where $V_\alpha := \pi^{-1}(U_\alpha) = \{ (x,p)\in U_\alpha: x\in U_\alpha, p\in T_x^*M \}$ and
\begin{eqnarray}
\psi_\alpha : V_\alpha &\longrightarrow& \phi_\alpha(U_\alpha)\times \Real^n
\subset\Real^{2n},\nonumber\\
(x,p) &\longmapsto& \left(x^{\mu},p_{\mu}\right),
\end{eqnarray}
with $\displaystyle x^{\mu} := \phi_\alpha\left(x\right)^{\mu}$ and $\displaystyle p_{\mu} := p\left(\left.\frac{\partial}{\partial x^{\mu}}\right|_{x}\right)$. Notice that, due to the properties of $(U_\alpha,\phi_\alpha)$, the sets $\psi_\alpha(V_\alpha) = \phi_\alpha(U_\alpha)\times \Real^n$ are open subsets of $\Real^{2n}$ and that the maps $\psi_\alpha$ are invertible with inverse $\psi_\alpha^{-1}: \phi_\alpha(U_\alpha)\times \Real^n\to V_\alpha$ given by
\begin{equation}
\psi_\alpha^{-1}\left(x^\mu, p_\mu \right) 
 = \left(\phi_\alpha^{-1}(x^{\mu}),p_\mu dx^{\mu}|_x\right),\qquad
 (x^\mu,p_\mu)\in \phi_\alpha(U_\alpha)\times \Real^n.
\end{equation}
Clearly, the local charts $(V_\alpha,\psi_\alpha)$ cover $T^* M$ since the charts $(U_\alpha,\phi_\alpha)$ cover $M$. To show that $(V_\alpha,\psi_\alpha)$ defines a differentiable atlas in $T^* M$ it remains to show that the transition maps are $C^\infty$-differentiable. For this, take two overlapping charts $\left(V_1,\psi_1\right)$ and $\left(V_2,\psi_2\right)$, say, such that $V_1\cap V_2\neq \emptyset$, i.e. $U_1\cap U_2\neq \emptyset$, see Fig~\ref{Fig:InducedMap_psi}.
\begin{figure}[h]
\begin{centering}
\includegraphics[scale=0.40]{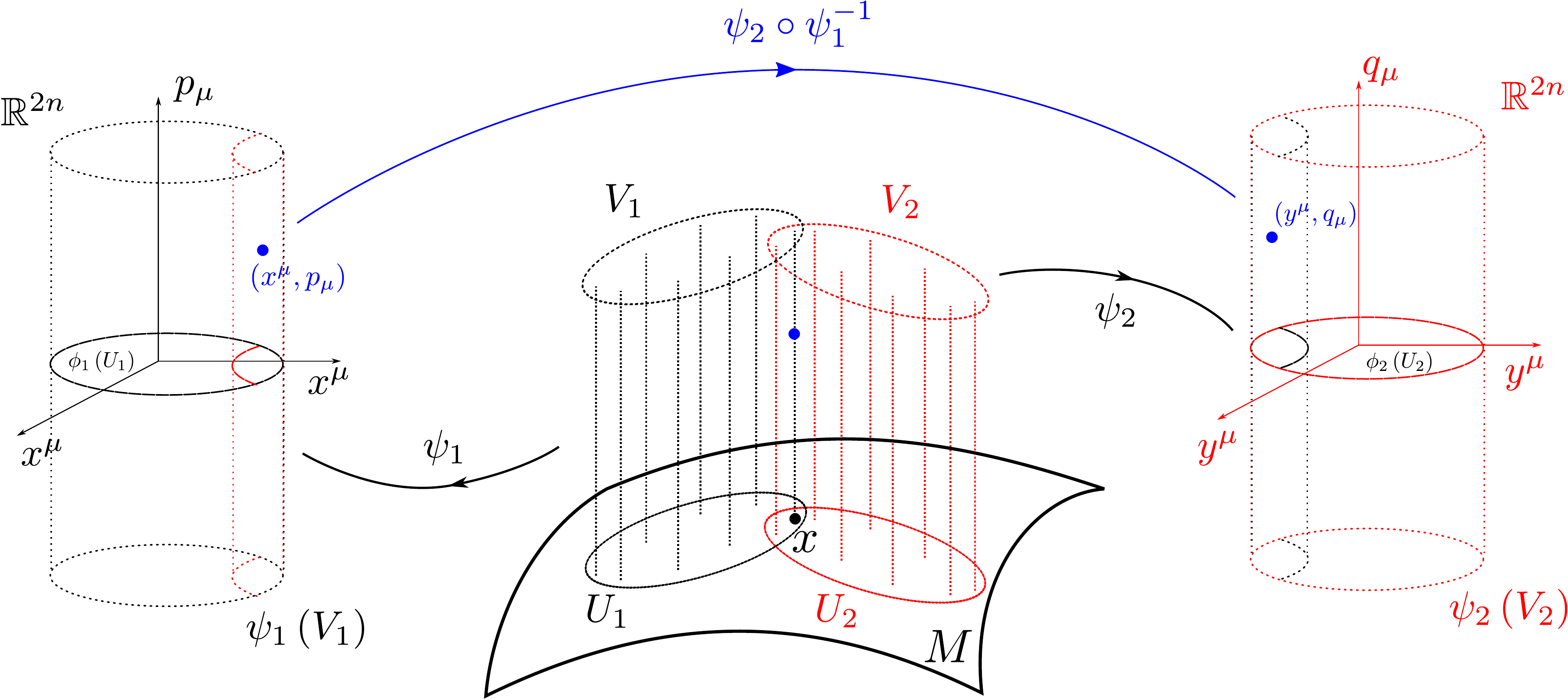}
\par\end{centering}
\caption{Illustration of the transition map $\psi_{12} = \psi_2\circ\psi_1^{-1}$ between the two overlapping local charts $\left(V_1,\psi_1\right)$ and $\left(V_2,\psi_2\right)$ of the cotangent bundle.}
\label{Fig:InducedMap_psi}
\end{figure} 
Let us call the corresponding local coordinates $\psi_1(x,p) = (x^\mu,p_\mu)$ and $\psi_2(x,p) = (y^\mu,q_\mu)$, respectively. The transition map $\psi_{12} := \psi_2 \circ \psi_1^{-1} : \psi_1(V_1\cap V_2)\to \psi_2(V_1\cap V_2)$ is invertible, because $\psi_1$ and $\psi_2$ are, and it is explicitly given by
\begin{equation}
 (y^\mu,q_\mu) = \psi_{12}(x^\mu,p_\mu) = \psi_2\left(\psi_1^{-1}(x^\mu,p_\mu)\right)
 = \psi_2 \left(\phi_1^{-1}(x^\mu),p_\mu dx^\mu_x \right)
 =  \left(\phi_{12}(x^\mu),p_\alpha \left.\frac{\partial x^\alpha}{\partial y^\mu}\right|_x \right),
 \label{Eq:TransitionMap}
\end{equation}
with $\phi_{12} := \phi_2\circ\phi_1^{-1}: \phi_1(U_1\cap U_2) \to \phi_2(U_1\cap U_2)$ the corresponding transition map on $M$. Since $\phi_{12}$ is $C^\infty$-differentiable, the same is true for its inverse Jacobi matrix $(\frac{\partial x^\alpha}{\partial y^\mu})$, and it follows that $\psi_{12}$ is $C^\infty$-differentiable as well. This proves that $T^* M$ is a $2n$-dimensional differential manifold.

To complete the proof of the lemma, it remains to show that $T^*M$ is orientable, that is, that $T^* M$ possesses an atlas with the property that all its transition maps preserve orientation (i.e. their Jacobi matrix have positive determinant). Let $M^\mu{}_\alpha{}(x) := \left.\frac{\partial y^\mu}{\partial x^\alpha}\right|_x$ be the Jacobi matrix of $\phi_{12}$ at $x\in U_1\cap U_2$. Then, according to Eq.~(\ref{Eq:TransitionMap}),  $q_\mu = \left(M^{-1}(x)\right)^\alpha{}_\mu p_\alpha$, and thus the Jacobi matrix of $\psi_{12}$ is
\begin{equation}
D\psi_{12}(x^{\mu},p_{\mu}) = \left(\begin{array}{cc}
\frac{\text{\ensuremath{\partial y^{\mu}}}}{\partial x^{\alpha}} & \frac{\text{\ensuremath{\partial y^{\mu}}}}{\partial p_{\alpha}}\\
\frac{\text{\ensuremath{\partial q_{\mu}}}}{\partial x^{\alpha}} & \frac{\text{\ensuremath{\partial q_{\mu}}}}{\partial p_{\alpha}}
\end{array}\right) = \left(\begin{array}{cc}
M^\mu{}_{\alpha}(x) & 0\\
\frac{\partial}{\partial x^\alpha}\left[\left(M^{-1}(x)\right)^{\beta}{}_{\mu}\right]p_{\beta} & \left(M^{-1}(x)\right)^{\alpha}{}_{\mu}
\end{array}\right),
\end{equation}
and it follows that $\det{D\psi_{12}(x^{\mu},p_{\mu})} = 1$. This shows that the differentiable atlas $(\psi_\alpha,V_\alpha)$ induced by the differentiable atlas $(U_\alpha,\phi_\alpha)$ is not only oriented, but also volume-preserving. In fact, this property follows directly from the fact  that the adapted local coordinates $(x^\mu,p_\mu)$ are symplectic coordinates on $T^* M$ (see subsection~\ref{SubSec:Ham}).

%%%%%%%%%%%%%%%%%%%%%%%%%%%%%%%%%%%%%%%%%%%%%
\section{Further details on the local coordinates $(x^\mu,p_{\hat{\alpha}})$ on $T^*M$}
\label{App:ON_basis}
%%%%%%%%%%%%%%%%%%%%%%%%%%%%%%%%%%%%%%%%%%%%%

Towards the end of subsection~\ref{SubSec:Liouville} we introduced the new local coordinates $(x^\mu,p_{\hat{\alpha}})$ in which the momentum covector $p = p_{\hat{\alpha}}\theta^{\hat{\alpha}}$ is expanded in terms of a local orthonormal frame of covector fields $\theta^{\hat{\alpha}}$ on $M$ instead of the expansion $p_\mu dx^\mu$ in terms of the coordinate basis $dx^\mu$. The relation between the two local coordinate systems $(x^\mu,p_\mu)$ and $(x^\mu,p_{\hat{\alpha}})$ is given by the transformation
\begin{equation}
p_{\mu} = p_{\hat{\alpha}}\theta^{\hat{\alpha}}{}_\mu,\qquad
\theta^{\hat{\alpha}}{}_\mu := \theta^{\hat{\alpha}}\left(\frac{\partial}{\partial x^{\mu}}\right) ,
\end{equation}
where $\theta^{\hat{\alpha}}{}_\mu$ are the components of the one-forms $\theta^{\hat{\alpha}}$ with respect to the coordinate basis $dx^\mu$, i.e. $\theta^{\hat{\alpha}} =
\theta^{\hat{\alpha}}{}_\mu dx^\mu$. For the purpose of this appendix, the orthonormality condition is unimportant; hence we shall consider an arbitrary frame $\theta^{\hat{\alpha}}$ of covectors on $M$.

In this appendix we show that the family of one-forms $Dp_\mu$ defined in Eq.~\eqref{Eq:dualbasis} can be interpreted as the absolute exterior derivative of $p_{\hat{\alpha}}$ (see, for instance, Section~15.8.1 in~\cite{Straumann-Book}), which is defined (in terms of an arbitrary frame) as
\begin{equation}
Dp_{\hat{\alpha}} := dp_{\hat{\alpha}} - \omega^{\hat{\beta}}{}_{\hat{\alpha}} p_{\hat{\beta}},
\end{equation}
with the connection one-form
\begin{equation}
\omega^{\hat{\beta}}{}_{\hat{\alpha}}(X) 
:= \theta^{\hat{\beta}}\left( \nabla_X e_{\hat{\alpha}} \right)
 = \hat{\Gamma}^{\hat{\beta}}{}_{\mu\hat{\alpha}} X^\mu,
\qquad X\in {\cal X}(M).
\end{equation}
Note that for a coordinate frame $\theta^{\hat{\alpha}} = dx^{\hat{\alpha}}$, the connection coefficients $\hat{\Gamma}^{\hat{\beta}}{}_{\mu\hat{\alpha}}$ coincide with the usual Christoffel symbols and in this case $Dp_\mu$ indeed reduces to the expression defined in Eq.~\eqref{Eq:dualbasis}. On the other hand, if $\theta^{\hat{\alpha}}$ is orthonormal, then $\hat{\Gamma}_{\hat{\beta}\mu\hat{\alpha}} := \eta_{\hat{\beta}\hat{\gamma}}\hat{\Gamma}^{\hat{\gamma}}{_{\mu\hat{\alpha}}}$ is \emph{antisymmetric} in the indices $\hat{\alpha}\hat{\beta}$ and $Dp_{\hat{\alpha}}$ reduces to the expression in Eq.~\eqref{Eq:Dphat}. With respect to a change of basis,
\begin{equation}
\bar{\theta}^{\hat{\alpha}} = A^{\hat{\alpha}}{}_{\hat{\beta}}\theta^{\hat{\beta}}
\end{equation}
one finds
\begin{equation}
\bar{p}_{\hat{\alpha}} = p_{\hat{\beta}} (A^{-1})^{\hat{\beta}}{}_{\hat{\alpha}},\qquad
\bar{\omega}^{\hat{\alpha}}{}_{\hat{\beta}} = A^{\hat{\alpha}}{}_{\hat{\gamma}}\omega^{\hat{\gamma}}{}_{\hat{\delta}} (A^{-1})^{\hat{\delta}}{}_{\hat{\beta}} 
 + A^{\hat{\alpha}}{}_{\hat{\gamma}} d(A^{-1})^{\hat{\gamma}}{}_{\hat{\beta}},
\end{equation}
which implies
\begin{equation}
D\bar{p}_{\hat{\alpha}} = Dp_{\hat{\beta}} (A^{-1})^{\hat{\beta}}{}_{\hat{\alpha}},
\end{equation}
that is, $Dp_{\hat{\alpha}}$ transforms like $p_{\hat{\alpha}}$. In particular, it follows that
\begin{equation}
Dp_{\mu} = (Dp_{\hat{\alpha}})\theta^{\hat{\alpha}}{}_\mu.
\end{equation}

%%%%%%%%%%%%%%%%%%%%%%%%%%%%%%%%%%%%%%%%%%%%%
\section{Volume forms on the mass hyperboloid and mass shell}
\label{App:VolumeFormOnTheHyperboloidOfMass}
%%%%%%%%%%%%%%%%%%%%%%%%%%%%%%%%%%%%%%%%%%%%%

In this appendix we provide an alternative derivation for the volume forms on the future mass hyperboloid $P_x^+(m)$ and the future mass shell $\Gamma_m^+$. As in the previous appendix, we work in local coordinates $(x^\mu,p_{\hat{\alpha}})$ in which the momentum $p = p_{\hat{\alpha}} \theta^{\hat{\alpha}}$ is expanded in terms of an arbitrary basis of covectors $\left\{ \theta^{\hat{\alpha}}\right\}$ on $M$.

We start with the volume form on the cotangent space $T_x^* M$, given by
\begin{equation}
\eta_{T_x^*M} = -\sqrt{-\det(g^{\hat{\alpha}\hat{\beta}})} 
dp_{\hat{0}} \wedge dp_{\hat{1}} \wedge\cdots \wedge dp_{\hat{d}}.
\end{equation}
The induced volume form on the future mass hyperboloid $P_x^+(m)$ can be computed by taking the interior derivative of $\eta_{T_x^*M}$ with respect to the normal vector $N$ given in Eq.~(\ref{Eq:NormalVector}). This yields
\begin{eqnarray*}
i_N\eta_{T_x^*M} &=& -\sqrt{-\det(g^{\hat{\alpha}\hat{\beta}})} \left[ 
(i_{N}dp_{\hat{0}}) \wedge dp_{\hat{1}} \wedge \cdots \wedge dp_{\hat{d}} 
- dp_{\hat{0}} \wedge (i_{N}dp_{\hat{1}}) \wedge \cdots \wedge dp_{\hat{d}} + \cdots 
+ (-1)^d dp_{\hat{0}} \wedge \cdots \wedge (i_{N} dp_{\hat{d}})
\right]\\
 &=& -\frac{1}{m}\sqrt{-\det(g^{\hat{\alpha}\hat{\beta}})} \left[ 
 p_{\hat{0}} dp_{\hat{1}} \wedge \cdots \wedge dp_{\hat{d}} 
 - p_{\hat{1}} dp_{\hat{0}} \wedge dp_{\hat{2}} \wedge \cdots \wedge dp_{\hat{d}} + \cdots 
  - (-1)^d p_{\hat{d}} dp_{\hat{0}} \wedge \cdots \wedge dp_{\widehat{d-1}} 
\right],
\end{eqnarray*}
where we have used the fact that $\displaystyle i_N dp_{\hat{\alpha}} = dp_{\hat{\alpha}}(N)=p_{\hat{\alpha}}/m$ in the second step. Next, we use the fact that $p^{\hat{\alpha}} dp_{\hat{\alpha}} = 0$ on the mass hyperboloid, such that
\begin{equation}
dp_{\hat{0}} = -\frac{p^{\hat{a}}}{p^{\hat{0}}} dp_{\hat{a}}.
\end{equation}
This yields
\begin{equation}
\eta_{P_x^+(m)} = \frac{-\sqrt{-\det(g^{\hat{\alpha}\hat{\beta}})}}{m p^{\hat{0}}} \left[ 
 p^{\hat{0}} p_{\hat{0}} dp_{\hat{1}} \wedge \cdots \wedge dp_{\hat{d}} 
 + p^{\hat{a}} p_{\hat{a}} dp_{\hat{1}} \wedge \cdots \wedge dp_{\hat{d}}\right]
  = \frac{m}{p^{\hat{0}}}\sqrt{-\det(g^{\hat{\alpha}\hat{\beta}})} 
  dp_{\hat{1}} \wedge  dp_{\hat{2}} \wedge\cdots \wedge dp_{\hat{d}}.
\end{equation}
For the particular case in which the basis $\left\{ \theta^{\hat{\alpha}}\right\}$ is \emph{orthonormal} one has $\det(g^{\hat{\alpha}\hat{\beta}}) = -1$ and this reduces to
\begin{equation}
\eta_{P_x^+(m)} = \frac{m}{p^{\hat{0}}} 
dp_{\hat{1}} \wedge  dp_{\hat{2}} \wedge\cdots \wedge dp_{\hat{d}},
\end{equation}
which agrees with Eq.~(\ref{Eq:etaPxm}).

We may compute the volume form $\eta_{\Gamma_m^+}$ on the future mass shell in a similar way. We start from the volume form on $T^*M$, see Eq.~(\ref{Eq:EtaOnT*M}),
\begin{equation}
\eta_{T^*M} = -\sqrt{-\det(g_{\mu\nu})} \sqrt{-\det(g^{\hat{\alpha}\hat{\beta}})} 
dp_{\hat{0}}\wedge dp_{1}\wedge\cdots \wedge dp_{\hat{d}}\wedge
dx^0\wedge dx^1\wedge\cdots\wedge dx^d
 =\eta_{T_x^* M} \wedge \eta_M.
\end{equation}
Applying Definition~\ref{Def:InducedVolumeForm}, using the fact that $i_N\eta_M = 0$ and the previous result for $i_N\eta_{T_x^* M}$ one finds
\begin{equation}
\eta_{\Gamma_m^+} = \eta_{P_x^+(m)} \wedge \eta_M = \eta_M\wedge \eta_{P_x^+(m)},
\end{equation}
which coincides with Eq.~(\ref{Eq:etaGammam+Short}).

%%%%%%%%%%%%%%%%%%%%%%%%%%%%%%%%%%%%%%%%%%%%%
\section{Further details on the collision manifold}
\label{App:Collisions}
%%%%%%%%%%%%%%%%%%%%%%%%%%%%%%%%%%%%%%%%%%%%%

In this appendix we make some remarks on the set
\begin{equation}
\tilde{C}_x := \{ (p,\pi,p^*,\pi^*) \in [P_x^+(m)]^4 : p + \pi  = p^* + \pi^* \},
\end{equation}
which contains the collision manifold $C_x$ defined in Eq.~(\ref{Eq:Cx}) as a subset. The difference between $\tilde{C}_x$ and $C_x$ is the condition $p\neq \pi$ which is relaxed in the definition of $\tilde{C}_x$. We show in this appendix that this leads to conical-type singularities at the points $(p,\pi,p^*,\pi^*)\in\tilde{C}_x$ for which $p = \pi$.

In order to prove this, consider for each fixed $\ve{p}^{cm}\in\Real^d$, $\ve{e},\ve{e}^*\in S^{d-1}$ the curve $\gamma(t)$ in $\tilde{C}_x\subset [P_x^+(m)]^4$ through the point $p_1 = p_2 = p^{cm}$, which is parametrized by setting
\begin{equation}
g = t,\qquad \ve{\hat q} = \ve{e},\qquad \ve{\hat q}^* = \ve{e}^*
\end{equation}
into Eqs.~(\ref{Eq:Cp1Vec},\ref{Eq:Cp2Vec},\ref{Eq:Cp1*Vec},\ref{Eq:Cp2*Vec}), where in this appendix we use the notation $(\ve{p},\ve{\pi})$ and $(\ve{p}^*,\ve{\pi}^*)$ instead of $(\ve{p}_1,\ve{p}_2)$ and $(\ve{p}_1^*,\ve{p}_2^*)$. Since $[P_x^+(m)]^4$ is a (product) manifold, we can compute the tangent vector of $\gamma(t)$ in $[P_x^+(m)]^4$ at the point $\gamma(0)$, which yields
\begin{equation}
\dot{\gamma}(0) = \frac{1}{2}\left( m e_{\hat{a}} 
 + \frac{\ve{p}^{cm}\cdot\ve{e}}{\sqrt{m^2 + |\ve{p}^{cm}|^2} + m} p^{cm}_{\hat{a}} \right)
\left( \frac{\partial}{\partial \pi_{\hat{a}}} - \frac{\partial}{\partial p_{\hat{a}}} \right)
 + \frac{1}{2}\left( m e^*_{\hat{a}}
 + \frac{\ve{p}^{cm}\cdot\ve{e}^*}{\sqrt{m^2 + |\ve{p}^{cm}|^2} + m} p^{cm}_{\hat{a}} \right)
\left( \frac{\partial}{\partial \pi_{\hat{a}}^*} - \frac{\partial}{\partial p_{\hat{a}}^*} \right).
\end{equation}
By taking linear combinations with different choices of $\ve{e}$ and $\ve{e}^*$, one obtains the $2d$ linearly independent tangent vectors
\begin{equation}
\frac{\partial}{\partial \pi_{\hat{a}}} - \frac{\partial}{\partial p_{\hat{a}}},\qquad
\frac{\partial}{\partial \pi_{\hat{a}}^*} - \frac{\partial}{\partial p_{\hat{a}}^*},\qquad
\hat{a} = 1,2,\ldots,d,
\end{equation}
at $T_{\gamma(0)} [P_x^+(m)]^4$. Furthermore, by considering the curve $\mu(t)$ in $\tilde{C}_x\subset [P_x^+(m)]^4$ defined by the substitutions
\begin{equation}
\ve{p}^{cm}\mapsto \ve{p}^{cm} + t\ve{e},\qquad g\mapsto 0,
\end{equation}
in Eqs.~(\ref{Eq:Cp1Vec},\ref{Eq:Cp2Vec},\ref{Eq:Cp1*Vec},\ref{Eq:Cp2*Vec}), one obtains the additional $d$ tangent vectors
\begin{equation}
\frac{\partial}{\partial p_{\hat{a}}} + \frac{\partial}{\partial \pi_{\hat{a}}}
 + \frac{\partial}{\partial p_{\hat{a}}^*} + \frac{\partial}{\partial \pi_{\hat{a}}^*},
 \qquad \hat{a} = 1,2,\ldots,d,
\end{equation}
at $T_{\gamma(0)} [P_x^+(m)]^4$. If $\tilde{C}_x$ were smooth at $\gamma(0)$, it would follow that the dimension of its tangent space at $\gamma(0)$ is at least $3d$-dimensional. However, this contradicts the result from Lemma~\ref{Lem:Cx} which implies that each point of $\tilde{C}_x$ for which $p\neq \pi$ possesses a neighborhood which can be smoothly parametrized by only $3d-1$ coordinates.

%%%%%%%%%%%%%%%%%%%%%%%%%%%%%%%%%%%%%%%%%%%%%
\section{Derivation of metric and volume form on the collision manifold}
\label{App:CollisionManifold}
%%%%%%%%%%%%%%%%%%%%%%%%%%%%%%%%%%%%%%%%%%%%%

This appendix is devoted to a detailed derivation of the induced metric and volume form on the collision manifold $C_x$. Since the calculation is long, we divide it into several steps. In a first step, we recall the induced metric $H_x$ on the future mass hyperboloid $P_x^+(m)$ and the induced volume form. In a second step, we introduce the ``momentum space of incoming particles" $I_x$, which consists of two copies of $P_x^+(m)$. The bulk of the calculation consists in representing the induced metric in terms of the coordinates $(\ve{p}^{cm},\ve{q})$, representing the center of mass momentum and the relative velocity. Likewise, one considers the ``momentum space of outgoing particles" $O_x$ with coordinates $((\ve{p}^*)^{cm},\ve{q}^*)$ which is just another copy of $I_x$. The collision manifold is equal to the submanifold of the product $I_x\times O_x$ subject to the restrictions $(\ve{p}^*)^{cm} = \ve{p}^{cm}$ and $|\ve{q}| = |\ve{q}^*|$, and in the final step we use these restrictions to compute the induced metric and volume form on $C_x$.

{\bf Step 1}: Recall that the metric $H_x$ on $P_x^+(m)$ which is induced from  $g_x^{-1}$ is given by Eq.~(\ref{Eq:Induced_metric_gx-1}):
\begin{equation}
H_x = \left(\delta^{\hat{a}\hat{b}}-\frac{p^{\hat{a}}p^{\hat{b}}}{m^{2}+|\ve{p}|^{2}}\right) dp_{\hat{a}}\otimes dp_{\hat{b}}.
\label{Eq:Hx}
\end{equation}
An alternative form for writing this metric is $H_x = \delta^{\hat{a}\hat{b}}d'p_{\hat{a}}\otimes d'p_{\hat{b}}$, with
\begin{equation}
d'p_{\hat{a}} := dp_{\hat{a}} - \left(1-\frac{m}{\sqrt{m^{2}+|\ve{p}|^{2}}}\right)\hat{p}_{\hat{a}}\hat{p}^{\hat{b}}dp_{\hat{b}}
 = dp_{\hat{a}}-\frac{p_{\hat{a}}}{\sqrt{m^{2}+|\ve{p}|^{2}}}\frac{p^{\hat{b}}dp_{\hat{b}}}{\sqrt{m^{2}+|\ve{p}|^{2}}+m},
\label{Eq:Hxprime}
\end{equation}
where $\hat{p}_{\hat{a}} := p_{\hat{a}}/|\ve{p}|$ and the second representation shows that $d'p_{\hat{a}}$ is well-defined everywhere including at $\ve{p} = \ve{0}$. The associated volume form is (see Eq.~(\ref{Eq:etaPxm}))
\begin{equation}
\eta_{P_x^+(m)} = \frac{m}{\sqrt{m^{2}+|\ve{p}|^{2}}}dp_{\hat{1}}\wedge dp_{\hat{2}}\wedge\cdots\wedge dp_{\hat{d}},
\end{equation}
which can be obtained either from Eq.~(\ref{Eq:Hx}) by using a particular frame in which $\ve{p} = (p_{\hat{1}},0,\ldots,0)$, or directly from Eq.~(\ref{Eq:Hxprime}).

{\bf Step 2}:
 In a second step we compute the induced metric and volume form on the ``momentum space of incoming particles" (where here we use the notation $(p,\pi)$ instead of $(p_1,p_2)$ in order to simplify the notation)
$$
I_x := \{ (p,\pi)\in P_x^+(m)\times P_x^+(m) : p\neq \pi \}.
$$
In terms of the global coordinates $(\ve{p},\ve{\pi})\in \Real^{2d}$ of $I_x$ the induced metric and associated volume form are simply:
\begin{equation}
H_x^{(2)} = 
\left(\delta^{\hat{a}\hat{b}}-\frac{p^{\hat{a}}p^{\hat{b}}}{m^{2}+|\ve{p}|^{2}}\right)dp{}_{\hat{a}}\otimes dp{}_{\hat{b}}
 + \left(\delta^{\hat{a}\hat{b}}-\frac{\pi^{\hat{a}}\pi^{\hat{b}}}{m^{2}+|\ve{\pi}|^{2}}\right)d\pi{}_{\hat{a}}\otimes d\pi{}_{\hat{b}}
\label{Eq:Hx2}
\end{equation}
and
\begin{equation}
\eta_{I_x} = \frac{m^{2}}{\sqrt{m^{2}+|\ve{p}|^{2}}\sqrt{m^{2}+|\ve{\pi}|^{2}}}
dp_{\hat{1}}\wedge dp{}_{\hat{2}}\wedge\cdots\wedge dp{}_{\hat{d}}\wedge
d\pi{}_{\hat{1}}\wedge d\pi{}_{\hat{2}}\wedge\cdots\wedge d\pi{}_{\hat{d}}.
\label{Eq:etaPx+2}
\end{equation}
However, it is also possible to parametrize the space $I_x$ using the coordinates $(\ve{p}^{cm},\ve{q})\in \Real^{2d}$, where $p^{cm}=-\sqrt{m^{2}+|\ve{p}^{cm}|^{2}}\theta^{\hat{0}}+ p^{cm}_{\hat{a}} \theta^{\hat{a}}$ is the center of mass momentum defined in Eq.~(\ref{Eq:pCM}) and $\ve{q}$ parametrizes the relative velocity of the two particles. According to Eqs.~(\ref{Eq:Cp1Vec},\ref{Eq:Cp2Vec}) the transformation between $(\ve{p},\ve{\pi})$ and $(\ve{p}^{cm},\ve{q})$ is given by
\begin{eqnarray}
\ve{p} &=& \sqrt{1 + \frac{1}{4} |\ve{q}|^2}\ve{p}^{cm}
 - \frac{1}{2}\left( m\ve{q}
 + \frac{\ve{p}^{cm}\cdot\ve{q}}{\sqrt{m^2 + |\ve{p}^{cm}|^2} + m}\ve{p}^{cm}
 \right),\\
\ve{\pi} &=& \sqrt{1 + \frac{1}{4} |\ve{q}|^2}\ve{p}^{cm}
 + \frac{1}{2}\left( m\ve{q}
 + \frac{\ve{p}^{cm}\cdot\ve{q}}{\sqrt{m^2 + |\ve{p}^{cm}|^2} + m}\ve{p}^{cm}
 \right).
\end{eqnarray}
In principle we could compute the Jacobian of this transformation and obtain the metric and volume form in terms of the coordinates $(\ve{p}^{cm},\ve{q})$ from Eqs.~(\ref{Eq:Hx2},\ref{Eq:etaPx+2}). This is clearly a tough calculation, so we proceed in a slightly different manner. Namely, we recall that $p$ and $\pi$ can be expressed in terms of Lorentz transformations,
$$
p = \Lambda(\pz),\qquad
\pi = \Lambda(\piz),
$$
where $\pz,\piz$ are the momenta of the incoming particles in the center of mass frame, that is, according to Eqs.~(\ref{Eq:Cp1Bis},\ref{Eq:Cp2Bis}),
\begin{eqnarray}
\pz = -\sqrt{1+\frac{1}{4}|\ve{q}|^{2}}m\theta^{\hat{0}}-\frac{m}{2}q_{\hat{a}}\theta^{\hat{a}},
\label{Eq:Cpz}\\
\piz = -\sqrt{1+\frac{1}{4}|\ve{q}|^{2}}m\theta^{\hat{0}}+\frac{m}{2}q_{\hat{a}}\theta^{\hat{a}},
\label{Eq:Cpiz}
\end{eqnarray}
and $\Lambda$ is the Lorentz boost defined in Eq.~(\ref{Eq:LorentzBoost}) which maps $-m \theta^{\hat{0}}$ to $p^{cm}$. Hence,
$$
dp_{\hat{\alpha}} = \Lambda_{\hat{\alpha}}{}^{\hat{\beta}} d\pz_{\hat{\beta}} 
+ d\Lambda_{\hat{\alpha}}{}^{\hat{\beta}}\pz_{\hat{\beta}}
 = \Lambda_{\hat{\alpha}}{}^{\hat{\beta}} \left( d\pz_{\hat{\beta}}
   + \Omega_{\hat{\beta}}{}^{\hat{\lambda}}\pz_{\hat{\lambda}} \right),
$$
with $\Omega_{\hat{\lambda}}{^{\hat{\beta}}}$ the one-form defined as $\Omega_{\hat{\alpha}}{^{\hat{\beta}}}:=(\Lambda^{-1})_{\hat{\alpha}}{^{\hat{\lambda}}}d\Lambda_{\hat{\lambda}}{^{\hat{\beta}}}$. Since $\Lambda$ is a Lorentz transformation, it follows that $\Omega^{\hat{\alpha}\hat{\beta}} + \Omega^{\hat{\beta}\hat{\alpha}} = 0$, and one obtains
$$
\eta^{\hat{\alpha}\hat{\beta}} dp_{\hat{\alpha}}\otimes dp_{\hat{\beta}} 
 = \eta^{\hat{\alpha}\hat{\beta}} d\pz_{\hat{\alpha}}\otimes d\pz_{\hat{\beta}}
 + d\pz_{\hat{\lambda}}\otimes\Omega^{\hat{\lambda}\hat{\mu}}\pz_{\hat{\mu}}
 + \Omega^{\hat{\mu}\hat{\lambda}}\pz_{\hat{\lambda}}\otimes d\pz_{\hat{\mu}}
 + \eta_{\hat{\alpha}\hat{\mu}} \Omega^{\hat{\alpha}\hat{\lambda}}\pz_{\hat{\lambda}}\otimes\Omega^{\hat{\mu}\hat{\nu}}\pz_{\hat{\nu}},
$$
and similarly for $\eta^{\hat{\alpha}\hat{\beta}} d\pi_{\hat{\alpha}}\otimes d\pi_{\hat{\beta}}$. To make further progress we note Eq.~(\ref{Eq:Cpz}) which yields immediately
$$
d\pz_{\hat{0}} = -\frac{1}{4}\frac{m}{\sqrt{1+\frac{1}{4}|\ve{q}|^{2}}}q^{\hat{a}}dq_{\hat{a}},\quad
d\pz_{\hat{a}} = -\frac{m}{2}dq_{\hat{a}}.
$$
Next, one notes that for any Lorentz boost as defined in Eqs.~(\ref{Eq:LorentzBoost1},\ref{Eq:LorentzBoost2}) one finds
\begin{eqnarray}
\Omega^{\hat{0}\hat{b}} &=&
  \gamma\left[\delta^{\hat{b}\hat{c}}+(\gamma-1)\hat{\beta}^{\hat{b}}\hat{\beta}^{\hat{c}}\right]d\beta_{\hat{c}},\\
\Omega^{\hat{a}\hat{b}} &=& -\frac{\gamma^{2}}{\gamma+1}\left(\beta^{\hat{a}}d\beta^{\hat{b}}-\beta^{\hat{b}}d\beta^{\hat{a}}\right),
\end{eqnarray}
which yields
$$
\Omega^{\hat{0}\hat{\beta}}\pz_{\hat{\beta}}=-\frac{1}{2}q^{\hat{b}}d'p^{cm}_{\hat{b}},\qquad
 \Omega^{\hat{a}\hat{\beta}}\pz_{\hat{\beta}}
 = \eta^{\hat{a}\hat{b}}\sqrt{1+\frac{1}{4}|\ve{q}|^{2}}d'p^{cm}_{\hat{b}}+\frac{1}{2}\vartheta^{\hat{a}\hat{b}}q_{\hat{b}},
$$
with $d'p^{cm}_{\hat{a}}$ defined in Eq.~(\ref{Eq:Hxprime}) above and
$$
\vartheta_{\hat{a}\hat{b}}:=\frac{p^{cm}_{\hat{a}}dp^{cm}_{\hat{b}} - p^{cm}_{\hat{b}}dp^{cm}_{\hat{a}}}
 {\sqrt{m^2 + |\ve{p}^{cm}|^2} + m}
 = \frac{p^{cm}_{\hat{a}}d'p^{cm}_{\hat{b}}-p^{cm}_{\hat{b}}d'p^{cm}_{\hat{a}}}
 {\sqrt{m^2 + |\ve{p}^{cm}|^2} + m}
 = -\vartheta_{\hat{b}\hat{a}}.
$$
The corresponding expressions for $d\piz_{\hat{a}}$ and $\Omega^{\hat{a}\hat{b}}\piz_{\hat{b}}$ are obtained from this by substituting $\ve{q}\mapsto -\ve{q}$. The induced metric $H_x^{(2)}$ in terms of the coordinates $(\ve{p}^{cm},\ve{q})$ is obtained by adding together the two contributions $\eta^{\hat{\alpha}\hat{\beta}} dp_{\hat{\alpha}}\otimes dp_{\hat{\beta}}$ and $\eta^{\hat{\alpha}\hat{\beta}} d\pi_{\hat{\alpha}}\otimes d\pi_{\hat{\beta}} $. This yields, finally,
\begin{equation}
H_x^{(2)} = \left( \delta^{\hat{a}\hat{b}} - \frac{1}{4}\frac{q^{\hat{a}} q^{\hat{b}}}{1 + \frac{1}{4}|\ve{q}|^2} \right)
\left[ 2\left( 1 + \frac{1}{4}|\ve{q}|^2 \right) d'p^{cm}_{\hat{a}}\otimes d'p^{cm}_{\hat{b}} 
 + \frac{m^2}{2} Dq_{\hat{a}}\otimes Dq_{\hat{b}} \right],
\end{equation}
where we recall the definition
$$
d'p^{cm}_{\hat{a}} = dp^{cm}_{\hat{a}} - \frac{p^{cm}_{\hat{a}}}{\sqrt{m^2 + |\ve{p}^{cm}|^2}}\frac{(p^{cm})^{\hat{b}} dp^{cm}_{\hat{b}}}{\sqrt{m^2 + |\ve{p}^{cm}|^2} + m},
$$
and where we have also introduced the notation
$$
Dq_{\hat{a}} := dq_{\hat{a}} - \frac{1}{m}\vartheta_{\hat{a}\hat{b}} q^{\hat{b}}
  = dq_{\hat{a}} - \frac{q^{\hat{b}}}{m\left(\sqrt{m^2 + |\ve{p}^{cm}|^2} + m\right)}\left[ p^{cm}_{\hat{a}} dp^{cm}_{\hat{b}}
 - p^{cm}_{\hat{b}} dp^{cm}_{\hat{a}} \right].
$$
(Note that $q^{\hat{a}} Dq_{\hat{a}} = q^{\hat{a}} dq_{\hat{a}}$ since $\vartheta_{\hat{a}\hat{b}} = -\vartheta_{\hat{a}\hat{b}}$ is antisymmetric.) The associated volume form is
\begin{equation}
\eta_{I_x} = m^d\left( 1 + \frac{1}{4}|\ve{q}|^2 \right)^{\frac{d}{2}-1}
\frac{m}{\sqrt{m^2 + |\ve{p}^{cm}|^2}} dp^{cm}_{\hat{1}}\wedge dp^{cm}_{\hat{2}}\wedge\cdots
\wedge dp^{cm}_{\hat{d}}\wedge dq_{\hat{1}}\wedge dq_{\hat{2}}\wedge\cdots\wedge dq_{\hat{d}}.
\end{equation}

If we split $q_{\hat{a}} = g\hat{q}_{\hat{a}}$, $g:=|\ve{q}|$, $\ve{\hat q}\in S^{d-1}$, then a short calculation reveals that
\begin{equation}
H_x^{(2)} = 2\left( 1 + \frac{g^2}{4} \right)
\left( \delta^{\hat{a}\hat{b}} - \frac{1}{4}\frac{g^2}{1 + \frac{1}{4}g^2}\hat{q}^{\hat{a}} \hat{q}^{\hat{b}} \right)
d'p^{cm}_{\hat{a}}\otimes d'p^{cm}_{\hat{b}} 
 + \frac{m^2}{2}\left( \frac{dg\otimes dg}{1 + \frac{1}{4}g^2} + g^2 D\Omega(\hat{q})^2 \right),
\end{equation}
with
\begin{equation}
D\Omega(\hat{q})^2 = \delta^{\hat{a}\hat{b}} (D\hat{q}_{\hat{a}})\otimes (D\hat{q}_{\hat{b}}),
\end{equation}
and
\begin{equation}
\eta_{I_x} = m^d\left( 1 + \frac{g^2}{4} \right)^{\frac{d}{2}-1}
\frac{m}{\sqrt{m^2 + |\ve{p}^{cm}|^2}} dp^{cm}_{\hat{1}}\wedge dp^{cm}_{\hat{2}}\wedge\cdots
\wedge dp^{cm}_{\hat{d}}\wedge g^{d-1} dg\wedge d\Omega(\hat{q}),
\label{Eq:pix2Bis}
\end{equation}
with $d\Omega(\hat{q})$ the solid angle belonging to the unit vector field $\ve{\hat q}\in S^{d-1}$. For $m=1$ and $d=3$ this agrees precisely with Eq.~(3.6) in Ref.~\cite{wI63}.

{\bf Step 3}: Similarly, we may consider the ``momentum space of outgoing particles":
$$
O_x := \{ (p^*,\pi^*)\in P_x^+(m)\times P_x^+ (m): p^*\neq \pi^* \}.
$$
The metric and volume form are the same as in the previous case, we just need to replace $(\ve{p},\ve{\pi})$ with $(\ve{p}^*,\ve{\pi}^*)$ and $(\ve{p}^{cm},g,\ve{\hat q})$ with
$((\ve{p}^*)^{cm},g^*,\ve{\hat q}^*)$.

{\bf Step 4}: After the above remarks we are ready to compute the induced metric and volume form on the collision manifold $C_x$. For this, we observe that $C_x$ is just the submanifold of
$$
I_x\times O_x
$$
for which $p^{cm} = (p^*)^{cm}$ and $g = g^*$. Therefore, the induced metric on $C_x$ can be obtained by adding together the metrics $H_x^{(2)}$ on $I_x$ and $O_x$ and setting $(\ve{p}^*)^{cm} = \ve{p}^{cm}$ and $g^* = g$ in the final result. This yields the following expression for $\gamma_x$ in terms of the coordinates $(\ve{p}^{cm},g,\ve{\hat q},\ve{\hat q}^*)$ on $C_x$:
\begin{equation}
\gamma_x = 4\left( 1 + \frac{g^2}{4} \right)
\left( \delta^{\hat{a}\hat{b}} - \frac{1}{4}\frac{g^2}{1 
 + \frac{1}{4}g^2}\frac{\hat{q}^{\hat{a}}\hat{q}^{\hat{b}} + (\hat{q}^*)^{\hat{a}}(\hat{q}^*)^{\hat{b}}}{2} \right)
d'p^{cm}_{\hat{a}}\otimes d'p^{cm}_{\hat{b}} 
 + m^2\left( \frac{dg\otimes dg}{1 + \frac{1}{4}g^2} 
 + g^2\frac{D\Omega(\hat{q})^2 +  D\Omega(\hat{q}^*)^2}{2} \right).
\end{equation}
To computation of the associated volume form involves the determinant of the $d\times d$ matrix whose coefficients are
$$
\delta^{\hat{a}\hat{b}} - \frac{1}{4}\frac{g^2}{1 
 + \frac{1}{4}g^2}\frac{\hat{q}^{\hat{a}}\hat{q}^{\hat{b}} + (\hat{q}^*)^{\hat{a}}(\hat{q}^*)^{\hat{b}}}{2}.
$$
In order to calculate this determinant, we use a basis for which
$$
\ve{\hat q} = (1,0,0,\ldots,0),\qquad
\ve{\hat q}^* = (\cos\Theta,\sin\Theta,0,\ldots,0),
$$
with $\Theta$ the scattering angle defined in Eq.~(\ref{Eq:ScatteringAngle}). In this basis, the matrix has the form
$$
\left( \begin{array}{ccccc}
 1 - \frac{1}{8}\frac{g^2}{1 + \frac{1}{4} g^2}(1 + \cos^2\Theta)
& -\frac{1}{8}\frac{g^2} {1 + \frac{1}{4} g^2}\cos\Theta\sin\Theta & 0 & \ldots & 0 \\
 -\frac{1}{8}\frac{g^2} {1 + \frac{1}{4} g^2}\cos\Theta\sin\Theta
& 1 - \frac{1}{8}\frac{g^2}{1 + \frac{1}{4} g^2}\sin^2\Theta & 0 & \ldots & 0 \\
0 & 0 & 1 & \ldots & 0 \\
\vdots & \vdots & \vdots & & \vdots \\
0 & 0 & 0 & \ldots & 1 \end{array} \right),
$$
and the determinant is
$$
\frac{1}{\left( 1 + \frac{1}{4}g^2 \right)^2}\left[
 \left( 1 + \frac{g^2}{8} \right)^2 - \left( \frac{g^2}{8} \right)^2\cos^2\Theta \right]. 
$$
Using these observations, one finds the expression for the volume form announced in Eq.~(\ref{Eq:mux}).

%%%%%%%%%%%%%%%%%%%%%%%%%%%%%%%%%%%%%%%%%%%%%
\section{Symmetries}
\label{App:Symmetries}
%%%%%%%%%%%%%%%%%%%%%%%%%%%%%%%%%%%%%%%%%%%%%

In many situations one is interested in restricting the gas configuration to satisfy certain symmetries (i.e. stationarity, spherical symmetry, axisymmetry etc.), which leads to the need of understanding how such symmetries should be imposed on the distribution function. In this appendix the necessary tools for performing this task are discussed. We restrict ourselves to the case of continuous (as opposed to discrete) symmetry groups.

Before talking about a specific symmetry that should be imposed on the gas configuration, it is important to realize that this only makes sense in general if the underlying spacetime itself possesses the required symmetry.\footnote{There also exists the possibility of encountering situations in which ``hidden symmetries" arise which do not originate from a symmetry of spacetime itself but rather from a symmetry of the associated cotangent bundle. Such symmetries are described by symplectic vector fields which commute with the Liouville vector field. A prominent example is the geodesic flow in the Kerr spacetime~\cite{MTW-Book}.} For example, the notation of a steady-state axisymmetric configuration only makes sense if spacetime itself is both stationary and axisymmetric. For this reason, we will start with the assumption that spacetime $(M,g)$ possesses a one-parameter group of isometries generated by a Killing vector field $\xi\in {\cal X}(M)$, or more generally, that spacetime possesses a Lie group $G$ of isometries generated by a finite number of Killing vector fields $\xi_1,\xi_2,\ldots,\xi_r$ satisfying commutation relations $[\xi_a,\xi_b] = C^d{}_{ab} \xi_d$ with associated structure constants $C^d{}_{ab}$. As we will see shortly, there is a natural way of lifting each Killing vector field $\xi$ to a vector field $\hat{\xi}\in {\cal X}(T^*M)$ on the cotangent bundle. This lifted vector field, called the \emph{complete lift} of $\xi$, satisfies many nice properties which we shall summarize in this appendix (see~\cite{pRoS17} for further details and~\cite{oStZ14b} and references therein for the corresponding formulation on the tangent bundle). In particular, $\hat{\xi}$ commutes with the Liouville vector field $L$ and generates a one-parameter group of isometries of the cotangent bundle $(T^*M,\hat{g})$ endowed with the Sasaki metric. In this way, a Lie group $G$ of isometries on $(M,g)$ naturally lifts to a Lie group of isometries on $(T^*M,\hat{g})$ where the lifted generators $\hat{\xi}_1,\hat{\xi}_2,\ldots,\hat{\xi}_r$ can be shown to satisfy the commutation relation $[\hat{\xi}_a,\hat{\xi}_b] = C^d{}_{ab}\hat{\xi}_d$ with the same structure constants $C^d{}_{ab}$. A distribution function $f$ is then called $G$-symmetric if it is invariant with respect to the flow generated by each $\hat{\xi}_a$, i.e. if
\begin{equation}
\pounds_{\hat{\xi}_a}[f] = 0,\qquad a=1,2,\ldots,r.
\label{Eq:Gsymmetric}
\end{equation}
One question that arises now is whether or not the conditions~(\ref{Eq:Gsymmetric}) are compatible with the relativistic Boltzmann equation. Suppose that $f$ is a solution of Eq.~(\ref{Eq:Boltzmann}):
\begin{equation}
\pounds_{L_F} [f] = C_W[f,f],
\label{Eq:BoltzmannBis}
\end{equation}
with $L_F$ the Liouville vector field (see Eq.~(\ref{Eq:LiouvilleVFCharged})) and $C_W[f,f]$ the collision term (see Eq.~(\ref{Eq:CollisionTerm1}) or (\ref{Eq:CollisionTerm3})). Applying the vector fields $\hat{\xi}_a$ on both sides of Eq.~(\ref{Eq:BoltzmannBis}) yields
\begin{equation}
\pounds_{L_F}\pounds_{\hat{\xi}_a} f = \pounds_{ [L_F,\hat{\xi}_a] } f 
+ \pounds_{\hat{\xi}_a} C_W[f,f].
\label{Eq:IntegrabilityCondition}
\end{equation}
In the absence of collisions and in the uncharged case when $L_F = L$, the right-hand side vanishes automatically and it follows that the conditions~(\ref{Eq:Gsymmetric}) are compatible with Eq.~(\ref{Eq:BoltzmannBis}). In order for this to be the case for the full Boltzmann equation one needs to guarantee that $[L_F,\hat{\xi}_a] $ and $\pounds_{\hat{\xi}_a} C_W[f,f]$ vanish if $f$ satisfies~(\ref{Eq:Gsymmetric}), and these conditions will form part of the discussion of this appendix.

\subsection{The complete lift on $T^*M$}

Let $\varphi^\lambda: M\to M$ be a one-parameter group of diffeomorphisms on the base manifold $M$ with corresponding infinitesimal generator $\xi\in {\cal X}(M)$. Irrespectively of whether or not $\xi$ is a Killing vector field, we can naturally lift $\varphi^\lambda$ to the cotangent bundle $T^*M$ by defining~\cite{pRoS17}
\begin{equation}
\hat{\varphi}^\lambda: T^*M\to T^*M,\quad
(x,p)\mapsto \hat{\varphi}^\lambda(x,p) := (\varphi^\lambda(x), [(d\varphi^\lambda_x)^*]^{-1}(p) ),
\label{Eq:LiftedFlow}
\end{equation}
where $d\varphi^\lambda_x : T_x M\to T_{\varphi^\lambda(x)} M$ denotes the differential (or push-forward) of the map $\varphi^\lambda$ at the point $x\in M$ and $(d\varphi^\lambda_x)^* : T_{\varphi^\lambda(x)}^* M\to T_x^* M$ its adjoint (or pull-back). It is a simple matter to verify that $\hat{\varphi}^\lambda$ defines a one-parameter group of diffeomorphisms on $T^*M$.

\begin{definition}
\label{Def:CompleteLift}
Let $\xi\in {\cal X}(M)$, and let $\varphi^\lambda: M\to M$ be the (local) one-parameter group of diffeomorphisms generated by $\xi$.  Consider the corresponding lifted group $\hat{\varphi}^\lambda: T^*M\to T^*M$ defined by Eq.~(\ref{Eq:LiftedFlow}). Its generator
\begin{equation}
\hat{\xi}_{(x,p)} := \left. \frac{d}{d\lambda} \right|_{\lambda = 0} \hat{\varphi}^\lambda(x,p)
\label{Eq:CompleteLift}
\end{equation}
is called the complete lift of $\xi$.
\end{definition}

In adapted local coordinates $(x^\mu,p_\mu)$ one finds the following expression:
\begin{equation}
\boxed{
\hat{\xi}_{(x,p)} = \xi^\mu(x)\left. \frac{\partial}{\partial x^\mu} \right|_{(x,p)}
 - p_\alpha\frac{\partial\xi^\alpha}{\partial x^\mu}(x)
 \left. \frac{\partial}{\partial p_\mu}  \right|_{(x,p)},\qquad
\xi_x = \xi^\mu(x)\left. \frac{\partial}{\partial x^\mu} \right|_x\, .}
\label{Eq:xihat}
\end{equation}

The most important properties of the complete lift are summarized in the next proposition (cf. Proposition~1 in~\cite{pRoS17}, Proposition~4 in~\cite{oStZ14b} and references therein) whose proof is included for completeness. For this, we recall the definitions in Eqs.~(\ref{Eq:Omega},\ref{Eq:Sasaki}) of the symplectic form $\Omega_s$ and Sasaki metric $\hat{g}$ on $T^* M$.

\begin{proposition}
\label{Prop:CompleteLift}
Let $\xi,\eta\in {\cal X}(M)$. Then one has:
\begin{enumerate}
\item[(i)] $[\hat{\xi},\hat{\eta}] = \widehat{[\xi,\eta]}$ for all $\xi,\eta\in {\cal X}(M)$, i.e. the complete lift preserves the Lie-brackets.
\item[(ii)] $\pounds_{\hat{\xi}}\Omega_s = 0$, i.e. $\hat{\xi}$ generates a symplectic flow on $(T^*M,\Omega_s)$.
\item[(iii)] $\hat{\xi} = X_{\mathcal{F}}$ with $X_{\mathcal{F}}$ the Hamiltonian vector field generated by the function $\mathcal{F} = p(\xi)$.
\item[(iv)] Let $m > 0$. $\hat{\xi}$ is tangent to the future mass shell $\Gamma_m^+$ at each point  $(x,p)\in \Gamma_m^+$ if and only if $\xi$ is a Killing vector field of $(M,g)$.
\item[(v)] $[L,\hat{\xi}] = 0$ if and only if $\xi$ is a Killing vector field of $(M,g)$.
\item[(vi)] $\xi$ is a Killing vector field of $(M,g)$ if and only $\hat{\xi}$ is a Killing vector field of $(T^*M,\hat{g})$.
\end{enumerate}
\end{proposition}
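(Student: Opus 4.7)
The plan is to organize the proof around part~(iii): once $\hat{\xi} = X_\mathcal{F}$ with $\mathcal{F} = p(\xi)$ is established, parts~(i), (ii), (iv), and~(v) follow almost immediately from standard symplectic identities. For~(iii), I would substitute $\mathcal{F}(x,p) = p_\alpha\xi^\alpha(x)$ into the coordinate formula~(\ref{Eq:XH}) for the Hamiltonian vector field, obtaining $\partial\mathcal{F}/\partial p_\mu = \xi^\mu$ and $\partial\mathcal{F}/\partial x^\mu = p_\alpha\,\partial\xi^\alpha/\partial x^\mu$, which matches the coordinate expression~(\ref{Eq:xihat}) for $\hat{\xi}$. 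Part~(ii) is then an immediate consequence of Eq.~(\ref{Eq:LOmega}), since Hamiltonian vector fields preserve the symplectic form. For part~(i), I would combine~(iii) with the standard identity $[X_\mathcal{F},X_\mathcal{G}] = X_{\{\mathcal{F},\mathcal{G}\}}$ (itself a consequence of Jacobi's identity for the Poisson bracket) together with the direct computation $\{p(\xi),p(\eta)\} = p([\xi,\eta])$ carried out using~(\ref{Eq:PoissonBracketCoordinate}).

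The crux of~(iv) and~(v) is the identity
\[
  L[p(\xi)] \;=\; p^\mu p^\nu \nabla_{(\mu}\xi_{\nu)} ,
\]
which I would derive by writing $L = g^{\mu\nu}p_\nu \frac{D}{dx^\mu}$ as in Eq.~(\ref{Eq:LiouvilleHor}), noting that $\frac{D}{dx^\mu}[p_\alpha\xi^\alpha] = p_\alpha\nabla_\mu\xi^\alpha$, and then symmetrizing on the factor $p^\mu p^\nu$. For~(iv), since $\Gamma_m$ is the level set $\{\mathcal{H}=-m^2/2\}$, (iii) yields $\hat{\xi}[\mathcal{H}] = \{p(\xi),\mathcal{H}\} = -L[p(\xi)]$; hence $\hat{\xi}$ is tangent to $\Gamma_m^+$ iff the quadratic form $p^\mu p^\nu \nabla_{(\mu}\xi_{\nu)}$ vanishes on each future mass hyperboloid $P_x^+(m)$. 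But any homogeneous quadratic polynomial in $p$ that vanishes on $P_x^+(m)$ must vanish on the entire future timelike cone (by the rescaling $p\mapsto k p$ with $k>0$), and therefore identically on $T_x^*M$ by polynomial extension from the open interior of the cone; this forces $\nabla_{(\mu}\xi_{\nu)}=0$. For~(v), combining~(iii) with $[X_\mathcal{H},X_\mathcal{F}] = X_{\{\mathcal{H},\mathcal{F}\}}$ gives $[L,\hat{\xi}] = X_{L[p(\xi)]}$, which vanishes iff $L[p(\xi)]$ is locally constant on the (connected) manifold $T^*M$; being a homogeneous polynomial of degree two in $p$, constancy is equivalent to $L[p(\xi)]\equiv 0$, i.e.\ to $\xi$ being Killing by the same argument as in~(iv).

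The main obstacle is part~(vi), which relates the Killing condition on $(M,g)$ to that on the higher-dimensional Sasaki manifold $(T^*M,\hat{g})$. For the forward direction I would argue that if $\xi$ is Killing, then its flow $\varphi^\lambda$ consists of isometries of $(M,g)$, which in particular commute with parallel transport; consequently the lifted flow $\hat{\varphi}^\lambda$ of Eq.~(\ref{Eq:LiftedFlow}) intertwines appropriately with both the differential $d\pi$ and the connection map $K$, and therefore preserves the horizontal-vertical decomposition~(\ref{Eq:tanS_of_cotanM}). The intrinsic characterization~(\ref{Eq:Sasaki}) writes $\hat{g}$ as $g(d\pi(\,\cdot\,),d\pi(\,\cdot\,)) + g^{-1}(K(\,\cdot\,),K(\,\cdot\,))$; since both $g$ and $g^{-1}$ are $\varphi^\lambda$-invariant and $\hat{\varphi}^\lambda$ is compatible with $d\pi$ and $K$, it follows that $\hat{\varphi}^\lambda$ is a $\hat{g}$-isometry, i.e.\ $\pounds_{\hat{\xi}}\hat{g}=0$. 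For the converse, since $\hat{\xi}$ projects under $d\pi$ to $\xi$ and the horizontal-horizontal block of $\hat{g}$ equals $\pi^* g$, evaluating $\pounds_{\hat{\xi}}\hat{g}=0$ on pairs of purely horizontal lifts reduces to $\pi^*(\pounds_\xi g)=0$, which forces $\pounds_\xi g=0$. A more pedestrian alternative proof of~(vi) proceeds by direct coordinate computation using the explicit representation~(\ref{Eq:SasakiExplicit}) together with~(\ref{Eq:xihat}), but the geometric argument above is cleaner and avoids routine bookkeeping.
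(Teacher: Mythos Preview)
Your argument is correct and, for parts (ii)--(v), follows essentially the same Hamiltonian route as the paper: identify $\hat\xi=X_{p(\xi)}$, then read off the remaining statements from $\{\mathcal{F},\mathcal{H}\}=\frac{1}{2}(\pounds_\xi g^{-1})(p,p)$ and the identity $[X_{\mathcal{F}},X_{\mathcal{H}}]=X_{\{\mathcal{F},\mathcal{H}\}}$.

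Two places where you genuinely differ. For (i), the paper simply verifies $[\hat\xi,\hat\eta]=\widehat{[\xi,\eta]}$ by a direct coordinate computation from Eq.~(\ref{Eq:xihat}); your derivation via $[X_{\mathcal{F}},X_{\mathcal{G}}]=X_{\{\mathcal{F},\mathcal{G}\}}$ together with $\{p(\xi),p(\eta)\}=p([\xi,\eta])$ is more in keeping with the symplectic spirit of the rest and arguably tidier. For (vi), the paper takes precisely the ``pedestrian'' path you mention as an alternative: it evaluates $\pounds_{\hat\xi}\hat g$ on the adapted basis $\tfrac{D}{dx^\mu},\tfrac{\partial}{\partial p_\mu}$ and obtains the three blocks $\pounds_\xi g_{\mu\nu}$, $\pounds_\xi g^{\mu\nu}$, and $-\bigl(\nabla_\mu\nabla_\nu\xi^\alpha - R^\alpha{}_{\nu\mu\sigma}\xi^\sigma\bigr)p_\alpha$, the last vanishing by the second-order Killing identity. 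That computation has the virtue of exposing exactly which identity carries the weight (and gives the converse for free, since the horizontal--horizontal block is literally $\pounds_\xi g_{\mu\nu}$). Your geometric argument via the lifted flow is conceptually cleaner but leaves the step ``isometries of $(M,g)$ intertwine with the connection map $K$'' implicit; it is true because isometries commute with $\nabla$, but in a write-up you should state this explicitly. Your converse is fine once one observes that $\hat\xi$ and the horizontal lifts are $\pi$-related to $\xi$ and the base vector fields, so the horizontal--horizontal block of $\pounds_{\hat\xi}\hat g$ really does project to $\pounds_\xi g$.
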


\proof (i) can be verified directly using Eq.~(\ref{Eq:xihat}). As for (ii) and (iii) we use Eqs.~(\ref{Eq:Omega},\ref{Eq:xihat}) and compute
\begin{equation}
i_{\hat{\xi}}\Omega_s = i_{\hat{\xi}}(dp_\mu\wedge dx^\mu)
 = dp_\mu(\hat{\xi}) dx^\mu - dx^\mu(\hat{\xi}) dp_\mu
 = -p_\alpha\frac{\partial \xi^\alpha}{\partial x^\mu} dx^\mu - \xi^\mu dp_\mu
 = -d\mathcal{F},
\label{Eq:ihatxiOmega}
\end{equation}
with $\mathcal{F} = p(\xi) = p_\mu\xi^\mu$. It follows that $\hat{\xi} = X_{\mathcal{F}}$ is the Hamiltonian vector field associated with $\mathcal{F}$ (see Definition~\ref{Def:HamVF}), which implies (ii) and (iii).

As for (iv) and (v), we remark first that $\hat{\xi}$ is tangent to $\Gamma_m^+$ if and only if $\pounds_{\hat{\xi}}\mathcal{H} = d\mathcal{H}(\hat{\xi}) = \{ \mathcal{F},\mathcal{H} \} = 0$ at each $(x,p)\in \Gamma_m^+$, since $\Gamma_m^+$ is a level surface of the free-particle Hamiltonian $\mathcal{H}$ (see Eq.~(\ref{Eq:FreeParticleH})). On the other hand, combining  the definition of $\mathcal{F}$ with Eqs.~(\ref{Eq:FreeParticleH},\ref{Eq:PoissonBracketCoordinate}) one finds for all $(x,p)\in T^* M$,
\begin{equation}
\{ \mathcal{F},\mathcal{H} \} = \frac{1}{2} (\pounds_\xi g)^{\mu\nu} p_\mu p_\nu
 = \frac{1}{2}(\pounds_\xi g^{-1})(p,p),
\label{Eq:PoissonFH}
\end{equation}
which implies (iv). To prove (v) we use the identity\footnote{This identity follows from the identity $[\pounds_X,i_Y] = i_{[X,Y]}$ by noticing that $i_{[ X_{\mathcal{F}},X_{\mathcal{H}}]}\Omega_s = \pounds_{X_{\mathcal{F}}} i_{\mathcal{H}}\Omega_s = -\pounds_{X_{\mathcal{F}}} d\mathcal{H} = -d\pounds_{X_{\mathcal{F}}}\mathcal{H} = -d\{ \mathcal{F},\mathcal{H} \}$.}
\begin{equation}
[ X_{\mathcal{F}},X_{\mathcal{H}}] = X_{ \{ \mathcal{F},\mathcal{H} \} },
\end{equation}
which shows that $\hat{\xi} = X_{\mathcal{F}}$ and $L = X_{\mathcal{H}}$ commute with each other if and only if $\{\mathcal{F},\mathcal{H} \}$ is constant. By taking $p=0$ in Eq.~(\ref{Eq:PoissonFH}) one concludes that this constant must be zero, and (v) follows.

Finally, to prove (vi) we evaluate the right-hand side of the identity
\begin{equation}
(\pounds_{\hat{\xi}}\hat{g})(X,Y) 
 = \hat{\xi}[ \hat{g}(X,Y) ] - \hat{g}(\pounds_{\hat{\xi}} X,Y) - \hat{g}(X,\pounds_{\hat{\xi}} Y)
\label{Eq:Lieghat}
\end{equation}
for the particular basis of vector fields $\displaystyle X,Y =  \frac{D}{dx^{\mu}}, \frac{\partial}{\partial p_{\mu}}$ (see Eq.~(\ref{Eq:Ddx})). In order to compute the Lie derivatives on the right-hand side of Eq.~(\ref{Eq:Lieghat}) it is useful to rewrite Eq.~(\ref{Eq:xihat}) in terms of covariant derivatives, such that
\begin{equation}
\boxed{ \hat{\xi} = \xi^\mu \frac{D}{dx^\mu} 
- p_\alpha(\nabla_\mu\xi^\alpha) \frac{\partial}{\partial p_\mu},}
\end{equation}
and use the commutation relations~(\ref{Eq:Commutators}) to find:
\begin{eqnarray}
\pounds_{\hat{\xi}} \frac{D}{dx^{\mu}} &=& \left[ \hat{\xi}, \frac{D}{dx^{\mu}} \right] 
 = -\frac{\partial\xi^\alpha}{\partial x^\mu}\frac{D}{dx^\alpha}  
+  \left( \nabla_\mu\nabla_\beta\xi^\alpha - R^\alpha{}_{\beta\mu\sigma}\xi^\sigma \right)
 p_\alpha\frac{\partial}{\partial p_\beta}, \\
\pounds_{\hat{\xi}} \frac{\partial}{\partial p_\mu} &=& \left[ \hat{\xi}, \frac{\partial}{\partial p_\mu} \right] = \frac{\partial\xi^\mu}{\partial x^\alpha}\frac{\partial}{\partial_\alpha}.
\end{eqnarray}
Taking into account the expressions~(\ref{Eq:ghatComponents}) for the components of the Sasaki metric it follows that
\begin{eqnarray}
(\pounds_{\hat{\xi}}\hat{g})\left( \frac{D}{dx^\mu},\frac{D}{dx^\nu} \right) 
 &=& \pounds_\xi g_{\mu\nu},\\
(\pounds_{\hat{\xi}}\hat{g})\left( \frac{\partial}{\partial p_\mu},\frac{\partial}{\partial p_\nu} \right) &=& \pounds_\xi g^{\mu\nu},\\
(\pounds_{\hat{\xi}}\hat{g})\left( \frac{D}{dx^\mu},\frac{\partial}{\partial p_\nu} \right) 
 &=& -\left( \nabla_\mu\nabla_\nu \xi^\alpha - R^\alpha{}_{\nu\mu\sigma}\xi^\sigma \right) p_\alpha.
\end{eqnarray}
Since any Killing vector field $\xi$ on $(M,g)$ satisfies the equation $\nabla_\mu\nabla_\nu \xi^\alpha = R^\alpha{}_{\nu\mu\sigma}\xi^\sigma$, the statement (v) follows immediately from these identities.
\qed

It follows from the previous proposition that a Lie group $G$ of isometries on $(M,g)$ naturally lifts to a group $G$ of symplectic isometries on $(T^*M,\Omega_s,\hat{g})$. Furthermore, the action of this group on $T^* M$ leaves the future mass shell $\Gamma_m^+$ invariant and commutes with the Liouville vector field $L$. In particular, it follows that this action is an isometry of $(\Gamma_m^+,\hat{h})$ with $\hat{h}$ the induced metric on the future mass shell. Note also that the current density vector field ${\cal J} = f L/m$ (see Eq.~(\ref{Eq:JDef})) associated with a $G$-symmetric distribution function $f$ is also invariant with respect to $G$. 

The next result implies that a $G$-symmetric distribution function has associated to it $G$-symmetric observables.

\begin{theorem}
\label{Thm:TsLieIdentity}
Let $f\in C_0^\infty(\Gamma_m^+)$ and $s\in \Natural_0$. Then, for any Killing vector field $\xi\in {\cal X}(M)$ on $(M,g)$ the $s$-rank symmetric tensor field $T^{(s)}$ defined in Eq.~(\ref{Eq:Ts}) satisfies
\begin{equation}
\boxed{\pounds_\xi T_x^{(s)}(X_1,X_2,\ldots,X_s)
 = \int\limits_{P_x^+(m)} (\pounds_{\hat{\xi}} f)(x,p) p(X_1) p(X_2)\cdots p(X_s)
 \dvol_x(p),}
\label{Eq:TsLieIdentity}
\end{equation}
for all $x\in M$ and $X_1,X_2,\ldots,X_s\in T_x M$. In particular, $\pounds_{\hat{\xi}} f = 0$ implies $\pounds_\xi T^{(s)} = 0$ if $\xi$ is a Killing vector field.
\end{theorem}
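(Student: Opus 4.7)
The plan is to differentiate the fibre-integral representation of $T^{(s)}$ under the flow of $\xi$, using the lifted flow $\hat\varphi^\lambda$ generated by the complete lift $\hat\xi$ on $T^*M$ as a change-of-variables device, and then to identify the resulting expression with $\pounds_\xi T^{(s)}$ through the Leibniz rule for the Lie derivative of a covariant tensor. Concretely, extend $X_1,\ldots,X_s$ arbitrarily to smooth vector fields on a neighborhood of $x$; since the statement is pointwise in the $X_i$, the extension is inessential. Then apply the identity
\begin{equation}
(\pounds_\xi T^{(s)})(X_1,\ldots,X_s)
 = \xi\bigl[T^{(s)}(X_1,\ldots,X_s)\bigr]
 - \sum_{i=1}^s T^{(s)}(X_1,\ldots,\pounds_\xi X_i,\ldots,X_s),
\label{Eq:LeibnizTs}
\end{equation}
so the task reduces to evaluating $\xi\bigl[T^{(s)}(X_1,\ldots,X_s)\bigr]$ and showing that the ``correction'' sum in~\eqref{Eq:LeibnizTs} is cancelled by boundary-type contributions produced by the change of variables.

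First I would let $\varphi^\lambda$ denote the flow of $\xi$ on $M$ and $\hat\varphi^\lambda$ its complete lift on $T^*M$ (Definition~\ref{Def:CompleteLift}). By Proposition~\ref{Prop:CompleteLift}(iv) and (vi), since $\xi$ is Killing, $\hat\varphi^\lambda$ restricts to a diffeomorphism of $\Gamma_m^+$ which maps the fibre $P_x^+(m)$ onto $P_{\varphi^\lambda(x)}^+(m)$ and which is an isometry of $(\Gamma_m^+,\hat h)$. Consequently $\hat\varphi^\lambda$ preserves $\eta_{\Gamma_m^+}$, and because $\varphi^\lambda$ preserves $\eta_M$, the Fubini-type factorization $\eta_{\Gamma_m^+}=\eta_M\wedge\eta_{P_x^+(m)}$ of Eq.~(\ref{Eq:etaGammam+Short}) forces the pullback of $\dvol_{\varphi^\lambda(x)}(p')$ along $\hat\varphi^\lambda$ to equal $\dvol_x(p)$. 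Writing $p'=[(d\varphi^\lambda_x)^*]^{-1}(p)$ one also has the elementary identity $p'(X_i|_{\varphi^\lambda(x)})=p\bigl((\varphi^\lambda)^*X_i|_x\bigr)$, so
\begin{equation}
T^{(s)}(X_1,\ldots,X_s)(\varphi^\lambda(x))
 = \int_{P_x^+(m)} (f\circ\hat\varphi^\lambda)(x,p)\,
  p\bigl((\varphi^\lambda)^*X_1|_x\bigr)\cdots p\bigl((\varphi^\lambda)^*X_s|_x\bigr)\,\dvol_x(p).
\end{equation}

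Differentiating this identity at $\lambda=0$, using $\frac{d}{d\lambda}\big|_0 (f\circ\hat\varphi^\lambda) = \pounds_{\hat\xi} f$ and $\frac{d}{d\lambda}\big|_0 (\varphi^\lambda)^*X_i = \pounds_\xi X_i$, yields
\begin{equation}
\xi\bigl[T^{(s)}(X_1,\ldots,X_s)\bigr]_x
 = \int_{P_x^+(m)} (\pounds_{\hat\xi} f)(x,p)\,p(X_1)\cdots p(X_s)\,\dvol_x(p)
 + \sum_{i=1}^s T^{(s)}(X_1,\ldots,\pounds_\xi X_i,\ldots,X_s)(x).
\end{equation}
Inserting this into~\eqref{Eq:LeibnizTs} gives the desired identity~(\ref{Eq:TsLieIdentity}), and the final assertion about $G$-symmetric distribution functions is immediate.

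The main technical step is the invariance of the fibre volume $\dvol_x(p)$ under the lifted flow, which relies on combining three earlier ingredients: tangency of $\hat\xi$ to $\Gamma_m^+$ (Proposition~\ref{Prop:CompleteLift}(iv)), the isometry property on the Sasaki manifold (Proposition~\ref{Prop:CompleteLift}(vi)), and the product decomposition~(\ref{Eq:etaGammam+Short}) of $\eta_{\Gamma_m^+}$. Once these are in hand the rest is differentiation under the integral sign; a slightly more direct variant would be to use the Cartan-style argument $\pounds_{\hat\xi}(f\,\eta_{\Gamma_m^+})=(\pounds_{\hat\xi}f)\eta_{\Gamma_m^+}$ against a test function $\chi\in C_0^\infty(M)$ together with Fubini, but the flow-based derivation above is the most transparent.
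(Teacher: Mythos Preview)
Your argument is correct and complete. It is, however, a genuinely different route from the two proofs the paper gives. The paper's first proof is in the spirit of Theorem~\ref{Thm:sMoments}: one integrates $\pounds_{\hat\xi}f$ over the tube $V=\{(x,p):x\in K,\ p\in P_x^+(m)\}$ for a compact $K\subset M$, uses that $\hat\xi$ is divergence-free on $(\Gamma_m^+,\hat h)$ to convert to a boundary flux via Gauss' theorem, rewrites the boundary flux as a flux of $F\xi$ on $\partial K$ (with $F$ the scalar fibre integral), and applies Gauss again on $M$; the case $s\geq 1$ is then reduced to $s=0$ by the substitution $f\mapsto f\,p(X_1)\cdots p(X_s)$ and the identity $\pounds_{\hat\xi}[p(X)]=p(\pounds_\xi X)$. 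The paper's second proof is even shorter: pass to mixed coordinates $(x^\mu,p_{\hat\alpha})$ with an orthonormal frame chosen so that $\pounds_\xi e_{\hat\alpha}=0$, so that $\hat\xi$ reduces to $\xi^\mu\partial_{x^\mu}$ and the identity is literally differentiation under the integral sign.

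Your flow-and-change-of-variables argument is closest in spirit to the second proof but replaces the special frame with an explicit transport along $\hat\varphi^\lambda$; the advantage is that it makes the geometric content visible and avoids any frame choice, while the paper's frame trick is quicker once one knows it exists. One small streamlining you might consider: the invariance of the fibre volume under $\hat\varphi^\lambda$ follows more directly from the fact that, $\varphi^\lambda$ being an isometry, the linear map $[(d\varphi^\lambda_x)^*]^{-1}:(T_x^*M,g_x^{-1})\to(T_{\varphi^\lambda(x)}^*M,g_{\varphi^\lambda(x)}^{-1})$ is itself an isometry and therefore carries $P_x^+(m)$ isometrically onto $P_{\varphi^\lambda(x)}^+(m)$; the detour through the Sasaki metric and the Fubini factorization is valid but unnecessary.
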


\proof
The proof proceeds in a similar way as the one of Theorem~\ref{Thm:sMoments}. We start with the more elegant version based on Gauss' theorem. Let $K\subset M$ be a compact subset of $M$ with $C^\infty$-boundary $\partial K = S$ and unit outward normal $s$, and let $V = \{ (x,p) : x\in K, p\in P_x^+(m) \}$ be the corresponding subset of the future mass shell $\Gamma_m^+$ with boundary $\partial V$ whose unit normal is $\displaystyle \nu = s^\mu\frac{D}{dx^\mu}$ (see Eq.~(\ref{Eq:nu})). Using the Fubini-type formula~(\ref{Eq:Fubini}), Gauss' theorem and the fact that $\hat{\xi}$ is divergence-free on $(\Gamma_m^+,\hat{h})$ due to its Killing property, yields
\begin{eqnarray*}
\int\limits_K \left( \int\limits_{P_x^+(m)} \pounds_{\hat{\xi}} f(x,p) \dvol_x(p) \right) \eta_M
&=& \frac{1}{m}\int\limits_V \pounds_{\hat{\xi}} f \eta_{\Gamma_m^+}
 = \frac{1}{m}\int\limits_V \divrg_{\hat{h}}( f\hat{\xi} ) \eta_{\Gamma_m^+}
 = \frac{1}{m}\int\limits_{\partial V} \hat{h}(f\hat{\xi},\nu) \eta_{\partial V}\\
&=& \frac{1}{m}\int\limits_{\partial V} g(\xi,s) f \eta_{\partial V}
 = \int\limits_{\partial K} F(x) g(\xi,s) \eta_S
 = \int\limits_K \divrg_g (F\xi) \eta_M,
\end{eqnarray*}
where we have set
\begin{equation}
F(x) := \int\limits_{P_x^+(m)} f(x,p) \dvol_x(p) = T^{(0)}(x).
\end{equation}
Since $\xi$ is divergence-free on $(M,g)$, $\divrg_g(F\xi) = \pounds_\xi F$ and the statement follows for $s = 0$. For $s = 1$ one takes a vector field $X\in {\cal X}(M)$ and replaces $f(x,p)$ with $\hat{f}(x,p) := f(x,p) p(X)$ in the identity~(\ref{Eq:TsLieIdentity}) with $s=0$ that has just been proven. This gives
\begin{equation}
\pounds_\xi[ T^{(1)}(X) ] = \int\limits_{P_x^+(m)} \pounds_{\hat{\xi}} \hat{f}(x,p) \dvol_x(p)
 = \int\limits_{P_x^+(m)} \left[ (\pounds_{\hat{\xi}} f) p(X) + f p(\pounds_\xi X) \right]\dvol_x(p),
\end{equation}
where in the last step we have used the identity $\pounds_{\hat{\xi}}[ p(X) ] = p(\pounds_\xi X)$ which follows directly from Eq.~(\ref{Eq:xihat}). It follows from this that
\begin{equation}
\pounds_\xi T^{(1)}(X) = \pounds_\xi[ T^{(1)}(X) ] - T^{(1)}(\pounds_\xi X)
 = \int\limits_{P_x^+(m)} (\pounds_{\hat{\xi}} f) p(X) \dvol_x(p),
\end{equation}
which proves the identity for $s = 1$. The proof for $s\geq 2$ is analogous.

A simple alternative proof is based on the use of the mixed local coordinates $(x^\mu,p_{\hat{\alpha}})$ introduced towards the end of subsection~\ref{SubSec:Liouville}, where $p_{\hat{\alpha}}$ refer to the components of $p$ with respect to an orthonormal basis $\{ e_{\hat{\alpha}} \}$ of vector fields. In terms of these coordinates, the complete lift of $\xi$ has the following representation:
\begin{equation}
\hat{\xi} = \xi^\mu\frac{\partial}{\partial x^\mu} + (\pounds_\xi e_{\hat{\alpha}})^\mu p_\mu\frac{\partial}{\partial p_{\hat{\alpha}}}.
\end{equation}
Since $\xi$ is a Killing vector field one can choose $\pounds_\xi e_{\hat{\alpha}} = 0$, and then the theorem with $s=0$ follows by applying the operator $\xi^\mu\partial_\mu$ on both sides of Eq.~(\ref{Eq:TsCoord}) with $\dvol_x(p)$ expressed in terms of orthonormal components of $p$.
\qed

\subsection{The complete lift on $T^*C$}

Next, we lift the (local) one-parameter group of diffeomorphisms $\varphi^\lambda: M\to M$ associated with a vector field $\xi\in {\cal X}(M)$ on the collision bundle $T^*C$ defined in Eq.~(\ref{Eq:TCDef}). For this, we note first that $T^* C$ is a submanifold of the bundle
\begin{equation}
{\cal B}_4 := \{ (x,p_1,p_2,p_3,p_4) : x\in M, (p_1,p_2,p_3,p_4)\in T_x^* M \},
\end{equation}
on which we can define the lifted flow in an analogous way as in Eq.~(\ref{Eq:LiftedFlow}), that is,\footnote{With a slight abuse of notation we shall denote this lift again by $\hat{\varphi}^\lambda$.}
\begin{equation}
\hat{\varphi}^\lambda: {\cal B}_4\to {\cal B}_4,\quad
(x,p_1,p_2,p_3,p_4)\mapsto 
 (\varphi^\lambda(x), [(d\varphi^\lambda_x)^*]^{-1}(p_1),[(d\varphi^\lambda_x)^*]^{-1}(p_2),[(d\varphi^\lambda_x)^*]^{-1}(p_3),[(d\varphi^\lambda_x)^*]^{-1}(p_4) ).
\label{Eq:LiftetFlowB4}
\end{equation}
It is straightforward to verify that $\hat{\varphi}^\lambda$ defines a (local) one-parameter group of diffeomorphisms on ${\cal B}_4$. Denote by $\hat{\xi}$ the associated infinitesimal generator. In adapted local coordinates it can be written as
\begin{equation}
\hat{\xi} = \xi^\mu\frac{\partial}{\partial x^\mu} 
 - \frac{\partial\xi^\alpha}{\partial x^\mu}\left[
 (p_1)_\alpha\frac{\partial}{\partial (p_1)_\mu} 
 + ( p_2)_\alpha\frac{\partial}{\partial (p_2)_\mu} 
+ (p_3)_\alpha\frac{\partial}{\partial (p_3)_\mu} 
+ (p_4)_\alpha\frac{\partial}{\partial (p_4)_\mu} 
\right],\qquad
\xi = \xi^\mu\frac{\partial}{\partial x^\mu},
\label{Eq:xihatB4}
\end{equation}
which generalizes the expression~(\ref{Eq:xihat}) for ${\cal B}_4$.

Next, we claim that the flow $\hat{\varphi}^\lambda$ leaves the collision bundle $T^*C$ invariant, provided that $\xi$ is a Killing vector field on $(M,g)$. For this, we notice that for each $i,j\in \{ 1,2,3,4 \}$ the functions $\mathcal{F}_{ij}: {\cal B}_4\to \Real$ defined by
\begin{equation}
\mathcal{F}_{ij}(x,p_1,p_2,p_3,p_4) := g_x^{-1}(p_i,p_j),
\label{Eq:FijDef}
\end{equation}
satisfy
\begin{equation}
(\pounds_{\hat{\xi}}\mathcal{F}_{ij})(x,p_1,p_2,p_3,p_4) = (\pounds_\xi g^{-1})(p_i,p_j),
\end{equation}
such that $\hat{\xi}$ leaves $\mathcal{F}_{ij}$ invariant if $\xi$ is a Killing vector field. Furthermore, by the linearity and invertibility of the operators $(d\varphi^\lambda_x)^*$, it is also clear that $\hat{\varphi}^\lambda$ leaves the equation $p_1 + p_2 - p_3 - p_4 = 0$ and the inequality $p_1\neq p_2$ invariant. These observations imply that the flow $\hat{\varphi}^\lambda$ leaves $T^* C$ invariant and that $\hat{\xi}$ is tangent to $T^* C$ at each of its point provided that $\xi$ is a Killing vector field on $(M,g)$.

In analogy to Theorem~\ref{Thm:TsLieIdentity} one has:

\begin{theorem}
\label{Thm:TsLieIdentityCollision}
Suppose $\xi\in {\cal X}(M)$ is a Killing vector field on $(M,g)$ and let $F\in C^\infty_0(T^*C)$. Then, it follows that
\begin{equation}
\pounds_\xi \int\limits_{C_x} F(x,p_1,p_2,p_3,p_4)\eta_{C_x} 
= \int\limits_{C_x} \pounds_{\hat{\xi}} F(x,p_1,p_2,p_3,p_4)\eta_{C_x},
\label{Eq:TsLieIdentityCollision}
\end{equation}
where $\eta_{C_x}$ is the induced volume element on $C_x$ defined in Eq.~(\ref{Eq:mux}).
\end{theorem}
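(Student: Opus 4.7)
The plan is to prove the identity by a change-of-variables argument based on the lifted flow $\hat{\varphi}^\lambda$ generated by $\hat{\xi}$ on $\mathcal{B}_4$. As observed in the paragraphs immediately preceding the theorem, when $\xi$ is Killing the scalars $\mathcal{F}_{ij}$ of Eq.~(\ref{Eq:FijDef}) are $\hat{\xi}$-invariant, while the linear constraint $p_1+p_2-p_3-p_4=0$ and the inequality $p_1\neq p_2$ are preserved by the fiberwise linear action of $\hat{\varphi}^\lambda$. Consequently, for each $x\in M$ and sufficiently small $\lambda$, the flow restricts to a diffeomorphism $\hat{\varphi}^\lambda|_{C_x}:C_x\to C_{\varphi^\lambda(x)}$.

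The crucial step is to upgrade this diffeomorphism to an \emph{isometry} of the induced metrics on the collision manifolds. This follows because $\varphi^\lambda:M\to M$ is an isometry of $(M,g)$, so $d\varphi^\lambda_x$ is a linear isometry of tangent spaces and therefore $[(d\varphi^\lambda_x)^*]^{-1}:(T_x^*M,g_x^{-1})\to (T_{\varphi^\lambda(x)}^*M,g_{\varphi^\lambda(x)}^{-1})$ is a linear isometry of cotangent spaces. Its fourfold Cartesian product yields an isometry $([P_x^+(m)]^4,H_x^{\oplus 4})\to ([P_{\varphi^\lambda(x)}^+(m)]^4,H_{\varphi^\lambda(x)}^{\oplus 4})$, which restricts to an isometry between the submanifolds $C_x$ and $C_{\varphi^\lambda(x)}$ with their induced metrics (this is precisely the construction of $\eta_{C_x}$ outlined at the beginning of the derivation that culminates in Eq.~(\ref{Eq:mux})). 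In particular, the associated volume forms agree under pullback: $(\hat{\varphi}^\lambda)^*\eta_{C_{\varphi^\lambda(x)}}=\eta_{C_x}$. With this identity in hand, the usual change-of-variables formula gives
\begin{equation*}
\int_{C_{\varphi^\lambda(x)}} F\,\eta_{C_{\varphi^\lambda(x)}} = \int_{C_x} \bigl((\hat{\varphi}^\lambda)^*F\bigr)\,\eta_{C_x},
\end{equation*}
and differentiating at $\lambda=0$ yields $\pounds_\xi \int_{C_x} F\,\eta_{C_x}$ on the left and $\int_{C_x}\pounds_{\hat{\xi}} F\,\eta_{C_x}$ on the right, which is precisely Eq.~(\ref{Eq:TsLieIdentityCollision}).

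The principal obstacle in this plan is the isometry claim: a direct verification through the explicit representation~(\ref{Eq:mux}) of $\eta_{C_x}$ would be cumbersome because the coordinates $(\ve{p}^{cm},g,\ve{\hat q},\ve{\hat q}^*)$ are tied to a choice of orthonormal frame at $x$ and involve a nontrivial Lorentz boost. The argument above circumvents this by exploiting the intrinsic, frame-independent construction of $\eta_{C_x}$ as the restriction of $H_x^{\oplus 4}$ to the submanifold defined by $\hat{\xi}$-invariant equations; one minor technicality worth addressing is the justification of differentiation under the integral sign, which is immediate from the compact support of $F$ on $T^*C$ together with the smooth dependence of the flow on $\lambda$. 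An alternative route, paralleling the first proof of Theorem~\ref{Thm:TsLieIdentity}, would be to apply Gauss' theorem to the tubular region $V_K:=\{(x,p_1,\ldots,p_4)\in T^*C:x\in K\}$ for a compact $K\subset M$, relying on $\divrg_{\hat{g}_{T^*C}}\hat{\xi}=0$ — which is again a direct consequence of the isometry property — and concluding that the integrated identity holds for arbitrary $K$, hence pointwise.
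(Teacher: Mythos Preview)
Your argument is correct but follows a genuinely different route from the paper's. The paper's proof is a short coordinate trick: it works in the mixed coordinates $(x^\mu,(p_j)_{\hat{\alpha}})$ with respect to an orthonormal frame $\{e_{\hat{\alpha}}\}$ chosen so that $\pounds_\xi e_{\hat{\alpha}}=0$ (possible locally since $\xi$ is Killing). In these coordinates the complete lift on $\mathcal{B}_4$ collapses to $\hat{\xi}=\xi^\mu\partial/\partial x^\mu$, and since the explicit expression~(\ref{Eq:mux}) for $\eta_{C_x}$ involves only the orthonormal momentum components (no $x$-dependence at all), the operator $\xi^\mu\partial_\mu$ passes straight through the fibre integral. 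Your approach is instead intrinsic: you identify $\hat{\varphi}^\lambda|_{C_x}$ as an isometry $C_x\to C_{\varphi^\lambda(x)}$ for the induced metrics, deduce $(\hat{\varphi}^\lambda)^*\eta_{C_{\varphi^\lambda(x)}}=\eta_{C_x}$, and obtain the identity by differentiating the change-of-variables formula at $\lambda=0$. Your method has the conceptual advantage of being frame-free and of making explicit \emph{why} the identity holds (the lifted flow is a fibrewise isometry), whereas the paper's argument is quicker once one knows the invariant-frame trick and parallels the second, coordinate-based proof of Theorem~\ref{Thm:TsLieIdentity}. The Gauss-theorem alternative you sketch at the end would likewise work and would parallel the first proof of that theorem.
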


\proof
As in the proof of the previous theorem we use the mixed local coordinates $(x^\mu,p_{\hat{\alpha}})$ where $p_{\hat{\alpha}}$ refer to the components of $p$ with respect to an orthonormal basis $\{ e_{\hat{\alpha}} \}$ of vector fields. In terms of these coordinates, the complete lift of $\xi$ on ${\cal B}_4$ has the following representation:
\begin{equation}
\hat{\xi} = \xi^\mu\frac{\partial}{\partial x^\mu} 
 + (\pounds_\xi e_{\hat{\alpha}})^\mu\sum\limits_{j=1}^4( p_j)_\mu\frac{\partial}{\partial (p_j)_{\hat{\alpha}}}.
\end{equation}
Since $\xi$ is a Killing vector field one can choose $\pounds_\xi e_{\hat{\alpha}} = 0$, and then the theorem follows by expressing the volume form $\eta_{C_x}$ in terms of orthonormal components, as in Eq.~(\ref{Eq:mux}).
\qed

With the help of the previous theorem one can show:

\begin{proposition}
\label{Prop:SymCollisionTerm}
Let $\xi\in {\cal X}(M)$ be a Killing vector field on $(M,g)$, and let $W\in {\cal F}(T^*C)$ be a smooth function on the collision bundle. Then, the collision term $C_W[f,h]$ defined in Eq.~(\ref{Eq:CollisionTerm1}) satisfies
\begin{equation}
\pounds_{\hat{\xi}} C_W[f,h] 
 = C_{\pounds_{\hat{\xi}} W} [f,h] + C_W[\pounds_{\hat{\xi}} f,h] + C_W[f,\pounds_{\hat{\xi}} h]
\end{equation}
for all $f,h\in C_0^\infty(\Gamma_m^+)$
\end{proposition}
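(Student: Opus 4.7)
The plan is to combine a $p_1$-fixed variant of Theorem~\ref{Thm:TsLieIdentityCollision} with the Leibniz rule applied to the integrand of the collision term. Using the representation~(\ref{Eq:muxBisVol}) I would first write
$$C_W[f,h](x,p_1)=\tfrac12\int_{C_x^{p_1}}F\,\eta_{C_x^{p_1}},\qquad C_x^{p_1}:=\{(p_2,p_1^*,p_2^*):(p_1,p_2,p_1^*,p_2^*)\in C_x\},$$
with slice volume form $\eta_{C_x^{p_1}}=m^d\sqrt{(s+t)(s+u)/s}\,g^{d-1}\dvol_x(p_2)\wedge d\Omega(\hat q^*)$ and
$$F(x,p_1,p_2,p_1^*,p_2^*)=W_x(p_1^*+p_2^*\!\mapsto\! p_1+p_2)f(x,p_1^*)h(x,p_2^*)-W_x(p_1+p_2\!\mapsto\! p_1^*+p_2^*)f(x,p_1)h(x,p_2).$$

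Next I would use the lifted flow $\hat{\varphi}^\lambda$ on $\mathcal{B}_4$ defined by Eq.~(\ref{Eq:LiftetFlowB4}). Since $\xi$ is Killing, the fibrewise map $[(d\varphi^\lambda_x)^*]^{-1}:T_x^*M\to T_{\varphi^\lambda(x)}^*M$ is a Lorentz isometry and therefore leaves invariant every Lorentz scalar built from the momenta: in particular the Mandelstam prefactor $\sqrt{(s+t)(s+u)/s}$, the Lorentz-invariant volume element $\dvol_x(p_2)$, and $g^{d-1}d\Omega(\hat q^*)$ (the invariant surface measure on the sphere $\{q^*\cdot p^{cm}=0,\ |q^*|=g\}$ orthogonal to $p^{cm}$). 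Because the defining equations of the collision manifold are themselves Lorentz-invariant, $\hat{\varphi}^\lambda$ carries $C_x^{p_1}$ diffeomorphically onto $C_{\varphi^\lambda(x)}^{\tilde p_1^\lambda}$, with $\tilde p_1^\lambda:=[(d\varphi^\lambda_x)^*]^{-1}(p_1)$, and pulls back the corresponding slice volume form to $\eta_{C_x^{p_1}}$. This yields
$$C_W[f,h]\bigl(\hat{\varphi}^\lambda(x,p_1)\bigr)=\tfrac12\int_{C_x^{p_1}}(F\circ\hat{\varphi}^\lambda)\,\eta_{C_x^{p_1}},$$
and differentiating at $\lambda=0$ gives
$$(\pounds_{\hat{\xi}}C_W[f,h])(x,p_1)=\tfrac12\int_{C_x^{p_1}}\pounds_{\hat{\xi}}F\,\eta_{C_x^{p_1}},$$
which is the required slice version of Theorem~\ref{Thm:TsLieIdentityCollision}.

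The remainder is the Leibniz rule. Since the restriction of the complete lift $\hat{\xi}$ on $\mathcal{B}_4$ (Eq.~(\ref{Eq:xihatB4})) to each individual momentum slot coincides with the complete lift of $\xi$ on $T^*M$, one has $\pounds_{\hat{\xi}}[f(x,p_i)]=(\pounds_{\hat{\xi}}f)(x,p_i)$ and likewise for $h$, so
\begin{eqnarray*}
\pounds_{\hat{\xi}}F &=& (\pounds_{\hat{\xi}}W_x)(p_1^*+p_2^*\!\mapsto\! p_1+p_2)f(x,p_1^*)h(x,p_2^*)-(\pounds_{\hat{\xi}}W_x)(p_1+p_2\!\mapsto\! p_1^*+p_2^*)f(x,p_1)h(x,p_2)\\
&& +\,W_x(\cdots)\bigl[(\pounds_{\hat{\xi}}f)(x,p_1^*)h(x,p_2^*)+f(x,p_1^*)(\pounds_{\hat{\xi}}h)(x,p_2^*)\bigr]\\
&& -\,W_x(\cdots)\bigl[(\pounds_{\hat{\xi}}f)(x,p_1)h(x,p_2)+f(x,p_1)(\pounds_{\hat{\xi}}h)(x,p_2)\bigr].
\end{eqnarray*}
Reassembling the three groups of terms produces exactly $C_{\pounds_{\hat{\xi}}W}[f,h]+C_W[\pounds_{\hat{\xi}}f,h]+C_W[f,\pounds_{\hat{\xi}}h]$, completing the proof.

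The main obstacle is the $p_1$-slice invariance step: Theorem~\ref{Thm:TsLieIdentityCollision} directly gives invariance of the full form $\eta_{C_x}$, but here I need invariance of the slice form $\eta_{C_x^{p_1}}$ under the restricted lifted flow. The cleanest route is to work locally in an orthonormal frame that is Lie-transported along $\xi$ (which exists precisely because $\xi$ is Killing), so that in the mixed coordinates $(x^\mu,p_{\hat\alpha})$ the complete lift reduces to $\hat{\xi}=\xi^\mu\partial/\partial x^\mu$ acting only on the spacetime coordinates. In such a frame the integrand of $\eta_{C_x^{p_1}}$ is a product of functions of the orthonormal momentum components alone, so its invariance becomes manifest and the argument reduces to the calculation used in the ``alternative proof'' of Theorem~\ref{Thm:TsLieIdentityCollision}.
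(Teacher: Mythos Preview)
Your argument is correct, but the route differs from the paper's. The paper avoids your slice-invariance step entirely by pairing $C_W[f,h]$ with an arbitrary test function $\Psi\in C_0^\infty(\Gamma_m^+)$ and integrating over $P_x^+(m)$: this converts the slice integral over $C_x^{p_1}$ into a full integral over $C_x$ via Eq.~(\ref{Eq:muxBisVol}), after which Theorems~\ref{Thm:TsLieIdentity} and~\ref{Thm:TsLieIdentityCollision} apply verbatim (no slice version needed). One then reads off the identity pointwise because $\Psi$ is arbitrary. Your approach is more direct---you differentiate the collision term itself rather than its pairing---but the price is precisely the obstacle you flagged: you must prove invariance of the \emph{slice} form $\eta_{C_x^{p_1}}$ under the lifted flow, which is an honest additional step. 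Your resolution via a Lie-transported orthonormal frame is correct and is in fact the same mechanism behind the alternative proof of Theorem~\ref{Thm:TsLieIdentityCollision}; it just has to be redone fibrewise with $p_1$ held fixed. The paper's test-function trick sidesteps this by reducing everything to the already-established full-$C_x$ statement.
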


\proof
To prove the statement, we multiply $C_W[f,h]$ with a smooth function $\Psi\in C_0^\infty(\Gamma_m^+)$ and integrate the result over the mass hyperboloid $P_x^+(m)$:
\begin{eqnarray}
&& \int\limits_{P_x^+(m)} C_W[f,h]\Psi(x,p) \dvol_x(p)
\nonumber\\
 &=& \frac{1}{2}\int\limits_{C_x}
\left[  W_x(p_1^*+p_2^*\mapsto p_1+p_2) f(x,p_1^*) h(x,p_2^*) 
 - W_x(p_1+p_2\mapsto p_1^*+p_2^*) f(x,p_1) h(x,p_2) \right]\Psi(x,p_1) \eta_{C_x},
\label{Eq:IntegralIdentity}
\end{eqnarray}
where we have used the expression~(\ref{Eq:muxBisVol}) for $\eta_{C_x}$. Applying the operator $\pounds_\xi$ on both sides of this equation, using Eqs.~(\ref{Eq:TsLieIdentity},\ref{Eq:TsLieIdentityCollision}) and the integral identity~(\ref{Eq:IntegralIdentity}) again yields
\begin{equation}
\int\limits_{P_x^+(m)} \left( \pounds_{\hat{\xi}}  C_W[f,h] \right)\Psi(x,p) \dvol_x(p)
 = \int\limits_{P_x^+(m)} \left(
 C_{\pounds_{\hat{\xi}} W} [f,h] + C_W[\pounds_{\hat{\xi}} f,h] + C_W[f,\pounds_{\hat{\xi}} h]
\right) \Psi(x,p) \dvol_x(p).
\end{equation}
Since $\Psi\in C_0^\infty(\Gamma_m^+)$ is arbitrary, the statement of the proposition follows.
\qed

\subsection{Consequences for the relativistic Boltzmann equation}

With the help of the previous results we can answer the question posed below Eq.~(\ref{Eq:Gsymmetric}): suppose that $\xi\in {\cal X}(M)$ is a Killing vector field on $(M,g)$, what are the conditions that guarantee that $[L_F,\hat{\xi}_a]$ and $\pounds_{\hat{\xi}_a} C_W[f,f]$ vanish if $f$ satisfies~(\ref{Eq:Gsymmetric}), such that no further integrability conditions arise from solving the relativistic Boltzmann equation?

Regarding the vanishing of the commutator $[L_F,\hat{\xi}_a]$, we already know from Proposition~\ref{Prop:CompleteLift}(v) that $[L,\hat{\xi}] = 0$. Hence, it remains to analyze the commutator $[V,\hat{\xi}_a]$ involving the vertical part of $L_F$. Using Eqs.~(\ref{Eq:LiouvilleVFChargedBis},\ref{Eq:xihat}) a short calculation reveals that
\begin{equation}
[V,\hat{\xi}_a] = -q(\pounds_{\xi_a} F)_\alpha{}^\beta p_\beta\frac{\partial}{\partial p_\alpha},
\end{equation}
which shows that $[L_F,\hat{\xi}_a] = 0$ if and only if $\pounds_{\xi_a} F = 0$ for all $a=1,2,\ldots,r$, that is, if and only if the electromagnetic field tensor is $G$-invariant.

Considering the second condition on the collision term, it follows from Proposition~\ref{Prop:SymCollisionTerm} that $\pounds_{\hat{\xi}_a} C_W[f,f] = 0$ for all $f$ satisfying Eq.~(\ref{Eq:Gsymmetric}) if and only if the transition probability density $W$ is $G$-invariant in the sense that
\begin{equation}
\pounds_{\hat{\xi}_a} W = 0,\qquad a = 1,2,\ldots,r.
\end{equation}
If $W$ is a function depending only on the Mandelstam variables $s,t,u$ (see Eqs.~(\ref{Eq:Mandelstams},\ref{Eq:Mandelstamt},\ref{Eq:Mandelstamu})) these conditions are automatically satisfied since the functions $\mathcal{F}_{ij}$ defined in Eq.~(\ref{Eq:FijDef}) are invariant with respect to the lifted Killing flow. However, more general conditions are also allowed. For example, if $t$ is a $G$-invariant contravariant tensor field on $M$ and $W$ is an algebraic function depending only on expressions of the form
\begin{equation}
t(p_1,\ldots,p_1,p_2,\ldots,p_2,p_3,\ldots,p_3,p_4,\ldots,p_4),
\end{equation}
then $W$ is automatically $G$-invariant. Such a tensor field $t$ can be constructed from the metric field (in which case one recovers the Mandelstam variables), the electromagnetic field tensor $F$, or any other $G$-invariant field.

We summarize the most relevant findings of this appendix. A Lie-group $G$ of isometries on $(M,g)$ can be naturally lifted to the cotangent bundle $T^* M$ and the collision bundle $T^* C$. The distribution function $f$ is called $G$-symmetric if the condition~(\ref{Eq:Gsymmetric}) is satisfied, and this condition is compatible with the relativistic Boltzmann equation, provided the electromagnetic field $F$ and the transition probability density $W$ are $G$-invariant. The latter condition is automatically satisfied if $W$ is a function of the relative velocity $g$ and the scattering angle $\Theta$ only, although more general forms for $W$ are also possible.

%%%%%%%%%%%%%%%%%%%%%%%%%%%%%%%%%%%%%%%%%%%%%
% Create the reference section using BibTeX:
\bibliographystyle{unsrt}
\bibliography{refs_kinetic}
%%%%%%%%%%%%%%%%%%%%%%%%%%%%%%%%%%%%%%%%%%%%%

\end{document}